\documentclass[journal,onecolumn]{IEEEtran}

\ifCLASSINFOpdf
\else
\fi
\usepackage[noadjust]{cite}
\usepackage[T1]{fontenc}
\usepackage{enumitem}
\usepackage{amsmath,amssymb,amsthm}
\newtheorem{thm}{Theorem}[section]
\newtheorem{lemma}[thm]{Lemma}
\newtheorem{prop}[thm]{Proposition}
\newtheorem{constr}{Construction}

\newtheorem{cor}[thm]{Corollary}

\newtheorem{exampl}[thm]{Example}
\newtheorem{prob}[thm]{Problem}
\newtheorem{dfn}[thm]{Definition}

\DeclareMathOperator{\ind}{ind}

\DeclareMathOperator{\ord}{ord}
\DeclareMathOperator{\Cay}{Cay}
\DeclareMathOperator{\Gal}{Gal}
\DeclareMathOperator{\wt}{wt}
\usepackage{tkz-graph}

\begin{document}
\title{New Results on Limited Magnitude Error Correcting Codes}
\author{
	Zhiyu Yuan\IEEEauthorrefmark{1}, \IEEEauthorblockN{Tingting Chen\IEEEauthorrefmark{2}, Rongquan Feng\IEEEauthorrefmark{3} and Gennian Ge\IEEEauthorrefmark{4}}

	\IEEEauthorblockA{\IEEEauthorrefmark{1}School of Mathematical Sciences, Peking University, Beijing 100871, China. Email: yzhiyu\_pku@pku.edu.cn}
	
	\IEEEauthorblockA{\IEEEauthorrefmark{2}Institute of Mathematics and Interdisciplinary Sciences, Xidian University, Xi'an, 710071, China. Email: ttchenxu@mail.ustc.edu.cn}
	
	\IEEEauthorblockA{\IEEEauthorrefmark{3}School of Mathematical Sciences, Peking University, Beijing 100871, China. Email: fengrq@math.pku.edu.cn}

	\IEEEauthorblockA{\IEEEauthorrefmark{4}School of Mathematical Sciences, Capital Normal University, Beijing 100048, China. Email: gnge@zju.edu.cn}

}

\maketitle

\begin{abstract}
    This paper investigates the existence, construction and classification of limited magnitude error-correcting codes, with a focus on splitter sets and their connections to group splittings. We establish new nonexistence results for quasi-perfect splitter sets and provide a complete classification of quasi-perfect $B[0,3](n)$ splitter sets in both singular and nonsingular cases. Furthermore, we derive improved lower bounds for the size of maximal $B[0,3](q)$ sets by investigating Cayley graphs, where $q$ is a prime. We also provide existence criteria for perfect $B[0,6](q)$ splitter sets and quasi-perfect $B[-4,4](2p)$ sets for prime $p$. For perfect burst-correcting codes, we develop a general construction framework, and prove the existence of infinite families of $(k_2,k_1)$-limited-magnitude cyclic $b$-burst-correcting codes for $k_1+k_2\le 4$ and arbitrary burst length $b$. We further provide sufficient existence conditions for general parameters $k_1$ and $k_2$. Our results combine algebraic, combinatorial, and number-theoretic methods to advance the understanding of codes tailored for flash memory and related storage systems.
\end{abstract}

\begin{IEEEkeywords}
	Error correction codes, flash memory, lattice tilings, group factorization, group splitting.
\end{IEEEkeywords}

\IEEEpeerreviewmaketitle

\section{Introduction}
Flash memories are non-volatile, high density and low cost memories that have applications in many areas of modern life. For a higher density of flash memories, \emph{multilevel memory cells} were introduced, which can store $q$ levels, and common flash error mechanisms induce errors whose magnitudes (i.e., the number of level changes) are small. This setting stimulated research into specialized error-correcting codes for flash memory storage, called limited magnitude error correcting codes, first proposed in \cite{cassuto2010codes} and \cite{klove2011systematic}. In this model, each symbol is an element in $\mathbb{Z}$.\footnote{While the original model adopted $\mathbb{Z}_q$ as the underlying alphabet, this paper, following \cite{wei2022tilings}, uses $\mathbb{Z}$ as the alphabet. } Errors are modeled as bounded additive perturbations: for a codeword $\mathbf{c} = (c_1, \ldots, c_n)$, a symbol $c_i$ may be distorted to $c_i + \lambda$, where $\lambda \in [-k_1, k_2] = \{-k_1, -k_1+1, \ldots, k_2\}$ for some nonnegative integers $k_1$ and $k_2$ with $k_2 > 0$.
The error is \emph{asymmetric} if $k_1=0$; when $k_1=k_2$, the error is \emph{symmetric}. In general, such errors are called \emph{unbalanced} as in \cite{yari2013some}.

Besides flash memories, such codes also find wide applications in high-density magnetic recording channels 
\cite{kuznetsov1993coding, levenshtein1993perfect}
and DNA-based storage systems 
\cite{jain2020coding,wei2021lattice}.

To give a formal definition for these codes, we first define the error ball
\[\mathcal{B}(n,t,k_2,k_1)=\left\{\mathbf{e}=(e_1,e_2,\ldots,e_n):\begin{array}{ll}e_i\in [-k_1,k_2],\\\wt(e)\le t\end{array}\right\}%
	,\]
that is, for every element of $\mathcal{B}(n,t,k_2,k_1)$, all its coordinates are zero, except for possibly at most $t$ coordinates being in $[-k_1,k_2]\setminus\{0\}=:[-k_1,k_2]^*$. Our goal is to construct a code $C\subset\mathbb{Z}^n$ such that for all $\mathbf{x}\in C$, $\mathbf{x}+\mathcal{B}(n,t,k_2,k_1)$ are disjoint. It is equivalent to saying that for all $\mathbf{e}\in \mathcal{B}(n,t,k_2,k_1)$, $\mathbf{e}+C$ are disjoint. We refer to such codes as \emph{$(k_2,k_1)$-limited magnitude $t$-error correcting codes}, or \emph{$(k_2,k_1)$-limited magnitude single error correcting codes} when $t=1$. If moreover, $\mathbb{Z}^n=\bigsqcup_{\mathbf{x}\in C}(\mathbf{x}+\mathcal{B}(n,t,k_2,k_1))$, we say that the code $C$ is \emph{perfect}.

Limited magnitude error correcting codes are apparently first investigated by Levenshtein and Vinck in \cite{levenshtein1993perfect}. What they called \emph{$k$-shift codes} are codes corresponding to the case that $k_1=k_2=k, t=1$. Since then, such codes are studied in \cite{munemasa1995perfect,tamm2002splittings,tamm2005perfect,szabo2011integer}.

We assume that $ C $ is a \emph{sublattice} of $ \mathbb{Z}^n $ in this paper. In this case, we also say $C$ is a \emph{linear} code correcting $(k_2,k_1)$-limited-magnitude errors. If $t=1$, consider the canonical projection
\[
	\mathbb{Z}^n \to \mathbb{Z}^n / C =: G,
\]
which maps a vector $ \mathbf{x} $ to its coset $ \mathbf{x} + C $.

For an element $ s $ in an abelian group $ (G, +) $, define the group homomorphism $ \phi_s : \mathbb{Z} \to G $ by $ \phi_s(1) = s $. By abuse of notation, we write $ \lambda s $ to denote $ \phi_s(\lambda) $. Let $ s_i $ denote the image of the standard basis vector $\mathbf{e}_i=(\underbrace{0,\ldots,0}_{i-1},1,0,\ldots,0)$  under this projection. Then $ C $ is a limited magnitude single error correcting code if and only if the elements $ \lambda s_i $ are distinct and nonzero for all $ \lambda \in [-k_1, k_2]^* := [-k_1, k_2] \setminus \{0\} $.

From now on, we assume that $ G = \mathbb{Z}_N $ is a cyclic group. If the elements $ \lambda s_i $, where $ \lambda \in [-k_1, k_2]^* $, are all distinct and nonzero, we say that the set $ B = \{s_i : i = 1, \ldots, n\} $ is a \emph{$ B[-k_1, k_2](N) $} splitter set
(For the case where $ G = \mathbb{Z}_q^r $ with $ r > 1 $, we refer to \cite{klove2014linear}, where the counterparts of $ B[-k_1, k_2](N) $ splitter sets are termed ``$ B[r, -k_1, k_2](q) $ sets''.). It would be straightforward to construct $(k_2,k_1)$-limited magnitude single error correcting codes of block length $n$ from splitter sets of size $n$ if existed.

Clearly, the size of such a set satisfies
\[
	|B| \le \frac{N - 1}{k_1 + k_2}.
\]
We will always assume that $B$ is nontrivial, i.e. $|B|>0$, so we must require that $k_1+k_2<N$. If the splitter set $ |B| $ attains the maximum value $ \left\lfloor \frac{N - 1}{k_1 + k_2} \right\rfloor $, we call it \emph{perfect} when $ (k_1 + k_2) \mid (N - 1) $, and \emph{quasi-perfect} when $ (k_1 + k_2) \nmid (N - 1) $.\footnote{Note that in \cite{klove2011systematic}, the term ``quasi-perfect'' may also refer to perfect splitter sets. In this paper, we maintain a distinction between the two concepts.}
A $ B[-k_1, k_2](N) $ splitter set is called \emph{nonsingular} if $ \gcd(N, k_1! \cdot k_2!) = 1 $, and \emph{singular} otherwise. 
It is easy to see that a limited magnitude single error correcting code is perfect if and only if its associated splitter set is perfect; if for some parameters $k_1,k_2,N$ we cannot have a perfect $B[-k_1,k_2](N)$ splitter set, quasi-perfect splitter sets, if they exist, provide codes that are second-best.

Our main results in this paper can be summarized as follows:
\subsection{Quasi-perfect splitter sets}
\subsubsection{A general result}
It is known that if $\gcd(m,k!)=1$, we have the following construction for a quasi-perfect $B[0,k](km)$ splitter set:
\begin{constr}[{\cite[Theorem IV.1]{ye2020some}}]
	\label{constr-quasi-1-k}
	Let $k,m$ be positive integers such that $\gcd(m,k!)=1$. Let $a\equiv (-k)^{-1}\pmod{m}$. Then
	$$B=\{ik+1 : i\in[0,m-1]\text{ and } i\ne a\}$$
	is a quasi-perfect $B[0,k](km)$ set.
\end{constr}
Hence, the case that $\gcd(m,k!)\ne 1$ would be worth considering. In our previous work, we proved that
\begin{prop}[{\cite[Proposition VII.1]{yuan2025existence}}]
	If $m>k$ and $k$ divides $m$, then there is no quasi-perfect $B[0,k](km)$ set.
\end{prop}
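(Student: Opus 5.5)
The plan is a counting argument that tracks $p$-adic valuations. Let $N=km$. A quasi-perfect $B[0,k](km)$ set $B$ has $|B|=\lfloor (km-1)/k\rfloor=m-1$ elements. The $k(m-1)$ elements $\lambda s$, for $s\in B$ and $\lambda\in[1,k]$, are distinct and nonzero. So exactly $k-1$ nonzero elements of $\mathbb{Z}_N$ are left uncovered; call this set $E$, so $|E|=k-1$. Since $m>k\ge 2$ and $k\mid m$, $k>1$. Fix a prime $p\mid k$ with $p^a\parallel k$. Then $p^{2a}\mid N$, and we write $p^b\parallel N$ with $b\ge 2a$. For $x\in\mathbb{Z}_N$ let $v(x)=\min(v_p(\tilde x),b)$, where $\tilde x$ is any lift; this is well defined. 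For $\lambda\in[1,k]$ and $v(s)=i$ we have $v(\lambda s)=\min(i+v_p(\lambda),b)$.

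\textbf{Step 1: per-level counts.} Let $n_i=|\{s\in B: v(s)=i\}|$, and let $c_t$ be the number of $\lambda\in[1,k]$ with $v_p(\lambda)=t$, so $c_t=k/p^t-k/p^{t+1}$ for $t<a$ and $c_a=k/p^a$. The number of elements of $\mathbb{Z}_N$ of valuation exactly $j<b$ is $(N/p^j)(1-1/p)$. For $j<b$, let $e_j$ be the number of elements of $E$ of valuation $j$. Counting elements of valuation $j$ gives
\[
(N/p^j)(1-1/p)=\sum_{i+t=j} n_i c_t+e_j .
\]
Starting from $j=0$, where only $i=t=0$ contributes, this gives $e_0=(1-1/p)(N-kn_0)$, a multiple of $k(1-1/p)$. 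Since $e_0\le k-1<k$, either $e_0=0$ or $e_0=k(1-1/p)$ with $n_0=m-1$. The second option would put every element of $B$ at valuation $0$, and I would then finish that case directly by counting elements of valuation $\ge 1$.

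\textbf{Step 2: recursion over levels.} Proceeding inductively in $j$, the equations determine each $n_j$ up to a small defect, and they force the $e_j$ to be of a specific shape. Roughly, $e_j$ is a combination of the $c_t$ with coefficients from the deficits of earlier levels. I would also sum over all levels using the subgroup $p^j\mathbb{Z}_N$ of size $N/p^j$. Counting the nonzero elements of valuation $\ge j$ gives
\[
N/p^j-1=\sum_i n_i\,|\{t\le a: i+t\ge j\}|+|\{x\in E: v(x)\ge j\}| .
\]
Together with $\sum_i n_i=m-1$ and $\sum_j e_j=k-1$, this yields a system of congruences. The aim is to show that the system forces the total uncovered count to be a sum of terms, each divisible by $p^{\,\cdot}(1-1/p)$-type quantities coming from the $c_t$, and that such a sum cannot equal $k-1$. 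The top levels are where the hypothesis $b\ge 2a$ (i.e.\ $k\mid m$) should enter. If $v(s)\ge b-a$, then the $\lambda$ with $v_p(\lambda)=a$ push $\lambda s$ to valuation $b$, so these elements all land in the $p$-part of the small subgroup $m\mathbb{Z}_N$. Since $b-a\ge a$, there is room for this collision structure, whereas for $\gcd(m,k!)=1$ (Construction~\ref{constr-quasi-1-k}) there is not.

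\textbf{Main obstacle.} I expect the main difficulty to be the bookkeeping in Step~2 at the levels $j$ near $b$, where valuations are capped. There the counts $c_t$ merge, and the clean multiplicativity breaks. My first attempt would be the case $k=p$ prime, with $a=1$ and $b\ge 2$. There $c_0=p-1$ and $c_1=1$, and the recursion should show that $e_0+\dots$ is forced into $\{0,p-1,\dots\}$ in a way incompatible with landing exactly on $p-1$ while $n_{b-1}$ is consistent. Once that case works, I would generalize by running the same argument for each prime $p\mid k$ and combining.
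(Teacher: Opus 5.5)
Your proposal never derives a contradiction, and its one concrete deduction is misread. At level $0$ you correctly obtain $e_0=(1-1/p)k(m-n_0)$. But $n_0\le|B|=m-1$, so the branch $e_0=0$ (which needs $n_0=m$) cannot occur. The level-$0$ count therefore \emph{forces} $n_0=m-1$, i.e.\ every element of $B$ is prime to $p$. Step~2 is aimed at the impossible case, and in any event it is only a plan (``roughly'', ``the aim is'', ``should enter''). The case that actually occurs is deferred to ``counting elements of valuation $\ge1$''. Read as a count of all nonzero multiples of $p$, that balances exactly: there are $km/p-1$ of them, $(k/p)(m-1)$ are covered, and the $k/p-1$ left over together with $e_0=k-k/p$ give precisely $|E|=k-1$. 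Step~1 also contains an error: $c_a=k/p^a$ fails when $k\ge p^{a+1}$ (e.g.\ $k=6$, $p=2$, $\lambda=4$). In general $c_a=k'-\lfloor k'/p\rfloor$ with $k'=k/p^a$, and $c_t$ may be positive for $t>a$.

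The missing idea is to look at level $a$, not at the levels near $b$. All $s\in B$ are $p$-units, and $p\mid m$ gives $p^{a+1}\mid N$; this, rather than $b\ge2a$, is where divisibility enters. So the $(m-1)c_a$ distinct elements $\lambda s$ with $v_p(\lambda)=a$ all have valuation exactly $a$, while $\mathbb{Z}_N$ has only $k'm(1-1/p)$ such elements. Write $c_a=k'(1-1/p)+r'/p$ with $r'=k'\bmod p\ge1$. The count then forces $(m-1)r'\le k'(p-1)<k$, contradicting $m>k$. For $k=p$ it simply says that $pB$ would be $m-1$ distinct elements among the $m-m/p$ elements of valuation exactly $1$. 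With this fix your approach works with a single prime, needs no ``combining'', and even gives nonexistence under the weaker hypothesis $p\mid\gcd(k,m)$, $m>k$.

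For comparison, the paper only cites this result, but the first step of its proof of Proposition~\ref{quasi-ktm} settles it at once. The set $kB$ is an $(m-1)$-subset of the $m-1$ nonzero multiples of $k$ in $\mathbb{Z}_{km}$, hence equals that set. So $B$ meets every nonzero residue class modulo $m$, in particular that of $k$, which is nonzero since $m>k$. As $k\mid m$, that element $s\in B$ is a nonzero multiple of $k$. Hence $1\cdot s=k\cdot s'$ for some $s'\in B$, contradicting the splitter property.
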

In this work, we will prove the following
\begin{prop}
	\label{quasi-ktm}
	If $t\in [2,k-1]$ is a prime, then there is no quasi-perfect $B[0,k](ktm)$ set for $m\ge 3$.
\end{prop}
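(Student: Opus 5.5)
The plan is to reduce to the earlier nonexistence result, \cite[Proposition VII.1]{yuan2025existence}, by passing to the subgroup $t\mathbb{Z}_N$, where $N=ktm$. Suppose $B$ is a quasi-perfect $B[0,k](ktm)$ set. Then $|B|=\lfloor (ktm-1)/k\rfloor = tm-1$. The $k(tm-1)$ products $\lambda s$ with $\lambda\in[1,k]$ and $s\in B$ are distinct and nonzero, so exactly $k-1$ nonzero elements of $\mathbb{Z}_N$ are left uncovered. Set $B_0=B\cap t\mathbb{Z}_N$ and $a=|B_0|$.

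\emph{Reduction step.} Multiplication by $t$ is an injective homomorphism $\mathbb{Z}_{km}\to\mathbb{Z}_N$ with image $t\mathbb{Z}_N$. Hence $B_0/t:=\{x\in\mathbb{Z}_{km}: tx\in B_0\}$ is a $B[0,k](km)$ set, since distinctness and nonvanishing of the products $\lambda s$ pull back along this map. Such a set has size at most $\lfloor (km-1)/k\rfloor=m-1$. Since $k\mid km$ and $km>k$ (as $m\ge 3$), \cite[Proposition VII.1]{yuan2025existence} forbids size exactly $m-1$. So the whole proof comes down to showing $a\ge m-1$.

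\emph{Counting by residues mod $t$.} Write $k=qt+e$ with $0\le e<t$; here $q\ge 1$ because $t\le k-1$. Let $b_c=|\{s\in B: s\equiv c\pmod t\}|$ for $c\in\mathbb{Z}_t^*$. Because $t$ is prime, for $s\not\equiv 0$ the product $\lambda s$ lies in $t\mathbb{Z}_N$ exactly when $t\mid\lambda$, which happens for $q$ values of $\lambda$. For each nonzero residue $r$, $\lambda s\equiv r$ happens for $q$ or $q+1$ values of $\lambda$: the extra ones come from $\lambda\in[qt+1,qt+e]$, and these residues are $c\cdot[1,e]$. Let $u_r$ be the number of uncovered elements in residue class $r$. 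We get the system
\[
ak+(tm-1-a)q = km-1-u_0,\qquad \sum_{c\ne 0} b_c\bigl(q+[\,r\in c\cdot[1,e]\,]\bigr)=km-u_r\quad(r\ne 0),
\]
together with $\sum_r u_r=k-1$ and $u_r\ge 0$. Summing the nonzero-class equations gives $(tm-1-a)(k-q)=km(t-1)-\sum_{r\ne0}u_r$. Combined with the class-$0$ equation, this should pin $a$ down. I expect to find that $a\le m-2$ forces $u_0$ or $\sum_{r\ne 0}u_r$ outside $[0,k-1]$, using $m\ge 3$ to make the discrepancy exceed $k-1$.

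\emph{Main obstacle.} The hard step is this inequality analysis. The naive aggregate counts only bound $a$ from one side, so the per-class equations with the $e$ extra multipliers will be needed. The lever I would try first is a character-sum argument. Summing the class-$r$ equations against a nontrivial additive character of $\mathbb{Z}_t$ isolates $\sum_c b_c\sum_{\lambda\le e}\chi(\lambda c)$. Its size is controlled by the $u_r$, which are small relative to $m$, and this should force the $b_c$ to be nearly equal and $u_0$ to be small. Substituting back into the class-$0$ equation should then yield $a=m-1$, and the reduction step finishes the proof. If $e=0$, the per-class equations are uniform and the aggregate count should suffice directly.
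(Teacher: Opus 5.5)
\textbf{Gap 1: the reduction does not finish the proof.} The quoted Proposition VII.1 is about $\mathbb{Z}_{km}$ under the hypothesis $k\mid m$ and $m>k$. The condition you check, $k\mid km$, is vacuous. In fact Construction~\ref{constr-quasi-1-k} gives quasi-perfect $B[0,k](km)$ sets whenever $\gcd(m,k!)=1$; for $k=3$, $m=5$ it gives $\{1,4,7,13\}\subset\mathbb{Z}_{15}$. So knowing that $B_0/t$ is a $B[0,k](km)$ set of size $m-1$ yields no contradiction. For a hypothetical quasi-perfect $B$ one actually has $a=m-1$ exactly. Your intermediate target is therefore true, but it cannot close the argument this way.

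\textbf{Gap 2: the residue count mod $t$ cannot give $a\ge m-1$.} The residue-mod-$t$ data are consistent with small $a$. Take $k=3$, $t=2$, $m=3$ (so $N=18$ and $q=e=1$). The values $a=1$, $b_1=4$, $u_0=u_1=1$ satisfy every equation you wrote, with $\sum u_r=k-1$. For $e=0$, say $k=4$, $t=2$, $m=3$, your equations force $a=0$, with $u_0=q-1$ and $u_r=q$. That is the opposite of what you need, so your remark that the aggregate count suffices when $e=0$ is wrong. No character-sum refinement can extract more from data that are feasible.

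\textbf{The missing idea: use the multiplier $k$ itself.}
\begin{enumerate}
\item $kB$ consists of $tm-1$ distinct nonzero multiples of $k$. The group $\mathbb{Z}_{ktm}$ has exactly $tm-1$ of them, so $kB$ is all of them.
\item Since $kx=ky$ in $\mathbb{Z}_{ktm}$ iff $x\equiv y\pmod{tm}$, $B$ meets each nonzero residue class modulo $tm$ exactly once.
\item Because $t\mid tm$, exactly $m-1$ elements of $B$ are divisible by $t$, i.e.\ $a=m-1$.
\item Substitute this into your own class-$0$ equation. It gives $u_0=k-1-(t-1)\lfloor k/t\rfloor m$.
\item Since $k-1\le t\lfloor k/t\rfloor+t-2$ and $\lfloor k/t\rfloor\ge 1$, this $u_0$ is negative for $m\ge 3$, a contradiction.
\end{enumerate}
This is exactly the paper's argument, and it makes the detour through Proposition VII.1 unnecessary.
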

\subsubsection{On quasi-perfect $B[0,3](n)$ splitter sets}
We obtain a necessary and sufficient condition for the existence of quasi-perfect $B[0,3](n)$ splitter sets:
\begin{prop}
	Let $k, l \ge 1$ and $r$ be positive integers such that $\gcd(r,6)=1$. The following statements hold:
	\begin{enumerate}
		\item For $l \ge 1$, a quasi-perfect $B[0,3](2^lr)$ splitter set can exist if and only if $l = 2$ and a quasi-perfect $B[0,3](r)$ splitter set exists.

		\item A quasi-perfect $B[0,3](3^lr)$ splitter set exists if and only if either
		      \begin{itemize}
			      \item $l=1$, or
			      \item $l=2$ and $r=1$.
		      \end{itemize}

		\item A quasi-perfect $B[0,3](2^k3^lr)$ splitter set exists if and only if $k = l = r = 1$.

	\end{enumerate}
	\label{thm:03}
\end{prop}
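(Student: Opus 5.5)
The plan is to reduce everything to a counting argument over order classes. Write $n$ for the modulus and let $U_d\subseteq\mathbb{Z}_n$ be the set of elements of order exactly $d$, so $|U_d|=\varphi(d)$. For a $B[0,3](n)$ set $B$, the $3|B|$ elements $s,2s,3s$ ($s\in B$) are distinct and nonzero. Quasi-perfectness means exactly $\delta:=n-1-3|B|\in\{1,2\}$ nonzero elements are left uncovered. If $s\in U_d$, then $2s\in U_{d/\gcd(d,2)}$ and $3s\in U_{d/\gcd(d,3)}$. So for $x_d:=|B\cap U_d|$ we get the coverage inequalities
\[
x_d+x_{d'}+x_{d''}\le \varphi(d)\qquad\text{for each } d\mid n,\ d>1,
\]
where $d'$ ranges over the divisors with $d'/\gcd(d',2)=d$ and $d''$ over those with $d''/\gcd(d'',3)=d$ (with the obvious adjustment when $\gcd(d,6)=1$, where all three multiples stay in $U_d$). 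The total slack $\sum_d(\varphi(d)-\text{coverage}_d)$ must equal $\delta\le 2$.

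For necessity I would exploit these inequalities chain by chain. Fix the prime-to-$6$ part $e\mid r$ and look at the divisors $2^i3^je$. Consider first $n=2^lr$. The top layer $U_{2^le}$ can only be covered by $B$ itself, since it is not the image of anything under doubling or tripling. The lower layers $U_{2^ie}$ receive doubles from layer $i+1$, so their coverage is constrained. Summing over $i$ and comparing with $\varphi(2^ie)=2^{i-1}\varphi(e)$ forces a slack at least proportional to $\varphi(e)$ unless $l=2$. For $l=1$ or $l\ge 3$ I expect the slack in the layer $e=r$ (or already $e=1$) to exceed $2$, with small $r$ handled directly. For $l=2$ the layers $U_{4e}$ and $U_{2e}$ balance exactly: $x_{4e}=\varphi(e)$, which covers $U_{4e}$ by $s$ and $3s$ and $U_{2e}$ by $2s$. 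Hence all of the slack must lie in the odd part. Then $B\cap\bigcup_{e}U_e$ is a quasi-perfect $B[0,3](r)$ set, because $\delta$ computed in $\mathbb{Z}_r$ agrees with that of $\mathbb{Z}_{4r}$: one has $4r-1-3|B|=r-1-3|B_{\rm odd}|$ once $|B\cap (U_{4e}\cup U_{2e})|=\varphi(e)$. This same computation gives sufficiency in part~1: take a quasi-perfect set in $\mathbb{Z}_r\cong 4\mathbb{Z}_{4r}$ and add one representative from each pair $\{s,3s\}$ of elements of order divisible by $4$, i.e.\ all elements $\equiv 1\pmod 4$ in a suitable CRT sense.

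Parts 2 and 3 follow the same scheme with tripling. Along the chain $U_{3^je}$, tripling lowers the level, and $s,2s$ stay in the same layer, which has size $2\cdot3^{j-1}\varphi(e)$. For $l=1$, sufficiency is exactly Construction~\ref{constr-quasi-1-k} with $k=3$, $m=r$. For $l\ge 2$ the slack in layers $U_{3^je}$ grows like $\varphi(e)$, and this forces $e=1$ throughout, hence $r=1$ and $l=2$. For $n=9$ I will exhibit an explicit $2$-element set found by a short search, and for $l\ge3$ the inequalities already fail at $e=1$. In part 3 the two chain structures interact. I expect both obstructions to combine so that only $2^1 3^1$ with $r=1$ survives, and $n=6$ is realized by $B=\{1\}$.

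The main obstacle is bookkeeping the coverage inequalities in the mixed case $2^k3^l r$. There, doubling and tripling move elements in two directions of the divisor lattice, so a single chain argument no longer suffices. I would first treat $e=1$ exhaustively (a finite linear system in the $x_{2^i3^j}$). I would then show that for $e>1$ the slack scales by $\varphi(e)\ge 2\cdot$(slack at $e=1$) or similar, giving a contradiction with $\delta\le 2$.
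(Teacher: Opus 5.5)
\textbf{Gap: part~1 for $l\neq 2$.} Your order-class inequalities are valid necessary conditions, and several pieces of your plan work. In part~2 the top layer $U_{3^le}$ has slack at least $2\cdot 3^{l-2}\varphi(e)$, which is enough. In part~3 the inequalities refine the subgroup counting behind Proposition~\ref{quasi-ktm}, and they rule out $n=12$. Your $l=2$ reduction and your sufficiency construction in part~1 are also fine.

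The proof breaks on necessity in part~1 for $l\neq 2$. Your key claim, that for $l=1$ or $l\ge 3$ the slack is at least proportional to $\varphi(e)$, is false. No bookkeeping of the $x_d$ can establish nonexistence there, because the inequalities are satisfiable:
\begin{itemize}
\item For $n=2p$ with $p\equiv 1\pmod 6$ prime (e.g.\ $n=14$), the inequalities are just $x_2=0$, $2x_{2p}\le p-1$ and $x_{2p}+3x_p\le p-1$. The choice $x_{2p}=(p-1)/2$, $x_p=(p-1)/6$ satisfies them with total slack exactly $1=\delta$, the uncovered element being $p$.
\item $n=56$ admits $x_8=2$, $x_{56}=12$, $x_{14}=3$, $x_7=1$, with slack $1$.
\item $n=32$ admits $x_{32}=8$, $x_8=2$, with slack $1$.
\end{itemize}
More generally, for $l=1$ and $l=3$ every layer over $e$ with $6\mid\varphi(e)$ can be filled with zero slack. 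Yet none of these moduli has a quasi-perfect $B[0,3]$ set. Since $n=2p$ gives an infinite family, ``small $r$ handled directly'' does not rescue the argument.

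\textbf{What is missing.} You need an argument that sees how multiplication by $3$ interacts with the maps between layers, not just the layer sizes.

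The paper takes $l\ge 2$ from Theorem~\ref{thm-3-2r}, which reduces $l=3$ to $l=1$ and excludes $l\ge 4$. It handles $l=1$ structurally in two steps.
\begin{itemize}
\item Proposition~\ref{prop-part-spl}, applied to odd-index subgroups, gives $f=r$ and excludes divisors of $r$ that are $\equiv 2\pmod 3$. Your counting does recover this half.
\item A prime $p\equiv 1\pmod 3$ dividing $r$ then yields a quasi-perfect $B[0,3](2p)$ set $B=B_0\sqcup B_1$, split into even and odd parts. These satisfy $B_1\sqcup 3B_1=\mathbb{Z}_{2p}^\times$ and $\{0\}\sqcup B_0\sqcup 2B_0\sqcup 3B_0\sqcup 2B_1=2\mathbb{Z}_{2p}$.
\end{itemize}
Halving the even part and reducing the odd part mod $p$ gives sets $C_0,C_1\subset\mathbb{Z}_p$ with
\begin{itemize}
\item $C_1\sqcup 3C_1=\mathbb{Z}_p\setminus\{0\}$, and
\item $C_0\sqcup 2C_0\sqcup 3C_0\sqcup C_1=\mathbb{Z}_p\setminus\{0\}$.
\end{itemize}
Hence $3C_1=C_0\sqcup 2C_0\sqcup 3C_0$, so $3C_0\subset 3C_1$ and $C_0\subset C_1$. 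This contradicts $C_0\cap C_1=\varnothing$, unless $C_0=\varnothing$, which would force $3C_1=\varnothing$.

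The failure of $n=32$ has the same flavour. Counting forces $2B_{32}=U_{16}$, so $B_{32}$ meets each pair $\{s,s+16\}$ exactly once. On the other hand, $B_{32}\sqcup 3B_{32}=U_{32}$ gives $9B_{32}=B_{32}$. Hence $81B_{32}=17B_{32}=B_{32}+16=B_{32}$, a contradiction.

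You need arguments of this multiplicative kind, or a citation of Theorem~\ref{thm-3-2r}, to close part~1 for $l=1$ and $l\ge 3$.
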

\subsubsection{On $B[-4,4](2p)$ splitter sets, where $p$ is a prime} We obtain the following proposition.
\begin{prop}
\label{prop-pm4}
Quasi-perfect $B[-4,4](2p)$ splitter sets exist for odd prime $p$ if and only if $p\equiv 1\pmod 4$ and $v_2(\ind_g(2))=v_2(\ind_g(3))<v_2(\frac{p-1}2)$, where $g$ is an odd primitive root modulo $p$, or $p=7$.
\end{prop}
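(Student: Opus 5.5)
The plan is to work in $\mathbb{Z}_{2p}\cong\mathbb{Z}_2\times\mathbb{Z}_p$ and split into odd and even generators. Write $m=\lfloor (2p-1)/8\rfloor$. Note first that $8\nmid 2p-1$, so the maximum possible size is never attained perfectly.

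\emph{Counting step.} No $s$ can have zero $\mathbb{Z}_p$-component, since then $2s=0$. If $s$ is odd, then $\pm s,\pm 3s$ are four odd elements and $\pm 2s,\pm 4s$ are four nonzero even elements. If $s$ is even, all eight multiples are nonzero even elements. Let $a$ and $b$ be the numbers of odd and even elements of $B$. The counts give $4a\le p$ and $4a+8b\le p-1$.
\begin{itemize}
\item If $p=4u+1$, then $m=u$ and $a+2b\le u$, so $|B|=u$ forces $b=0$ and $a=u$.
\item If $p=4u+3$, then $m=u$ and $a+2b\le u$ again, so once more $b=0$ and $a=u$.
\end{itemize}
In both cases $B=\{(1,x):x\in X\}$ with $X\subset\mathbb{Z}_p^*$ and $|X|=u$. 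The conditions become: the sets $\{\pm1,\pm3\}x$ are pairwise disjoint, and the sets $\{\pm1,\pm2\}x$ are pairwise disjoint (the even multiples are $2\{\pm1,\pm2\}x$, and multiplying by $2$ is a bijection of $\mathbb{Z}_p$).

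\emph{Passing to indices.} Fix the primitive root $g$. Pass to $H=\mathbb{Z}_p^*/\{\pm1\}$, which is cyclic of order $n=(p-1)/2$, and write it additively via $\ind_g$. Set $\alpha=\ind_g(3)$ and $\beta=\ind_g(2)$, both taken mod $n$. If $3\equiv\pm1$ or $2\equiv\pm1$ the problem degenerates, which only happens for $p\le 5$; these are checked by hand. Otherwise the problem becomes the following: find $Y\subset\mathbb{Z}_n$ with $|Y|=u$ such that $Y\cap(Y+\alpha)=\emptyset$ and $Y\cap(Y+\beta)=\emptyset$.

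\emph{Case $p\equiv1\pmod4$.} Here $n=2u$, so $Y$ must be a common complement: $Y\oplus\{0,\alpha\}=Y\oplus\{0,\beta\}=\mathbb{Z}_n$.
\begin{itemize}
\item Sufficiency. If $v_2(\alpha)=v_2(\beta)=v<v_2(n)$, take $Y=\{y:\lfloor y/2^v\rfloor \text{ even}\}$. Here $y$ runs over representatives in $[0,n-1]$; this is well defined because $2^{v+1}\mid n$. Both $\alpha$ and $\beta$ are odd multiples of $2^v$, so each flips the parity of $\lfloor y/2^v\rfloor$, and $Y$ tiles with both shifts.
\item Necessity. From $Y+\alpha=\mathbb{Z}_n\setminus Y$ we get $Y+2\alpha=Y$. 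Then $Y$ is a union of cosets of $\langle 2\alpha\rangle$, and $\alpha$ must be an element of odd order modulo $\langle 2\alpha\rangle$ — equivalently, this forces $v_2(\alpha)<v_2(n)$. Fix the coset of the $2$-power subgroup and track the $2$-adic valuation of the shift. If $v_2(\alpha)\neq v_2(\beta)$, say $v_2(\alpha)<v_2(\beta)$, then $\beta$ lies in $\langle 2\alpha\rangle+(\text{odd-order part})$, so $Y+\beta$ meets $Y$, a contradiction.
\item Choice of primitive root. The parity of $\ind_g$ depends on the choice of primitive root, since $g$ and $g+p$ differ in $\mathbb{Z}_{2p}$. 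Choosing $g$ odd is what makes the statement well posed, and I would remark on this.
\end{itemize}

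\emph{Case $p\equiv3\pmod4$.} Here $n=2u+1$ is odd and $Y$ is a near-tiling leaving exactly one hole.
\begin{itemize}
\item Orbit structure. Translation by $\alpha$ splits $\mathbb{Z}_n$ into $\gcd(\alpha,n)$ cycles, each of odd length. Each cycle can carry at most $(\text{length}-1)/2$ elements of $Y$. Since $|Y|=(n-1)/2$, there must be exactly one cycle, so $\gcd(\alpha,n)=1$.
\item Rigidity. With $\gcd(\alpha,n)=1$, $Y$ is forced to be the alternating set $\{c+\alpha,c+3\alpha,\dots,c+(n-2)\alpha\}$ for some hole $c$. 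The same holds for $\beta$.
\item Conclusion. Two such alternating sets coincide only if $\beta\equiv\pm\alpha\pmod n$, by comparing which differences $Y-Y$ avoids. This translates to $2\equiv\pm3^{\pm1}\pmod p$, i.e. $p\mid 1,5,7$. Only $p=7$ survives, and there an explicit set is checked.
\end{itemize}

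The main obstacle I expect is this rigidity/uniqueness argument in the odd case. I would handle it by comparing the sets $Y\cap(Y+2\alpha)$ and $Y\cap(Y+2\beta)$, each of which misses exactly one element. A secondary obstacle is the necessity direction of the common-tiling criterion for $p\equiv1\pmod4$.
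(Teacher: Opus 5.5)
\textbf{Gap in the necessity direction for $p\equiv1\pmod4$.} Your outline is correct except at this step. From $Y+\alpha=\mathbb{Z}_n\setminus Y$ you rightly get that $2\alpha$ is a period of $Y$ and that $\alpha\notin\langle2\alpha\rangle$. So $\alpha$ has order exactly $2$ modulo $\langle2\alpha\rangle$, not odd order, and this is what gives $v_2(\alpha)<v_2(n)$.

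The next inference, ``$\beta\in\langle2\alpha\rangle+(\text{odd-order part})$, so $Y+\beta$ meets $Y$'', does not follow. $Y$ is only known to be a union of cosets of $\langle2\alpha\rangle$, not of the larger subgroup $\langle2\alpha\rangle+2^{v_2(n)}\mathbb{Z}_n$. Knowing that $\beta$ lies in that larger subgroup therefore says nothing about $Y\cap(Y+\beta)$.

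The missing idea is to use the $\beta$-tiling too. Since $|Y|=n/2$, you have $Y+\alpha=\mathbb{Z}_n\setminus Y=Y+\beta$, so $\alpha-\beta$ is also a period. Hence $Y$ is invariant under $D\mathbb{Z}_n$ with $D=\gcd(\alpha-\beta,2\alpha,n)$. Suppose $v_2(\alpha)<v_2(\beta)$. Then $v_2(D)\le v_2(\alpha-\beta)=v_2(\alpha)$, and $v_\ell(D)\le v_\ell(2\alpha)=v_\ell(\alpha)$ for every odd prime $\ell$. So $D\mid\alpha$, and $\alpha$ is a period of $Y$, contradicting $Y+\alpha=\mathbb{Z}_n\setminus Y$.

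With this patch your necessity proof is complete and elementary. The paper instead gets $v_2(\ind_g(2))=v_2(\ind_g(3))$ from mask polynomials. For each $2^j\mid\frac{p-1}2$, $\Phi_{2^j}$ divides exactly one of $f_{\overline A}$ and $1+x^{m}$, for $m=\ind_g(2)$ and for $m=\ind_g(3)$. Hence the unique such $j$ for $\ind_g(2)$ and for $\ind_g(3)$ coincide.

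\textbf{Comparison with the paper.} The rest is correct and leaner than the paper's route. Passing directly to $\mathbb{Z}_p^*/\{\pm1\}$ replaces the factorization $A+\{0,\ind_g(3)\}+\{0,\frac{p-1}2\}=\mathbb{Z}_{p-1}$ and the reduction lemma the paper cites. Phrasing everything as packing conditions means that for $p\equiv3\pmod4$ you never need to identify the uncovered elements $\pm a$, $\pm(2c+1)$.

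Your orbit count (an odd cycle of length $L$ carries at most $(L-1)/2$ points) and the structure of maximum independent sets in an odd cycle give self-contained proofs of what the paper takes from Lemma~\ref{lem-2-near} and Proposition~\ref{prop-2-uniqueness}. The endgame $2\equiv\pm3^{\pm1}\pmod p$, hence $p\in\{5,7\}$, is the same as the paper's.

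\textbf{Drop the primitive-root remark.} For $x\in\mathbb{Z}_p^*$, $\ind_g(x)$ is unchanged by $g\mapsto g+p$. Replacing $g$ by $g^t$ multiplies indices by an odd unit modulo $p-1$. So the condition $v_2(\ind_g(2))=v_2(\ind_g(3))<v_2(\frac{p-1}2)$ does not depend on $g$ at all. The paper takes $g$ odd only so that $g$ generates $\mathbb{Z}_{2p}^\times$ and logarithms of elements of $B$ can be taken directly, which your reduction never needs.
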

\subsection{Perfect $B[0,6](q)$ splitter sets}

Preceding works have established many criteria for the existence or nonexistence of perfect splitter sets, see \cite{klove2011some, klove2012codes, zhang2017splitter, zhang2018nonexistence, ye2020some, yuan2025existence,yuan2020existence}. However, the existence condition for nonsingular perfect $B[0,6](q)$ splitter sets with $q$ prime is unknown. In this paper, we give the following necessary and sufficient condition for the existence of nonsingular perfect $B[0,6](q)$ splitter sets:
\begin{thm}
	Let $q\equiv 1\pmod 6$ be a prime. Fix a primitive root $g$ modulo $q$, and let $\alpha=\ind_g(2),\beta=\ind_g(3),\gamma=\ind_g(5)$. Then there exists a perfect $B[0,6](q)$ splitter set if and only if $q$ satisfies one of the following conditions:
	\begin{enumerate}
		\item $v_2(\beta)<v_2(q-1)$,
		      $v_3(\alpha)<v_3(q-1)$, $v_3(\alpha)<v_3(\beta)$, $v_2(\beta)<v_2(\gamma-2\alpha-\beta)$,
		      $v_3(\alpha)<v_3(\gamma-2\alpha-\beta)$;
		\item  $v_2(\gamma)<v_2(q-1)$, $v_3(\alpha)<v_3(q-1)$,
		      $v_3(\alpha)<v_3(\gamma)$,
		      $v_2(\gamma)<v_2(\alpha+\gamma-\beta)$, $v_3(\alpha)<v_3(\alpha+\gamma-\beta)$;
		\item  $v_2(\alpha+\beta)<v_2(q-1)$, $v_3(\alpha)<v_3(q-1)$,
		      $v_3(\alpha)<v_3(\alpha+\beta)$, $v_3(\alpha)<v_3(\gamma-2\alpha-\beta)$, $v_2(\alpha+\beta)<v_2(\gamma-2\alpha-\beta)$;
		\item  $v_3(\alpha)<v_3(q-1)$, $v_2(\beta)<v_2(q-1)$,
		      $v_2(\beta)<v_2(\alpha)$,
		      $v_2(\beta)<v_2(\gamma-\beta-2\alpha)$, $v_3(\alpha)<v_3(\gamma-\beta-2\alpha)$;
		\item $v_2(\alpha)<v_2(q-1)$, $v_2(\alpha)<v_2(\beta)$, $ v_3(\beta)<v_3(q-1)$,
$v_2(\alpha)<v_2(\gamma-3\alpha)$,
$v_3(\beta)<v_3(\gamma-3\alpha)$,
$v_3(\beta)<v_3(2\alpha+\beta)$.%
	\end{enumerate}
	\label{thm:06}
\end{thm}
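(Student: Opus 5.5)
Since $q$ is prime and $q\equiv 1 \pmod 6$, a perfect $B[0,6](q)$ splitter set is the same thing as a factorization $\mathbb{Z}_q^*=\{1,2,3,4,5,6\}\cdot B$. So $B$ is a set of $(q-1)/6$ nonzero residues with $\{1,\dots,6\}B$ covering each nonzero residue exactly once. Passing to indices with respect to $g$, this becomes a splitting of the cyclic group $\mathbb{Z}_{q-1}$ by the multiset
\[
S=\{0,\alpha,\beta,2\alpha,\gamma,\alpha+\beta\}.
\]
We need $A\subset \mathbb{Z}_{q-1}$ with $S+A=\mathbb{Z}_{q-1}$ a direct sum. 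The plan is to use the standard structure theory of factorizations of cyclic groups by a six-element set. We apply Rédei/Hajós-type results, in the form already used in earlier works on $B[0,k](q)$ with $k\le 5$. If $S$ tiles $\mathbb{Z}_{q-1}$, the tiling can be assumed ``quasi-periodic'' through the prime factorization $6=2\cdot 3$. The set $S$ must split as an iterated sum compatible with a chain $\mathbb{Z}_{q-1}\supset H_1\supset H_2$, where one factor is a complete residue system of size $2$ and the other of size $3$ modulo the appropriate subgroups.

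Step one is to enumerate how $\{1,\dots,6\}$ can be written, as sets of residues, as $X\cdot Y$ or in the nested form $X\cdot Y$ with $X$ of size $2$ and $Y$ of size $3$ (or vice versa), modulo subgroups of $\mathbb{Z}_q^*$. In index language, we first split $S$ into three blocks of size two that are translates of a common pair $\{0,d\}$, with $d$ of order-$2$ type. Alternatively, we split $S$ into two blocks of size three that are translates of one triple. The natural multiplicative decompositions are $\{1,2,4\}\cdot\{1,3\}$ up to the leftover $5,6$, $\{1,2\}\{1,3,5\}$, $\{1,3\}\{1,2,4\}$-type variants, and so on. Listing the possible pairings gives candidates such as: pairs $\{1,3\},\{2,6\},\{4,5\}$ with difference $\beta$ and a condition on $\gamma-2\alpha-\beta$; pairs $\{1,5\},\{2,4\},\{3,6\}$ involving $\gamma$ and $\alpha+\gamma-\beta$; pairs $\{1,6\},\{2,3\},\dots$ using $\alpha+\beta$. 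For each candidate there is a reduction lemma: $\{0,d\}$ tiles a cyclic group iff $v_2(d)<v_2(q-1)$, and $\{0,d,2d\}$-type sets tile iff $v_3(d)<v_3(q-1)$. Translating "the blocks lie in distinct cosets of the subgroup generated by the first factor's period" into valuations yields the inequalities $v_2(\beta)<v_2(\gamma-2\alpha-\beta)$, $v_3(\alpha)<v_3(\beta)$, etc. Verifying sufficiency in each of the five cases is then a direct construction. Take $A$ to be a union of cosets/complete residue systems adapted to the $2$-adic and $3$-adic valuations, following the construction in the paper's earlier $B[0,k]$ results. Then check directness by a valuation computation.

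The main obstacle is necessity. We must show that any tiling by $S$ falls into one of these five patterns. For this I would use Tijdeman/Coven–Meyerowitz: $S$ tiles $\mathbb{Z}_{q-1}$ iff the cyclotomic conditions (T1),(T2) hold for the primes $2,3$. We reduce to $\mathbb{Z}_{2^a3^b}$ by projecting onto the $\{2,3\}$-part, which is valid because $|S|=6$ and a dilation by units coprime to $6$ is harmless. Then the mask polynomial $S(x)$ must be divisible by $\Phi_{2^i}(x)$ for exactly one $i$ and by $\Phi_{3^j}(x)$ for exactly one $j$, jointly satisfying (T2). Divisibility of a six-term $0/1$ polynomial by $\Phi_{2^i}$ forces the exponents to pair up with differences having $2$-adic valuation exactly $i-1$. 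Divisibility by $\Phi_{3^j}$ forces a partition into two triples, each a complete residue system mod $3^j$ at the right level. The hard part is a finite but careful case analysis over the possible matchings of the six exponents. Only the listed pairings/triplings are compatible with the relations $2\alpha=\ind(4)$, $\alpha+\beta=\ind(6)$. (T2) then yields the cross inequality between the $2$- and $3$-valuations in each case, matching the five conditions. I would organize this enumeration by which element is paired with $0$ (i.e., with $1$), which cuts the cases to a manageable number.
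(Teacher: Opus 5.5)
\textbf{Gap in necessity: the de Bruijn step is missing.} Your overall route is the paper's. You pass to indices, so a perfect set is a factorization $S+C=\mathbb{Z}_{q-1}$ with $S=\{0,\alpha,\beta,2\alpha,\gamma,\alpha+\beta\}$. You apply Coven--Meyerowitz and split into ``three cosets of an order-$2$ subgroup'' versus ``two cosets of an order-$3$ subgroup''. What you do not supply is the step that produces this dichotomy.

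Divisibility of $f_S$ by $\Phi_{2^\lambda}$ and by $\Phi_{3^\mu}$ only says, separately, that the multiset $S\bmod 2^\lambda$ is invariant under $+2^{\lambda-1}$ and that $S\bmod 3^\mu$ is invariant under $+3^{\mu-1}$. This gives no common difference in $\mathbb{Z}_{2^\lambda3^\mu}$. The R\'edei/Haj\'os quasi-periodicity you invoke is not a result you can quote for $\mathbb{Z}_{q-1}$ in this form.

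The paper's engine is (T2), namely $\Phi_n\mid f_S$ with $n=2^\lambda3^\mu$, combined with de Bruijn's theorem. The reduction of $f_S$ modulo $x^n-1$ equals $P(x)(1+x^{n/2})+R(x)(1+x^{n/3}+x^{2n/3})$ with $P,R$ having nonnegative integer coefficients. Then $2P(1)+3R(1)=6$ forces $P=0$ or $R=0$. Together with $\Phi_{3^\mu}\mid P$ (resp.\ $\Phi_{2^\lambda}\mid R$), this says $S\bmod n$ is one of two shapes:
\begin{itemize}
\item a union of three cosets of $\{0,2^{\lambda-1}3^\mu\}$ meeting the classes $0,3^{\mu-1},2\cdot3^{\mu-1}$ modulo $3^\mu$; or
\item a union of two cosets of $\{0,2^\lambda3^{\mu-1},2^{\lambda+1}3^{\mu-1}\}$, with $2$-adic valuations $\ge\lambda$ and $=\lambda-1$ respectively.
\end{itemize}
The case analysis on where $\alpha$, $\beta$, $\gamma$ sit runs on this structure, and that is where the cross inequalities come from.

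Your sample pairings show the enumeration was not carried out. The splits $\{1,5\},\{2,4\},\{3,6\}$ and $\{1,6\},\{2,3\},\dots$ each force $2\alpha\equiv0\pmod n$, i.e.\ $\ind_g(4)\equiv\ind_g(1)$. The only possible splits are $\{1,3\}\{2,6\}\{4,5\}$, $\{1,5\}\{2,3\}\{4,6\}$, $\{1,6\}\{2,5\}\{3,4\}$, $\{1,2,4\}\{3,5,6\}$, $\{1,3,4\}\{2,5,6\}$ and $\{1,4,6\}\{2,3,5\}$.

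\textbf{Sufficiency fails in case 3.} Sufficiency cannot be ``checked by a valuation computation'' in case 3, because the five inequalities listed there do not imply the required structure. Put $\lambda=v_2(\alpha+\beta)+1$, $\mu=v_3(\alpha)+1$ and $h=2^{\lambda-1}3^\mu$. The split $\{1,6\}\{2,5\}\{3,4\}$ needs both $\alpha+\beta\equiv h$ and $\beta\equiv2\alpha+h\pmod{2^\lambda3^\mu}$. Hence $3\alpha\equiv0$, i.e.\ $v_2(\alpha)>v_2(\alpha+\beta)$, which is not in the list.

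A concrete counterexample is $q=541$, with $q-1=2^2\cdot3^3\cdot5$.
\begin{itemize}
\item Since $541\equiv5\pmod 8$, $541\equiv1\pmod{12}$ and $541\equiv1\pmod 5$, $\alpha$ is odd while $\beta,\gamma$ are even.
\item Since $2^{180}\equiv5^{180}\equiv129\not\equiv1$ and $6^{180}\equiv1\pmod{541}$, we have $3\nmid\alpha$, $3\mid\alpha+\beta$ and $3\mid\gamma-\alpha$.
\item So every condition of case 3 holds.
\end{itemize}
Yet $S$ has only two odd elements, so $\Phi_2\nmid f_S$. As $v_2(540)=2$, a tiling would need $\Phi_4\mid f_S$. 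Since $\Phi_3\mid f_S$, (T2) would then need $\Phi_{12}\mid f_S$, and that fails. Three pairs $\{a,a+6\}$ modulo $12$ would pair $0$ with an even multiple of $3$, but the only other element of $S$ divisible by $3$ is the odd $\alpha+\beta$. Two triples $\{a,a+4,a+8\}$ would need three odd elements. Hence no perfect $B[0,6](541)$ set exists.

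The paper's derivation has the same omission. In the subcases $\alpha\in A_2,\beta\in A_3$ and $\alpha\in A_3,\beta\in A_2$ it never imposes that $\{2\alpha,\beta\}$ is a coset of $\{0,h\}$. Once $v_2(\alpha)>v_2(\alpha+\beta)$ is added, case 3 is contained in case 4. So what either argument actually proves is the statement with case 3 deleted.
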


\subsection{Perfect limited magnitude burst correcting codes}
Wei and Schwartz \cite{wei2022perfect} considered the limited magnitude version of burst correcting. In this setting, errors are assumed to be confined to an interval of a certain length. They distinguished between cyclic and noncyclic bursts, where \emph{cyclic} bursts admit a sequence of errors that starts at the end of a codeword and ends at the beginning of the codeword,
while \emph{noncyclic} ones do not. The error balls of these two types of error are
$$
	\begin{aligned}
		\mathcal{E}^\circ(n,b,k_2,k_1)= &
		\{(e_0,e_1,\ldots,e_{n-1})\in [-k_1,k_2]^n:                                                           \\
		                                & \text{there is an }i\in\mathbb{Z}_n\text{ such that }e_\ell=0     \\
		                                & \text{for all }\ell\in\mathbb{Z}_n\setminus\{i,i+1,\ldots,i+b-1\}
		\}
	\end{aligned}
$$
(for cyclic bursts) and
$$
	\begin{aligned}
		\mathcal{E}(n,b,k_2,k_1)= &
		\{(e_1,e_2,\ldots,e_{n})\in [-k_1,k_2]^n:                                    \\
		                          & \text{there is an }i\in[1,n]\text{ such that }
		e_\ell=0                                                                   \\&\text{for all }\ell\in[1,n]\setminus[i,\min\{n,i+b-1\}]\}
	\end{aligned}
$$
(for noncyclic bursts), respectively.
They constructed infinite families of perfect 2-burst correcting codes for $(1,0)$-limited-magnitude errors, as well as 
perfect cyclic $\le 3$-burst correcting codes for $(1,1)$- and $(1,0)$-limited-magnitude errors, by constructing appropriate group splittings. The problem of finding perfect codes with longer bursts or of larger magnitude is still open.

In this paper, we will give a general framework for constructing perfect limited magnitude cyclic burst correcting codes, and show the existence of infinitely many such codes for $k_1+k_2\le 4$ and arbitrary burst length $b$. We also give sufficient existence conditions for general parameters $k_1, k_2$. %

\subsection{Maximal size of splitter sets}
Another research topic, started in \cite{klove2011systematic}, is concerned with how large a $B[-k_1,k_2](N)$ splitter set can be for different $N$. Define $M(k_1,k_2;N)$ to be the maximal size of a $B[-k_1,k_2](N)$ splitter set. 
Many results about $M(k_1,k_2;N)$ can be found in \cite{klove2011some,klove2011systematic,xie2020asymmetric, xie2019further,yari2013some,klove2012codes}.
In \cite{roche2018packing}, the general problem of determining the maximal size of an $A$-packing set, denoted by $\nu(A)$, i.e.
$$
	\nu(A):=\max\{|B|:B\subset G, |A\cdot B|=|A||B|\},
$$
was considered.
Graph-theoretic methods may be applied to study $\nu(A)$.
 Notice that $|A\cdot B|=|A||B|$ is equivalent to
$$
	A\cdot A^{-1}\cap B\cdot B^{-1}=\{1\}.
$$
We construct a graph $\Gamma=(V,E)$ with vertex set $V=G$, and edge set $E=\{\{x,y\}:x,y\in G,x\ne y, xy^{-1}\in A\cdot A^{-1}\}$.
It is the \emph{Cayley graph} $\Cay(G;A\cdot A^{-1})$ defined by $A\cdot A^{-1}$, see \cite{ye2020some}.
Note that $\Gamma$ is an $(|A\cdot A^{-1}|-1)$-regular graph.
It is easy to verify that $B$ is an $A$-packing set if and only if $B$ is an independent set in $\Gamma$. By Turán's theorem (as considered in \cite{roche2016packing}), we get
$$
	\nu(A)\ge \left\lceil\frac{|G|}{|A\cdot A^{-1}|}\right\rceil.
$$
A similar result obtained by Brooks' theorem can be found in \cite{ye2020some}.
In this paper, we will derive a lower bound for $M(0,3;q)$ from this graph-theoretic perspective, where $q$ is a prime.

\section{Preliminaries}
\subsection{Group factorization}
A multiset is a set whose elements have multiplicities. For a finite abelian group $(G,+)$ and multisets $A_1,\ldots,A_n$ over $G$, we define
$$
	A_1+A_2+\cdots+A_n=\{a_1+a_2+\cdots+a_n:a_i\in A_i\}
$$
as the \emph{sum} of $A_1,A_2,\ldots,A_n$. For example, $\{0,0\}+\{1,2\}=\{1,1,2,2\}$ over $\mathbb{Z}_3$.
If $A_1,\ldots, A_n$ are ordinary sets, sums $a_1+\cdots+a_n, a_1\in A_1,\ldots,a_n\in A_n$ are distinct, we say that
$A_1+A_2+\cdots+A_n$ is a \emph{direct sum}. If  moreover $G=A_1+\cdots+A_n$, we say $G=A_1+\cdots+A_n$ is a \emph{factorization} of $G$, and every $A_i$ is called its \emph{direct factor}. Notice that if $G=A_1+\cdots+A_n$ as a multiset, where $A_1,\ldots,A_n$ are multisets over $G$, then $G=A_1+\cdots+A_n$ is automatically a factorization of $G$. If $n=2$ and $G=A_1+A_2$ is a factorization, $A_2$ will be called a \emph{complementary factor} of $A_1$.

If $A_1+A_2$ is a direct sum and $G\setminus\{f\}=A_1+A_2$ for some $f\in G$, we say that $G\setminus\{f\}=A_1+A_2$ is a \emph{near-factorization}\footnote{In the original definition of near-factorization \cite{caen1990nearfactor}, $f$ was required to be $0$.} of $G$. 
\begin{lemma}[{\cite[Corollary 2.7]{buratti1996packing}}]
	Let $\{0,x\}\subset G=\mathbb{Z}_{2k+1}$. There exists an $A\subset G$ and an $f\in G$ such that
	$$G\setminus\{f\}=A+\{0,x\}$$
is a near-factorization if and only if $\gcd(x,2k+1)=1$.
\label{lem-2-near}
\end{lemma}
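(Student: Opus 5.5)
The plan is to prove both directions of Lemma~\ref{lem-2-near} directly in $G=\mathbb{Z}_{2k+1}$, writing $N=2k+1$. In each direction we pass to the cyclic subgroup $H=\langle x\rangle$ and the cosets of $H$.

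\emph{Sufficiency.} Assume $\gcd(x,N)=1$. Then multiplication by $x$ is an automorphism of $G$, so it suffices to treat $x=1$ and transport the result back. For $x=1$ take $A=\{0,2,4,\ldots,2k-2\}$. Then
\[
A+\{0,1\}=\{0,1,\ldots,2k-1\}=G\setminus\{2k\},
\]
and the sum is direct because the $2k$ sums are distinct. For general $x$ set $A'=xA$ and $f=2kx$. This gives $G\setminus\{f\}=A'+\{0,x\}$ as a near-factorization.

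\emph{Necessity.} Suppose $G\setminus\{f\}=A+\{0,x\}$ is a near-factorization, and put $d=\gcd(x,N)$. Then $H=\langle x\rangle$ has order $N/d$, which is odd, and there are $d$ cosets of $H$. Fix a coset $C=c+H$ that does not contain $f$. Since $x\in H$, translation by $x$ maps $C$ to itself, so $C$ is the disjoint union of $A\cap C$ and $(A+x)\cap C$. Also $A+x$ and $A$ have the same number of points in $C$, because translation by $x$ is a bijection on $C$. Hence $|C|=2|A\cap C|$ is even. This contradicts $|C|=N/d$ being odd. So every coset must contain $f$, which is only possible if there is exactly one coset, i.e.\ $d=1$. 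One small point: $x\neq 0$, since $\{0,x\}$ is a $2$-element set. The $d=N$ case is therefore excluded, and in any case it is covered by the same argument.

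There is no serious obstacle. The only step needing care is the coset-counting argument in the necessity direction, where the directness of the sum is used to split $C$ into two disjoint equal-size parts.
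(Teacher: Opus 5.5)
Your proof is correct. There is nothing in the paper to compare it against line by line, because the paper does not prove this lemma: it quotes it from \cite{buratti1996packing} and uses it as a black box. Its necessity half is what justifies the normalization $x=1$ in Proposition~\ref{prop-2-uniqueness}.

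Your argument is therefore a self-contained substitute for the citation. Your sufficiency half uses the same rescaling by the unit $x^{-1}$ that the paper uses in Proposition~\ref{prop-2-uniqueness}. Your factor $A=\{0,2,\ldots,2k-2\}$ with hole $2k$ is exactly the arithmetic-progression shape that Proposition~\ref{prop-2-uniqueness} later shows is forced.

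Your necessity half rests on a more general fact than you need. If $B\subseteq H$ for a subgroup $H$, then $(A+B)\cap C=(A\cap C)+B$ for every coset $C$ of $H$, so $|B|$ divides $|C|$ whenever $f\notin C$. For example, take $B=\{0,x,y\}$ and $H=\langle x,y\rangle$. The order of $H$ divides $3k+1$ and is therefore prime to $3$, so every coset must contain $f$. This gives at once the necessity of $\gcd(x,y,3k+1)=1$ in Lemma~\ref{lem-3-near}. What the citation adds is Buratti's finer classification in the three-element case, namely the identities $x+y=0$, $2x=y$, $x=2y$, which a coset count alone does not reach.

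One small slip: your remark that $x=0$ is ``covered by the same argument'' is not accurate. If $x=0$, then $A+x=A$, so $C$ no longer splits as a disjoint union of two equal halves. As you note, $\{0,x\}$ being a two-element set excludes this case, so the proof is unaffected.
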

For a(n) (ordinary) subset $A$ of an abelian group $G$, let $\Delta A=\{i-j:i,j\in A, i\ne j\}$ (which is an ordinary set in this paper). It is clear that for $x\ne 0$, $A\cap(A+x)=\varnothing$ if and only if $x\notin\Delta A$. If $A+x=A$, we say that $x$ is a \emph{period} of $A$.

\begin{prop}
	Let $n=2k+1$, $k\ge 1$, $x\in \mathbb{Z}_n$, $x\ne 0$, $A\subset\mathbb{Z}_n$ is an ordinary set of size $k$, such that $\mathbb{Z}_n\setminus\{a\}=\{0,x\}+A$ is a near-factorization for some $a$, or equivalently,
	$$
		A\cap (A+x)=\varnothing,
	$$
	then for any $x'\in\mathbb{Z}_n$, $x'\ne \pm x$, $A\cap(A+x')=\varnothing$ cannot hold.
	\label{prop-2-uniqueness}
\end{prop}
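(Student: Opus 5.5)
The plan is to normalize to $x=1$ by scaling, show that $A$ is then forced to be a translate of the odd numbers $\{1,3,\ldots,2k-1\}$, and compute $\Delta A$ directly.

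\textbf{Step 1: $\gcd(x,n)=1$.} Lemma~\ref{lem-2-near} already gives this, but a direct argument is short. Let $d=\gcd(x,n)$. Translation by $x$ splits $\mathbb{Z}_n$ into $d$ cycles, each of length $n/d$. Since $n$ is odd, $n/d$ is odd. The condition $A\cap(A+x)=\varnothing$ says that $A$ contains no two consecutive elements on any such cycle. On a cycle of odd length $n/d$, such a set has at most $(n/d-1)/2$ elements. Summing over the $d$ cycles gives $|A|\le (n-d)/2$. If $d\ge 3$, this is at most $k-1$, contradicting $|A|=k$. Hence $d=1$.

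\textbf{Step 2: reduce to $x=1$.} Since $x$ is a unit in $\mathbb{Z}_n$, multiplication by $x^{-1}$ is an automorphism of $\mathbb{Z}_n$. It preserves the conditions $A\cap(A+y)=\varnothing$, sending $y$ to $x^{-1}y$, and it maps the pair $\pm x$ to $\pm1$. So we may assume $x=1$.

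\textbf{Step 3: determine $A$.} Now $A$ is a set of $k$ elements on the cycle $\mathbb{Z}_{2k+1}$ with no two cyclically consecutive elements. Write the elements of $A$ in cyclic order and look at the $k$ gaps between successive elements. Each gap is at least $2$, and the gaps sum to $2k+1$. Therefore exactly one gap equals $3$ and all the others equal $2$. After a translation (which does not affect $\Delta A$), this means
$$A=\{1,3,5,\ldots,2k-1\}.$$

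\textbf{Step 4: compute $\Delta A$.} The differences $2i-2j$ with $i,j\in[1,k]$, $i\ne j$, give the residues $\pm2,\pm4,\ldots,\pm(2k-2)$.
\begin{itemize}
\item The positive ones are the even residues $2,4,\ldots,2k-2$.
\item Modulo $2k+1$, we have $-2j\equiv 2k+1-2j$, so the negative ones are the odd residues $3,5,\ldots,2k-1$.
\end{itemize}
Together, $\Delta A=\mathbb{Z}_n\setminus\{0,1,2k\}=\mathbb{Z}_n\setminus\{0,\pm1\}$. Recall that for $x'\ne0$, $A\cap(A+x')=\varnothing$ holds exactly when $x'\notin\Delta A$. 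So for every $x'\notin\{0,\pm1\}$ we get $A\cap(A+x')\ne\varnothing$.

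The case $x'=0$ is trivial, since $A\cap A=A\ne\varnothing$ because $k\ge1$. Undoing the scaling of Step 2 gives the statement for general $x$.

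The only step needing care is Step 3. The gap-counting argument makes it routine, provided Step 1 has first shown that the translation cycle is a single cycle of length $2k+1$.
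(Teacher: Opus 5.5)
Your proof is correct. It reaches the same endpoint as the paper: after normalizing to $x=1$, $A$ is an arithmetic progression of difference $2$, hence $\Delta A=\mathbb{Z}_n\setminus\{0,\pm1\}$. The route to that structural fact is different, though.

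The paper takes $\gcd(x,n)=1$ from Lemma~\ref{lem-2-near} (Buratti) and then argues with translates:
\begin{itemize}
\item since $(A+1)\cap(A+2)=\varnothing$, one has $A+2\subseteq A\sqcup\{a\}$;
\item $2$ cannot be a period of $A$ (otherwise $1$ would be), so $(A+2)\sqcup\{b\}=A\sqcup\{a\}$ for some $b\ne a$;
\item chasing $c\mapsto c\pm 2$ from an element $c\in A$ with $c\ne b,\,a-2$ shows $A=\{b,b+2,\ldots,b+2k-2\}$ with $b+2k=a$.
\end{itemize}
The choice of such a $c$ needs $k\ge 3$, so the paper checks $k=1,2$ by hand.

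You do two things differently. First, you prove $\gcd(x,n)=1$ directly, by bounding independent sets on the odd-length cycles of translation by $x$. Second, you replace the $\pm2$ chasing with a pigeonhole observation: $k$ cyclic gaps, each at least $2$, summing to $2k+1$, must be $2,\ldots,2,3$.

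Your version is self-contained and uniform in $k$; for $k=1,2$ your Step~4 degenerates correctly, with no separate check needed. It is also more transparent. The paper's argument stays in the language of translates and periods used for factorizations elsewhere, but gains nothing essential over yours here.
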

\begin{proof}
	The cases $k=1,2$ are checked directly. Hence assume $k\ge3$. There is no loss of generality in the assumption that $x=1$, since by Lemma \ref{lem-2-near}, $\gcd(x,n)=1$, and we may replace $A$ with $A/x$. Now we have
	$$
		A\sqcup (A+1)\sqcup\{a\}=\mathbb{Z}_n
	$$
	for some $a\in\mathbb{Z}_n$. Consider $A+2$.
	We have $(A+1)\cap(A+2)=\varnothing$ since $A\cap (A+1)=\varnothing$. Then $(A+2)\subset A\sqcup\{a\}$.
	Notice that $|A+2|=|A|=k$. If $A+2=A$, then $2$ is  a period of $A$, hence $A=A+2(k+1)=A+1$, a contradiction.
	So $A+2$ must contain $a$, and there is a $b\in\mathbb{Z}_n$, $b\ne a$, such that
	$$
		(A+2)\sqcup\{b\}=A\sqcup\{a\}.
	$$
	From this we see that
	\begin{enumerate}
		\item if $c\in A$, $c\ne b$, then $c\in (A+2)$ and hence $c-2\in A$;
		\item similarly, if $c\in A$ and $c+2\ne a$, then $c+2\in A$.
	\end{enumerate}
	Now pick $c\in A$, $c\ne b,a-2$. This is always possible for $k \ge 3$. Then $c\pm 2\in A$.
	If $c+2i\in A$ for $i=1,2,\ldots$, then $c+2\mathbb{Z}_n=\mathbb{Z}_n\subset A$, a contradiction. So there exists $i_0\ge 1$ such that $c+2i_0\in A$ and $c+2(i_0+1)\notin A$, which implies that $c+2i_0=a-2$.
	Similarly there exists $j_0\ge 1$ such that $c-2j_0=b$.
	Since $c$ is chosen arbitrarily,
	$A$ must be
	$\{b,b+2,b+4,\ldots,b+2k-2\}$ where $b+2k=a$.
	Then $\Delta A=\{\pm 2,\pm4,\ldots,\pm2(k-1)\}=
		\{2,4,\ldots,2(k-1),2k-1,2k-3,\ldots,3\}=\mathbb{Z}_n\setminus\{0,\pm1\}$.
	So if $x'\in\mathbb{Z}_n$ such that $A\cap (A+x')=\varnothing$, then $x'=\pm1$.
\end{proof}

\begin{lemma}[{\cite[Theorem 2.9]{buratti1996packing}}]
	Let $\{0,x,y\}\subset G=\mathbb{Z}_{3k+1}$. There exists an $A\subset G$ and an $f\in G$ such that
	$$G\setminus\{f\}=A+\{0,x,y\}$$
is a near-factorization if and only if $\gcd(x,y,3k+1)=1$ and one of the following
	identities holds in $G$: $x+y=0$, $2x=y$, $x=2y$.
\label{lem-3-near}
\end{lemma}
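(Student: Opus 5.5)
The plan is to reduce the problem to the case $f=0$ by translation, exploit the invariance of the conditions under affine maps, and prove sufficiency by an explicit construction. For necessity, I would combine a coset-counting argument with the symmetry theorem for near-factorizations.

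\textbf{Invariance.} Translating $A$ by $-f$ moves the missing element to $0$. Replacing $S=\{0,x,y\}$ by a translate $S-s$ with $s\in S$ only translates $A+S$. This gives the three sets $\{0,x,y\}$, $\{-x,0,y-x\}$ and $\{-y,x-y,0\}$. The quantity $\gcd(x,y,n)$ (where $n=3k+1$) is the gcd of the differences of $S$, so it is translation invariant. Moreover, the three identities $x+y=0$, $2x=y$, $x=2y$ all say that $S$ is a translate of a $3$-term arithmetic progression $\{0,u,2u\}$:
\begin{itemize}
\item if $y=2x$, then $S=\{0,x,2x\}$;
\item if $x=2y$, then $S=\{0,y,2y\}$;
\item if $x+y=0$, then $S=\{-x,0,x\}$, a translate of $\{0,x,2x\}$.
\end{itemize}
Hence the statement is equivalent to the following: $S$ admits a near-factorization if and only if $S$ is a translate of $\{0,u,2u\}$ with $\gcd(u,n)=1$.

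\textbf{Sufficiency.} If $\gcd(u,n)=1$, multiplication by $u^{-1}$ is an automorphism of $\mathbb{Z}_n$, so we may take $S=\{0,1,2\}$. Then $A=\{1,4,7,\ldots,3k-2\}$ gives the direct sum $A+S=\{1,2,\ldots,3k\}=\mathbb{Z}_n\setminus\{0\}$.

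\textbf{Necessity of the gcd condition.} Suppose $d=\gcd(x,y,n)>1$, and let $H=d\mathbb{Z}_n$. Then $S\subset H$, so each coset $c+H$ satisfies $(A+S)\cap(c+H)=(A\cap(c+H))+S$. This intersection has size $3|A\cap(c+H)|$.
\begin{itemize}
\item A coset not containing $f$ is covered exactly, so $3\mid n/d$.
\item There is at least one such coset, since $d>1$ gives at least two cosets.
\item But $3\nmid n$, so $3\nmid n/d$, which is a contradiction.
\end{itemize}
Thus $\gcd(x,y,n)=1$.

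\textbf{Necessity of the arithmetic-progression structure.} This is the main obstacle. I would first try to invoke the symmetry theorem of de Caen, Gregory, Hughes and Kreher. It states that if $G\setminus\{0\}=A+S$ is a near-factorization of an abelian group, then $A$ and $S$ can each be translated to become symmetric ($-X=X$). In group-ring language, the proof starts from $AS=G-1$ and the identity $A\cdot G=|A|G$. One then shows that $A^{(-1)}S^{(-1)}$ and $AS$ agree after suitable translation, by comparing $AS\,A^{(-1)}$ with its image under inversion. I would reprove this if a self-contained argument is wanted.

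Granting symmetry, some translate $S'=S-s$ satisfies $-S'=S'$. Since $n$ is odd, $0$ is the only element with $z=-z$. A $3$-set closed under negation must therefore contain $0$ together with a pair $\pm u$, so $S'=\{-u,0,u\}$, a translate of $\{0,u,2u\}$. Translating back, one of the three identities holds.

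\textbf{Fallback.} If citing the symmetry theorem is undesirable, I would instead mimic the proof of Proposition \ref{prop-2-uniqueness}. Write $\mathbb{Z}_n=A\sqcup(A+x)\sqcup(A+y)\sqcup\{0\}$ and study how $A+x$ and $A+y$ shift under further translation by $x$. The aim is to force $A$ to be a union of long progressions, which pins down $y\in\{2x,-x,x/2\}$. This is the step whose case analysis I expect to be most delicate.
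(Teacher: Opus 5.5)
The paper gives no proof of this lemma; it is quoted from Buratti, so your argument has to stand on its own. Most of it does. The reformulation as ``$S$ is a translate of $\{0,u,2u\}$ with $\gcd(u,n)=1$'' is correct. So is the explicit near-factorization $\{1,4,\ldots,3k-2\}+\{0,1,2\}=\mathbb{Z}_n\setminus\{0\}$. The coset-counting proof of the gcd condition, which uses $3\nmid n$, is also correct.

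\textbf{The gap.} The step that fails is ``since $n$ is odd, $0$ is the only element with $z=-z$.'' Here $n=3k+1$ is even whenever $k$ is odd ($n=4,10,16,\ldots$). In that case a negation-closed $3$-set can be $\{n/2,u,-u\}$, which does not contain $0$, and this really occurs. For example, $\{2,5,8\}$ is a symmetric translate of the near-factor $\{0,3,6\}$ of $\mathbb{Z}_{10}$ (take $A=\{1,2,3\}$).

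\textbf{The repair.} Use the symmetry statement in the form the group-ring argument actually produces, namely $-S=g+S$ for some $g$, and finish directly. Since $0\in -S$, we have $g\in\{0,-x,-y\}$.
\begin{itemize}
\item If $g=-x$, then $-y=y-x$, so $x=2y$.
\item If $g=-y$, then $y=2x$.
\item If $g=0$, then $\{-x,-y\}=\{x,y\}$. The option $2x=2y=0$ is impossible because $\mathbb{Z}_n$ has at most one element of order $2$, so $x+y=0$.
\end{itemize}
No parity assumption is needed.

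\textbf{Proving the symmetry statement.} Your sketch of why the symmetry theorem holds is too vague to check. Here is a short self-contained version in the paper's mask-polynomial language.
\begin{itemize}
\item Normalize $f=0$. Then $f_A(\zeta)f_S(\zeta)=-1$ for every $n$-th root of unity $\zeta\ne1$, while $f_A(1)=k\ne0$.
\item For the multiset $C=A+(-S)$ this gives $|f_C(\zeta)|=1$ for $\zeta\ne1$ and $f_C(1)=n-1$. Hence $f_C(x)f_C(x^{-1})\equiv 1+(n-2)(1+x+\cdots+x^{n-1})\pmod{x^n-1}$.
\item Comparing constant terms, the multiplicities $c_g$ of $C$ satisfy $\sum c_g^2=n-1=\sum c_g$. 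So every $c_g\in\{0,1\}$, and $C=\mathbb{Z}_n\setminus\{g_0\}=g_0+(A+S)$.
\item Therefore $f_A(x)\bigl(f_{-S}(x)-x^{g_0}f_S(x)\bigr)\equiv 0\pmod{x^n-1}$. Since $x^n-1$ is squarefree and $f_A$ vanishes at no $n$-th root of unity, $-S=g_0+S$.
\end{itemize}
With these two repairs your proof is complete, and the fallback case analysis is unnecessary.
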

\subsection{Mask polynomial}
With a finite multiset of nonnegative integers $A$, we associate a \emph{mask polynomial}:
$$
	f_A(x)=\sum_{i\in A} x^i.
$$
For example, $f_{\{0,1\}}=1+x$, $f_{\{0,1,1\}}=1+2x$. Mask polynomials naturally relate multisets to elements in $\mathbb{Z}[x]$. Under this correspondence, the multiset union (denoted by $A \uplus B$) corresponds to the addition of their associated mask polynomials. Furthermore, the sum of multisets corresponds to the multiplication of these polynomials. To be specific,
\begin{lemma}
	Let $A,B$ be two finite multisets of nonnegative integers, then $$
	f_A(x)+f_B(x)=f_{A\uplus B}(x)
	$$ and $$
		f_A(x)f_B(x)=f_{A+B}(x).$$
\end{lemma}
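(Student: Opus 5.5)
The plan is to prove both identities by comparing coefficients, after writing each mask polynomial in terms of multiplicities. For a finite multiset $A$ of nonnegative integers, let $m_A(i)$ denote the multiplicity of $i$ in $A$, with $m_A(i)=0$ if $i\notin A$. Then $f_A(x)=\sum_{i\in A}x^i$, where the sum runs over $A$ with repetition, can be rewritten as
$$
f_A(x)=\sum_{i\ge 0} m_A(i)\,x^i .
$$
This is a finite sum because $A$ is finite. Conversely, a polynomial with nonnegative integer coefficients determines the multiset whose multiplicity function is its coefficient sequence, so it suffices to compare coefficients.

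For the first identity, I use that the multiset union $A\uplus B$ is by definition the multiset with $m_{A\uplus B}(i)=m_A(i)+m_B(i)$ for every $i$. The coefficient of $x^i$ in $f_{A\uplus B}$ is therefore $m_A(i)+m_B(i)$, which is the coefficient of $x^i$ in $f_A+f_B$. This proves $f_A+f_B=f_{A\uplus B}$.

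For the second identity, I make precise the definition of the multiset sum given in the preliminaries (for instance $\{0,0\}+\{1,2\}=\{1,1,2,2\}$). The multiset $A+B$ is formed by taking every pair $(a,b)$, where $a$ runs over the elements of $A$ with repetition and $b$ over those of $B$, and recording $a+b$ once per pair. The sum is taken in $\mathbb{Z}$, so the result is again a finite multiset of nonnegative integers. Grouping the pairs by their values $a=i$ and $b=j$, the multiplicity of $k$ in $A+B$ is
$$
m_{A+B}(k)=\sum_{i+j=k} m_A(i)\,m_B(j).
$$
This is exactly the coefficient of $x^k$ in the Cauchy product
$$
\Bigl(\sum_i m_A(i)x^i\Bigr)\Bigl(\sum_j m_B(j)x^j\Bigr).
$$
Hence $f_Af_B=f_{A+B}$. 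Equivalently, one can expand $\bigl(\sum_{a\in A}x^a\bigr)\bigl(\sum_{b\in B}x^b\bigr)=\sum_{(a,b)}x^{a+b}$ by distributivity and observe that the right-hand side is, by definition, $f_{A+B}$.

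There is no real obstacle here. The only point requiring care is to state explicitly that multiset sums count ordered pairs with multiplicity, and that the sums are formed in $\mathbb{Z}$ and not in some $\mathbb{Z}_N$; otherwise exponents would need to be reduced modulo $x^N-1$.
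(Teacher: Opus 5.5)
Your proof is correct and is essentially the argument the paper relies on. The paper states this lemma without proof, treating it as immediate from the definitions. Your coefficient comparison supplies exactly that routine check: multiplicities add under $\uplus$, and the multiplicity of $k$ in $A+B$ is the Cauchy convolution $\sum_{i+j=k}m_A(i)m_B(j)$. Your closing remark that the sums are formed in $\mathbb{Z}$ is also right, and it is what separates this lemma from its modular counterpart, Lemma~\ref{mask-factor}.
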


These definitions and properties extend naturally to multisets of elements in $\mathbb{Z}_n$, where the mask polynomial $f_A(x)$ is viewed as an element of the quotient ring $\mathbb{Z}[x]/(x^n - 1)$. The correspondence between multiset addition in $\mathbb{Z}_n$ and polynomial multiplication remains valid under this setting:
\begin{lemma}
\label{mask-factor}
Let $A,B$ be two finite multisets of elements in $\mathbb{Z}_n$, then $$
	f_A(x)+f_B(x)\equiv f_{A\uplus B}(x)\pmod{(x^n-1)}
	$$ and $$
		f_A(x)f_B(x)\equiv f_{A+B}(x)\pmod{(x^n-1)}.$$
\end{lemma}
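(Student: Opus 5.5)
The statement to prove is the last lemma above: for multisets over $\mathbb{Z}_n$, the mask polynomials satisfy $f_A+f_B\equiv f_{A\uplus B}$ and $f_Af_B\equiv f_{A+B}$ modulo $x^n-1$. The plan is to reduce to the integer case (the previous lemma) by choosing representatives. Every element of $\mathbb{Z}_n$ is identified with its least nonnegative representative in $[0,n-1]$. A multiset over $\mathbb{Z}_n$ then becomes a multiset $\tilde A$ of nonnegative integers with the same multiplicities, and $f_A$ is by definition the class of $f_{\tilde A}$ in $\mathbb{Z}[x]/(x^n-1)$.

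The key observation is that $x^n\equiv 1$ in this quotient ring. So for any integer $m\ge 0$ we have $x^m\equiv x^{m \bmod n}$. In other words, the class of a monomial depends only on the residue of its exponent modulo $n$, and any choice of nonnegative representatives gives the same class.

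For the additive identity, $\widetilde{A\uplus B}=\tilde A\uplus\tilde B$, since multiplicities simply add. The previous lemma then gives equality already in $\mathbb{Z}[x]$, so the congruence holds trivially.

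For the multiplicative identity, first apply the previous lemma to get $f_{\tilde A}f_{\tilde B}=f_{\tilde A+\tilde B}$ in $\mathbb{Z}[x]$. The multiset $\tilde A+\tilde B$ consists of the integers $\tilde a+\tilde b$, counted over all pairs of elements of $\tilde A$ and $\tilde B$ with multiplicity. Its elements lie in $[0,2n-2]$, and they reduce modulo $n$ exactly to the elements $a+b$ of the multiset $A+B$, pair by pair, with the same multiplicities. Reducing each exponent by the observation above gives $f_{\tilde A+\tilde B}\equiv f_{A+B}\pmod{x^n-1}$, which is the claim.

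The only point needing care is the bookkeeping of multiplicities. Reduction modulo $n$ preserves the pairing between $(a,b)$ and $a+b$, so counts match on both sides. I expect no real obstacle here; this is routine.
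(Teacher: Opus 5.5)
Your proof is correct: reducing to the integer case via representatives in $[0,n-1]$ and using $x^m\equiv x^{m \bmod n}\pmod{x^n-1}$ is the routine verification behind the paper's remark that the integer-case lemma ``extends naturally'' to $\mathbb{Z}[x]/(x^n-1)$. The paper gives no written proof of this lemma, so your argument supplies the intended details, and it handles the multiplicities correctly.
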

Throughout this paper, when we explicitly see $A$ as a multiset, we denote by $|A|$ the \emph{cardinality} or \emph{size} of $A$ as a multiset (i.e., the sum of the multiplicities of its elements). For instance, $|\{0, 0, 1\}| = 3$, and under this convention, we always have that $f_A(1) = |A|$.
\subsection{Group splitting}
A concept called ``group splittings'' is connected with perfect splitter sets.
\begin{dfn}
	Let $M$ be a finite set of nonzero integers and $(G,+)$ a finite abelian group.
	We say that $G$ has a splitting $G\setminus\{0\}=MS$ if there exists a subset $S\subset G$ such that every nonzero element of $G$ can be uniquely represented as $ms, m\in M, s\in S$. In this case, we say $M$ is the \emph{multiplier set} of the splitting and $S$ its \emph{splitter set}\footnote{It was named \emph{splitting set} in {\cite[10.3]{szabo2009factoring}}. Here we follow {\cite[12.1]{etzion2022perfect}} in consistency with our definition of splitter sets.}, and we say that $M$ splits $G$.
\end{dfn}

For group splittings, interested readers can refer to \cite{hickerson1983splittings, stein1967factoring,stein1972symmetric,
	szabo2009factoring}. We now generalize this concept to include nonperfect splitter sets:
\begin{dfn}
	Let $M$ be a finite set of nonzero integers and $(G,+)$ a finite abelian group.
	We say that $M$ \emph{partially splits} $G$ with a splitter set $S$ if $|MS|=|M||S|$ and $0\notin MS$. If $M$ partially splits $G$, we say that $G$ has a partial splitting $G\setminus\{0\}\supset MS$ and $M$ is the \emph{multiplier set} of this partial splitting and $S$ its \emph{splitter set}.
\end{dfn}

\subsection{Number theoretic tools}

The $n$-th cyclotomic polynomial $\Phi_n(x)$
is defined to be the monic minimal polynomial of an $n$-th primitive root of unity over $\mathbb{Q}$, and hence is irreducible in $\mathbb{Q}[x]$. In fact, $\Phi_n(x)\in\mathbb{Z}[x]$. For $n\ne m$, $\gcd(\Phi_n(x),\Phi_m(x))=1$ since they have no common zeros in $\mathbb{C}$.

\begin{lemma}[{\cite[\S 46]{nagell2021introduction}}] We have
	$$\Phi_n(1)=\begin{cases}p,&\text{if $n$ is a power of a prime $p$,}\\1,&\text{otherwise.}\end{cases}$$
\end{lemma}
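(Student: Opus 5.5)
The formula $\Phi_n(1)=p$ for a prime power $n=p^a$ (with $a\ge 1$) and $\Phi_n(1)=1$ otherwise (for $n>1$) follows from the factorization $x^n-1=\prod_{d\mid n}\Phi_d(x)$ together with induction on $n$. Throughout, the case $n=1$ is excluded: $\Phi_1(x)=x-1$ gives $\Phi_1(1)=0$, and the statement is meant for $n\ge 2$.

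First divide $x^n-1=\prod_{d\mid n}\Phi_d(x)$ by $\Phi_1(x)=x-1$. This gives
\[
1+x+\cdots+x^{n-1}=\prod_{d\mid n,\ d>1}\Phi_d(x),
\]
and evaluating at $x=1$ yields
\[
n=\prod_{d\mid n,\ d>1}\Phi_d(1).
\]
This identity holds for every $n\ge 2$ and drives the whole argument.

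Next, treat prime powers directly. For $n=p^a$ we have $\Phi_{p^a}(x)=\Phi_p(x^{p^{a-1}})=1+x^{p^{a-1}}+\cdots+x^{(p-1)p^{a-1}}$, so $\Phi_{p^a}(1)=p$. Alternatively, apply the product identity inductively: the divisors $d>1$ of $p^a$ are $p,p^2,\dots,p^a$, so $p^a=\prod_{j=1}^a\Phi_{p^j}(1)$, and induction gives $\Phi_{p^a}(1)=p$.

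Finally, induct on $n$ for $n$ that are not prime powers. Write $n=\prod_i p_i^{a_i}$ with at least two distinct primes. Every proper divisor $d>1$ of $n$ is either a prime power, contributing $\Phi_d(1)=p$, or a non–prime-power, contributing $\Phi_d(1)=1$ by the induction hypothesis. The prime-power divisors $p_i^{j}$ with $1\le j\le a_i$ multiply to $\prod_i p_i^{a_i}=n$, and none of them equals $n$ itself. Hence
\[
n=n\cdot\Phi_n(1),
\]
which forces $\Phi_n(1)=1$.

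The only delicate point is the bookkeeping in this last step: checking that the prime-power divisors contribute exactly $n$ and that $n$ is not among them. Both are immediate from the factorization of $n$ into distinct primes.
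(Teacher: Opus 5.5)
Your argument is correct and complete. The paper does not prove this lemma; it only cites Nagell, \S 46, so your write-up is a self-contained proof rather than a variant of one in the paper. Your caveat about $n=1$ is also right and worth keeping. Since $1=p^0$ and $\Phi_1(1)=0$, the statement as printed holds only for $n\ge 2$. That is all the paper needs, e.g.\ $\Phi_{p^\lambda}(1)=p$ with $\lambda\ge 1$ in the Coven--Meyerowitz argument.

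For comparison, the usual textbook route is M\"obius inversion of $x^n-1=\prod_{d\mid n}\Phi_d(x)$, which gives $\Phi_n(x)=\prod_{d\mid n}(x^d-1)^{\mu(n/d)}$. For $n>1$ one has $\sum_{d\mid n}\mu(n/d)=0$, so each $x^d-1$ may be replaced by $1+x+\cdots+x^{d-1}$. Setting $x=1$ then gives the closed form $\Phi_n(1)=\prod_{d\mid n}d^{\mu(n/d)}$, which is $p$ for $n=p^a$ and $1$ otherwise.

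Your strong induction is this inversion done by hand. Both $d\mapsto\Phi_d(1)$ and the function equal to $p$ on $p^a$ and $1$ on other $d>1$ satisfy $\prod_{d\mid n,\,d>1}(\cdot)=n$ for every $n\ge 2$, and this recursion determines the values uniquely. Your version needs no M\"obius function and only evaluates a genuine polynomial identity at $x=1$; the M\"obius version must justify evaluating a rational-function identity there. In exchange, the M\"obius version gives an explicit formula for $\Phi_n(x)$ and handles both cases in one computation.

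One cosmetic point: in your final step it is the values $\Phi_{p_i^j}(1)=p_i$, not the divisors $p_i^j$ themselves, whose product over all $i$ and $1\le j\le a_i$ equals $n$.
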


For every odd prime $p$ with a primitive root $g$,
let $\ind_g(x)$ be the least integer $e$ such that $x\equiv g^e\pmod{p}$. If $g$ is odd, then $g$ is also a primitive root modulo $2p$. The multiplicative group $\mathbb{Z}_{2p}^\times$ consists of all odd numbers in $\mathbb{Z}_{2p}$ except $p$, and for $x\in\mathbb{Z}_{2p}^\times$, we can prove that $\ind_g(x\bmod p)$ is also the least integer $e$ that $x\equiv g^e\pmod{2p}$. So in this case, we also write $\ind_g(x)$ for such an integer $e$.

In some proofs we will use results from algebraic number theory. We will denote by $\mathcal{O}_K$ the ring of integers of a number field $K$. 

\section{A new nonexistence result on quasi-perfect splitter sets}
In this section, we prove Proposition \ref{quasi-ktm}.
\begin{proof}
	Suppose $B$ is a quasi-perfect $B[0,k](ktm)$ set, then
	$|B|=tm-1$.
	Notice that there are $tm-1$ multiples of $k$ in $\mathbb{Z}_{ktm}\setminus\{0\}$ and $|kB|=|B|=tm-1$, thus $kB=\{k, 2k,\ldots, k(tm-1)\}$ is exactly the set of all multiples of $k$ in $\mathbb{Z}_{ktm}\setminus\{0\}$.
	Moreover, since the solutions to the equation
	$$kx\equiv ki\pmod{ktm}$$
	in $[0,ktm-1]$ are of the form $i+ytm$, where $y$ is some integer in $[0,k-1]$, so the set $B$ must be of the form
	$\{i+f(i)tm : i\in [1,tm-1]\}$ where $f$ is some function from $[1,tm-1]$ to $[0,k-1]$, and elements of $B$ are congruent to $1,2,\ldots,tm-1$ modulo $tm$.

	Now consider the multiples of $t$. There are $km-1$ multiples of $t$ in $\mathbb{Z}_{ktm}\setminus\{0\}$. Consider the set $iB, i\in [1,k]$. If $t\mid i$, then $tm-1$ distinct elements of $iB$ are all multiples of $t$; otherwise, there are at least $m-1$ multiples of $t$ in $iB$. By the discussion above, we can calculate there are at least
	$$
		\begin{aligned}
			 & \left\lfloor \frac{k}t\right\rfloor(tm-1)+\left(k-\left\lfloor \frac{k}t\right\rfloor\right)(m-1)
			\\=&k(m-1)+\left\lfloor \frac{k}t\right\rfloor((t-1)m)
		\end{aligned}$$
	multiples of $t$ in $\bigsqcup_{i=1}^k iB$. Compare it to the total number $km-1$, we deduce that
	$$
		\begin{aligned}
			m & \le\frac{k-1}{(t-1)\left\lfloor \frac{k}t\right\rfloor}
			\le\frac{t\left\lfloor \frac{k}t\right\rfloor+t-2}{(t-1)\left\lfloor \frac{k}t\right\rfloor}                          \\
			  & =1+\frac{1}{t-1}+\frac{1}{\left\lfloor \frac{k}t\right\rfloor}-\frac{1}{(t-1)\left\lfloor \frac{k}t\right\rfloor}
			\le 3-\frac{1}{(t-1)\left\lfloor \frac{k}t\right\rfloor}<3.
		\end{aligned}
	$$
\end{proof}

\section{Existence of quasi-perfect $B[0,3](n)$ splitter sets}
In this section, we discuss quasi-perfect $B[0,3](n)$ sets.
\subsection{Existence of nonsingular quasi-perfect $B[0,3](n)$ splitter sets}
We first consider \emph{nonsingular} quasi-perfect $B[0,3](n)$ sets for composite $n$. First recall that
\begin{thm}[{\cite[Theorem 5]{yari2013some}}]
	Let $B_1$ be a $B[-k_1,k_2](q_1)$ set and
	$B_2$ be a $B[-k_1,k_2](q_2)$ set, where
	$\gcd(k_1!k_2!,q_2)=1$.
	Then
	$$
		\{ c+rq_1 : c\in B_1, r\in [0,q_2-1]\}\cup\{q_1c:c\in B_2\}
	$$
	is a $B[-k_1,k_2](q_1q_2)$ set of size $q_2|B_1|+|B_2|$.
	Especially, if $B_1,B_2$ are perfect, then there exists a perfect $B[-k_1,k_2](q_1q_2)$ set.
	\label{thm-q1q2}
\end{thm}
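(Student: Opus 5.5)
The plan is a direct verification in $\mathbb{Z}_{q_1q_2}$. Write $B_1'=\{c+rq_1 : c\in B_1,\ r\in[0,q_2-1]\}$ and $B_2'=\{q_1c : c\in B_2\}$, and set $\Lambda=[-k_1,k_2]^*$. The goal is to show that the $|\Lambda|(q_2|B_1|+|B_2|)$ products $\lambda s$, with $\lambda\in\Lambda$ and $s\in B_1'\cup B_2'$, are nonzero and pairwise distinct. This also covers the size claim: the elements of $B_1'$ are distinct because $c+rq_1$ determines $c$ modulo $q_1$ and then determines $r$, and injectivity of $\lambda s$ already forces $B_1'\cap B_2'=\varnothing$ and $|B_2'|=|B_2|$.

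The main tool is reduction modulo $q_1$. For $s=c+rq_1\in B_1'$ we get $\lambda s\equiv\lambda c\pmod{q_1}$, which is nonzero mod $q_1$ since $B_1$ is a splitter set. For $s\in B_2'$ we get $\lambda s\equiv 0\pmod{q_1}$. So products coming from $B_1'$ can never collide with products from $B_2'$, and none of them is $0$.

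Next, compare two products from $B_1'$. If $\lambda(c+rq_1)=\lambda'(c'+r'q_1)$, reducing mod $q_1$ and using the $B_1$ property gives $\lambda=\lambda'$ and $c=c'$. It then remains to show $\lambda rq_1\equiv\lambda r'q_1\pmod{q_1q_2}$, i.e.\ $\lambda(r-r')\equiv0\pmod{q_2}$, forces $r=r'$. This is where $\gcd(k_1!k_2!,q_2)=1$ is used: every $\lambda\in\Lambda$ has $|\lambda|\le\max(k_1,k_2)$, so $|\lambda|$ divides $k_1!$ or $k_2!$ and is therefore invertible mod $q_2$. Since $r,r'\in[0,q_2-1]$, we get $r=r'$.

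Then compare products from $B_2'$. We have $\lambda q_1c\equiv\lambda' q_1c'\pmod{q_1q_2}$ if and only if $\lambda c\equiv\lambda'c'\pmod{q_2}$. The $B_2$ property makes these distinct, and nonzero mod $q_2$, hence nonzero in $\mathbb{Z}_{q_1q_2}$.

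For the perfect case, if $|B_i|=(q_i-1)/(k_1+k_2)$, then the size is
\[
\frac{q_2(q_1-1)+q_2-1}{k_1+k_2}=\frac{q_1q_2-1}{k_1+k_2},
\]
which is perfect. I expect no step to be a real obstacle. The only delicate points are checking that $\lambda$ is invertible mod $q_2$, including negative $\lambda$, and keeping the two reductions (mod $q_1$ and mod $q_2$) straight.
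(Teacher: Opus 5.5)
Your proof is correct and complete. Reducing modulo $q_1$ separates the lifted part $\{c+rq_1\}$ from $\{q_1c : c\in B_2\}$ and recovers $(\lambda,c)$ in the lifted part. Every $\lambda\in[-k_1,k_2]^*$ is invertible modulo $q_2$ because $\gcd(k_1!k_2!,q_2)=1$, which then pins down $r$. The $q_1B_2$ part reduces to the $B_2$ property after cancelling $q_1$, and needs no coprimality. The paper gives no proof of its own here; it simply quotes the result from \cite{yari2013some}, and your argument is the standard one.
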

If $B_1$ is perfect, that is, $|B_1|=\frac{q_1-1}{k_1+k_2}$, and $B_2$ is quasi-perfect,
that is $|B_2|=\frac{q_2-1-t}{k_1+k_2}$, where
$t\in [1,k_1+k_2-1]$.
Then there exists a $B[-k_1,k_2](q_1q_2)$
set of size $q_2\frac{q_1-1}{k_1+k_2}+\frac{q_2-1-t}{k_1+k_2}=\frac{q_1q_2-1-t}{k_1+k_2}$, which is a quasi-perfect $B[-k_1,k_2](q_1q_2)$ set.
So we have the following
\begin{prop}
	Suppose that $\gcd(n,m)=1$, $\gcd(6,m)=1$ and there exists a
	\emph{perfect}
	$B[0,3](n)$ splitter set
	and a \emph{quasi-perfect}
	$B[0,3](m)$ splitter set. Then there exists a
	\emph{quasi-perfect} $B[0,3](nm)$ splitter set.
\end{prop}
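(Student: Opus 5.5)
This is a direct application of Theorem \ref{thm-q1q2} with $k_1=0$, $k_2=3$, taking $q_1=n$ and $q_2=m$. Let $B_1$ be the given perfect $B[0,3](n)$ set and $B_2$ the given quasi-perfect $B[0,3](m)$ set. The hypothesis of Theorem \ref{thm-q1q2} asks for $\gcd(k_1!k_2!,q_2)=\gcd(0!\cdot 3!,m)=\gcd(6,m)=1$, which is assumed. The theorem then gives a $B[0,3](nm)$ set
$$
B=\{c+rn : c\in B_1,\ r\in[0,m-1]\}\cup\{nc : c\in B_2\}
$$
of size $m|B_1|+|B_2|$.

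Next compute this size. Perfectness of $B_1$ gives $|B_1|=\frac{n-1}{3}$. Quasi-perfectness of $B_2$ means $3\nmid(m-1)$ and $|B_2|=\lfloor\frac{m-1}{3}\rfloor=\frac{m-1-t}{3}$ with $t\in\{1,2\}$. Then
$$
|B|=\frac{m(n-1)+m-1-t}{3}=\frac{nm-1-t}{3}.
$$

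It remains to check that this equals $\lfloor\frac{nm-1}{3}\rfloor$ and that $3\nmid(nm-1)$, so that $B$ is quasi-perfect rather than perfect. Since $n\equiv1\pmod 3$, we have $nm-1\equiv m-1\pmod 3$. Thus $nm-1\equiv t\not\equiv0\pmod 3$, and $\lfloor\frac{nm-1}{3}\rfloor=\frac{nm-1-t}{3}=|B|$.

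None of these steps is a real obstacle. The only points that need care are the congruence $nm-1\equiv m-1\pmod 3$, which shows the product is genuinely in the quasi-perfect case, and the order of factors: $m$ must be the second factor $q_2$, because the coprimality-with-$6$ condition is imposed on $m$. The hypothesis $\gcd(n,m)=1$ is not actually needed by Theorem \ref{thm-q1q2}; it is harmless to carry it along.
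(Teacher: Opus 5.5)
Your proof is correct and follows the paper's argument: apply Theorem~\ref{thm-q1q2} with $q_1=n$, $q_2=m$ (so that $\gcd(6,m)=1$ is the required hypothesis), and compute the size $\frac{nm-1-t}{3}$. Your explicit check that $nm-1\equiv m-1\not\equiv 0\pmod 3$, and your remark that $\gcd(n,m)=1$ is never used, are both accurate details that the paper leaves implicit.
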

Now we study in order that a quasi-perfect $B[0,3](n)$ set exists, where $n$ is a composite coprime to $6$,
what conditions $n$ should satisfy. Recall that for group splittings, we have the following result:
\begin{thm}[{\cite[Theorem 3]{hamaker1974splitting}}]
	\label{thm-nonsingular}
	Let $G$ be a finite abelian group and
	$$
		\{0\}\to A\to G\to B\to \{0\}
	$$
	an exact sequence.
	Assume that $M$ splits $G$.
	If each element of $M$ is relatively prime
	to $|B|$, then $S$ splits $A$.
	If each element of $M$ is relatively prime to
	$|A|$, then $S$ splits $B$.
\end{thm}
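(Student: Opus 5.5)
The first assertion comes straight from the definitions, with $S$ read as $S\cap A$. Let $\pi\colon G\to B$ be the quotient map. Take a nonzero $a\in A$ and write it uniquely as $a=ms$ with $m\in M$, $s\in S$. Then $m\pi(s)=\pi(a)=0$ in $B$. Since $\gcd(m,|B|)=1$, multiplication by $m$ is an automorphism of $B$, so $\pi(s)=0$ and hence $s\in A$. Therefore every nonzero element of $A$ lies in $M(S\cap A)$. Uniqueness of the representation is inherited from the splitting of $G$, and $0\notin M(S\cap A)$ because $0\notin MS$. So $A\setminus\{0\}=M(S\cap A)$ is a splitting.

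For the second assertion I would first record the dual fact. If $\gcd(m,|A|)=1$, then multiplication by $m$ maps each coset $c+A$ injectively into the coset $mc+A$: if $mx=my$ with $x-y\in A$, then $m(x-y)=0$ forces $x=y$. Both cosets have $|A|$ elements, so the map is a bijection. Hence the splitting $G\setminus\{0\}=MS$ induces, coset by coset, a decomposition of each coset $b+A$. For $b\ne 0$ it reads
\begin{equation*}
b+A=\bigsqcup_{(m,c)\,:\,mc=b} m\bigl(S\cap(c+A)\bigr),
\end{equation*}
and the analogous statement holds for $A\setminus\{0\}$.

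Let $w(c)=|S\cap(c+A)|$ for $c\in B$. Counting sizes gives
\begin{equation*}
\sum_{m\in M}\ \sum_{c\,:\,mc=b} w(c)=|A| \quad (b\ne 0), \qquad \sum_{m\in M} w(0)=|A|-1 .
\end{equation*}
In group-ring or mask-polynomial language (Lemma~\ref{mask-factor}), $\sum_{m}\phi_m(W)=|A|\cdot B-[0]$, where $W=\sum_c w(c)[c]$ and $\phi_m$ is the endomorphism $c\mapsto mc$ of $B$.

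The goal is to extract from $W$ a set $T\subset B$ with $B\setminus\{0\}=MT$. The natural candidate is $T=\pi(S)\setminus\{0\}$. The desired conclusion is that the multiset $W$ restricted to $B\setminus\{0\}$ equals $|A|$ times the indicator of $T$, i.e. every coset meeting $S$ outside $A$ contains exactly $|A|$ elements of $S$. Dividing the displayed identity by $|A|$ then gives exactly $MT=B\setminus\{0\}$ with unique representation.

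This is the main obstacle, and I would attack it by induction on $|A|$ along a composition series. Every subgroup of $A$ has order coprime to the elements of $M$, so it suffices to treat $|A|=p$ prime with $p\nmid m$ for all $m\in M$, and then apply the exact sequences $0\to A_1\to G\to G/A_1\to 0$ successively.

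For $|A|=p$, the plan is as follows.
\begin{enumerate}
\item The second identity gives $|M|\,w(0)=p-1$.
\item The first identity says each nonzero coset is tiled by the images $m(S\cap(c+A))$.
\item Reducing the group-ring identity modulo $p$ and comparing character values on $B$ should show that $w(c)\in\{0,p\}$ for $c\ne 0$.
\end{enumerate}
If the character argument does not close on its own, I would fall back on the Hamaker–Stein style argument: replace $S$ by a modified splitter set $S'$ with $\pi(S')=\pi(S)$ that contains, for each occupied coset, a full transversal. This is possible because the multiplier action on each coset is a bijection, so elements can be shifted within cosets without destroying the tiling.
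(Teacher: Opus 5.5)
Your proof of the first assertion is correct and coincides with the argument the paper gives for its generalization, Proposition~\ref{prop-part-spl}: map $ms$ to $G/A$ and cancel $m$ there. The paper only cites the theorem and proves just that half.

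For the second assertion there is a genuine gap, and your route cannot be completed. The target claim is that $w(c)\in\{0,|A|\}$ for all $c\neq 0$, so that $T=\pi(S)\setminus\{0\}$ splits $B$. This is false, already for $|A|$ prime, which is exactly your base case. Take $G=\mathbb{Z}_3^2$, $M=\{1,2\}$ and $S=\{(1,0),(0,1),(1,1),(1,2)\}$. Then $2S=\{(2,0),(0,2),(2,2),(2,1)\}$, so $G\setminus\{0\}=S\sqcup 2S$. Let $A=\mathbb{Z}_3\times\{0\}$, of order $3$, prime to $1$ and $2$, and $\pi(x,y)=y$. You get $w(0)=1$, $w(1)=2$, $w(2)=1$. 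Both of your counting identities hold, yet $w$ is not $\{0,3\}$-valued. Moreover $\pi(S)\setminus\{0\}=\{1,2\}$ cannot be a splitter set of $\mathbb{Z}_3$, which must have exactly one element. So no character computation modulo $p$ can deliver your step~3.

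Your fallback also fails here. An $S'$ with $\pi(S')=\pi(S)$ containing all of $1+A$ and $2+A$ would have at least $6$ elements, whereas $|S'|=|S|=4$. A uniform splitter set such as $\{(1,0),(0,1),(1,1),(2,1)\}$ does exist in this example, but it has a different image. Proving that such a set always exists is essentially the whole theorem.

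Separately, your zero-coset identity drops the terms $m(S\cap(c+A))$ with $c\neq 0$ and $mc=0$ in $B$. You have not excluded these, since $m$ need not be prime to $|B|$.

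The missing idea is to reduce the second assertion to the first rather than analysing cosets. In a finite abelian group every quotient is isomorphic to a subgroup of complementary index. The character group of $B=G/A$ is the annihilator $A^{\perp}\le\widehat{G}$, and $\widehat{G}\cong G$, $\widehat{B}\cong B$. Hence there is a subgroup $A'\le G$ with $A'\cong B$ and $|G/A'|=|A|$. Since $|A|$ is prime to every $m\in M$, the first assertion gives $A'\setminus\{0\}=M(S\cap A')$ with unique representations. An isomorphism $\varphi\colon A'\to B$ commutes with multiplication by integers, so $B\setminus\{0\}=M\,\varphi(S\cap A')$ is a splitting. In the example above this gives $A'=\{0\}\times\mathbb{Z}_3$ and $T=\{1\}$, rather than $\pi(S)\setminus\{0\}$.
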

The following is a generalization of the theorem above.
\begin{prop}
	\label{prop-part-spl}
	Suppose a finite abelian group $G$ has a partial splitting
	$$
		G\setminus\{0\}\supset MS
	$$
	and $A$ is a subgroup of $G$ whose index in $G$ is coprime to every element of $M$, then
	$A\cap MS=M(A\cap S)$ and hence $M$ partially splits $A$ with $A\cap S$.
\end{prop}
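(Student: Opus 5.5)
We must show $A\cap MS=M(A\cap S)$. The inclusion $M(A\cap S)\subseteq A\cap MS$ is immediate, because $A$ is a subgroup and therefore closed under integer multiples. The content is the reverse inclusion. The plan is to use the quotient group $G/A$: its order is the index $[G:A]$, which by hypothesis is coprime to every $m\in M$.

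Take $x\in A\cap MS$ and write $x=ms$ with $m\in M$ and $s\in S$. Project to $G/A$, so that $m\bar s=\bar x=0$. The order of $\bar s$ divides both $m$ and $|G/A|=[G:A]$. These are coprime, so the order is $1$, i.e.\ $\bar s=0$. Hence $s\in A$, and $x=ms\in M(A\cap S)$, which proves equality.

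For the partial splitting of $A$ by $A\cap S$, I check the two conditions of the definition. First, $0\notin M(A\cap S)$ because $M(A\cap S)\subseteq MS\not\ni 0$. Second, the products $ms$ with $m\in M$, $s\in A\cap S$ are pairwise distinct: they are among the products $ms$ with $s\in S$, which are distinct because $|MS|=|M||S|$. Therefore $|M(A\cap S)|=|M||A\cap S|$, so $M$ partially splits $A$ with splitter set $A\cap S$. Here $M$ acts on $A$ by the same homomorphisms $\phi_s$ restricted to $A$.

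There is no real obstacle. The only point needing care is the order argument in the quotient: for negative $m\in M$ we use $|m|$, and we note that $m\neq 0$ since $M$ consists of nonzero integers.
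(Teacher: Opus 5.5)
Your proof is correct and follows the paper's argument: the inclusion $M(A\cap S)\subseteq A\cap MS$ comes from closure of $A$, and the reverse inclusion comes from projecting $ms\in A$ to $G/A$ and using $\gcd(m,[G:A])=1$ to force $s\in A$. Your explicit check that $0\notin M(A\cap S)$, and that distinctness of the products is inherited from $|MS|=|M||S|$, just spells out the ``hence'' that the paper leaves implicit.
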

\begin{proof}
	Let $\beta$ be the canonical map $G\to G/A$.
	Since $A$ is a group, we always have
	$M(A\cap S)\subset A\cap MS$.
	Now suppose $ms\in A\cap MS$,
	then $m\beta(s)=\beta(ms)=0$ and $\gcd(m,|G/A|)=1$
	implies that $\beta(s)=0$ and hence $s\in A$.
	So $M(A\cap S)=A\cap MS$.
\end{proof}
Now suppose that $n$ is a composite number, $\gcd(6,n)=1$, and that there exists a quasi-perfect $B[0,3](n)$ set $B$, that is,
$$
	B\sqcup 2B\sqcup 3B\sqcup\{0,f\}=\mathbb{Z}_n
$$
for some $f\ne 0$.
Then, for any subgroup $G\subset\mathbb{Z}_n$
such that $f\notin G$, we have $G\setminus\{0\}\subset [1,3]\cdot B$, so $G\cap B$ is a perfect $B[0,3](|G|)$ splitter set, and hence
$|G|\equiv 1\pmod 3$;
for any subgroup $G\subset\mathbb{Z}_n$ such that
$f\in G$, we have $G\setminus\{0,f\}\subset [1,3]\cdot B$, so $G\cap B$ is a quasi-perfect $B[0,3](|G|)$ splitter set, which implies that $|G|\equiv 2\pmod 3$.

So, for every divisor $m$ of $n$,
if $m\equiv 1\pmod 3$, then
there must exist a perfect $B[0,3](m)$ set;
if $m\equiv 2\pmod 3$, then
there must exist a quasi-perfect $B[0,3](m)$ set.
It implies that every divisor $m$ of $n$ of the form
$3k+1$ cannot have any prime divisor of the form $3k+2$ (by, for example, \cite[Theorem 19]{schwartz2014nonexistence}).
So there is only one prime divisor $q$ of $n$ of the form $3k+2$, which must admit a quasi-perfect $B[0,3](q)$ set
and the highest power of which that divides $n$ must be $q^1$.

Now we determine all the primes $q$ that admit a perfect $B[0,3](q)$ splitter set. As in our previous work {\cite[proof of Corollary III.15]{yuan2025existence}}, we can prove that the necessary and sufficient condition for the existence of a perfect $B[0,3](q)$ set
is $v_3(\ind_g(2))=v_3(\ind_g(3))=v_3(\ind_g(3)-\ind_g(2))<v_3(q-1)$, where $g$ is a primitive root modulo $q$. We can also show that it is equivalent to $2\notin\langle 6,8\rangle\subset\mathbb{Z}_q^\times$.

Next, we determine all the primes $q$ that admit a quasi-perfect $B[0,3](q)$ splitter set. We suppose that $q=3k+2$ is such a prime. Clearly, the size of the corresponding quasi-perfect splitter set is $k$. As we did in the perfect case, we use the primitive root to turn the problem into an additive one. Fix a primitive root $g$ modulo $q$. If $B$ is a quasi-perfect $B[0,3](q)$ splitter set, then
let $A=\{\ind_g(x):x\in B\}$, thus
$$A\sqcup (A+\ind_g(2))\sqcup (A+\ind_g(3))\sqcup \{f\}=\mathbb{Z}_{q-1}$$
for some $f$ in $\mathbb{Z}_{q-1}$. By Lemma \ref{lem-3-near}, we have a quasi-perfect $B[0,3](q)$ set
if and only if $\gcd(\ind_g(2),\ind_g(3),q-1)=1$ and one of the following equations holds modulo $q-1$:
$\ind_g(2)+\ind_g(3)=0$,
$\ind_g(2)=2\ind_g(3)$, $\ind_g(3)=2\ind_g(2)$.
The first equation is equivalent to saying that $6\equiv1\pmod{q}$, which holds if and only if $q=5$.
The second equation is equivalent to saying that $9\equiv2\pmod{q}$, which holds if and only if $q=7$, but $7\not\equiv 2\pmod 3$.
The third equation is equivalent to saying that $3\equiv4\pmod{q}$, which never holds.

Hence, the only prime $q$ that admits a quasi-perfect $B[0,3](q)$ splitter set is $5$. We have the following
\begin{prop}
	Nonsingular quasi-perfect $B[0,3](n)$ sets exist if and only if
	$5\mid n$, and every prime divisor $p$ of $n/5$ admits a perfect $B[0,3](p)$ splitter set.
\end{prop}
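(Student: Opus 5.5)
Both directions assemble results already established above.

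\emph{Necessity.} Let $n$ be coprime to $6$ and let $B$ be a nonsingular quasi-perfect $B[0,3](n)$ set, so that $B\sqcup 2B\sqcup 3B\sqcup\{0,f\}=\mathbb{Z}_n$. By Proposition \ref{prop-part-spl}, every subgroup $G$ of $\mathbb{Z}_n$ inherits either a perfect splitter set (when $f\notin G$) or a quasi-perfect one (when $f\in G$). As noted above, this forces the following.
\begin{itemize}
\item Every divisor of $n$ congruent to $1 \bmod 3$ admits a perfect set.
\item Every divisor of $n$ congruent to $2 \bmod 3$ admits a quasi-perfect set.
\end{itemize}
Since $|B|=(n-2)/3$, we have $n\equiv 2\pmod 3$, so $n$ has at least one prime factor $\equiv 2\pmod 3$. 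By the remark citing \cite[Theorem 19]{schwartz2014nonexistence}, it has exactly one such prime $q$, and $q^2\nmid n$. Because $q\equiv 2\pmod 3$ is a divisor of $n$, it must admit a quasi-perfect $B[0,3](q)$ set, and the analysis via Lemma \ref{lem-3-near} shows $q=5$.

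Every other prime divisor $p$ of $n$ satisfies $p\equiv 1\pmod 3$, since $p\ne 3$ and $p\neq q$. Hence $p$ is a divisor of $n$ of the form $3k+1$, so it admits a perfect $B[0,3](p)$ set. This gives exactly the stated condition: $5\mid n$, and every prime divisor of $n/5$ admits a perfect set. Here $5\nmid n/5$, but that is not needed for the statement.

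\emph{Sufficiency.} Suppose $5\mid n$ and every prime divisor of $n/5$ admits a perfect $B[0,3](p)$ set. Such primes are $\equiv 1\pmod 3$ and coprime to $6$, so $5\nmid n/5$ and hence $\gcd(5,n/5)=1$. For prime powers, I would repeatedly apply Theorem \ref{thm-q1q2} with $q_1=p^{j}$ and $q_2=p$. This is allowed because $\gcd(3!,p)=1$, and it builds a perfect $B[0,3](p^{e})$ set. Applying the same theorem again across distinct primes gives a perfect $B[0,3](n/5)$ set.

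An explicit quasi-perfect $B[0,3](5)$ set is $\{1\}$: indeed $\{1,2,3\}$ misses only $0$ and $4$. Now invoke the preceding Proposition with the perfect $B[0,3](n/5)$ set and $m=5$; its hypotheses hold since $\gcd(5,n/5)=1$ and $\gcd(6,5)=1$. This yields a quasi-perfect $B[0,3](n)$ set, which is nonsingular because $\gcd(n,6)=1$.

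The main point needing care is that in the necessity direction the primes $p$ dividing $n/5$ are genuinely divisors of $n$ of the form $3k+1$. This holds because $n$ is coprime to $6$ and $5$ is the unique prime $\equiv 2\pmod 3$ dividing $n$. The rest is a direct assembly of the earlier results.
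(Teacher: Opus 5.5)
Your proposal is correct and follows the paper's own argument: Proposition \ref{prop-part-spl} shows that subgroups inherit perfect or quasi-perfect sets, which forces a unique prime $q\equiv 2\pmod 3$ dividing $n$ exactly once, Lemma \ref{lem-3-near} then pins it to $q=5$, and sufficiency comes from iterating Theorem \ref{thm-q1q2} together with the preceding proposition and the quasi-perfect set $\{1\}\subset\mathbb{Z}_5$, which you spell out more explicitly than the paper does. One small correction: $5\nmid n/5$ is in fact needed in the necessity direction, since otherwise $5$ would itself be a prime divisor of $n/5$ with no perfect $B[0,3](5)$ set; you have already established it via $q^2\nmid n$, so nothing is missing.
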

\subsection{Results on singular splitter sets}
Now we consider \emph{singular} quasi-perfect $B[0,3](n)$ splitter sets. There are three cases:
\begin{enumerate}
	\item $2\mid n$, $3\nmid n$;
	\item $2\nmid n$, $3\mid n$;
	\item $6\mid n$.
\end{enumerate}

In the first case, we assume that $n=2^l r$, $l\ge 1$, $ \gcd(r,6)=1$.

\begin{thm}[{\cite[Corollary 1]{klove2011some}}]
	\label{thm-3-2r}
	Assume $n=2^kr$ with $\gcd(r,6)=1$ and $k\ge 2$. Then a quasi-perfect $B[0,3](n)$ set exists if and only if
	$k=2$ or $3$ and a quasi-perfect $B[0,3](2^{k-2}r)$ set exists.
\end{thm}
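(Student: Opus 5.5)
We argue by $2$-adic layers, splitting $\mathbb{Z}_n$ by the $2$-adic valuation of its elements, as is natural for $n=2^kr$. Write $B=B_1\sqcup B_2\sqcup B_4$, where $B_1$ consists of the odd elements of $B$, $B_2$ of the elements $\equiv 2\pmod 4$, and $B_4$ of the multiples of $4$. Since $r$ is odd, multiplication by $3$ preserves each layer. Multiplication by $2$ sends the odd layer onto the layer $\equiv 2\pmod 4$, and sends that layer into the multiples of $4$.

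Since $B$ is quasi-perfect, the set $B\sqcup 2B\sqcup 3B$ misses exactly $n-1-3|B|\in\{1,2\}$ nonzero elements; call this number the deficiency.

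\textbf{Counting the odd and $2\pmod 4$ layers.} The $n/2$ odd residues are covered only by $B_1\sqcup 3B_1$, so $2|B_1|\le n/2$. The layer of residues $\equiv 2\pmod 4$ has $n/4$ elements and is covered by $2B_1\sqcup B_2\sqcup 3B_2$. The map $x\mapsto 2x$ depends only on $x \bmod n/2$, so $2B_1$ has exactly $|B_1|$ elements only if $B_1$ meets each class mod $n/2$ at most once.

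The next step is to bound the deficiency in these two layers from below and show that, apart from at most two exceptional points, one must have $|B_1|=n/4$ and $B_2=\varnothing$. Then $B_1\sqcup 3B_1$ is exactly the set of odd residues, and $B_1$ is a transversal of the odd classes mod $n/2$.

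\textbf{The obstruction for $k\ge 4$.} Take $B_1$ as above. The condition $B_1\sqcup 3B_1=$ odd residues, together with $B_1$ being a transversal mod $n/2$, forces $B_1$ and $3B_1$ to be interchanged by adding $n/2$. Equivalently, $3x\equiv x+n/2\pmod{n/2}$ on suitable orbits, i.e.\ the $\times3$-orbits on odd residues split compatibly with the $\pm n/2$ translation. Projecting to the $2$-part $\mathbb{Z}_{2^k}$, this is a statement about the orbits of $\times 3$ on odd residues mod $2^k$. Those orbits have length $\ord_{2^k}(3)=2^{k-2}$, and $3^{2^{k-3}}\not\equiv 1+2^{k-1}$ fails to give the required interchange once $k\ge4$. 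I expect this to be the main obstacle: turning the two counting constraints into a clean contradiction about the action of $\langle 3\rangle$ on $(\mathbb{Z}_{2^k})^\times$, while carefully tracking where the one or two missing elements can sit. I would first try to show that the missing elements must lie in the multiple-of-$4$ layer. That would make the odd and $2\pmod 4$ layers perfectly covered, reducing the question to an exact factorization problem in $(\mathbb{Z}_{2^k r})^\times$.

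\textbf{Reduction and converse.} Once $B_2=\varnothing$ and the lower layers are perfectly covered, the multiples of $4$ must be covered by $2B_1\cdot 2$-type contributions together with $B_4\sqcup2B_4\sqcup3B_4$. Dividing by $4$ identifies $4\mathbb{Z}_n$ with $\mathbb{Z}_{n/4}=\mathbb{Z}_{2^{k-2}r}$, and $B_4/4$ becomes a quasi-perfect $B[0,3](2^{k-2}r)$ set; this is the same deficiency bookkeeping as in Proposition~\ref{prop-part-spl}, but for the non-coprime index $4$.

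Conversely, for $k=2,3$ I would construct $B_1$ explicitly as a transversal of odd residues with $B_1\sqcup3B_1$ equal to all odd residues. This uses that $\ord_{2^k}(3)\le 2$, so $3\equiv -1$ or $3\equiv 3$ with $9\equiv1$ mod $8$. Handling the $r$-part by the CRT, I would then set $B=B_1\sqcup 4B'$ for a quasi-perfect $B[0,3](2^{k-2}r)$ set $B'$, and verify disjointness layer by layer.
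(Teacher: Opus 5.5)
The paper only quotes this result without proof, so I am judging your argument on its own. Your layer decomposition and overall plan are right: show $|B_1|=n/4$ and $B_2=\varnothing$, show $B_1$ is a transversal of the pairs $\{x,x+n/2\}$ with $3B_1=B_1+n/2$, rule out $k\ge 4$, and read off $B_4/4$. However, the two steps that carry the proof are missing or wrong.

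\textbf{The forcing step.} You only announce it, and your counting cannot close it while the deficiency is left as $\{1,2\}$. The missing observation is that $3\nmid n$ and $3\nmid n-1$ force $n\equiv 2\pmod{3}$. So exactly one nonzero $f$ is uncovered: $\mathbb{Z}_n=\{0,f\}\sqcup B\sqcup 2B\sqcup 3B$. Parity then finishes it. The odd layer has $n/2$ elements, an even number since $k\ge 2$, and meets $B\sqcup 2B\sqcup 3B$ exactly in $B_1\sqcup 3B_1$, which has even size $2|B_1|$. Hence $f$ is not odd and $|B_1|=n/4$. The layer $\equiv 2\pmod{4}$ has $n/4$ elements and contains $2B_1\sqcup B_2\sqcup 3B_2$, of size $n/4+2|B_2|$. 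Hence $B_2=\varnothing$, $2B_1$ is the whole layer, and $f\in 4\mathbb{Z}_n$; since $|2B_1|=|B_1|=n/4$, $B_1$ is a transversal mod $n/2$. There are also no ``$2B_1\cdot 2$-type'' contributions to $4\mathbb{Z}_n$ in your reduction. $2B_1$ lies in the $2\pmod{4}$ layer and $2B_2=\varnothing$, so $4\mathbb{Z}_n=\{0,f\}\sqcup B_4\sqcup 2B_4\sqcup 3B_4$ directly.

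\textbf{The obstruction for $k\ge 4$.} It rests on a false claim. In fact $3^{2^{k-3}}\equiv 1+2^{k-1}\pmod{2^k}$ for $k\ge 4$, since $v_2(3^{2^m}-1)=m+2$ for $m\ge 1$. The contradiction comes from this congruence \emph{holding} at the even exponent $2^{k-3}$; the orbit length $\ord_{2^k}(3)=2^{k-2}$ is not the relevant quantity. Also, ``projecting to $\mathbb{Z}_{2^k}$'' is not legitimate as stated, because $B_1 \bmod 2^k$ is only a multiset; restrict to a single orbit instead.

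From transversality and $3B_1=B_1+n/2$ you get $x\in B_1\Rightarrow 3x+n/2\in B_1\Rightarrow 9x\in B_1$. Pick $x_0\in B_1\cap\{r,r+n/2\}$. Since $x_0$ is an odd multiple of $r$ and $9^{2^{k-4}}=3^{2^{k-3}}\equiv 1+2^{k-1}\pmod{2^k}$, you get $9^{2^{k-4}}x_0=x_0+n/2\in B_1$. So both members of the pair $\{x_0,x_0+n/2\}$ lie in $B_1$, a contradiction.

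\textbf{The converse.} For $k=2,3$ this clash does not occur, and the converse can be made explicit. Take $B_1=\{x\in\mathbb{Z}_n : x\equiv 1\pmod{4}\}$ for $k=2$ and $B_1=\{x\in\mathbb{Z}_n : x\equiv \pm 1\pmod{8}\}$ for $k=3$. Each is a transversal mod $n/2$ with $3B_1=B_1+n/2$, and then $B=B_1\sqcup 4B'$ works.
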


Theorem \ref{thm-3-2r} has settled the cases $l\ge 2$. Now we consider
$B[0,3](2r)$ sets.
If there exists a quasi-perfect $B[0,3](2r)$ set $B$, such that
$$
	B\sqcup 2B\sqcup 3B\sqcup\{0,f\}=\mathbb{Z}_{2r},
$$
then $2r\equiv 2\pmod 3$, and hence
$r\equiv 1\pmod 3$.
Suppose $q$ is an odd divisor of $r$, by Proposition \ref{prop-part-spl},
the subgroup $A$ of $(\mathbb{Z}_{2r},+)$ of index $q$ is partially split by $[1,3]$, and $A\setminus\{0\}=[1,3]\cdot(A\cap B)$ if $f\notin A$, $A\setminus\{0,f\}=[1,3]\cdot(A\cap B)$ if $f\in A$. From this, we obtain that for every divisor $q$ of $r$ such that $q\equiv 2\pmod3$, the subgroup of index $q$ is split by $[1,3]$ and does not contain $f$; for every divisor $q$ of $r$ such that $q\equiv 1\pmod 3$, the subgroup of index $q$ is partially split by $[1,3]$ and contains $f$. Hence we see that the $2$-subgroup of $\mathbb{Z}_{2r}$ is precisely $\{0,f\}$, and $f$ is the only element of order $2$ in $\mathbb{Z}_{2r}$, i.e. $f=r$.
However, every subgroup of odd index in $\mathbb{Z}_{2r}$ must contain $f$, and thus its order must be congruent to $2$ modulo $3$. Hence, no divisor of $r$ is congruent to $2$ modulo $3$.

Now, suppose that $p\equiv1\pmod3$ is a prime divisor of $r$. Then there must exist a quasi-perfect
$B[0,3](2p)$ set, $B'$.
Let $B_0=\{i\in B':2\mid i\}$,
$B_1=\{i\in B':2\nmid i\}$. Then we have
$$
	\mathbb{Z}_{2p}=\{0,p\}\sqcup (B_0\sqcup 2B_0\sqcup 3B_0\sqcup 2B_1)\sqcup(B_1\sqcup 3B_1).
$$
Notice that elements in $B_0\sqcup 2B_0\sqcup 3B_0\sqcup 2B_1$ are even, and those in $B_1\sqcup 3B_1$ are odd, hence $\{0\}\sqcup B_0\sqcup 2B_0\sqcup 3B_0\sqcup 2B_1=2\mathbb{Z}_p$ which is the subgroup of index $2$ in $\mathbb{Z}_{2p}$, and $B_1\sqcup 3B_1=\mathbb{Z}_{2p}^\times$.

Now let $C_0=\{i\in[0,p-1]:2i\in B_0\}\subset\mathbb{Z}_p$,
$C_1=B_1\pmod{p}=\{i\in[0,p-1]:i\in B_1 \text{ or }i+p\in B_1\}\subset\mathbb{Z}_p$.
From $\{0\}\sqcup B_0\sqcup 2B_0\sqcup 3B_0\sqcup 2B_1=2\mathbb{Z}_p$ we have
$$
	\{0\}\sqcup C_0\sqcup 2C_0\sqcup 3C_0\sqcup C_1=\mathbb{Z}_p.
$$
However, since $B_1\sqcup 3B_1=\mathbb{Z}_{2p}^\times$, by taking modulo $p$, we have
$$
\{0\}\sqcup C_1\sqcup 3C_1=\mathbb{Z}_p.
$$
Hence $3C_1=C_0\sqcup 2C_0\sqcup 3C_0$. Therefore $C_1=\frac13 C_0\sqcup\frac23 C_0\sqcup C_0$, and $C_0\subset C_1$, so $C_0=\varnothing$. Thus, $C_0\sqcup 2C_0\sqcup 3C_0=\varnothing$, and $C_1=\varnothing$, a contradiction. Hence, there does not exist quasi-perfect $B[0,3](2r)$ splitter sets.

In the second case, we suppose that $n=3^l r$, $l\ge 1$, $\gcd(r,6)=1$.

\begin{thm}[{\cite{klove2011some}}]
	For $n=3^lr$ and $l\ge 2$, there do not exist quasi-perfect $B[0,3](n)$ sets except for $l=2, r=1$.
	\label{thm-3lr}
\end{thm}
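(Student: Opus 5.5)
Theorem~\ref{thm-3lr} is quoted from \cite{klove2011some}, but I would reprove it with the subgroup-restriction technique used above for $B[0,3](2r)$. It splits into two claims. First, for $l\ge 2$ and $r>1$ there is no quasi-perfect $B[0,3](3^lr)$ set. Second, for $r=1$ such a set exists only when $l=2$.

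\textbf{Counting setup.} Let $n=3^lr$ and suppose
\[
B\sqcup 2B\sqcup 3B\sqcup F=\mathbb{Z}_n .
\]
Since $n\equiv 0\pmod 3$, we have $|B|=(n-3)/3$, so $F=\{0,f_1,f_2\}$ consists of $0$ and two uncovered elements.

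\textbf{Step 1 (restriction to the $3$-part).} Let $A$ be the subgroup of order $3^l$. Its index $r$ is coprime to $2$ and $3$, so Proposition~\ref{prop-part-spl} applies: $A\cap B$ partially splits $A$ by $[1,3]$, and
\[
A\setminus\bigl(F\cap A\bigr)=[1,3](A\cap B).
\]
Since $|A|\equiv 0\pmod 3$, we get $|F\cap A|\equiv 0\pmod 3$, which forces $f_1,f_2\in A$. Dually, for every divisor $d$ of $r$, the subgroup of index $d$ contains $A$ and hence contains $F$.

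\textbf{Step 2 (the $3$-subgroup itself).} The tool is multiplication by $3$. The map $3B$ lands in $3\mathbb{Z}_n$, while $B$ and $2B$ meet the non-multiples of $3$ equally often, because $x\mapsto 2x$ preserves both classes. So $B\cap(\mathbb{Z}_n\setminus 3\mathbb{Z}_n)$ has size exactly half the number of non-multiples of $3$ not in $F$. Then I would count multiples of $3$, that is, elements of the subgroup $3\mathbb{Z}_n$ of order $3^{l-1}r$, using the map $x\mapsto 3x$ on the part of $B$ lying in each coset of the $3$-torsion. This mirrors the proof of Proposition~\ref{quasi-ktm} with $t=3$, $k=3$. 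There, elements of $B$ outside $3\mathbb{Z}_n$ contribute via $3B$ into the multiples of $3$ with multiplicity constraints. Comparing counts should give an inequality that fails for $l\ge 3$, and also for $l=2$ unless $r=1$.

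\textbf{Step 3 (small cases).} For the leftover $l=2$, $r>1$ case, I would combine the counting with Step~1 applied to the index-$3$ subgroup. That subgroup has order $3r\not\equiv 0\pmod 3$ only if\ldots\ in fact its order is divisible by $3$, so I would instead use the image in $\mathbb{Z}_n/A\cong\mathbb{Z}_r$. This forces a perfect-type splitting of $\mathbb{Z}_r$ structured compatibly with $B$ modulo $r$, and I would derive a contradiction by the parity trick used for $C_0,C_1$ above. Finally, I would exhibit an explicit quasi-perfect $B[0,3](9)$ set by search; $B=\{1,5\}$ is a candidate to check.

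\textbf{Main obstacle.} The hardest step is the precise counting in Step~2 when $r>1$, since the $3$-torsion interacts with $3B$ nontrivially.
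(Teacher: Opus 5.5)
The paper gives no proof of this statement; it quotes it from \cite{klove2011some}. It is also the case $k=3$, $m=3^{l-1}r$ of \cite[Proposition VII.1]{yuan2025existence}, which is quoted in the introduction. Judged on its own, your proposal has a genuine gap. Step~2 carries the entire nonexistence argument, yet it is only asserted (``comparing counts should give an inequality that fails''), and you flag it yourself as the main obstacle.

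The missing idea is a one-line cardinality count, the same one that opens the proof of Proposition~\ref{quasi-ktm}. Since $3B\subseteq 3\mathbb{Z}_n\setminus\{0\}$ and $|3B|=|B|=n/3-1=|3\mathbb{Z}_n\setminus\{0\}|$, you get $3B=3\mathbb{Z}_n\setminus\{0\}$. Two consequences finish the proof:
\begin{itemize}
\item $B$ contains no multiple of $3$, because $0\notin B$ and $B\cap 3B=\varnothing$.
\item Let $K=\{0,n/3,2n/3\}$. The map $x\mapsto 3x$ induces a bijection $\mathbb{Z}_n/K\to 3\mathbb{Z}_n$, so $B$ meets every coset of $K$ other than $K$ exactly once.
\end{itemize}
If $l\ge 2$ then $3\mid n/3$, so $K\le 3\mathbb{Z}_n$ and $3\mathbb{Z}_n$ is a union of $n/9$ cosets of $K$. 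The $n/9-1$ cosets other than $K$ consist of nonzero multiples of $3$, yet each must contain an element of $B$. Hence $n/9-1=0$, i.e.\ $n=9$. This handles $l\ge 3$ and $l=2$, $r>1$ at once, and does not even need $\gcd(r,6)=1$. Your Step~1 and your parity count on non-multiples of $3$ are correct but unnecessary.

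Step~3 should be discarded. Apart from being superfluous, the passage to $\mathbb{Z}_n/A\cong\mathbb{Z}_r$ is unsupported. The quotient part of Theorem~\ref{thm-nonsingular} requires every multiplier to be coprime to $|A|=3^l$, which fails for the multiplier $3$; that theorem also concerns full rather than partial splittings. The $C_0,C_1$ argument exploited the index-$2$ subgroup of $\mathbb{Z}_{2p}$ and has no counterpart here.

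Finally, your candidate for $n=9$ fails: for $B=\{1,5\}$ one has $2\cdot 5\equiv 1\pmod 9$, so $B\cap 2B\ne\varnothing$. By the argument above, $B$ must take one element from each of $\{1,4,7\}$ and $\{2,5,8\}$. Checking the nine choices leaves exactly $\{1,8\}$, $\{2,7\}$ and $\{4,5\}$. For example, $\{1,8\}\sqcup\{2,7\}\sqcup\{3,6\}=\mathbb{Z}_9\setminus\{0,4,5\}$.
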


Theorem \ref{thm-3lr} has dealt with the case $l\ge 2$. If $l=1$, then Construction \ref{constr-quasi-1-k} gives
us a concrete quasi-perfect $B[0,3](3r)$ set.

Now we consider the last case. By Proposition \ref{quasi-ktm}, for $m\ge 3$, quasi-perfect $B[0,3](6m)$ splitter sets do not exist.
If $ m = 2 $, suppose there exists a quasi-perfect $ B[0,3](12) $ splitter set $ B $. As in the proof of Proposition \ref{quasi-ktm}, we can write
\[
	B = \{1 + 4f(1),\ 2 + 4f(2),\ 3 + 4f(3)\},
\]
with $ f\colon [1,3] \to [0,2] $. Since no element of $ B $ is divisible by 3, we require:
\begin{itemize}
	\item $ f(1) \neq 2 $ (so $ 1+4f(1) \not\equiv 0 \pmod{3} $),
	\item $ f(2) \neq 1 $ (so $ 2+4f(2) \not\equiv 0 \pmod{3} $),
	\item $ f(3) \neq 0 $ (so $ 3+4f(3) \not\equiv 0 \pmod{3} $).
\end{itemize}

Thus $ f(3) \in \{1,2\} $. Consider $ 2(3+4f(3)) \mod 12 $:
\begin{itemize}
	\item If $ f(3)=1 $, then $  2(3+4f(3))\equiv2 \pmod{12}$.
	\item If $ f(3)=2 $, then $  2(3+4f(3))\equiv10\pmod{12} $.
\end{itemize}

If $f(3)=1$, then $2+4f(2)\ne 2$ and $f(2)\ne 0$, so $f(2)=2$.
Since $2+8f(1)\ne 2$, $f(1)=1$. But then
$2(1+4f(1))=10=2+4f(2)$, a contradiction.

If $f(3)=2$, then $2(1+4f(1))\ne 10$ and $f(1)\ne 1$, so $f(1)=0$.
Since $2+4f(2)\ne 2(1+4f(1))=2$, $f(2)\ne 0$, so $f(2)=2$ and $2+4f(2)=10=2(3+4f(3))$, a contradiction.

Hence, no such splitter set exists.
The results of this subsection give Theorem \ref{thm:03}.

\section{On maximal $B[0,3](q)$ sets for $q$ prime}
In \cite{klove2011systematic}, a construction of $B[0,3](q)$ set of size approximately $q/6$ was given. In this section, we use a more careful argument on the structure of Cayley graphs to obtain a better lower bound for $M(0,3;q)$, where $q$ is a prime. We use a result in
\cite{fajtlowicz1984independence}:
\begin{thm}
	Let $G$ be a finite simple $n$-vertex graph of maximum degree $d$,
	of clique size $(k-1)$ (that is, is $K_k$-free) and independence $\alpha$.
	If $d\ge k$, then
	$$
		\frac{\alpha}n\ge\frac{2}{d+k}.
	$$
\end{thm}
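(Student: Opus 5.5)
The plan is a double-counting argument around a \emph{maximum} independent set $I$ of $G$, with $\alpha=|I|$. The goal is an upper bound on $n-\alpha=|V\setminus I|$ in terms of $\alpha$. The key point is that maximality forces most outside vertices to have many neighbours in $I$; where they do not, the $K_k$-freeness of $G$ caps how many such vertices there can be.

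First, every vertex $x\in V\setminus I$ has at least one neighbour in $I$, since otherwise $I\cup\{x\}$ would be a larger independent set. Split $V\setminus I=X_1\sqcup X_2$, where $X_1$ consists of the vertices with exactly one neighbour in $I$ and $X_2$ of those with at least two. For $v\in I$, let $N_1(v)=\{x\in X_1 : x\sim v\}$, so that $X_1=\bigsqcup_{v\in I}N_1(v)$.

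The central claim is that each $N_1(v)$ is a clique. If $x,y\in N_1(v)$ were nonadjacent, then $(I\setminus\{v\})\cup\{x,y\}$ would be independent, because the only neighbour of $x$ or $y$ in $I$ is $v$. This set is larger than $I$, contradicting maximality. Hence $N_1(v)\cup\{v\}$ is a clique. Since the clique number is $k-1$, we get $|N_1(v)|\le k-2$, and therefore $|X_1|\le (k-2)\alpha$. I expect this step, pinning down exactly where $K_k$-freeness enters, to be the main point of the proof.

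Next, count the edges $e(I,V\setminus I)$ in two ways. From the side of $I$, each vertex has degree at most $d$, so $e(I,V\setminus I)\le d\alpha$. From the outside, $e(I,V\setminus I)\ge |X_1|+2|X_2|$. Combining,
$$
n-\alpha=|X_1|+|X_2|\le \frac{|X_1|+2|X_2|+|X_1|}{2}\le\frac{d\alpha+(k-2)\alpha}{2}.
$$
This rearranges to $n\le \alpha\,\frac{d+k}{2}$, i.e.\ $\alpha/n\ge 2/(d+k)$. The argument never uses the hypothesis $d\ge k$; it only serves to make the bound meaningful compared with the trivial $\alpha/n\ge 1/(d+1)$.
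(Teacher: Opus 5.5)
The paper gives no proof of this statement. It imports it from Fajtlowicz \cite{fajtlowicz1984independence} and uses it as a black box, with $d=6$ and $k=4$, to obtain $M(0,3;q)\ge (q-1)/5$. So there is no paper route to compare against.

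Your argument is correct and complete, and it supplies the self-contained justification the paper omits. Each step checks out:
\begin{itemize}
\item Because $I$ is maximal, every vertex outside $I$ has a neighbour in $I$.
\item The exchange $(I\setminus\{v\})\cup\{x,y\}$ shows that $v$ together with its private neighbours forms a clique. Hence $v$ has at most $k-2$ private neighbours.
\item Finally, $2(n-\alpha)=(|X_1|+2|X_2|)+|X_1|\le d\alpha+(k-2)\alpha$. Your first ``$\le$'' in the display is in fact an equality.
\end{itemize}

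You are right that $d\ge k$ is never used; the inequality holds for every graph. For example, $K_{d+1}$ (where $k=d+2$) and $C_5$ (where $d=2<k=3$) both attain equality.

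One small correction to your closing remark. The clique number $k-1$ is at most $d+1$, so $2/(d+k)\ge 1/(d+1)$ holds automatically. The hypothesis is therefore not what makes the bound beat the greedy bound. Rather, $2/(d+k)\ge 1/d$ holds exactly when $d\ge k$. So the hypothesis marks the range where the bound is at least as strong as $\alpha\ge n/d$, which Brooks' theorem gives for $K_{d+1}$-free graphs with $d\ge 3$.
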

\begin{lemma}
	If $q>7$ is a prime, then the following 6 elements
	$$
		2,3,2/3,1/2,1/3,3/2
	$$
	are distinct modulo $q$. Moreover, if $q>23$, then each of the following 6 elements
	$$
		4/3, 6, 9/2, 9, 4, 9/4
	$$
	does not belong to $\{2,3,2/3,1/2,1/3,3/2\}$ modulo $q$.
\end{lemma}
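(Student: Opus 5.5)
The lemma reduces to a finite check. Two elements $u/v$ and $u'/v'$, where $u,v,u',v'$ are nonzero integers coprime to $q$, coincide modulo $q$ exactly when $q \mid uv'-u'v$. So the plan is to clear denominators in every relevant pair, write down the integer $uv'-u'v$, and confirm that it is nonzero and that its prime factors are all at most $7$ (first statement) or at most $23$ (second statement). Note that every element listed has numerator and denominator built from $2$ and $3$ only. Since $q>7$, all denominators are invertible, so each element makes sense in $\mathbb{Z}_q$.

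For the first statement, the six elements are $2^a3^b$ with $(a,b)\in\{(1,0),(0,1),(1,-1),(-1,0),(0,-1),(-1,1)\}$. Two of them, $2^a3^b$ and $2^{a'}3^{b'}$, coincide modulo $q$ iff $2^{a-a'}\equiv 3^{b'-b}\pmod q$. I will list the $15$ pairs and the resulting integer differences:
\[
2\text{ vs }3:\ 1,\quad 2\text{ vs }2/3:\ 4,\quad 2\text{ vs }1/2:\ 3,\quad 2\text{ vs }1/3:\ 5,\quad 2\text{ vs }3/2:\ 1,
\]
and similarly for the remaining ten pairs, which give values such as $9-2=7$, $9-1=8$, $4-3=1$, $4-9=-5$, $3-2=1$, $2-3$, and so on. Every difference is a nonzero integer of absolute value at most $8$, hence has no prime factor greater than $7$.

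For the second statement, I compare each of $4/3, 6, 9/2, 9, 4, 9/4$ with each of the six elements above, giving $36$ comparisons. Each one again reduces to $q\mid 2^x-3^y$ (up to sign) with small exponents. Examples:
\[
9/4=2 \Rightarrow q\mid 1,\qquad 9/4=1/2 \Rightarrow q\mid 7,\qquad 9/4=1/3 \Rightarrow q\mid 23,\qquad 9=1/2 \Rightarrow q\mid 17,\qquad 4=1/3 \Rightarrow q\mid 11,\qquad 6=1/3 \Rightarrow q\mid 17.
\]
The largest prime produced should be $23$, coming from $27-4$; the bound $q>23$ is presumably chosen for exactly this reason. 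To organise the check, I will write each comparison as $2^{a}3^{b}\equiv 1$, take the integer $2^{a^+}3^{b^+}-2^{a^-}3^{b^-}$, and verify that its factorisation involves only primes $\le 23$. Each candidate is nonzero, because $2^x\ne 3^y$ unless $x=y=0$, and the exponent vector $(a,b)$ is never zero, since the two lists share no element as rationals.

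The main obstacle is bookkeeping rather than any idea: all $36$ cases have to be covered without error, and one must confirm that no difference has a prime factor above $23$. The largest differences look like $9\cdot 9-\dots$ or $27-4$, all below about $100$. I would present the check as a table of the $36$ integer differences with their factorisations.
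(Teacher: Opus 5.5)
Your proposal is correct and is the same finite check the paper summarizes as ``simple verification'': reduce each coincidence to $q \mid 2^{a^+}3^{b^+}-2^{a^-}3^{b^-}$ and inspect the $15+36$ resulting nonzero integers. One refinement to your remark about the largest differences: the $36$ comparisons produce only the values $1,2,3,5,7,8,11,17,19,23,25,26$, with the largest being $26=27-1$, $25=27-2$ and $23=27-4$, so nothing like $81$ occurs, and $q=23$ really does fail (e.g.\ $9/4\equiv 1/3$), which makes the bound $q>23$ sharp.
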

\begin{proof}
	Simple verification.
\end{proof}
\begin{thm}
	For every odd prime $q>23$, we have $M(0,3;q)\ge (q-1)/5$.
\end{thm}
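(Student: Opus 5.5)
The plan is to realize $B[0,3](q)$ sets as independent sets in a multiplicative Cayley graph and then apply the theorem of \cite{fajtlowicz1984independence} with $d=6$ and $k=4$. This gives $\alpha/n\ge 2/10=1/5$ with $n=q-1$.

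First, the reduction. Let $G=\mathbb{Z}_q^\times$ (multiplicatively), $A=\{1,2,3\}$, and
$$S=A\cdot A^{-1}\setminus\{1\}=\{2,3,2/3,1/2,1/3,3/2\}.$$
A subset $B\subset\mathbb{Z}_q\setminus\{0\}$ is a $B[0,3](q)$ set exactly when $B$, $2B$, $3B$ are pairwise disjoint. Since $0\notin B$ automatically gives $0\notin[1,3]\cdot B$, this is the same as $|A\cdot B|=|A||B|$. In turn, this holds exactly when no two distinct $x,y\in B$ have $x/y\in S$. So $B[0,3](q)$ sets are precisely the independent sets of $\Gamma=\Cay(G;S)$, as in the introduction. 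The first part of the preceding lemma (valid for $q>7$) shows the six elements of $S$ are distinct. Since $S=S^{-1}$ and $1\notin S$, $\Gamma$ is a simple $6$-regular graph on $n=q-1$ vertices, so $d=6$.

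The main step, which I expect to be the only real obstacle, is showing that $\Gamma$ is $K_4$-free when $q>23$. Since $\Gamma$ is vertex-transitive, I may assume a $4$-clique contains $1$. Its other vertices $a,b,c$ then lie in $S$ and have pairwise quotients in $S$. To organize this, write each element of $S\cup\{1\}$ as $2^i3^j$ and record the exponent pair:
$$(0,0),\ (\pm1,0),\ (0,\pm1),\ \pm(1,-1).$$
These are the origin and its six neighbours in the triangular lattice. Over $\mathbb{Q}$, a clique is a set of lattice points with pairwise differences among these six vectors, and the triangular lattice has clique number $3$. It therefore suffices to check that no "spurious" coincidence mod $q$ produces an extra edge. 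For $s,s'\in S$, the quotient $s/s'$ is either a genuine element of $S$ (a lattice neighbour), or equal to $1$, or one of $4/3,6,9/2,9,4,9/4$ or their inverses. The second part of the lemma says that for $q>23$ none of these six lies in $S$. Because $S=S^{-1}$, the inverses are excluded as well. Hence the adjacency among elements of $S\cup\{1\}$ mod $q$ is exactly the lattice adjacency, and a $K_4$ would give a $4$-clique in the triangular lattice, which is impossible. I will verify this last point by a finite check: the neighbours of $(0,0)$ form a $6$-cycle $(1,0),(1,-1),(0,-1),(-1,0),(-1,1),(0,1)$, which contains no triangle.

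Finally, I apply the theorem of \cite{fajtlowicz1984independence} with $d=6\ge k=4$ (the graph is $K_4$-free, so its clique size is $3$). This gives an independent set of size at least $\frac{2(q-1)}{10}=\frac{q-1}{5}$. Consequently $M(0,3;q)\ge (q-1)/5$, rounding up as the size is an integer.
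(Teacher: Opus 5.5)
Your proposal is correct and follows essentially the same route as the paper. You identify $B[0,3](q)$ sets with independent sets of the $6$-regular Cayley graph $\Cay(\mathbb{Z}_q^\times;S)$, use the lemma to show that each neighbourhood induces exactly a $6$-cycle, and hence that the graph is $K_4$-free, and then apply Fajtlowicz's bound with $d=6$, $k=4$; your triangular-lattice bookkeeping simply makes explicit the ``simple verification'' the paper leaves to the reader.
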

\begin{proof}
	Let $S=[1,3]\cdot [1,3]^{-1}\setminus\{1\}=\{2,3,2/3,1/2,1/3,3/2\}$. We consider the Cayley graph $\Gamma=\Cay(G;S)$.

	If $q>7$, then elements of $S$ are distinct modulo $q$.
	Then $\Gamma$ is a 6-regular graph, and for every vertex, the induced subgraph on its 6 neighbors is a cycle $C_6$.
	In fact, the 6 neighbors of the vertex $i$ are
	$2i, 3i, 2i/3, i/2, i/3, 3i/2$ modulo $q$. Simple verification shows that there are edges $\{2i,3i\}, \{3i, 3i/2\}, \{3i/2, i/2\}, \{i/2, i/3\}, \{i/3, 2i/3\}, \{2i/3,2i\}$ between them. Moreover, $q>23$ ensures that there are no other edges between these vertices.
	Especially, $\Gamma$ has no $K_4$.
	So it has an independent set of size $\alpha(\Gamma)\ge 2(q-1)/(6+4)=(q-1)/5$.
\end{proof}
As an example, the Cayley graph $\Cay(\mathbb{Z}_{29}^\times; [1,3]\cdot [1,3]^{-1})$ and the neighborhood of $1$ in it are shown in Fig \ref{fig-z29} and Fig \ref{fig-z29-nbd}, respectively.
Using SageMath, we obtained exact values of $M(0,3;q)$ for small $q$, listed in Table \ref{tab-m03q}.
We notice that when $q>23$, the lower bound above is far from tight.
\begin{table*}
	\begin{tabular}{lllll}
		$q$ & $M(0,3;q)$ & A Maximal Splitter Set                                                            & Ratio $M(0,3;q)/(q-1)$ & \\
		11  & 2          & $\{10, 1\}$                                                                       & 0.2                    & \\
		13  & 3          & $\{10, 1, 6\}$                                                                    & 0.25                   & \\
		17  & 4          & $\{10, 6, 7, 11\}$                                                                & 0.25                   & \\
		19  & 5          & $\{10, 3, 4, 7, 17\}$                                                             & 0.277777777777778      & \\
		23  & 4          & $\{10, 1, 4, 14\}$                                                                & 0.181818181818182      & \\
		29  & 8          & $\{10, 2, 9, 12, 14, 16, 22, 23\}$                                                & 0.285714285714286      & \\
		31  & 8          & $\{10, 1, 4, 9, 14, 23, 25, 26\}$                                                 & 0.266666666666667      & \\
		37  & 12         & $\{10, 1, 6, 8, 11, 14, 23, 26, 27, 29, 31, 36\}$                                 & 0.333333333333333      & \\
		41  & 11         & $\{10, 40, 3, 4, 16, 18, 21, 26, 28, 33, 35\}$                                    & 0.275                  & \\
		43  & 12         & $\{10, 40, 2, 7, 9, 11, 12, 16, 23, 28, 29, 39\}$                                 & 0.285714285714286      & \\
		47  & 12         & $\{10, 40, 1, 4, 21, 22, 27, 28, 31, 35, 36, 45\}$                                & 0.260869565217391      & \\
		53  & 14         & $\{10, 40, 2, 8, 11, 17, 18, 26, 28, 37, 46, 48, 49, 50\}$                        & 0.269230769230769      & \\
		59  & 17         & $\{10, 1, 6, 7, 11, 13, 16, 19, 29, 34, 37, 41, 42, 45, 54, 55, 56\}$             & 0.293103448275862      & \\
		61  & 17         & $\{10, 2, 9, 11, 12, 13, 17, 19, 25, 28, 32, 41, 46, 53, 54, 55, 60\}$            & 0.283333333333333      & \\
		67  & 19         & $\{10, 40, 3, 4, 14, 17, 18, 19, 23, 24, 31, 33, 39, 41, 44, 47, 52, 63, 64\}$    & 0.287878787878788      & \\
		71  & 20         & $\{10, 2, 7, 9, 12, 13, 16, 23, 31, 33, 35, 37, 41, 43, 44, 45, 50, 60, 63, 65\}$ & 0.285714285714286      & \\
		73  & 20         & $\{10, 40, 1, 4, 9, 17, 21, 22, 24, 26, 28, 29, 32, 46, 53, 54, 57, 59, 61, 71\}$ & 0.277777777777778      &
	\end{tabular}
	\centering
	\caption{$M(0,3;q)$ for Small $q$}
	\label{tab-m03q}
\end{table*}
\begin{figure}
	\begin{tikzpicture}
		\SetVertexSimple[MinSize = .25cm]
		\Vertex[x=2.276cm,y=3.8752cm]{v0}
		\Vertex[x=3.5051cm,y=3.2318cm]{v1}
		\Vertex[x=2.4665cm,y=1.7403cm]{v2}
		\Vertex[x=3.0091cm,y=1.2934cm]{v3}
		\Vertex[x=3.7423cm,y=0.0cm]{v4}
		\Vertex[x=1.3156cm,y=0.956cm]{v5}
		\Vertex[x=0.632cm,y=4.3015cm]{v6}
		\Vertex[x=3.7057cm,y=0.5442cm]{v7}
		\Vertex[x=2.1395cm,y=2.7747cm]{v8}
		\Vertex[x=0.4139cm,y=0.8486cm]{v9}
		\Vertex[x=5.0cm,y=1.5695cm]{v10}
		\Vertex[x=1.6783cm,y=3.0754cm]{v11}
		\Vertex[x=2.3008cm,y=0.8298cm]{v12}
		\Vertex[x=2.7201cm,y=5.0cm]{v13}
		\Vertex[x=2.8351cm,y=3.2336cm]{v14}
		\Vertex[x=0.8424cm,y=2.2628cm]{v15}
		\Vertex[x=1.9325cm,y=0.2369cm]{v16}
		\Vertex[x=1.186cm,y=4.5212cm]{v17}
		\Vertex[x=3.8637cm,y=2.2432cm]{v18}
		\Vertex[x=3.7734cm,y=1.4238cm]{v19}
		\Vertex[x=1.7294cm,y=4.9142cm]{v20}
		\Vertex[x=0.0cm,y=2.6606cm]{v21}
		\Vertex[x=3.3497cm,y=4.7289cm]{v22}
		\Vertex[x=1.708cm,y=1.4332cm]{v23}
		\Vertex[x=3.0144cm,y=4.3086cm]{v24}
		\Vertex[x=4.2817cm,y=4.8313cm]{v25}
		\Vertex[x=4.9792cm,y=3.7538cm]{v26}
		\Vertex[x=4.4503cm,y=3.4016cm]{v27}
		\Edge(v2)(v0)
		\Edge(v6)(v0)
		\Edge(v0)(v1)
		\Edge(v0)(v25)
		\Edge(v0)(v13)
		\Edge(v0)(v15)
		\Edge(v1)(v22)
		\Edge(v1)(v25)
		\Edge(v11)(v1)
		\Edge(v2)(v16)
		\Edge(v2)(v3)
		\Edge(v2)(v4)
		\Edge(v2)(v1)
		\Edge(v2)(v15)
		\Edge(v3)(v4)
		\Edge(v3)(v5)
		\Edge(v3)(v7)
		\Edge(v3)(v1)
		\Edge(v3)(v11)
		\Edge(v4)(v16)
		\Edge(v4)(v19)
		\Edge(v4)(v7)
		\Edge(v4)(v10)
		\Edge(v5)(v21)
		\Edge(v5)(v7)
		\Edge(v5)(v9)
		\Edge(v5)(v11)
		\Edge(v5)(v12)
		\Edge(v6)(v17)
		\Edge(v6)(v20)
		\Edge(v6)(v21)
		\Edge(v6)(v13)
		\Edge(v6)(v15)
		\Edge(v7)(v18)
		\Edge(v7)(v10)
		\Edge(v7)(v12)
		\Edge(v8)(v18)
		\Edge(v8)(v20)
		\Edge(v8)(v23)
		\Edge(v8)(v24)
		\Edge(v8)(v12)
		\Edge(v8)(v14)
		\Edge(v9)(v16)
		\Edge(v9)(v21)
		\Edge(v9)(v23)
		\Edge(v9)(v12)
		\Edge(v9)(v15)
		\Edge(v10)(v18)
		\Edge(v10)(v19)
		\Edge(v10)(v26)
		\Edge(v10)(v27)
		\Edge(v11)(v17)
		\Edge(v11)(v21)
		\Edge(v11)(v22)
		\Edge(v12)(v18)
		\Edge(v12)(v23)
		\Edge(v13)(v20)
		\Edge(v13)(v25)
		\Edge(v13)(v27)
		\Edge(v13)(v14)
		\Edge(v14)(v19)
		\Edge(v14)(v20)
		\Edge(v14)(v23)
		\Edge(v14)(v27)
		\Edge(v15)(v16)
		\Edge(v15)(v21)
		\Edge(v16)(v19)
		\Edge(v16)(v23)
		\Edge(v17)(v20)
		\Edge(v17)(v21)
		\Edge(v17)(v22)
		\Edge(v17)(v24)
		\Edge(v18)(v24)
		\Edge(v18)(v26)
		\Edge(v19)(v23)
		\Edge(v19)(v27)
		\Edge(v20)(v24)
		\Edge(v22)(v24)
		\Edge(v22)(v25)
		\Edge(v22)(v26)
		\Edge(v24)(v26)
		\Edge(v25)(v26)
		\Edge(v25)(v27)
		\Edge(v26)(v27)
	\end{tikzpicture}\centering
	\caption{Cayley graph $\Cay(\mathbb{Z}_{29}^\times; [1,3]\cdot [1,3]^{-1})$}
	\label{fig-z29}
\end{figure}
\begin{figure}\centering
	\begin{tikzpicture}
		\Vertex[LabelOut=false,L=\hbox{$16$},x=3.6367cm,y=0.0cm]{v0}
		\Vertex[LabelOut=false,L=\hbox{$10$},x=3.8732cm,y=4.6996cm]{v1}
		\Vertex[LabelOut=false,L=\hbox{$20$},x=1.338cm,y=5.0cm]{v2}
		\Vertex[LabelOut=false,L=\hbox{$15$},x=5.0cm,y=2.2325cm]{v3}
		\Vertex[LabelOut=false,L=\hbox{$1$},x=2.5072cm,y=2.4959cm]{v4}
		\Vertex[LabelOut=false,L=\hbox{$2$},x=0.0cm,y=2.7388cm]{v5}
		\Vertex[LabelOut=false,L=\hbox{$3$},x=1.1745cm,y=0.2867cm]{v6}
		\Edge(v6)(v0)
		\Edge(v3)(v0)
		\Edge(v4)(v0)
		\Edge(v1)(v2)
		\Edge(v1)(v3)
		\Edge(v4)(v1)
		\Edge(v4)(v2)
		\Edge(v4)(v3)
		\Edge(v5)(v6)
		\Edge(v5)(v2)
		\Edge(v4)(v5)
		\Edge(v4)(v6)
	\end{tikzpicture}
	\caption{Neighborhood of $1$ in $\Cay(\mathbb{Z}_{29}^\times; [1,3]\cdot [1,3]^{-1})$}
	\label{fig-z29-nbd}
\end{figure}
\begin{prob}
	Decide 
    $\liminf_{\substack{q\to\infty \\ q\text{ prime}}} M(0,3;q)/(q-1)$. Will it be greater than $1/4$?
\end{prob}
We have also tried to get an upper bound for $M(0,3;q)$, but failed to get an ideal bound. So we propose
\begin{prob}
	Develop suitable tools to get an effective upper bound for $M(0,3;q)$.
\end{prob}
\section{On quasi-perfect $B[-4,4](2p)$ sets for odd prime $p$}
In this section we discuss quasi-perfect $B[-4,4](2p)$ sets. A known construction is
\begin{constr}[{\cite[Theorem 4]{klove2012codes}}]
	\label{constr-4-4-2p}
	Let $p\equiv 1\pmod 4$ be a prime and fix an odd primitive root $g$ modulo $2p$.
	If both $\ind_g(2),\ind_g(3)$ are odd, then the set
	$$
		\{ g^{2i} \pmod{2p} \colon 0\le i <(p-1)/4\}
	$$
	is a quasi-perfect $B[-4,4](2p)$ set.
\end{constr}
We note that both $\ind_g(2)$ and $\ind_g(3)$ being odd is equivalent to $2$ and $3$ being quadratic nonresidues modulo $p$. Since $2$ is a quadratic nonresidue modulo $p$ if and only if $p \equiv 3,5 \pmod{8}$,
$3$ is a quadratic nonresidue modulo $p$ if and only if $p\equiv 5,7\pmod {12}$, the condition is equivalent to $p \equiv 5 \pmod{24}$.

We will show that there are many other primes that admit a quasi-perfect $B[-4,4](2p)$ set. Suppose $p$ is an odd prime such that there exists a quasi-perfect $B[-4,4](2p)$ set $B$. If $p=4k+1$, then
$$
	|(B\sqcup 2B\sqcup 3B\sqcup 4B)\sqcup(-B\sqcup -2B\sqcup -3B\sqcup -4B)|=2p-2,
$$
and $|B|=k$. There are $4k$ even numbers in $\mathbb{Z}_{2p}\setminus\{0\}$. But $2B\sqcup 4B\sqcup -2B\sqcup-4B$ contains $4k$ even numbers. It suggests that $B$ contains only odd numbers. Moreover, $p\notin B$ since $2p=0\notin 2B$. Thus
$$(B\sqcup 2B\sqcup 3B\sqcup 4B)\sqcup(-B\sqcup -2B\sqcup -3B\sqcup -4B)=\mathbb{Z}_{2p}\setminus\{0,p\},$$
where $B\sqcup -B\sqcup 3B\sqcup -3B=\mathbb{Z}_{2p}^\times$ and $2B\sqcup 4B\sqcup -2B\sqcup -4B=2\mathbb{Z}_{2p}\setminus\{0\}$.
Choose an odd primitive root $g$ modulo $2p$, which is also a primitive root modulo $p$, and let
$A:=\{\ind_g(i):i\in B\}$. Then
$$
	A+\{0,\ind_g(3)\}+\left\{0,\frac{p-1}2\right\}=\mathbb{Z}_{p-1}
$$
is a factorization. According to \cite[Lemma III.13]{yuan2025existence},
this holds if and only if $\overline{A}+\{0,\ind_g(3)\pmod{\frac{p-1}2}\}=\mathbb{Z}_{(p-1)/2}$ is a factorization,
where $\overline{A}=\{a\pmod{\frac{p-1}2}:a\in A\}$. We know from \cite[Corollary III.8]{yuan2025existence} or \cite[Corollary 2.7]{buratti1996packing} that such an $\overline{A}$ exists if and only if $v_2(\ind_g(3)\bmod{\frac{p-1}2})<v_2(\frac{p-1}2)$. %

Next, we prove that
$$B\sqcup 2B\sqcup -B\sqcup -2B\equiv \mathbb{Z}_{p}\setminus\{0\}\pmod{p}.$$
It suffices to prove that the four operands on the left-hand side are pairwise disjoint.
Suppose $i\in B\pmod p$ and $i\in 2B\pmod p$, then $2i\in 2B, 2i\in 4B$. Since $2B\cap 4B=\varnothing$, such $i$ does not exist, so $B$ and $2B$ are disjoint modulo $p$. Results on other pairs of operands can be proven similarly. We see that $B\pmod{p}$ is a perfect $B[-2,2](p)$ splitter set, which exists if and only if $v_2(\ord_p(2))\ge 2$ (See \cite[Corollary 3]{klove2012codes}), or equivalently $v_2(\ind_g(2)\bmod{\frac{p-1}2})<v_2(\frac{p-1}2)$.

Elements in $B$ are odd, and hence $\{\ind_g(i):i\in B\pmod{p}\}=\{\ind_g(i\bmod{p}):i\in B\}=\{\ind_g(i):i\in B\}=A$. Since $B\pmod{p}$ is a perfect $B[-2,2](p)$ splitter set,
$$
	A+\{0,\ind_g(2)\}+\left\{0,\frac{p-1}2\right\}=\mathbb{Z}_{p-1}
$$
is a factorization, which holds if and only if $\overline{A}+\{0,\ind_g(2)\pmod{\frac{p-1}2}\}=\mathbb{Z}_{(p-1)/2}$ is a factorization. Let $f(x)$ be the mask polynomial of $\overline{A}$, then
$$
	\begin{aligned}
		 & f(x)(1+x^{\ind_g(2)\bmod{\frac{p-1}2}})\equiv f(x)(1+x^{\ind_g(3)\bmod{\frac{p-1}2}}) \\\equiv&
		1+x+\cdots+x^{\frac{p-1}2-1}\pmod{x^{\frac{p-1}2}-1}.
	\end{aligned}
$$
By \cite[Lemma III.2]{yuan2025existence}, we can prove that
if $\mathfrak{M}_1=\{1\le j\le v_2(\frac{p-1}2):\Phi_{2^j}(x)\mid 1+x^{\ind_g(2)\bmod{\frac{p-1}2}}\}$
and $\mathfrak{M}_2=\{1\le j\le v_2(\frac{p-1}2):\Phi_{2^j}(x)\mid 1+x^{\ind_g(3)\bmod{\frac{p-1}2}}\}$,
then $\mathfrak{M}_1=\mathfrak{M}_2$ and $|\mathfrak{M}_1|=|\mathfrak{M}_2|=1$.
From this and \cite[Lemma III.1]{yuan2025existence}, we deduce that $v_2(\ind_g(2)\bmod{\frac{p-1}2})=v_2(\ind_g(3)\bmod{\frac{p-1}2})$. Since $v_2(\ind_g(2)\bmod{\frac{p-1}2})<v_2(\frac{p-1}2)$,
then $v_2(\ind_g(2)\bmod{\frac{p-1}2})=v_2(\ind_g(2))$. Similarly $v_2(\ind_g(3)\bmod{\frac{p-1}2})=v_2(\ind_g(3))$. So we can write the necessary condition for the existence of a quasi-perfect $B[-4,4](2p)$ set as: $v_2(\ind_g(2))=v_2(\ind_g(3))<v_2(\frac{p-1}2)$.

When $i_1:=v_2(\ind_g(2))=v_2(\ind_g(3))<v_2(\frac{p-1}2)$, we set $A=\{0,1,\ldots,2^{i_1}-1\}+\{0,2^{i_1+1},2\cdot 2^{i_1+1},\ldots,\frac{p-1}2\}$, then $|A|=2^{i_1}\cdot(\frac{p-1}2\big/2^{i_1+1})=\frac{p-1}4$, and $A$ is a common complementary factor of $\{0,\ind_g(2)\bmod{\frac{p-1}2}\}$ and $\{0,\ind_g(3)\bmod{\frac{p-1}2}\}$ in $\mathbb{Z}_{(p-1)/2}$. Let $B=\{g^i:i\in A\}$, then $B$ is a quasi-perfect $B[-4,4](2p)$ splitter set.

\begin{exampl}
  Pick $p=97, g=5,\ind_g(2)=34,\ind_g(3)=70$. Then $v_2(\ind_g(2))=v_2(\ind_g(3))<v_2(\frac{p-1}2)$. We can verify that
  $$
  \{g^{i+4j}:i\in[0,1],j\in[0,11]\}
  $$
  is a quasi-perfect $B[-4,4](194)$ splitter set.
\end{exampl}

Now assume that $p=4k+3\ge 7$ and that there is a quasi-perfect $B[-4,4](2p)$ set $B$,
then $|B|=k$ and as before $p\notin (B\sqcup 2B\sqcup 3B\sqcup 4B)\sqcup(-B\sqcup -2B\sqcup -3B\sqcup -4B)$.
The number of even numbers in $(B\sqcup 2B\sqcup 3B\sqcup 4B)\sqcup(-B\sqcup -2B\sqcup -3B\sqcup -4B)$
is $4k\le 4|B|+4|\{\text{even numbers in }B\}|\le p-1=4k+2$,
then we must have $|\{\text{even numbers in }B\}|=0$ and there are two even members $2a,2b$ of $\mathbb{Z}_{2p}$ such that $2a,2b\notin(B\sqcup 2B\sqcup 3B\sqcup 4B)\sqcup(-B\sqcup -2B\sqcup -3B\sqcup -4B)$.
Moreover, there are two odd members $2c+1,2d+1$ of $\mathbb{Z}_{2p}$ such that $2c+1,2d+1\notin(B\sqcup 2B\sqcup 3B\sqcup 4B)\sqcup(-B\sqcup -2B\sqcup -3B\sqcup -4B)$ and hence
$$
	\begin{aligned}
		 & \mathbb{Z}_{2p}\setminus\{0,p,2a,2b,2c+1,2d+1\} \\=&(B\sqcup 2B\sqcup 3B\sqcup 4B)\sqcup(-B\sqcup -2B\sqcup -3B\sqcup -4B).
	\end{aligned}
$$
Look at the even elements, we have
$$
	2\mathbb{Z}_{2p}\setminus\{0,2a,2b\}=2B\sqcup 4B\sqcup -2B\sqcup -4B.
$$
Since $2\mathbb{Z}_{2p}$ is a group and the right-hand side is closed under taking additive inverses, hence so is $\{0,2a,2b\}$, and $-2a\in\{0,2a,2b\}$.
However, $2a\ne p$, so $-2a=2b$. As before, we have
$$
	B\sqcup 2B\sqcup -B\sqcup -2B\sqcup\{\pm a\}\equiv \mathbb{Z}_p^\times\pmod{p}.
$$
Taking discrete logarithm in $\mathbb{Z}_p^\times$, we get
$$
	((A+\{0,\ind_g(2)\})\sqcup\{\ind_g(a)\})+\left\{0,\frac{p-1}2\right\}=\mathbb{Z}_{p-1}
$$
is a factorization, which holds if and only if $\overline{A}+\{0,\ind_g(2)\pmod{\frac{p-1}2}\}$ is a direct sum and equals $\mathbb{Z}_{(p-1)/2}\setminus\{\ind_g(a)\pmod{\frac{p-1}2}\}$,
where $\overline{A}=\{a\pmod{\frac{p-1}2}:a\in A\}$.

Now look at the odd elements. We have
$$
	B\sqcup 3B\sqcup -B\sqcup -3B\sqcup\{2c+1,2d+1\}=\mathbb{Z}_{2p}^\times.
$$
As before, $2d+1=-(2c+1)$, and we have
$$
	((A+\{0,\ind_g(3)\})\sqcup\{\ind_g(2c+1)\})+\left\{0,\frac{p-1}2\right\}=\mathbb{Z}_{p-1}
$$
is a factorization, which holds if and only if $\overline{A}+\{0,\ind_g(3)\pmod{\frac{p-1}2}\}$ is a direct sum and equals $\mathbb{Z}_{(p-1)/2}\setminus\{\ind_g(2c+1)\pmod{\frac{p-1}2}\}$.

From Proposition \ref{prop-2-uniqueness}, we must have $\ind_g(2)\equiv \pm\ind_g(3)\pmod{\frac{p-1}2}$.
Since $\ind_g(2)\ne\ind_g(3)$, so if $\ind_g(2)\equiv\ind_g(3)\pmod{\frac{p-1}2}$, then $\ind_g(2)\equiv\ind_g(3)+\frac{p-1}2\pmod{(p-1)}$, which implies $2\equiv -3\pmod p$ and $p=5$, a contradiction with $p=4k+3$. If $\ind_g(2)\equiv-\ind_g(3)\pmod{\frac{p-1}2}$, then $\ind_g(6)\equiv0,\frac{p-1}2\pmod{(p-1)}$. Since $p\ne 5$, we must have $6\equiv-1\pmod{p}$ and $p=7$. When $p=7$, $B=\{1\}$ is clearly a quasi-perfect $B[-4,4](2p)$ set.  Results of this section give Proposition \ref{prop-pm4}.

\section{Perfect $B[0,pr](q)$ splitters for primes $p,q,r$}
In this section, we settle the problem of determining the existence condition for perfect $B[0,6](q)$ sets, while $q\equiv1\pmod 6$ is a prime. We fix a primitive root $g$ modulo $q$, and investigate when there is a group factorization $\mathbb{Z}_{q-1}=\{\ind_g(i) : i\in[1,6]\}+C$.
We first find out what condition a set $A$ of size $pr$ should satisfy in order that a factorization $\mathbb{Z}_N=A+C$ exists, where $p,r$ are distinct primes.

We quote a result on integer tilings. We say a finite set $A\subset\mathbb{Z}$ tiles $\mathbb{Z}$ by translations if there is a set $T\subset\mathbb{Z}$ such that $\mathbb{Z}=A+T$ and the sums in $A+T$ are pairwise distinct.
In 1999, Coven and Meyerowitz gave the following condition of integer tilings:
\begin{thm}[Coven--Meyerowitz, \cite{coven1999tiling}]
	Let $S_A$ be the set of prime powers $p^\alpha$ such that $\Phi_{p^\alpha}(x)$ divides $f_A(x)$.
	There are two conditions:
	\begin{enumerate}[label=(T\arabic*)]
		\item $f_A(1)=\prod_{i\in S_A} \Phi_s(1)$,
		\item if $s_1,\ldots,s_k\in S_A$ are powers of different primes, then $\Phi_{s_1\cdots s_k}$ divides $f_A(x)$.
	\end{enumerate}
	Then,
	\begin{itemize}
		\item if $A$ satisfies (T1), (T2) then $A$ tiles $\mathbb{Z}$;
		\item if $A$ tiles $\mathbb{Z}$ then (T1) holds;
		\item if $A$ tiles $\mathbb{Z}$ and $|A|$ has at most two distinct prime factors then (T2) holds.
	\end{itemize}
\end{thm}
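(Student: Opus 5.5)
The theorem is quoted from Coven and Meyerowitz \cite{coven1999tiling}, so I would follow their original argument and handle the three assertions separately. Throughout, I normalize $A\subset\mathbb{Z}_{\ge 0}$ with $0\in A$; translating $A$ changes neither tiling nor divisibility by cyclotomic polynomials.

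\emph{Sufficiency of (T1) and (T2).} Let $S=S_A$ and $M=\operatorname{lcm}(S)$. I would build an explicit complement $B$. For each prime power $s=p^a$ with $s\mid M$ and $s\notin S$, let $t(s)$ be the largest divisor of $M$ coprime to $p$. Then define $B$ by
$$f_B(x)=\prod_{s\mid M,\ s\notin S,\ s\text{ a prime power}}\Phi_s\bigl(x^{t(s)}\bigr).$$
Each factor $\Phi_s(x^{t(s)})$ has nonnegative $0/1$ coefficients, since $\Phi_{p^a}(y)=1+y^{p^{a-1}}+\cdots$. The product is then the mask polynomial of a direct sum of arithmetic progressions, so $B$ is an honest set.

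I would first compute the size. By (T1), $|A|=\prod_{s\in S}\Phi_s(1)$, while $|B|$ is the product of $p$ over the prime powers $p^a\mid M$ not in $S$. Hence $|A||B|=M$.

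Next I would show that $\Phi_d\mid f_Af_B$ for every $d\mid M$ with $d>1$. Write $d=\prod p_i^{a_i}$. If every $p_i^{a_i}$ lies in $S$, then (T2) gives $\Phi_d\mid f_A$. Otherwise some $s=p_i^{a_i}\notin S$. For a primitive $d$-th root of unity $\zeta$, the power $\zeta^{t(s)}$ has order exactly $s$, so $\Phi_d\mid\Phi_s(x^{t(s)})\mid f_B$.

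These two facts give $f_Af_B\equiv (x^M-1)/(x-1)\pmod{x^M-1}$. So $A\oplus B=\mathbb{Z}_M$, and hence $A\oplus(B+M\mathbb{Z})=\mathbb{Z}$.

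\emph{Tiling implies (T1).} First I would show that a tiling complement $C$ of $A$ can be taken periodic (Newman's argument). The configuration of $C$ in any window of length $\max A+1$ determines $C$ everywhere, and there are finitely many such windows, so pigeonhole gives periodicity. Then $A\oplus C=\mathbb{Z}_M$ for some period $M$, and $f_Af_C\equiv 1+x+\cdots+x^{M-1}\pmod{x^M-1}$.

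Each $\Phi_{p^a}$ with $p^a\mid M$ must divide $f_A$ or $f_C$. Since distinct cyclotomic polynomials are coprime, $\prod_{s\in S_A}\Phi_s$ divides $f_A$ with an integer quotient. Evaluating at $1$ gives $\prod_{s\in S_A}\Phi_s(1)\mid |A|$, and likewise for $C$. We have $S_A\cup S_C\supseteq\{p^a: p^a\mid M\}$ and $|A||C|=M=\prod_{p^a\mid M}p$. Comparing these forces equality in both divisibilities, which is (T1).

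\emph{Tiling implies (T2) when $|A|=p^mq^n$.} This is the main obstacle. I would induct on $|A|$ using two tools: Tijdeman's dilation theorem ($kA$ tiles with the same complement whenever $\gcd(k,|A|)=1$) and Sands' theorem (for $|A|$ with at most two prime factors, in a cyclic tiling $A\oplus C=\mathbb{Z}_M$ one of $A$ or $C$ lies in a proper subgroup $r\mathbb{Z}_M$ with $r\in\{p,q\}$).

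If $A\subset r\mathbb{Z}$, I would pass to $A/r$, which tiles with smaller diameter. Then I would track how $S_A$ and the divisibility $\Phi_{s_1s_2}\mid f_A$ transform under $x\mapsto x^r$.

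If instead $C\subset r\mathbb{Z}$, then $A$ meets each residue class mod $r$ equally. I would decompose $A=\bigsqcup_j A_j$ by residue class mod $r$, apply induction to the pieces $A_j$ (each tiling $r\mathbb{Z}$ with the same complement), and recombine the cyclotomic divisibilities.

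The delicate bookkeeping is the passage from $S_{A_j}$ to $S_A$. Doing that bookkeeping correctly is where I expect most of the effort to go.
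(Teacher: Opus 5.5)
The paper gives no proof of this theorem; it only quotes it from Coven--Meyerowitz, so the benchmark is their original argument. Your first two bullets are correct and are essentially their proofs.
\begin{itemize}
\item Your complement $f_B(x)=\prod\Phi_s(x^{t(s)})$ is exactly theirs. Whether $B$ is an honest set needs no separate argument: $f_Af_B\equiv(x^M-1)/(x-1)\pmod{x^M-1}$ with nonnegative coefficients already forces it.
\item Periodicity plus the count $\prod_{S_A}\Phi_s(1)\prod_{S_C}\Phi_s(1)\ge M=|A||C|$ is their proof of (T1).
\end{itemize}
The third bullet, however, is only a plan, and it has a genuine gap.

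First, the dichotomy you attribute to Sands is false for an arbitrary period $M$. In $\mathbb{Z}_{30}$, the sets $A=\{0,5,10,15,20,25\}$ and $C=\{0,1,2,3,4\}$ satisfy $A\oplus C=\mathbb{Z}_{30}$ with $|A|=6$, yet neither set lies in $2\mathbb{Z}_{30}$ or $3\mathbb{Z}_{30}$. Sands' theorem concerns cyclic groups whose \emph{order} has at most two prime factors. This is exactly where Tijdeman's theorem, which you list but never use, is needed. Write $M=m_1m_2$ with $m_2$ the part of $M$ prime to $|A|$. Then $m_2A\oplus C=\mathbb{Z}$, and intersecting with $m_2\mathbb{Z}$ shows that $C'=(C\cap m_2\mathbb{Z})/m_2$ is a complement of $A$ of period $m_1$. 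Only to this tiling of $\mathbb{Z}_{m_1}$ does Sands apply, giving $A\subseteq r\mathbb{Z}$ or $C'\subseteq r\mathbb{Z}$ with $r\in\{p,q\}$. Also, $|A|$ does not drop in the case $A\subseteq r\mathbb{Z}$, so induct on $\max A$ (with $\min A=0$), which drops in both cases.

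More seriously, the bookkeeping you defer is the actual content of the proof, and it needs an idea you have not stated. You note that the pieces share a complement, but you do not draw the consequence that matters. Vanishing of $f_A(x)=\sum_j x^jf_{A_j'}(x^r)$ at a root of unity does not in general pass to the individual pieces $A_j'=(A_j-j)/r$. For example, $f_{\{0,1,2\}}=\Phi_3$, but with $r=2$ neither piece vanishes at a primitive cube root of unity. So you cannot get $\Phi_{t^b}\mid f_{A_j'}$ from $\Phi_{t^b}\mid f_A$.

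What works is the equality case of your own (T1) argument: if $A'\oplus B'=\mathbb{Z}_n$, then $S_{A'}\sqcup S_{B'}$ is exactly the set of prime powers $>1$ dividing $n$, so $S_{A'}$ is determined by the complement. Write $C'=B\oplus m_1\mathbb{Z}$ with $B=rB'$.
\begin{itemize}
\item All $A_j'$ tile $\mathbb{Z}_{m_1/r}$ with the same $B'$, so they share one set $S'$.
\item Comparing $S_A$ with $S_B$ via $f_B(x)=f_{B'}(x^r)$ gives $r\in S_A$; $r^a\in S_A\iff r^{a-1}\in S'$ for $a\ge2$; and $t^b\in S_A\iff t^b\in S'$ for the other prime $t$.
\item Since (T1) confines $S_A$ to powers of $r$ and $t$, (T2) only asks for $\Phi_{r^at^b}\mid f_A$.
\item For $\zeta$ of order $r^at^b$, $\zeta^r$ has order $r^{a-1}t^b$. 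Then $f_{A_j'}(\zeta^r)=0$ for every $j$: from $t^b\in S'$ if $a=1$, and from the induction hypothesis (T2) for $A_j'$ if $a\ge2$. Hence $f_A(\zeta)=0$.
\end{itemize}
The scaling case $A=rA'$ is the analogous, easier computation with $f_A(x)=f_{A'}(x^r)$, using that $\zeta^r$ has order $m/\gcd(m,r)$ when $\zeta$ has order $m$.
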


We also need a lemma of de Bruijn:
\begin{lemma}[{\cite[Theorem 2]{debruijn1953factorization}}]
	Let $p,r$ be different primes, and $n=p^\lambda r^\mu$,
	$\lambda\ge1$, $\mu\ge 1$.
	Assume that $f_A(x)$ is a polynomial with nonnegative integral coefficients,
	$\Phi_n(x)\mid f_A(x)$, and the degree of $f_A(x)$ is less than $n$.
	Then there are polynomials $P(x),R(x)$ with nonnegative integral coefficients
	such that
	$$
		f_A(x)=P(x)\frac{x^n-1}{x^{n/p}-1}+R(x)\frac{x^n-1}{x^{n/r}-1}.
	$$
\end{lemma}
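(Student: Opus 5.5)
This is de Bruijn's theorem on polynomials divisible by $\Phi_{p^\lambda r^\mu}$. I would prove it by induction on the number of nonzero coefficients of $f_A$, that is, on $f_A(1)$, using the structure of vanishing sums of $n$-th roots of unity when $n$ has only two prime factors.

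\emph{Reduction to $\zeta$.} Let $\zeta$ be a primitive $n$-th root of unity. Then $\Phi_n\mid f_A$ is equivalent to $\sum_{i}a_i\zeta^i=0$, where $f_A=\sum_{i=0}^{n-1}a_ix^i$ and $a_i\ge 0$. Write $P_p(x)=\frac{x^n-1}{x^{n/p}-1}=\sum_{j=0}^{p-1}x^{jn/p}$, and define $P_r$ similarly. For each $c$ with $0\le c<n/p$, the translate $x^cP_p(x)$ is the mask polynomial of the coset $c+\frac{n}{p}\mathbb{Z}_n$. Both $P_p$ and $P_r$ vanish at $\zeta$. So the goal is to show that a nonnegative integer vector $(a_i)$ whose $\zeta$-weighted sum is zero is a nonnegative integer combination of indicator vectors of cosets of the two subgroups $\frac{n}{p}\mathbb{Z}_n$ (order $p$) and $\frac{n}{r}\mathbb{Z}_n$ (order $r$). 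Each coset has at most one representative in $[0,n-1]$ of each residue, so this combination yields $P(x)$ and $R(x)$ with nonnegative coefficients and degrees $<n/p$ and $<n/r$.

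\emph{Linear-algebra step.} First reduce the prime-power exponents. Set $m=n/(pr)$ and write $i=u+mv$ with $0\le u<m$ and $0\le v<pr$. The minimal polynomial of $\zeta$ over $\mathbb{Q}(\zeta^{pr})$, where $\zeta^{pr}$ is a primitive $m$-th root of unity, is $x^{m}-\zeta^{m}$... I would instead use the cleaner standard route. The basis $\{\zeta^u:0\le u<m\}$ over $\mathbb{Q}(\zeta_{pr})$ (valid because $[\mathbb{Q}(\zeta_n):\mathbb{Q}(\zeta_{pr})]=m$ when every prime of $m$ divides $pr$) splits the relation into $m$ independent relations. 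Each one has the form $\sum_v a_{u+mv}\omega^v=0$ with $\omega=\zeta^m$ a primitive $pr$-th root of unity. So it suffices to treat $n=pr$, one residue class $u$ at a time. The cosets of order $p$ and $r$ inside the class $u+m\mathbb{Z}$ correspond exactly to the cosets of order $p$ and $r$ in $\mathbb{Z}_{pr}$.

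\emph{Case $n=pr$.} This case uses the classical description: the kernel of $\mathbb{Z}^{\mathbb{Z}_{pr}}\to\mathbb{Z}[\omega]$ is the $\mathbb{Z}$-span of the coset indicators. (That span has rank $pr-\varphi(pr)=p+r-1$ and is saturated, a standard Rédei/Schoenberg fact.) The hard part is \emph{nonnegativity}. I would argue as follows. Identify $\mathbb{Z}_{pr}\cong\mathbb{Z}_p\times\mathbb{Z}_r$ and arrange the $a$'s in a $p\times r$ matrix. The kernel condition is then equivalent to $a_{s,t}=\alpha_s+\beta_t$ for some integers $\alpha_s,\beta_t$, where rows correspond to cosets of order $r$ and columns to cosets of order $p$. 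Shift the decomposition so that $\min_s\alpha_s=0$. Choose $s_0$ with $\alpha_{s_0}=0$; then $\beta_t=a_{s_0,t}\ge0$ for every $t$. Next choose $t_0$ minimizing $\beta_t$, and replace $\alpha_s$ by $\alpha_s+\beta_{t_0}$ and $\beta_t$ by $\beta_t-\beta_{t_0}$. Then all $\beta_t\ge 0$ still hold, and $\alpha_s+\beta_{t_0}=a_{s,t_0}\ge 0$, so all $\alpha_s\ge0$ as well. This gives a nonnegative combination.

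The main obstacle is the descent from general $n=p^\lambda r^\mu$ to $pr$: one must check the degree condition $[\mathbb{Q}(\zeta_n):\mathbb{Q}(\zeta_{pr})]=n/(pr)$. This holds because $\varphi(n)/\varphi(pr)=n/(pr)$ when $\operatorname{rad}(n)=pr$. One must also check that the cosets in $\mathbb{Z}_{pr}$ lift to cosets in $\mathbb{Z}_n$ of the same orders $p$ and $r$. The step $a_{s,t}=\alpha_s+\beta_t$ also needs care: it follows from $\Phi_{pr}(x)=\Phi_p(x^r)/\Phi_p(x)$, or directly from the CRT description of $\ker$ as the tensor product of augmentation-type kernels.
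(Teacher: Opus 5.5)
Your proof is correct. The paper does not prove this lemma; it only cites de Bruijn, so yours is a genuinely independent, self-contained route.

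Your reduction is sound. Since $\omega=\zeta^m$ lies in the base field and $\varphi(n)/\varphi(pr)=m$, the minimal polynomial of $\zeta$ over $\mathbb{Q}(\omega)$ is $x^m-\omega$. Hence $1,\zeta,\dots,\zeta^{m-1}$ is a basis, and the vanishing sum splits into $m$ vanishing sums of $pr$-th roots of unity. For each residue $u$, the map $v\mapsto u+mv$ sends cosets of $\langle r\rangle$ and $\langle p\rangle$ in $\mathbb{Z}_{pr}$ to cosets of $\langle mr\rangle=\langle n/p\rangle$ and $\langle mp\rangle=\langle n/r\rangle$ in $\mathbb{Z}_n$. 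On the $p\times r$ CRT grid, the row and column indicators span a space of dimension $p+r-1=pr-\varphi(pr)$ inside the kernel. So the rational kernel is exactly $\{\alpha_s+\beta_t\}$.

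Your route yields more than the statement: an explicit description of $f_A$ as a nonnegative sum of coset indicators, each written as $x^c\frac{x^n-1}{x^{n/p}-1}$ with $0\le c<n/p$, or the analogue for $r$. This gives $\deg P<n/p$ and $\deg R<n/r$ for free. The paper has to argue these bounds separately after invoking the lemma, from positivity of leading coefficients.

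Some cleanup is needed.
\begin{itemize}
\item The announced induction on $f_A(1)$ is never used, so remove it.
\item Remove the abandoned sentence about $\mathbb{Q}(\zeta^{pr})$.
\item State the integrality (``saturation'') step explicitly rather than calling it standard. Given rational $\alpha_s,\beta_t$ with $a_{s,t}=\alpha_s+\beta_t$, replace them by $\alpha_s-\alpha_0$ and $\beta_t+\alpha_0=a_{0,t}$, which are integers.
\item Nonnegativity is easier than you suggest, and one shift suffices. Normalize so that $\min_t\beta_t=\beta_{t_0}=0$. Then every $\beta_t\ge 0$ and $\alpha_s=a_{s,t_0}\ge 0$.
\end{itemize}
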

Let $A\subset \mathbb{Z}_N$ be a set with $pr$ elements. First suppose that $A$ is a direct factor of $\mathbb{Z}_N$, then $A$ tiles $\mathbb{Z}$. As seen in \cite[Lemma III.2]{yuan2025existence}, since $|A|=pr$, there are $\lambda, \mu$ such that
\begin{enumerate}
  \item $\lambda\le v_p(N)$, $\mu\le v_r(N)$;
	\item Each of $\Phi_{p^\lambda}(x)$, $\Phi_{r^\mu}(x)$ divides $f_A(x)$;
	\item for $1\le i\le v_p(N)$, $1\le j\le v_r(N)$, $i\ne \lambda$, $j\ne \mu$, $\Phi_{p^i}$ or $\Phi_{r^j}$ does not divide $f_A(x)$.
\end{enumerate}
Moreover, by Coven--Meyerowitz condition, $\Phi_{p^\lambda r^\mu}(x)\mid f_A(x)$. Let $n=p^\lambda r^\mu\mid N$, $\overline{f_A}(x)$ be the remainder of $f_A(x)$ when divided by $x^n-1$. It is easy to verify that it is a polynomial with integral nonnegative coefficients, and each of $\Phi_{p^\lambda}(x)$, $\Phi_{r^\mu}(x)$, and $\Phi_{p^\lambda r^\mu}(x)$ divides $\overline{f_A}(x)$.
Using the lemma above, we obtain the equality in the lemma, with $f_A(x)$ replaced with $\overline{f_A}(x)$. Notice that
$\Phi_{p^\lambda}(x)\mid R(x)$, $\Phi_{r^\mu}(x)\mid
	P(x)$, the degree of $P(x)\le n/p=p^{\lambda-1}r^\mu$,
and the degree of $R(x)\le n/r=p^\lambda r^{\mu-1}$ since the leading coefficients of all the involved polynomials are positive.

Now we put $x=1$ into the equality, obtaining
$$
	pr=f_A(1)=\overline{f_A}(1)=pP(1)+rR(1),
$$
where the second equality holds since $1$ is a root of $x^n-1$. Since the coefficients of $P(x),R(x)$ are nonnegative integers,
and $\Phi_{p^\lambda}(1)=p\mid R(1)$, $\Phi_{r^\mu}(1)=r\mid P(1)$, we must have $P(1)=0$ or $R(1)=0$, which implies $P(x)=0$ or $R(x)=0$.

Without loss of generality, suppose that $R(x)=0$, then $\overline{f_A}(x)=P(x)\frac{x^n-1}{x^{n/p}-1}$, with $P(1)=r$. Now $\Phi_{r^\mu}(x)\mid P(x)$, then there is a polynomial with nonnegative integral coefficients $g(x)$ such that
$$
	P(x)\equiv g(x)\cdot\Phi_{r^\mu}(x)\pmod{x^{r^\mu}-1}.
$$
(See {\cite{debruijn1953factorization}} for a proof.) Consider the evaluation at $1$ again, we see that $g$ is a monomial. It means that: 
\begin{quotation}
there exist $\lambda\le v_p(N),\mu\le v_r(N)$ such that $A\pmod{p^\lambda r^\mu}$ (as a multiset) can be written as the sum of a set of size $r$, $A'$, and $A''=\{0,p^{\lambda-1}r^\mu, \ldots,p^{\lambda-1}(p-1)r^\mu\}$, and $A'\pmod{r^\mu}$ is an arithmetic progression with difference $r^{\mu-1}$. 
\end{quotation}
Notice that $A''$ is in fact a subgroup of $\mathbb{Z}_{p^\lambda r^\mu}$, thus $A\pmod{p^\lambda r^\mu}$ is the union of some cosets of $A''$ in $\mathbb{Z}_{p^\lambda r^\mu}$. Moreover, it is easy to verify that all elements of $A\pmod{p^\lambda r^\mu}$  are distinct [hence $A\pmod{p^\lambda r^\mu}$ is an ordinary $pr$-subset of $\mathbb{Z}_{p^\lambda r^\mu}$].

On the other hand, if $A$ satisfies the condition described above, then we can verify that
$$
	A+\{0,1,2,\ldots,r^{\mu-1}-1\}+\{0,r^\mu,\ldots,(p^{\lambda-1}-1)r^\mu\}=\mathbb{Z}_{p^\lambda r^\mu},
$$
is a factorization, that is, $A$ is a direct factor of $\mathbb{Z}_{p^\lambda r^\mu}$ and hence of $\mathbb{Z}_N$.

Now let $q\equiv1\pmod 6$ and $g$ be a fixed primitive root modulo $q$. 
By the discussion above, there exists a perfect $B[0,6](q)$ set if and only if there exist some positive integers $\mu$ and $\lambda$, such that ${2^\lambda3^\mu}\mid(q-1)$, and $A=\{\ind_g(i):i\in[1,6]\}\subset\mathbb{Z}_{q-1}$ modulo $2^\lambda3^\mu$ is
\begin{enumerate}
	\item $\{0,2^{\lambda-1}3^\mu\}+A'$, where $A'\equiv\{x,x+3^{\mu-1},x+2\cdot3^{\mu-1}\}\pmod{3^\mu}$ for some $x$; or
	\item $\{0,2^{\lambda}3^{\mu-1}, 2^{\lambda+1}3^{\mu-1}\}+A'$, where $A'\equiv\{x,x+2^{\lambda-1}\}\pmod{2^{\lambda}}$ for some $x$.
\end{enumerate}

Let $\alpha:=\ind_g(2),\beta:=\ind_g(3),\gamma:=\ind_g(5)$, then $A=\{0,\alpha,\beta,2\alpha,\gamma,(\alpha+\beta)\}$. In the former case, $A\equiv\{x,x,x,x,x,x\}\pmod{3^{\mu-1}}$ as a multiset. Since $0\in A$, $x\equiv0\pmod{3^{\mu-1}}$.
Since $A'+3^{\mu-1}\equiv A'\pmod{3^\mu}$, without loss of generality we suppose that $x=0$.
Then
$$
	\begin{aligned}
		A\equiv & \{w3^\mu,y3^\mu+3^{\mu-1}, z3^\mu+2\cdot3^{\mu-1}\}                               \\
		        & +\{0,2^{\lambda-1}3^\mu\} \\
		=       & (\{w3^\mu\}+\{0,2^{\lambda-1}3^\mu\})                                             \\
		        & \cup (\{y3^\mu+3^{\mu-1}\}+\{0,2^{\lambda-1}3^\mu\})                              \\
		        & \cup (\{z3^\mu+2\cdot3^{\mu-1}\}+\{0,2^{\lambda-1}3^\mu\}) \pmod{2^{\lambda}3^\mu},
	\end{aligned}
$$
for some $w,y,z$.
Let $A=A_1\sqcup A_2\sqcup A_3$ be a partition of $A$, such that the elements in $A_1$ are divisible by $3^\mu$, the elements in $A_2$ are congruent to $3^{\mu-1}$ modulo $3^\mu$,
the elements in $A_3$ are congruent to $2\cdot 3^{\mu-1}$ modulo $3^\mu$. Then
$$
\begin{aligned}
A_1&\equiv\{w3^\mu\}+\{0,2^{\lambda-1}3^\mu\}\pmod{2^{\lambda}3^\mu},\\
A_2&\equiv\{y3^\mu+3^{\mu-1}\}+\{0,2^{\lambda-1}3^\mu\}\pmod{2^{\lambda}3^\mu},\\
A_3&\equiv\{z3^\mu+2\cdot3^{\mu-1}\}+\{0,2^{\lambda-1}3^\mu\}\pmod{2^{\lambda}3^\mu}.\\
\end{aligned}
$$

Notice that $\ind_g(1)=0\in A_1$. It is easy to verify that in fact $A_1=\{0,2^{\lambda-1}3^\mu\}$. So, $A_1$ is a subgroup of $\mathbb{Z}_{2^\lambda3^\mu}$, while $A_2$, $A_3$ are its cosets. If $v$ is an element of $A_i$, then $v+2^{\lambda-1}3^\mu$ is the other element of $A_i$.

Now consider $\alpha=\ind_g(2)$. If $A_1\ni\alpha\equiv2^{\lambda-1} 3^{\mu}\pmod{2^{\lambda}3^\mu}$, then $\ind_g(4)=2\ind_g(2)\equiv 0\pmod{2^{\lambda}3^\mu}$, contradicting the fact that elements of $A\pmod{2^\lambda3^\mu}$ are distinct.

So $\alpha\in A_2$ or $A_3$. If $A_2\ni\alpha\equiv y3^\mu+3^{\mu-1}+i2^{\lambda-1}3^\mu\pmod{2^{\lambda}3^\mu}$ for some $i$, then $\ind_g(4)\equiv 2\alpha\equiv 2y3^\mu+2\cdot 3^{\mu-1}+(2i)2^{\lambda-1}3^\mu\equiv 2y3^\mu+2\cdot 3^{\mu-1}\pmod{2^{\lambda}3^\mu}$ and $\ind_g(4)\in A_3$.
We have three possibilities:

\begin{itemize}
	\item $\beta=\ind_g(3)\equiv 2^{\lambda-1}3^\mu\pmod{2^{\lambda}3^\mu}$ and $\beta\in A_1$, then
	      $\ind_g(6)\equiv\alpha+\beta\equiv y3^\mu+3^{\mu-1}+(i+1)2^{\lambda-1}3^\mu\pmod{2^{\lambda}3^\mu}$ and $\alpha+\beta\in A_2$ automatically holds. This forces
	      $\gamma=\ind_g(5)\in A_3$ and hence $\gamma\equiv 2y3^\mu+2\cdot 3^{\mu-1}+2^{\lambda-1}3^\mu\equiv 2\alpha+\beta\pmod{2^{\lambda}3^\mu}$.

		To summarize, this case happens if and only if there exist  $\lambda\le v_2(q-1),\mu\le v_3(q-1)$  such that $\alpha\equiv 3^{\mu-1}\pmod{3^\mu}$, $\beta\equiv 2^{\lambda-1}3^\mu\pmod{2^\lambda3^\mu}$, $\gamma\equiv2\alpha+\beta\pmod{2^\lambda3^\mu}$.
		We prove that these conditions are equivalent to saying that
	      $v_2(\beta)<v_2(q-1),
		      v_3(\alpha)<v_3(q-1), \alpha/3^{v_3(\alpha)}\equiv1\pmod 3,v_3(\alpha)<v_3(\beta),
		      v_2(\beta)<v_2(\gamma-2\alpha-\beta),$ and $
		      v_3(\alpha)<v_3(\gamma-2\alpha-\beta)$.
		Suppose that there exist $\lambda,\mu$ such that $\alpha\equiv 3^{\mu-1}\pmod{3^\mu}$, $\beta\equiv 2^{\lambda-1}3^\mu\pmod{2^\lambda3^\mu}$, $\gamma\equiv2\alpha+\beta\pmod{2^\lambda3^\mu}$, then $v_2(\beta)=\lambda-1<\lambda\le v_2(q-1)$, $v_3(\alpha)=\mu-1<\mu\le v_3(q-1)$, 
           Since $\alpha\equiv 3^{\mu-1}\pmod{3^\mu}$, $\alpha/3^{v_3(\alpha)}\equiv1\pmod 3$. Since $3^\mu\mid\beta$, $v_3(\beta)\ge\mu>v_3(\alpha)$. Since $2^\lambda3^\mu\mid(\gamma-2\alpha-\beta)$, we have $v_2(\beta)<v_2(\gamma-2\alpha-\beta), v_3(\alpha)<v_3(\gamma-2\alpha-\beta)$.
           Conversely, if $v_2(\beta)<v_2(q-1),
		      v_3(\alpha)<v_3(q-1), \alpha/3^{v_3(\alpha)}\equiv1\pmod 3,v_3(\alpha)<v_3(\beta),
		      v_2(\beta)<v_2(\gamma-2\alpha-\beta),$ and $
		      v_3(\alpha)<v_3(\gamma-2\alpha-\beta)$, then let $\lambda=v_2(\beta)+1\le v_2(q-1), \mu=v_3(\alpha)+1\le v_3(q-1)$.
           Since $\alpha/3^{v_3(\alpha)}\equiv1\pmod 3$, then $\alpha\equiv3^{\mu-1}\pmod{3^\mu}$.
           Since $v_2(\beta)=\lambda-1, v_3(\beta)\ge\mu$, $\beta\pmod{2^\lambda3^\mu}$ is an element in $\mathbb{Z}_{2^\lambda3^\mu}$ that is divisible by $3^\mu$ and $2^{\lambda-1}$ but not divisible by $2^\lambda$. The only possible value is $2^{\lambda-1}3^\mu$.
           Finally, since $v_2(\gamma-2\alpha-\beta)>v_2(\beta)=\lambda-1$, $v_3(\gamma-2\alpha-\beta)>v_3(\alpha)=\mu-1$, we have $2^\lambda3^\mu\mid(\gamma-2\alpha-\beta)$ and consequently $\gamma\equiv2\alpha+\beta\pmod{2^\lambda3^\mu}$. Hence, we prove the equivalence.

	\item $A_2\ni\beta=\ind_g(3)\equiv \alpha+2^{\lambda-1}3^\mu\equiv y3^\mu+3^{\mu-1}+(i+1)2^{\lambda-1}3^\mu\pmod{2^{\lambda}3^\mu}$.
	      Then $\ind_g(6)\equiv 2y3^\mu+2\cdot 3^{\mu-1}+2^{\lambda-1}3^\mu\pmod{2^{\lambda}3^\mu}$ and $\ind_g(6)\in A_3$. So $A_1\ni\gamma=\ind_g(5)\equiv 2^{\lambda-1}3^\mu\equiv\beta-\alpha\pmod{2^{\lambda}3^\mu}$.

	      These conditions are equivalent to saying that $v_2(\gamma)<v_2(q-1)$, $v_3(\alpha)<v_3(q-1)$,
	      $v_3(\alpha)<v_3(\gamma)$,
	      $v_2(\gamma)<v_2(\alpha+\gamma-\beta)$, $v_3(\alpha)<v_3(\alpha+\gamma-\beta)$, $\alpha/3^{v_3(\alpha)}\equiv1\pmod 3$.

	\item $A_3\ni\beta=\ind_g(3)\equiv 2y3^\mu+2\cdot 3^{\mu-1}+i2^{\lambda-1}3^\mu\pmod{2^{\lambda}3^\mu}$. Then
	      $\ind_g(6)\equiv \alpha+\beta\equiv ((3y+1)+(i+1)2^{\lambda-1})3^\mu\pmod{2^{\lambda}3^\mu}$ must be in $A_1$, so $\alpha+\beta\equiv 2^{\lambda-1}3^\mu\pmod{2^\lambda3^\mu}$;
	      and $\gamma=\ind_g(5)\in A_2$, so $\gamma\equiv \alpha+2^{\lambda-1}3^\mu\equiv 2\alpha+\beta \pmod{2^{\lambda}3^\mu}$.
	      These conditions are equivalent to saying that $v_2(\alpha+\beta)<v_2(q-1)$, $v_3(\alpha)<v_3(q-1)$,
	      $v_3(\alpha)<v_3(\alpha+\beta)$, $v_3(\alpha)<v_3(\gamma-2\alpha-\beta)$, $v_2(\alpha+\beta)<v_2(\gamma-2\alpha-\beta)$, $\alpha/3^{v_3(\alpha)}\equiv1\pmod 3$.
\end{itemize}

If $A_3\ni\alpha\equiv z3^\mu+2\cdot3^{\mu-1}+i2^{\lambda-1}3^\mu\pmod{2^{\lambda}3^\mu}$, then
$\ind_g(4)\equiv2\alpha\equiv (2z+3)3^\mu+3^{\mu-1}\pmod{2^{\lambda}3^\mu}$ and $\ind_g(4)\in A_2$, we can similarly obtain the following three possibilities:
\begin{itemize}
	\item $A_1\ni\beta=\ind_g(3)\equiv 2^{\lambda-1}3^\mu\pmod{2^{\lambda}3^\mu}$,
	      then $\ind_g(6)\equiv z3^\mu+2\cdot3^{\mu-1}+(i+1)2^{\lambda-1}3^\mu\pmod{2^{\lambda}3^\mu}$ is in $A_3$,
	      $\gamma=\ind_g(5)\equiv (2z+3)3^\mu+3^{\mu-1}+2^{\lambda-1}3^\mu\pmod{2^{\lambda}3^\mu}\equiv 2\alpha+\beta$ is in $A_2$. It is equivalent to saying that
	      $v_2(\beta)<v_2(q-1)$,
	      $v_3(\alpha)<v_3(q-1)$,
	      $v_3(\alpha)<v_3(\beta)$,
	      $v_2(\beta)<v_2(\gamma-2\alpha-\beta)$,
	      $\alpha/3^{v_3(\alpha)}\equiv2\pmod 3$,
	      $v_3(\alpha)<v_3(\gamma-2\alpha-\beta)$.
	\item $A_3\ni\beta=\ind_g(3)\equiv z3^\mu+2\cdot3^{\mu-1}+(i+1)2^{\lambda-1}3^\mu\pmod{2^{\lambda}3^\mu}$,
	      $\ind_g(6)\equiv \alpha+\beta\equiv (2z+3)3^\mu+3^{\mu-1}+2^{\lambda-1}3^\mu\pmod{2^{\lambda}3^\mu}$ is in $A_2$,
	      $\gamma=\ind_g(5)\equiv 2^{\lambda-1}3^{\mu}\pmod{2^{\lambda}3^\mu}$ is in $A_1$.
	      It is equivalent to saying that
	      $v_2(\gamma)<v_2(q-1)$,
	      $v_3(\alpha)<v_3(q-1)$,
	      $v_3(\alpha)<v_3(\gamma)$,
	      $v_2(\gamma)<v_2(\gamma+\alpha-\beta)$,
	      $\alpha/3^{v_3(\alpha)}\equiv2\pmod 3$,
	      $v_3(\alpha)<v_3(\gamma+\alpha-\beta)$.
	\item $A_2\ni\beta=\ind_g(3)\equiv (2z+3)3^\mu+3^{\mu-1}+2^{\lambda-1}3^\mu\pmod{2^{\lambda}3^\mu}$,
	      $\ind_g(6)\equiv\alpha+\beta\equiv (3z+3)3^\mu+3\cdot 3^{\mu-1}+(i+1)2^{\lambda-1}3^\mu\pmod{2^{\lambda}3^\mu}$ is in $A_1$.
	      So we must have $\alpha+\beta\equiv 2^{\lambda-1}3^\mu\pmod{2^\lambda 3^\mu}$.
	      Then $A_3\ni\gamma=\ind_g(5)\equiv \alpha+2^{\lambda-1}3^\mu\equiv2\alpha+\beta\pmod{2^\lambda 3^\mu}$.
	      These conditions are equivalent to saying that
	      $v_2(\alpha+\beta)<v_2(q-1)$,
	      $v_3(\alpha)<v_3(q-1)$,
	      $\alpha/3^{v_3(\alpha)}\equiv 2\pmod 3$,
	      $v_3(\alpha)<v_3(\alpha+\beta)$,
	      $v_3(\alpha)<v_3(\gamma-2\alpha-\beta)$,
	      $v_2(\alpha+\beta)<v_2(\gamma-2\alpha-\beta)$.
\end{itemize}

In the latter case, $A\equiv\{x,x,x,x,x,x\}\pmod{2^{\lambda-1}}$ as a multiset. Since $0\in A$, $x\equiv0\pmod{2^{\lambda-1}}$.
Since $A'+2^{\lambda-1}\equiv A'\pmod{2^\lambda}$, without loss of generality we suppose that $x=0$.
Then
$$
	\begin{aligned}
		A\equiv & \{0,2^{\lambda}3^{\mu-1}, 2^{\lambda+1}3^{\mu-1}\}+\{y\cdot2^\lambda,z\cdot2^\lambda+2^{\lambda-1}\} \\
		=       & (\{0,2^{\lambda}3^{\mu-1}, 2^{\lambda+1}3^{\mu-1}\}+\{y\cdot2^\lambda\})                             \\
		        & \cup (\{0,2^{\lambda}3^{\mu-1}, 2^{\lambda+1}3^{\mu-1}\}+\{z\cdot2^\lambda+2^{\lambda-1}\}) \pmod{2^{\lambda}3^\mu}.
	\end{aligned}
$$
Let $A=A_1\sqcup A_2$, where the $2$-adic valuations of the elements in $A_2$ are $\lambda-1$, and those of the elements in $A_1$ are at least $\lambda$. %
Then
$$
\begin{aligned}
A_1		\equiv       & \{0,2^{\lambda}3^{\mu-1}, 2^{\lambda+1}3^{\mu-1}\}+\{y\cdot2^\lambda\}\pmod{2^{\lambda}3^\mu}, \\
A_2\equiv		        & \{0,2^{\lambda}3^{\mu-1}, 2^{\lambda+1}3^{\mu-1}\}+\{z\cdot2^\lambda+2^{\lambda-1}\}\pmod{2^{\lambda}3^\mu}.
\end{aligned}
$$
Notice that $\ind_g(1)=0\in A_1$. It is easy to verify that in fact $A_1\equiv\{0,2^{\lambda}3^{\mu-1}, 2^{\lambda+1}3^{\mu-1}\}\pmod{2^{\lambda}3^\mu}$, which is a subgroup of $\mathbb{Z}_{2^\lambda3^\mu}$ and $A_2$ is its coset.

Now consider $\alpha=\ind_g(2)$. If $A_1\ni\alpha\equiv2^\lambda 3^{\mu-1}\pmod{2^{\lambda}3^\mu}$, then $\ind_g(4)=2\ind_g(2)\equiv 2^{\lambda+1} 3^{\mu-1}\pmod{2^{\lambda}3^\mu}$ is also in $A_1$. We have $A_1=\{0,\alpha,2\alpha\}$, hence $\{\ind_g(3),\ind_g(5),\ind_g(6)\}= A_2\equiv\{\beta,\gamma,\beta+\alpha\}\equiv
	\{0,\alpha,2\alpha\}+\{z\cdot2^\lambda+2^{\lambda-1}\}\pmod{2^{\lambda}3^\mu}$, which holds if and only if $\gamma\equiv \beta+2\alpha\pmod{2^{\lambda}3^\mu}$.

If $A_1\ni\alpha\equiv 2^{\lambda+1}3^{\mu-1}\pmod{2^\lambda 3^\mu}$, then $\ind_g(4)=2\ind_g(2)\equiv 2^{\lambda+2} 3^{\mu-1}\equiv 2^{\lambda} 3^{\mu-1}\pmod{2^{\lambda}3^\mu}$. So again we have $A_1=\{0,\alpha,2\alpha\}$, and $\{\ind_g(3),\ind_g(5),\ind_g(6)\}=A_2\equiv
	\{0,\alpha,2\alpha\}+\{z\cdot2^\lambda+2^{\lambda-1}\}\pmod{2^{\lambda}3^\mu}$. We obtain the same condition for $\gamma$: $\gamma\equiv \beta+2\alpha\pmod{2^{\lambda}3^\mu}$.

In both cases, the conditions are equivalent to saying that $v_3(\alpha)<v_3(q-1)$, $v_2(\beta)<v_2(q-1)$,
$v_2(\beta)<v_2(\alpha)$,
$v_2(\beta)<v_2(\gamma-\beta-2\alpha)$, $v_3(\alpha)<v_3(\gamma-\beta-2\alpha)$.

If $\alpha\in A_2$, then $\alpha\equiv z2^\lambda+2^{\lambda-1}+i2^\lambda 3^{\mu-1}\pmod{2^\lambda3^\mu}$ for some $i\in[0,2]$.
Then $2\alpha\equiv z2^{\lambda+1}+2^{\lambda}+(2i\bmod 3)2^\lambda 3^{\mu-1} \pmod{2^\lambda3^\mu}$ which should be in $A_1$. %
Notice that if $v_2(\beta)=\lambda-1$, then $v_2(\alpha+\beta)\ge\lambda$; if $v_2(\beta)\ge\lambda$, $v_2(\alpha+\beta)=\lambda-1$.
So we have the following two possibilities:
$A_1=\{0,2\alpha,\beta\}, A_2=\{\alpha,\alpha+\beta,\gamma\}$; or $A_1=\{0,2\alpha,\alpha+\beta\}, A_2=\{\alpha,\beta,\gamma\}$.

In the first case, we require that
$\{\alpha+\beta-\alpha,\gamma-\alpha\}\equiv\{2\alpha,\beta\}\pmod{2^\lambda 3^\mu}$, so $\gamma\equiv 3\alpha\pmod{2^\lambda 3^\mu}$. Moreover, $2\alpha+\beta\equiv 0\pmod{2^\lambda 3^\mu}$. These conditions are equivalent to
$v_2(\alpha)<v_2(q-1)$, $v_2(\alpha)<v_2(\beta)$, $ v_3(\beta)<v_3(q-1)$,
$v_2(\alpha)<v_2(\gamma-3\alpha)$,
$v_3(\beta)<v_3(\gamma-3\alpha)$,
$v_3(\beta)<v_3(2\alpha+\beta)$.
In the second case, we require that
$2\alpha\equiv 2(\alpha+\beta)\pmod{2^\lambda 3^\mu}$,
so $2\beta\equiv0\pmod{2^\lambda3^\mu}$. Since $\beta\not\equiv0\pmod{2^\lambda3^\mu}$, we must have $\beta\equiv 2^{\lambda-1}3^\mu \pmod{2^\lambda3^\mu}$.
Since $A_2\equiv A_1+\beta\pmod{2^\lambda3^\mu}$ and $2\beta\equiv0\pmod{2^\lambda3^\mu}$, we have
$\gamma\equiv2\alpha+\beta\pmod{2^\lambda3^\mu}$.
These conditions are equivalent to saying that
$v_2(\beta)=v_2(\alpha)<v_2(q-1),
 v_3(\alpha)<v_3(q-1), 
v_3(\alpha)<v_3(\beta),
v_2(\beta)<v_2(\gamma-2\alpha-\beta),
v_3(\alpha)<v_3(\gamma-2\alpha-\beta)$.
Notice that in this case,
$A$ is always an arithmetic progression with difference $2^{\lambda-1}3^{\mu-1}$, and this case is completely covered by the first case.

As a result of this section, we have proven Theorem \ref{thm:06}. We provide examples for each case of the theorem:
\begin{exampl}
		      Let $q=515701$, $g=11$, $\alpha=109623=3 \cdot 36541$, $\beta=121950=2 \cdot 3^{2} \cdot 5^{2} \cdot 271$,
		      $\gamma=506580=2^{2} \cdot 3 \cdot 5 \cdot 8443$,
		      $\gamma-2\alpha-\beta=2^{3} \cdot 3^{2} \cdot 2297$,
		      $A=\{0, 109623, 121950, 219246, 506580, 231573\}=\{0, 3, 18, 6, 24, 21\}\pmod{36}$.
		      Then $q$ satisfies case 1 of Theorem \ref{thm:06}.
		      Let $C=\{0,1,2\}+\{0,9\}+\{0,36,72,\ldots,515664\}$, then $A+C=\mathbb{Z}_{q-1}$ is a factorization,
		      and $B=\{g^{i+9j+36k}:i\in[0,2],j\in[0,1],k\in[0,14324]\}$ is a perfect $B[0,6](515701)$ splitter set.
	      \end{exampl}
\begin{exampl}
		      Let $q=2075041$, $g=19$, $\alpha=841566=2\cdot3\cdot11\cdot41\cdot311$, $\beta=1495884=2^2\cdot3\cdot13\cdot43\cdot223$, $\gamma=1792242=2\cdot3^2\cdot17\cdot5857$,
		      $\alpha+\gamma-\beta=1137924=2^2 \cdot 3^2\cdot 73 \cdot 433$, $A=\{0,841566,1495884,1683132,1792242,262410\}=\{0,30,12,24,18,6\}\pmod{36}$.
		      Then $q$ satisfies case 2 of Theorem \ref{thm:06}.
		      Let $C=\{0,1,2,3,4,5\}+\{0,36,\ldots,2075004\}$, then $A+C=\mathbb{Z}_{q-1}$ is a factorization.
		      Hence $B=\{g^{i+36j}:i\in[0,5],j\in[0,57639]\}$ is a perfect $B[0,6](2075041)$ splitter set.
	      \end{exampl}\begin{exampl}
		      Let $q=428401$, $g=19$, $\alpha=395400=2^{3} \cdot 3 \cdot 5^{2} \cdot 659$, $\alpha/3^{v_3(\alpha)}\equiv1\pmod3$, $\beta=102354=2 \cdot 3 \cdot 7 \cdot 2437$, $\gamma=14970=2 \cdot 3 \cdot 5 \cdot 499$, $\gamma-2\alpha-\beta=- 2^{3} \cdot 3^{2} \cdot 12197$, $A=\{0, 395400, 102354, 362400, 14970, 69354\}\equiv\{0, 12, 6, 24, 30, 18\}\pmod{2^23^2}$.
		      Then $q$ satisfies case 3 of Theorem \ref{thm:06}.
		      Let $C=\{0,1,2,3,4,5\}+\{0,36,\ldots,428364\}$, then
		      $A+C=\mathbb{Z}_{q-1}$ is a factorization.
		      So $B=\{g^{i+36j}:i\in[0,5],j\in[0,11899]\}$ is a perfect $B[0,6](428401)$ splitter set.
	      \end{exampl}
	      \begin{exampl}
	Let $q=187921$, $g=7$, $\alpha=32352 = 2^{5} \cdot 3 \cdot 337$, $\beta=143764 = 2^{2} \cdot 127 \cdot 283$,
	$\gamma=156772 = 2^{2} \cdot 7 \cdot 11 \cdot 509$, $\gamma-2\alpha-\beta=- 2^{4} \cdot 3^{2} \cdot 359$,
	$$A=\{0, 32352, 143764, 64704, 156772, 176116\}\equiv\{0, 24, 52, 48, 28, 4\}\pmod{72}.$$
	Then $q$ satisfies the case 4 of Theorem \ref{thm:06}. Let $C=\{0,1,2,3\}+\{0,8,16\}+\{0,72,\ldots,187848\}$, then $A+C=\mathbb{Z}_{q-1}$. Hence
	$B=\{g^{i+8j+72\ell}:i\in[0,3],j\in[0,2],\ell\in[0,2609]\}$ is a perfect $B[0,6](187921)$ set.
\end{exampl}
	      \begin{exampl}
		      Let $q=394129$, $g=13$, $\alpha=357666=2 \cdot 3 \cdot 59611$, $\beta=329496=2^{3} \cdot 3 \cdot 13729$, $\gamma=170514=2 \cdot 3^{2} \cdot 9473$, 
$\gamma-3\alpha=- 2^{2} \cdot 3^{2} \cdot 11 \cdot 43 \cdot 53$,
$2\alpha+\beta=2^{2} \cdot 3^{2} \cdot 29023$
		      , $A=\{0, 357666, 329496, 321204, 170514, 293034\}\equiv\{0, 6, 24, 12, 18, 30\}\pmod{2^23^2}$. Then $q$ satisfies the case 5 of Theorem \ref{thm:06}.
		      Let $C=\{0,1,2,3,4,5\}+\{0,36,\ldots,394092\}$, then
		      $A+C=\mathbb{Z}_{q-1}$ is a factorization.
		      So $B=\{g^{i+36j}:i\in[0,5],j\in[0,10947]\}$ is a perfect $B[0,6](394129)$ splitter set.
	      \end{exampl}
\section{New constructions of perfect limited magnitude burst correcting codes}
In this section, we consider limited magnitude burst correcting codes, as in \cite{wei2022perfect}.
All bursts in this section are cyclic ones, and all subscripts in this section are considered modulo $n$.

\subsection{A general framework for constructing perfect limited magnitude burst correcting codes}

We require a generalized version of group splittings.
\begin{dfn}[Generalized splitting, {\cite{buzaglo2012tilings}}]
	Let $M\subset\mathbb{Z}^n$. We say that $M$ splits an abelian group $G$ with a splitting sequence $\mathbf{s}=(s_0,s_1,\ldots,s_{n-1})\in G^n$
	if the set $\{\mathbf{m}\cdot\mathbf{s}:\mathbf{m}\in M\}$ contains distinct elements of $G$, where for a sequence $\mathbf{m}=(m_0,m_1,\ldots,m_{n-1})$ we denote
	$$
		\mathbf{m}\cdot\mathbf{s}=\sum_{i=0}^{n-1} m_is_i.
	$$
\end{dfn}

We say that a finite set $\mathcal{B} \subset \mathbb{Z}^n$ \emph{packs} $\mathbb{Z}^n$ by a lattice $\Lambda \subset \mathbb{Z}^n$ if for any two distinct vectors $\mathbf{v}, \mathbf{v}' \in \Lambda$, the translations $\mathbf{v} + \mathcal{B}$ and $\mathbf{v}' + \mathcal{B}$ are disjoint, i.e.,
$$
	(\mathbf{v} + \mathcal{B}) \cap (\mathbf{v}' + \mathcal{B}) = \varnothing.
$$
Furthermore, if $\mathcal{B} \subset \mathbb{Z}^n$ packs $\mathbb{Z}^n$ by $\Lambda$ and, in addition, the union of all such translations covers the entire space:
$$
	\bigcup_{\mathbf{v} \in \Lambda} (\mathbf{v} + \mathcal{B}) = \mathbb{Z}^n,
$$
then we say that $\mathcal{B}$ \emph{tiles} $\mathbb{Z}^n$ with the lattice $\Lambda$. In this context, $(k_2,k_1)$-limited magnitude $t$-error correcting codes can be obtained by lattice packings of $\mathbb{Z}^n$ by error balls $\mathcal{B}(n,t,k_2,k_1)$. If we further require the code to be \emph{perfect}, then the packing must also form a \emph{tiling} of the space.

\begin{thm}[{\cite[Corollary 1]{buzaglo2012tilings}}]
\label{thm:tiling}
	A lattice tiling of $\mathbb{Z}^n$ with the shape $M\subset\mathbb{Z}^n$ exists if and only if there exists an abelian group $G$ of order $|M|$ such that $M$ splits $G$.
\end{thm}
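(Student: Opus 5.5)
The plan is to prove both directions of Theorem \ref{thm:tiling} through the standard correspondence between lattices of finite index in $\mathbb{Z}^n$ and quotient maps onto finite abelian groups. This is the same mechanism used in the introduction for the single-error case, where $C$ was a sublattice and $s_i$ was the image of $\mathbf{e}_i$.

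For the forward direction, suppose $M$ tiles $\mathbb{Z}^n$ with a lattice $\Lambda$.
\begin{enumerate}
\item Since $M$ is finite and its translates by $\Lambda$ cover $\mathbb{Z}^n$ disjointly, $\Lambda$ has finite index in $\mathbb{Z}^n$. In fact $[\mathbb{Z}^n:\Lambda]=|M|$, because $M$ is a complete system of coset representatives. Concretely, every $\mathbf{x}\in\mathbb{Z}^n$ can be written uniquely as $\mathbf{x}=\mathbf{v}+\mathbf{m}$ with $\mathbf{v}\in\Lambda$ and $\mathbf{m}\in M$.
\item Set $G=\mathbb{Z}^n/\Lambda$. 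This is a finite abelian group of order $|M|$.
\item Let $\pi\colon\mathbb{Z}^n\to G$ be the projection and put $s_i=\pi(\mathbf{e}_{i})$. By linearity, $\pi(\mathbf{m})=\mathbf{m}\cdot\mathbf{s}$ for every $\mathbf{m}\in\mathbb{Z}^n$.
\item Disjointness of the translates means that distinct elements of $M$ lie in distinct cosets. So the values $\mathbf{m}\cdot\mathbf{s}$, $\mathbf{m}\in M$, are distinct, i.e.\ $M$ splits $G$ with splitting sequence $\mathbf{s}$. (Covering then also holds automatically, by counting.)
\end{enumerate}

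For the converse, suppose $M$ splits an abelian group $G$ of order $|M|$ with splitting sequence $\mathbf{s}$.
\begin{enumerate}
\item Define the homomorphism $\phi\colon\mathbb{Z}^n\to G$ by $\phi(\mathbf{x})=\mathbf{x}\cdot\mathbf{s}$, and let $\Lambda=\ker\phi$, which is a sublattice of $\mathbb{Z}^n$.
\item The splitting condition says $\phi$ is injective on $M$. Since $|M|=|G|$, $\phi$ restricted to $M$ is a bijection onto $G$. In particular $\phi$ is surjective, so $[\mathbb{Z}^n:\Lambda]=|G|$, which is finite. Hence $\Lambda$ is a genuine full-rank lattice.
\item Given $\mathbf{x}\in\mathbb{Z}^n$, there is a unique $\mathbf{m}\in M$ with $\phi(\mathbf{m})=\phi(\mathbf{x})$, and then $\mathbf{x}-\mathbf{m}\in\Lambda$. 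This gives existence of a decomposition $\mathbf{x}=\mathbf{v}+\mathbf{m}$.
\item Uniqueness: if $\mathbf{v}+\mathbf{m}=\mathbf{v}'+\mathbf{m}'$, apply $\phi$ to get $\phi(\mathbf{m})=\phi(\mathbf{m}')$, so $\mathbf{m}=\mathbf{m}'$ and then $\mathbf{v}=\mathbf{v}'$.
\item Existence and uniqueness together say exactly that $M$ tiles $\mathbb{Z}^n$ with $\Lambda$.
\end{enumerate}

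The only delicate point is in the forward direction: justifying that $\Lambda$ has finite index equal to $|M|$, so that $G$ really has order $|M|$. This follows because $M$ is a transversal of $\mathbb{Z}^n/\Lambda$, which is immediate from the tiling definition. Everything else is routine linear bookkeeping.
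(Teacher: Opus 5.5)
Your proof is correct and is the standard argument. In one direction you take $G=\mathbb{Z}^n/\Lambda$ with $s_i$ the image of $\mathbf{e}_i$, after checking that $M$ is a transversal of $\Lambda$. In the other you take $\Lambda=\ker(\mathbf{x}\mapsto\mathbf{x}\cdot\mathbf{s})$ and use $|M|=|G|$ to upgrade injectivity on $M$ to a bijection.

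The paper gives no proof of its own and simply cites Buzaglo--Etzion. Their argument is this same quotient/kernel correspondence, which is also the mechanism the paper uses in its introduction for single-error codes, so your route agrees with the intended one.
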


So if there is  an abelian group $G$ and a sequence $(s_0,\ldots,s_{n-1})$ such that sums $\alpha_i s_i+\sum_{l\in I} \alpha_{i+l}s_{i+l}$, where $I\subset[1,b-1]$, $\alpha_j\in[-k_1,k_2]^*$, are distinct, and $G=\bigcup_{i=0}^{n-1}\{\alpha_i s_i+\sum_{l\in I} \alpha_{i+l}s_{i+l}:I\subset[1,b-1], \alpha_j\in[-k_1,k_2]^*\}\sqcup\{0\}$, then there is a lattice tiling of $\mathbb{Z}^n$ with $\mathcal{E}^\circ(n,b,k_2,k_1)$. Let $s=k_1+k_2, d=s+1$, we calculate the size of such error ball as $|\mathcal{E}^\circ(n,b,k_2,k_1)|=sd^{b-1}n+1$. 

Let $n=bt$, $q=sd^{b-1}bt+1$ be a prime power, and fix a generator of $\mathbb{F}_q^\times$, $g$. Let $r=bsd^{b-1}$, $h=g^{r}$. Denote by $C_i^r$ the order-$r$ cyclotomic classes, namely, the cosets of $\langle h\rangle$ in $\mathbb{F}_q^\times$:
	$$
		C_i^r := g^i\langle h\rangle,\quad i\in[0,r-1].
	$$
We attempt to build such a sequence in $\mathbb{F}_q^\times$:
\begin{equation}
\label{eq:developed-sequence}
	\begin{aligned}
	&s_0=x_0=1,s_1=x_1,\ldots,s_{b-1}=x_{b-1},\\
	&s_{b}=h,s_{b+1}=hx_1,\ldots,s_{2b-1}=hx_{b-1},\\
	&s_{2b}=h^2,\ldots,\\
	&s_{(t-1)b}=h^{t-1},s_{(t-1)b+1}=h^{t-1}x_1,\ldots,s_{tb-1}=h^{t-1}x_{b-1}
	\end{aligned}
\end{equation}
that satisfies the condition.

Let $A=[-k_1,k_2]$. For $0\le j\le b-1$, define $\Gamma_0(A)=\{0\}$ and, for $j\ge1$,
\[
\Gamma_j(A)=\Gamma_j^{(0)}(A)\cup
        \bigcup_{\rho=1}^j \Gamma_j^{(\rho)}(A),
\]
where
\[
\Gamma_j^{(0)}(A)=
\left\{\sum_{i=0}^{j-1}\alpha_i x_i:\alpha_i\in A\right\},
\]
and
\begin{align*}
\Gamma_j^{(\rho)}(A)=
\Bigg\{&\sum_{i=\rho}^{j-1}\alpha_i x_i
        +h\sum_{i=0}^{\rho-1}\beta_i x_i:\alpha_i,\beta_i\in A,
        \;\beta_{\rho-1}\ne0\Bigg\}.
\end{align*}
Set
\[
        R_0=A^*,
        R_j=\{\lambda x_j+\gamma:
        \lambda\in A^*,\ \gamma\in\Gamma_j(A)\}
        \quad (1\le j\le b-1).
\]
Here we see $x_i$ as indeterminates and hence the sums written above are formal expressions.
\begin{lemma}
\label{lem:gamma-count}
If the formal expressions in $\Gamma_j(A)$ are pairwise distinct, then
\[
        |\Gamma_j(A)|=d^j+jsd^{j-1}
        \qquad (j\ge1).
\]
Consequently,
\[
        \sum_{j=0}^{b-1}|R_j|
        =s\sum_{j=0}^{b-1}|\Gamma_j(A)|
        =bs d^{b-1}=r.
\]
\end{lemma}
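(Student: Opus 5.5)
The plan is a direct count. Distinctness of the formal expressions lets us compute $|\Gamma_j(A)|$ as the sum of the sizes of the pieces $\Gamma_j^{(0)}(A)$ and $\Gamma_j^{(\rho)}(A)$, $1\le\rho\le j$. To do this we must first check that these pieces are pairwise disjoint as sets of formal expressions. The hypothesis that all formal expressions in $\Gamma_j(A)$ are pairwise distinct is read as saying that different choices of coefficient data never give the same expression. So the union is disjoint, and each piece has as many elements as it has admissible coefficient tuples. (Concretely, the index $\rho$ is determined by the expression: it is one more than the largest $i$ whose $hx_i$-coefficient is nonzero, or $0$ if no $h$-term appears. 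This shows the union is disjoint at the formal level.)

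Next count each piece. For $\Gamma_j^{(0)}(A)$ there are $j$ free coefficients $\alpha_0,\ldots,\alpha_{j-1}\in A$, giving $d^j$ elements, since $|A|=k_1+k_2+1=d$. For $\Gamma_j^{(\rho)}(A)$ with $1\le\rho\le j$, there are $j-\rho$ free coefficients $\alpha_\rho,\ldots,\alpha_{j-1}\in A$. There are also $\rho-1$ free coefficients $\beta_0,\ldots,\beta_{\rho-2}\in A$, and one coefficient $\beta_{\rho-1}\in A^*$ with $|A^*|=s$. This gives $d^{j-\rho}\cdot d^{\rho-1}\cdot s=sd^{j-1}$ elements, independently of $\rho$. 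Summing over the $j$ values of $\rho$ yields $|\Gamma_j(A)|=d^j+jsd^{j-1}$ for $j\ge1$.

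For the consequence, note $|R_0|=s$. For $j\ge1$, $R_j$ consists of the expressions $\lambda x_j+\gamma$ with $\lambda\in A^*$ and $\gamma\in\Gamma_j(A)$. Since $\Gamma_j(A)$ involves only $x_0,\ldots,x_{j-1}$ and $hx_0,\ldots,hx_{j-1}$, the coefficient $\lambda$ of $x_j$ is read off formally, so $|R_j|=s|\Gamma_j(A)|$. With $|\Gamma_0(A)|=1$ this also holds for $j=0$. Then
\[
\sum_{j=0}^{b-1}|\Gamma_j(A)|=1+\sum_{j=1}^{b-1}\bigl(d^j+jsd^{j-1}\bigr).
\]
Since $s=d-1$, we have $d^j+jsd^{j-1}=d^j+j(d^j-d^{j-1})=(j+1)d^j-jd^{j-1}$. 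The sum therefore telescopes to $bd^{b-1}$, and $\sum_j|R_j|=bsd^{b-1}=r$.

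The only delicate point is the disjointness and injectivity of the coefficient-to-expression map, which is handled by the hypothesis together with the $\rho$-identification above. Everything else is routine counting and a telescoping sum.
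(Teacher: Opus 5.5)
Your argument is correct. The piece sizes $d^j$ for $\Gamma_j^{(0)}(A)$ and $sd^{j-1}$ for each $\Gamma_j^{(\rho)}(A)$ (independent of $\rho$) are right. So is the disjointness of the pieces, whether via the index of the top nonzero $h$-coefficient or via the injectivity hypothesis when $h$ is a concrete field element. The identity $|R_j|=s|\Gamma_j(A)|$ holds, and so does the telescoping $d^j+jsd^{j-1}=(j+1)d^j-jd^{j-1}$ obtained from $s=d-1$.

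The paper states this lemma without any proof, so your direct count supplies the omitted routine verification rather than following or differing from an argument in the paper.
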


\begin{lemma}
\label{lem:burst-sum}
With the notation above, the set of all nonzero cyclic $b$-burst sums generated by the sequence \eqref{eq:developed-sequence}, i.e. $\bigcup_{i=0}^{n-1}\{\alpha_i s_i+\sum_{l\in I} \alpha_{i+l}s_{i+l}:I\subset[1,b-1], \alpha_\rho\in[-k_1,k_2]^*\}$, is exactly
\[
        \left(\bigcup_{j=0}^{b-1}R_j\right)C_0^r.
\]
\end{lemma}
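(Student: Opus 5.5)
The plan is to classify every nonzero cyclic $b$-burst sum by its leading (first nonzero) position and then factor out a power of $h$. Write $i=ub+\rho_0$ with $u\in[0,t-1]$ and $\rho_0\in[0,b-1]$. From \eqref{eq:developed-sequence}, $s_{ub+v}=h^{u}x_v$ for $v\in[0,b-1]$. Indices are cyclic modulo $n=bt$, so wrapping past $s_{tb-1}$ lands on $s_0=h^{0}x_0$. Since $h^{t}=g^{rt}=g^{q-1}=1$, this wrap is the same as multiplying by $h^{t}$. Hence in every case $s_{(u+1)b+v}=h^{u+1}x_v$, read modulo $\langle h\rangle$ consistently.

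Consider a burst starting at $i$, with a nonzero leading coefficient there and arbitrary coefficients in $A$ at positions $i+1,\dots,i+b-1$. Factor out $h^{u}$ (or $h^{u+1}$, chosen below). Put $j$ for the normalized position of the last block-relevant index, and $\rho$ for the number of terms that spill into the next block. The burst is then $h^{u'}$ times a formal expression in $x_0,\dots,x_{b-1}$ and $h$.

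The key step is to choose the normalization so that the top-indexed nonzero coefficient plays the role of $\lambda x_j$. I will read the burst backwards, taking the \emph{last} nonzero position as $\lambda x_j$. This matches the definition of $R_j$, where $\lambda x_j$ carries a nonzero coefficient and $\Gamma_j(A)$ collects the remaining terms.

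Concretely, let the last nonzero entry sit at $(u+1)b+j$ if it is in the next block, or at $ub+j$ if it is in the same block. Factor out the corresponding power of $h$. The entries within the window of length $b$ preceding it fall into two groups.
\begin{itemize}
\item Those in the same block give $\sum_{i<j}\alpha_i x_i$, possibly together with earlier terms from the previous block.
\item Those in the previous block give $h^{-1}\sum \beta_i x_i$.
\end{itemize}

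I expect the main obstacle to be matching this exactly with the asymmetric form of $\Gamma_j^{(\rho)}$, which has a factor $h$ rather than $h^{-1}$ and the condition $\beta_{\rho-1}\neq0$. If this backwards normalization does not reproduce $\Gamma_j^{(\rho)}$, I will normalize at the \emph{first} nonzero position instead. Then the part of the window inside the current block contributes $\sum_{i=\rho}^{j-1}\alpha_ix_i$. The part wrapping into the next block contributes $h\sum_{i=0}^{\rho-1}\beta_ix_i$, with $\rho-1$ the last index reached.

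Before fixing the normalization, I will check carefully that the convention ``$\beta_{\rho-1}\ne0$'' (nonzero furthest element) together with $\lambda x_j$ covers each burst exactly once. I will also check that bursts of actual length less than $b$ fall into $\Gamma_j^{(0)}$.

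For the converse inclusion, every element $(\lambda x_j+\gamma)h^{m}$ with $\gamma\in\Gamma_j(A)$ and $m\in[0,t-1]$ is realized by reading off the coefficients. This gives a burst supported on at most $b$ consecutive positions around block $m$, using $C_0^r=\langle h\rangle=\{h^m: m\in[0,t-1]\}$. In the other direction, each burst sum equals such an expression times some $h^m$, so it lies in $\bigl(\bigcup_j R_j\bigr)C_0^r$.

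The key is that every window of $b$ consecutive positions meets at most two consecutive blocks, so exactly one factor $h^{m}$ is pulled out and at most one further factor $h$ appears inside. This establishes the set equality. No distinctness is claimed here; that is left to Lemma \ref{lem:gamma-count} and the subsequent counting.
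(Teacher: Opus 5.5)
Your fallback normalization is the paper's proof. Factor out $h^m$ for the block $m$ containing the first position $mb+i_0$ of the burst, whose coefficient is nonzero by definition. Take $\lambda x_j$ to be the \emph{last nonzero entry of block $m$}. Take $\rho-1$ to be the last \emph{nonzero} local index in block $m+1$, not the last index the window reaches ($i_0-1$), which may carry a zero coefficient. Your converse direction also matches the paper's.

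The problem is that the plan leads with the ``backwards'' normalization, which fails whenever the burst meets two blocks. If the last nonzero entry is at $(m+1)b+j'$ with $j'<i_0$, pulling out $h^{m+1}$ leaves terms $h^{-1}\alpha_i x_i$ with $i\ge i_0>j'$. Every element of $R_{j'}$ involves only $x_0,\dots,x_{j'}$, so this never reproduces $\Gamma_{j'}$. You should discard it and commit to the forward normalization.

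Two items in your checklist also need correcting.
\begin{itemize}
\item It is false that bursts of length less than $b$ fall into $\Gamma_j^{(0)}$. What separates $\Gamma_j^{(0)}$ from $\Gamma_j^{(\rho)}$, $\rho\ge1$, is whether the support lies in a single block. For example, nonzero entries only at $mb+b-1$ and $(m+1)b$ form a burst of length $2$. Its sum is $h^m(\lambda x_{b-1}+h\beta_0x_0)$, which comes from $\Gamma_{b-1}^{(1)}$.
\item $\Gamma_j^{(\rho)}$ exists only for $1\le\rho\le j$, and you never check $\rho\le j$. This is exactly where the burst length $b$ is used. The window starting at $mb+i_0$ reaches only local indices $0,\dots,i_0-1$ of block $m+1$, so $\rho\le i_0$. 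The nonzero coefficient at $mb+i_0$ forces $j\ge i_0$.
\end{itemize}
Finally, the ``exactly once'' check is unnecessary: for a set equality you only need \emph{some} representation of each burst sum.
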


\begin{proof}
For $0\le i\le b-1$, write
$s_{mb+i}=h^m x_i$, where the block index \(m\) is taken modulo \(t\). Since \(h^m\in C_0^r\), it is enough to describe the factor inside one or two consecutive base
blocks.

Consider a nonzero cyclic \(b\)-burst sum. If its support is contained in
the \(m\)-th base block, let \(j\) be the largest local index whose
coefficient is nonzero. Then the sum has the form $h^m\left(\lambda x_j+\sum_{i=0}^{j-1}\alpha_i x_i\right)$,
where \(\lambda\in A^*\) and \(\alpha_i\in A\). Since
$\sum_{i=0}^{j-1}\alpha_i x_i\in\Gamma_j^{(0)}(A)$,
this sum lies in \(R_jC_0^r\).

Otherwise, the support meets two consecutive base blocks, say the \(m\)-th
and the \((m+1)\)-st blocks. Let \(\rho-1\) be the largest local index in
the \((m+1)\)-st block whose coefficient is nonzero, and let \(j\) be the
largest local index in the \(m\)-th block whose coefficient is nonzero.
Since the burst has length \(b\), we have \(1\le \rho\le j\le b-1\). Then
the sum has the form
\[
h^m\left(
\lambda x_j+
\sum_{i=\rho}^{j-1}\alpha_i x_i
+
h\sum_{i=0}^{\rho-1}\beta_i x_i
\right),
\]
where \(\lambda\in A^*\), \(\alpha_i,\beta_i\in A\), and
\(\beta_{\rho-1}\ne0\). The expression after \(\lambda x_j\) belongs to
\(\Gamma_j^{(\rho)}(A)\), so the burst sum again lies in \(R_jC_0^r\).
Thus every nonzero cyclic \(b\)-burst sum is contained in $\left(\bigcup_{j=0}^{b-1}R_j\right)C_0^r$.

Conversely, take an element of
$\left(\bigcup_{j=0}^{b-1}R_j\right)C_0^r$.
It has the form $h^m(\lambda x_j+\gamma)$,
where $\lambda\in A^*, \gamma\in\Gamma_j(A)$.
If \(j=0\), then \(\gamma=0\), and
$h^m\lambda=\lambda s_{mb}$ is a nonzero cyclic \(b\)-burst sum.

Now suppose \(j\ge1\). If \(\gamma\in\Gamma_j^{(0)}(A)\), then
$\gamma=\sum_{i=0}^{j-1}\alpha_i x_i$
for some \(\alpha_i\in A\). Hence
\[
h^m(\lambda x_j+\gamma)=\lambda s_{mb+j}+
\sum_{i=0}^{j-1}\alpha_i s_{mb+i},
\]
which is a cyclic \(b\)-burst sum.

If \(\gamma\in\Gamma_j^{(\rho)}(A)\) for some \(1\le \rho\le j\), then
\[
\gamma=
\sum_{i=\rho}^{j-1}\alpha_i x_i
+
h\sum_{i=0}^{\rho-1}\beta_i x_i,
\qquad
\beta_{\rho-1}\ne0.
\]
Therefore
\[
h^m(\lambda x_j+\gamma)
=
\lambda s_{mb+j}
+
\sum_{i=\rho}^{j-1}\alpha_i s_{mb+i}
+
\sum_{i=0}^{\rho-1}\beta_i s_{(m+1)b+i}.
\]
Its support is contained in the cyclic interval
$\{mb+\rho,mb+\rho+1,\ldots,(m+1)b+\rho-1\}$,
which has length \(b\). Hence it is also a cyclic \(b\)-burst sum. This completes the proof.
\end{proof}

For a fixed label $j$ and $\delta\in\Delta_j$, define $\Delta_j=\bigcup_{\lambda\in A^*}\lambda^{-1}\Gamma_j(A)$
and $\Lambda_j(\delta)=\{\lambda\in A^*:
        \lambda\delta\in\Gamma_j(A)\}$.

Elements of $\Delta_j$ will be called \emph{roots}, and elements of $\Lambda_j(\delta)$ will be called \emph{multipliers admitted by the root $\delta$}. Notice that non-leading coefficients of a root are of the form $\mu\lambda^{-1}$, where $\mu\in A, \lambda\in A^*$.
We have
\[
        R_j=\bigcup_{\delta\in\Delta_j}
        \{\lambda(x_j+\delta):\lambda\in\Lambda_j(\delta)\}.
\]
Assume that each nonzero coefficient $\lambda$ has a prescribed cyclotomic label $\sigma(\lambda)$, i.e.,
\begin{equation}
\label{eq:sigma-label}
        \lambda\in C_{\sigma(\lambda)}^r,
        \qquad \sigma:A^*\to\mathbb{Z}_r.
\end{equation}
For each $\Lambda\subseteq A^*$, define the associated cyclotomic block
\[
        B(\Lambda)=\{\sigma(\lambda):\lambda\in\Lambda\}\subseteq\mathbb{Z}_r.
\]

In order to choose the desired $x_i$, we utilize a corollary of Weil's bound on multiplicative character sums, which has been used in many combinatorial constructions \cite{buratti2009combinatorial,zhang2017splitter}.
\begin{thm}[{\cite[Corollary 31]{zhang2017splitter}}]
	\label{thm-weil}
	Let $q\equiv 1\pmod{r}$ be a prime power. %
	Let $\{b_1,\ldots,b_m\}$ be an arbitrary $m$-subset of $\mathbb{F}_q$ (and hence $b_i$ are pairwise distinct) and $(\beta_1,\ldots,\beta_m)$ be an arbitrary element of $\mathbb{Z}_r^m$. Then if $q$ is sufficiently large, namely $q>Q(r,m)$,
	we can always find $x\in\mathbb{F}_q$ such that $x+b_i\in C_{\beta_i}^r$, $i\in [1,m]$.
\end{thm}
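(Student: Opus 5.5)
This is the standard Weil-bound corollary: for large $q$ one can find $x\in\mathbb{F}_q$ such that each $x+b_i$ falls in a prescribed cyclotomic class $C_{\beta_i}^r$. The plan is to count such $x$ with multiplicative characters of order dividing $r$ and show the count is positive once $q$ is large.

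Let $\chi$ be a generator of the group of multiplicative characters of $\mathbb{F}_q^\times$ of order dividing $r$, normalized by $\chi(g)=\zeta_r:=e^{2\pi i/r}$. For $y\in\mathbb{F}_q^\times$ the orthogonality relation gives
\[
\frac1r\sum_{k=0}^{r-1}\zeta_r^{-k\beta}\chi^k(y)=\begin{cases}1,& y\in C_\beta^r,\\ 0,&\text{otherwise},\end{cases}
\]
and this expression is $0$ for $y=0$ if we set $\chi^k(0)=0$ for $k\ne0$ and $\chi^0(0)=1$.

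Define
\[
N=\sum_{x\in\mathbb{F}_q}\prod_{i=1}^m\frac1r\sum_{k_i=0}^{r-1}\zeta_r^{-k_i\beta_i}\chi^{k_i}(x+b_i).
\]
Each factor equals $1$ exactly when $x+b_i\in C_{\beta_i}^r$. It is $0$ when $x+b_i\in\mathbb{F}_q^\times$ lies in a different class. When $x+b_i=0$ the factor is $1/r$, which is nonnegative. Hence $N\ge0$, and $N$ minus the number of good $x$ is at most the contribution of the $m$ exceptional points $x=-b_i$, each contributing at most $1$. So it suffices to show $N>m$.

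Expand the product. The term with all $k_i=0$ contributes $q/r^m$. Every other tuple $(k_1,\dots,k_m)\ne0$ contributes $r^{-m}$ times a root of unity times
\[
S=\sum_{x}\prod_{i=1}^m\chi^{k_i}(x+b_i)=\sum_x\psi\Bigl(\prod_i(x+b_i)^{k_i}\Bigr).
\]
Here $\psi=\chi$ has order $r$. Because the $b_i$ are distinct and not all $k_i\equiv0\pmod r$, the polynomial $\prod_i(x+b_i)^{k_i}$ with $0\le k_i<r$ is not a constant times an $r$-th power, and it has $m'\le m$ distinct roots. Weil's bound then gives $|S|\le(m-1)\sqrt q$. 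There are fewer than $r^m$ nonzero tuples, so
\[
N\ge\frac{q}{r^m}-(m-1)\sqrt q,
\]
and $N>m$ as soon as $q>Q(r,m)$ with, say, $Q(r,m)=\bigl(r^m(m-1)+\sqrt{r^m m}+1\bigr)^2$. This yields the required $x$, and all its translates $x+b_i$ are nonzero.

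The main obstacle is applying Weil's bound correctly. One must check the non-$r$-th-power condition, which holds because the exponents $0\le k_i<r$ are not all zero and the roots $-b_i$ are distinct, and one must handle the zeros $x=-b_i$ carefully. The rest is bookkeeping.
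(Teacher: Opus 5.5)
Your proof is correct, and it is the standard orthogonality-plus-Weil argument behind this result. The paper gives no proof of its own; it quotes the statement from \cite[Corollary 31]{zhang2017splitter} (cf.\ \cite{buratti2009combinatorial}, where the nonzero tuples are grouped by the number $h$ of nonzero $k_i$ to get $(h-1)\sqrt q$ per term and a smaller explicit $Q(r,m)$). That refinement is never needed here, since the paper only uses ``$q$ sufficiently large''.

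One slip to fix: with your conventions $\chi^0(0)=1$ and $\chi^k(0)=0$ for $k\ne 0$, the indicator expression at $y=0$ equals $1/r$, not $0$ as you first claim. Your later handling of the exceptional points $x=-b_i$ already uses the correct value $1/r$, so the argument is unaffected.
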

In fact, we can avoid a bounded number of elements in the theorem above:
\begin{cor}
\label{lem:placement}
Let $r$ be a fixed positive integer.  Suppose $q\equiv1\pmod r$ is a prime power.  Let $b_1,\ldots,b_m\in\mathbb{F}_q$ be pairwise distinct and let $\beta_1,\ldots,\beta_m\in\mathbb{Z}_r$ be prescribed.  If $E\subset\mathbb{F}_q$, $|E|<L$ is bounded, then there exists $x\in\mathbb{F}_q\setminus E$ such that
\[
        x+b_i\in C_{\beta_i}^r,
        \qquad 1\le i\le m,
\]
when $q$ is sufficiently large.
\end{cor}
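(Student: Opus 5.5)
Fix a constant $c\in\mathbb{F}_q\setminus(-E')$ to be chosen below, where $E'=\{e+b_i : e\in E,\ i\in[1,m]\}$ is the set to be avoided after shifting. The plan is to reduce to Theorem \ref{thm-weil} by adding auxiliary constraints. These constraints are designed so that every bad point $e\in E$ automatically violates at least one of them.

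The simplest version needs no extra constraints. Theorem \ref{thm-weil} is usually proved by counting: via Weil's bound, the number $N$ of $x\in\mathbb{F}_q$ with $x+b_i\in C_{\beta_i}^r$ for all $i$ satisfies
\[
N=\frac{q}{r^m}+O\!\left(m\,r^{m}\sqrt q\right).
\]
Hence $N\to\infty$ as $q\to\infty$ with $r,m$ fixed. Once $N\ge L$, there is a valid $x\notin E$, since $|E|<L$.

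To stay within the statement of Theorem \ref{thm-weil} as quoted, which is only an existence claim, I would argue as follows, staying black-box. Enlarge the system by choosing elements $b_{m+1},\dots,b_{m+k}$ and labels $\beta_{m+j}$ so that each $e\in E$ fails some new condition. Concretely, I would take $k=|E|<L$ and the new shifts distinct from each other and from the $b_i$. The simplest choice is $b_{m+j}=-e_j$ for $E=\{e_1,\dots,e_k\}$. For those $e_j$ with $-e_j$ already among $b_1,\dots,b_m$, the point $x=e_j$ gives $x+b_i=0\notin C_{\beta_i}^r$, so it is excluded automatically; those $e_j$ need no new constraint. For the remaining $e_j$, the condition $x+b_{m+j}\in C^r_{\beta_{m+j}}$ with any label forces $x+b_{m+j}\ne0$, i.e.\ $x\ne e_j$, because cyclotomic classes lie in $\mathbb{F}_q^\times$.

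Now apply Theorem \ref{thm-weil} to the enlarged system of at most $m+L$ distinct shifts with arbitrary labels, say $0$, on the new ones. For $q>\max_{k<L}Q(r,m+k)$, which is a bound depending only on $r,m,L$, there is an $x$ with $x+b_i\in C_{\beta_i}^r$ for all original $i$ and $x\notin E$. This finishes the argument.

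The only care point is making sure the enlarged set of shifts consists of pairwise distinct elements, as Theorem \ref{thm-weil} requires. This is handled by removing duplicates as above. One must also note that the threshold is uniform because $|E|$ is bounded by $L$.
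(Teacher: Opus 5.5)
Your argument is correct and follows the same overall strategy as the paper. Both proofs feed Theorem~\ref{thm-weil} an enlarged system with one extra shift for each point of $E$, chosen so that every $e\in E$ violates some constraint. The difference lies in the gadget used to exclude those points.

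\begin{itemize}
\item \textbf{The paper's gadget.} It first picks an auxiliary $u\in\mathbb{F}_q^\times$ with $u-e\notin\{b_1,\ldots,b_m\}$ for all $e\in E$, which needs $q>|E|m+1$. It then adds the shifts $u-e$, all with a label $t$ such that $u\notin C_t^r$. The point $x=e$ is excluded because it would force $u=x+(u-e)\in C_t^r$.
\item \textbf{Your gadget.} You add the shifts $-e$ directly and rely only on the fact that $0$ lies in no cyclotomic class. A collision $-e=b_i$ is harmless, because then $x=e$ already gives $x+b_i=0$.
\end{itemize}

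Your version has several small advantages:
\begin{itemize}
\item It needs no auxiliary element and no extra lower bound on $q$.
\item Distinctness of the enlarged shift set is immediate.
\item Your threshold $\max_{k<L}Q(r,m+k)$ is visibly uniform in $E$.
\item It also covers $r=1$. There the paper's step ``choose a class $C_t^r$ not containing $u$'' is impossible, since the only class is $\mathbb{F}_q^\times$ (harmless for the application, where $r\ge2$).
\end{itemize}

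Your side remark, that the Weil character-sum count of admissible $x$ tends to infinity, also gives a valid but non-black-box proof.

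The one blemish is your opening sentence introducing $c$ and $E'$, which is never used and should be deleted.
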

\begin{proof}
Choose $u\in\mathbb{F}_q^\times$ so that $u-e\notin\{b_1,\ldots,b_m\}$ for every $e\in E$, and choose a class $C_t^r$ not containing $u$. It is possible if $q>|E|m+1$.  %
If $q>Q(r,m+mL)$, we can find $x\in\mathbb{F}_q$ such that $x+b_i\in C_{\beta_i}^r$ for $1\le i\le m$ and $x+(u-e)\in C_t^r$ for $e\in E$.
 We cannot have $x=e\in E$, since then $x+(u-e)=u\in C_t^r$, a contradiction.
\end{proof}
We will use Corollary \ref{lem:placement} to decide $x_1,x_2,\ldots,x_{b-1}$ recursively. To be specific, we have the following theorem.
\begin{thm}[Cyclotomic packing criterion]
\label{thm:main-framework}
Keep the notation above, and fix the coefficient labels \eqref{eq:sigma-label}.  If
\begin{quotation}
(*) there are $u_{j,\delta}$ with $u_{0,0}=0$ such that translates
\[
        u_{j,\delta}+B(\Lambda_j(\delta))\subseteq\mathbb{Z}_r,
        \qquad 0\le j\le b-1,
        \quad \delta\in\Delta_j,
\]
are pairwise disjoint and the total cardinality of them is $r$, i.e. form a partition of $\mathbb{Z}_r$,
\end{quotation}
then, for all sufficiently large $q\equiv1\pmod r$, one can choose $x_1,\ldots,x_{b-1}\in\mathbb{F}_q$ so that $\bigcup_{j=0}^{b-1}R_j$ is a complete representative system for the $r$ cyclotomic classes $C_0^r,\ldots,C_{r-1}^r$.  Consequently, there is a perfect $(k_2,k_1)$-limited magnitude $b$-burst correcting code of length $n=\frac{q-1}{sd^{b-1}}$.
\end{thm}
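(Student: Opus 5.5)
We will choose $x_1,\ldots,x_{b-1}$ one at a time, using Corollary \ref{lem:placement} at each step, so that for every $j$ and every root $\delta\in\Delta_j$ the element $x_j+\delta$ lies in the prescribed class $C^r_{u_{j,\delta}}$. Once this is arranged, for each $\lambda\in\Lambda_j(\delta)$ we get $\lambda(x_j+\delta)\in C^r_{\sigma(\lambda)+u_{j,\delta}}$. Since the classes are cosets of $\langle h\rangle$, multiplication by $\lambda\in C^r_{\sigma(\lambda)}$ shifts labels by $\sigma(\lambda)$.

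By $R_j=\bigcup_{\delta}\{\lambda(x_j+\delta)\}$, the set of labels of $\bigcup_j R_j$ is then $\bigsqcup_{j,\delta}\bigl(u_{j,\delta}+B(\Lambda_j(\delta))\bigr)$. By hypothesis (*) this is a partition of $\mathbb{Z}_r$, so every class $C^r_i$ is hit exactly once, with the same element count as in Lemma \ref{lem:gamma-count}. The case $j=0$ is consistent: $R_0=A^*$, $\Delta_0=\{0\}$, $\Lambda_0(0)=A^*$, $u_{0,0}=0$ and $x_0=1$, so the labels are exactly $B(A^*)$.

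The inductive step at stage $j$ runs as follows. The roots $\delta\in\Delta_j$ are formal expressions in $x_0,\ldots,x_{j-1}$ and $h$, which are already fixed elements of $\mathbb{F}_q$. We must ensure that the evaluated values $b_\delta:=\delta$ are pairwise distinct in $\mathbb{F}_q$, so that Corollary \ref{lem:placement} applies with $m=|\Delta_j|$ points and prescribed classes $u_{j,\delta}$. We must also ensure that the formal expressions in $\Gamma_{j'}(A)$ for later $j'$ stay distinct after evaluation. To do this, we maintain throughout the induction the invariant that all expressions of bounded "height" arising in later stages are distinct after evaluation.

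Concretely, two distinct formal expressions $\sum c_ix_i+h\sum c'_ix_i$ with rational coefficients (of the form $\mu\lambda^{-1}$) that both involve $x_j$ differ by a linear polynomial in $x_j$. If the coefficient of $x_j$ in this difference is nonzero, it has a unique root, which we add to the exceptional set $E$. If that coefficient is zero, the coincidence is decided at an earlier stage. The number of such constraints depends only on $k_1,k_2,b$, so $|E|$ is bounded, as Corollary \ref{lem:placement} requires. Also, $q$ must exceed the denominators $\lambda\le s$ and the relevant integer combinations, so that distinct rational coefficients stay distinct in $\mathbb{F}_q$; this is automatic for large $q$.

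One subtlety is that $h$ itself is a fixed element, not an indeterminate. Expressions differing only in how $h$ enters, e.g.\ $c+hc'$ versus $c''+hc'''$ with all $x$-parts in earlier variables, must still be separated. We handle this by first treating $x_1$ with $x_0=1$ fixed and $E$ containing the finitely many values forced by coincidences. A coincidence involving only $1$ and $h$ is a polynomial relation $P(h)=0$ with bounded integer coefficients. Since $h=g^r$ has order $t=(q-1)/r\to\infty$, we argue that such relations fail for large $q$. This is the main obstacle I foresee: it may be necessary to also require $h$ to avoid roots of these polynomials by choosing $g$ appropriately, or else to choose $x_1$ so that every relation genuinely involves some $x_i$. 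I would first try to show that the only relations needed are those involving $x_1,\ldots,x_{j}$ nontrivially.

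Finally, with $\bigcup_j R_j$ a system of representatives, Lemma \ref{lem:burst-sum} shows that the nonzero burst sums are $\bigl(\bigcup_jR_j\bigr)C_0^r=\mathbb{F}_q^\times$. The count $r\cdot|C_0^r|=q-1$, together with $|\mathcal{E}^\circ|-1=sd^{b-1}n=q-1$, gives distinctness of the burst sums. Theorem \ref{thm:tiling} then yields the perfect code of length $n=(q-1)/(sd^{b-1})$.
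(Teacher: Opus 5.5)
Your outline is the paper's proof: choose the $x_j$ recursively through Corollary~\ref{lem:placement}, excluding an exceptional set that comes from coincidences of the affine roots in $\Delta_{j+1}$; use the label shift $\lambda(x_j+\delta)\in C^r_{u_{j,\delta}+\sigma(\lambda)}$ and a count; then apply Lemma~\ref{lem:burst-sum} and Theorem~\ref{thm:tiling}. For the count, state explicitly that $(\lambda,\gamma)\mapsto(\lambda^{-1}\gamma,\lambda)$ gives $\sum_{j,\delta}|\Lambda_j(\delta)|=s\sum_j|\Gamma_j(A)|=r$. Combined with (*), this forces $\sigma$ to be injective on each $\Lambda_j(\delta)$, which is what ``each class is hit exactly once'' needs. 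The genuine gap is the step you flag yourself, and your proposed resolution of it does not work.

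Every $x_i$ appears in a root with coefficient either $\mu/\lambda$ or $h\mu/\lambda$ ($\mu,\lambda\in A^*$), never a mix. So apart from rational coincidences, which are harmless for large $q$, two formally distinct roots can have $x_j$-coefficients that agree in $\mathbb{F}_q$ only if $h=\mu\lambda'/(\lambda\mu')$ for some $\lambda,\lambda',\mu,\mu'\in A^*$. The same condition governs distinctness of $\Delta_1$ after $x_0=1$.

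Such a coincidence is an identity in the variables, so no choice of the $x_i$ removes it. For example, take $(k_2,k_1)=(2,0)$ and $h=4$. The roots $2x_0$ and $\tfrac12hx_0$ of $\Delta_1$ both equal $2$, so placement would have to put $x_1+2$ into both $C^r_{u_{1,2x_0}}$ and $C^r_{u_{1,hx_0/2}}$, which is impossible when these labels differ. The analogous pair $2x_j$, $\tfrac12hx_j$ in $\Delta_{j+1}$ coincides identically.

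Nor does the large order of $h$ help. The relations are linear, and whenever $4$ generates the group of $r$-th powers, some primitive root satisfies $g^r=4$. Hypothesis (*) only forces $\sigma$ to be injective on $\Lambda_0(0)=A^*$. That excludes $h=\mu/\mu'$, but not $h=4$ when $\sigma(2)=r/2$.

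Your other suggestion, re-choosing $g$, is the right repair, but it must preserve the prescribed labels. Replace $g$ by $g^c$ with $c\equiv1\pmod r$ and $\gcd(c,q-1)=1$. Then $g^{ci}\langle h\rangle=g^i\langle h\rangle$, so every $C^r_i$, every $\sigma(\lambda)$, and (*) are unchanged. Meanwhile $h$ becomes $h^c$, which runs over $\varphi(t)/\varphi(\gcd(r,t))\to\infty$ distinct generators of $C_0^r$. For large $q$ one of them avoids the at most $s^4$ bad ratios, and your induction then goes through.

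The paper's proof passes over the same point. It asserts distinctness in $\Delta_1$ because its elements are scalar multiples of $x_0$. In the inductive step it treats $x_j$-coefficients that agree in $\mathbb{F}_q$ as formally equal. Both tacitly assume that $h$ avoids these ratios.
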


\begin{proof}
Notice that \(x_0=1\). We prove that we can decide
\(x_1,x_2,\ldots,x_{b-1}\) recursively, such that
\[
x_j+\delta\in C^r_{u_{j,\delta}},
\qquad 0\le j\le b-1,\quad \delta\in\Delta_j.
\]
Since \(x_0=1\in C^r_0\), \(\Delta_0=\{0\}\), \(u_{0,0}=0\),
this condition is satisfied when \(j=0\). Moreover, the elements of
\(\Delta_1\) are scalar multiples of \(x_0\). Hence, after substituting
\(x_0=1\), distinct roots in \(\Delta_1\) take distinct values.

Suppose that \(x_1,x_2,\ldots,x_{j-1}\) are determined, such that
expressions in \(\Delta_j\) take pairwise distinct values when evaluating
at these values, and $x_k+\delta\in C^r_{u_{k,\delta}}$ for \(k\in[0,j-1]\) and \(\delta\in\Delta_k\). We want to find \(x_j\)
such that
\[
x_j+\delta\in C^r_{u_{j,\delta}}
\qquad
\text{for every }\delta\in\Delta_j.
\]
If \(j<b-1\), we also require that expressions in \(\Delta_{j+1}\) take
pairwise distinct values after \(x_j\) is chosen.

After substituting the already chosen values
\(x_0=1,x_1,\ldots,x_{j-1}\), every element
\(\eta\in\Delta_{j+1}\) becomes an affine linear expression in the single
variable \(X=x_j\): $\eta(X)=a_\eta X+c_\eta .$ Moreover, deleting the \(x_j\) term from a root in \(\Delta_{j+1}\)
leaves a root in \(\Delta_j\). In the wrap around case we use the inclusion $h\Gamma_j^{(0)}(A)\subseteq \Gamma_j(A)$.

Thus each constant term \(c_\eta\) is the value of some root in
\(\Delta_j\).

Define
\[
E_j=
\left\{
x\in\mathbb F_q
:
\eta(x)=\eta'(x)
\text{ for some distinct }\eta,\eta'\in\Delta_{j+1}
\right\}.
\]
If $\eta(X)=a_\eta X+c_\eta,
\eta'(X)=a_{\eta'}X+c_{\eta'}$,
and \(a_\eta\ne a_{\eta'}\), then the equation
\(\eta(X)=\eta'(X)\) has at most one solution. If
\(a_\eta=a_{\eta'}\), then equality would force
\(c_\eta=c_{\eta'}\). Since the constant terms are values of roots in
\(\Delta_j\), the induction hypothesis implies that this can happen only
when the corresponding roots in \(\Delta_j\) are the same. In that case
the two affine roots are the same, contrary to \(\eta\ne\eta'\). Hence no
exceptional value arises from such a pair. Therefore
\[
|E_j|\le \binom{|\Delta_{j+1}|}{2},
\]
which is bounded only in terms of \(b,k_1,k_2\), and is independent of
\(q\). If \(j=b-1\), put \(E_{b-1}=\varnothing\).

Since the values of the roots in \(\Delta_j\) are pairwise distinct by
the induction hypothesis, by Corollary \ref{lem:placement}, for sufficiently large
\(q\), we can choose
$x_j\in\mathbb F_q\setminus E_j$
such that
\[
x_j+\delta\in C^r_{u_{j,\delta}}
\qquad
\text{for every }\delta\in\Delta_j.
\]
If \(j<b-1\), the condition \(x_j\notin E_j\) guarantees that expressions
in \(\Delta_{j+1}\) take pairwise distinct values after \(x_j\) is chosen.
Thus we can find \(x_1,\ldots,x_{b-1}\) recursively, such that
\[
x_j+\delta\in C^r_{u_{j,\delta}},
\qquad
0\le j\le b-1,\quad \delta\in\Delta_j.
\]

Now we prove that $\bigcup_{j=0}^{b-1}R_j$
is a complete representative system for the \(r\) cyclotomic classes.

For fixed \(j\), the map
\[
A^*\times \Gamma_j(A)
\longrightarrow
\{(\delta,\lambda):\delta\in\Delta_j,\ \lambda\in\Lambda_j(\delta)\},
\qquad
(\lambda,\gamma)\longmapsto(\lambda^{-1}\gamma,\lambda)
\]
is a bijection. Hence
\[
\sum_{\delta\in\Delta_j}|\Lambda_j(\delta)|
=
|A^*|\,|\Gamma_j(A)|
=
s|\Gamma_j(A)|.
\]
By the construction above, the roots in \(\Delta_j\) take pairwise
distinct values. Since \(1\in A^*\), we have
\(\Gamma_j(A)\subseteq \Delta_j\). Hence the expressions in
\(\Gamma_j(A)\) are also pairwise distinct. By Lemma~\ref{lem:gamma-count},
\[
\sum_{j=0}^{b-1}\sum_{\delta\in\Delta_j}|\Lambda_j(\delta)|
=
s\sum_{j=0}^{b-1}|\Gamma_j(A)|
=
bs d^{b-1}
=
r.
\]

On the other hand, assumption \((*)\) says that
\[
u_{j,\delta}+B(\Lambda_j(\delta)),
\qquad
0\le j\le b-1,\quad \delta\in\Delta_j,
\]
form a partition of \(\mathbb Z_r\). Hence
$r=\sum_{j=0}^{b-1}\sum_{\delta\in\Delta_j}
|B(\Lambda_j(\delta))|$. Since $|B(\Lambda_j(\delta))|\le |\Lambda_j(\delta)|$, we have
$r=
\sum_{j,\delta}|B(\Lambda_j(\delta))|
\le
\sum_{j,\delta}|\Lambda_j(\delta)|
=r$.
Thus each equality must be attained. In particular, $|B(\Lambda_j(\delta))|=|\Lambda_j(\delta)|$
for every \(j\) and every \(\delta\in\Delta_j\). Equivalently, the map
\(\sigma\) is injective on every \(\Lambda_j(\delta)\).

Now for \(0\le j\le b-1\), \(\delta\in\Delta_j\), and
\(\lambda\in\Lambda_j(\delta)\), we have
\[
x_j+\delta\in C^r_{u_{j,\delta}},
\qquad
\lambda\in C^r_{\sigma(\lambda)}.
\]
Therefore
$\lambda(x_j+\delta)\in C^r_{u_{j,\delta}+\sigma(\lambda)}$.
Since
\[
u_{j,\delta}+B(\Lambda_j(\delta)),
\qquad
0\le j\le b-1,\quad \delta\in\Delta_j,
\]
form a partition of \(\mathbb Z_r\), and since \(\sigma\) is injective on
each \(\Lambda_j(\delta)\), we see that
$\bigcup_{j=0}^{b-1}R_j$
is a complete representative system for the \(r\) cyclotomic classes
$C^r_0,C^r_1,\ldots,C^r_{r-1}$.

By Lemma~\ref{lem:burst-sum}, the set of all nonzero cyclic \(b\)-burst sums generated
by the sequence is
$\left(\bigcup_{j=0}^{b-1}R_j\right)C^r_0$.
Since \(\bigcup_{j=0}^{b-1}R_j\) is a complete representative system for
the \(r\) cyclotomic classes, this product is exactly $\mathbb F_q^\times$, and each element of $\mathbb{F}_q^\times$ has a unique representation, so we obtain a generalized
splitting of the additive group of \(\mathbb F_q\). By Theorem~\ref{thm:tiling}, in that case, there is a lattice tiling of
\(\mathbb Z^n\) with \(\mathcal{E}^\circ(n,b,k_2,k_1)\). Since
$|C^r_0|=\frac{q-1}{r}
\quad\text{and }
r=bs d^{b-1}$,
the corresponding length is
$n=b|C^r_0|
=\frac{b(q-1)}{r}=
\frac{q-1}{s d^{b-1}}$.
Hence there is a perfect \((k_2,k_1)\)-limited magnitude \(b\)-burst
correcting code of length
$n=\frac{q-1}{s d^{b-1}}$.
\end{proof}
\subsection{Arithmetic labelings}
\label{sec:arith}

This subsection records the arithmetic input used in the packing arguments for the constructions later.  The goal is to make selected rational numbers non-$r$th powers modulo primes; moreover, in the prime-field constructions for the families $(4,0)$ and $(3,1)$, one also needs controlled congruences between their cyclotomic labels. 
We also require that the primes are of the form $rk+1$. 
It can be achieved using a Kummer--Mills type result, see \cite{lietzmann1905zur, mills1963characters}. 
For completeness, we list the necessary notations and prove the forms we need in this paper.

\subsubsection{Exponent vectors}

Let $a_1,\ldots,a_m\in\mathbb{Q}^\times$.  Let $\pi_1,\ldots,\pi_N$ be the rational primes appearing in the reduced numerators or denominators of these numbers.  Write
\[
        a_i=\varepsilon_i\prod_{\nu=1}^N\pi_\nu^{e_{\nu i}},
        \qquad \varepsilon_i\in\{\pm1\},\quad e_{\nu i}\in\mathbb{Z}.
\]
For an odd prime $\ell$, the sign is irrelevant modulo $\ell$th powers, since $-1=(-1)^\ell$.  The \emph{exponent vector} of $a_i$ modulo $\ell$ is
\[
        v(a_i)=(e_{1i},\ldots,e_{Ni})\pmod\ell\in\mathbb{F}_\ell^N.
\]
Multiplication in $\mathbb{Q}^\times$ corresponds to addition of exponent vectors:
\[
        v(ab)=v(a)+v(b),
        \qquad v(a/b)=v(a)-v(b).
\]

\subsubsection{Kummer--Chebotarev labelings}

The following statement is the main tool to control labelings, or equivalently $\ell$-th power residue characters, and is in the form used throughout the paper.  The case $\eta=0$ gives ordinary primes $p\equiv1\pmod r$.  The case $\eta=1$ is the signed version; it imposes the congruence $p\equiv1+r\pmod{2r}$ and is used when the coefficient set contains negative integers.

\begin{thm}
\label{thm:kummer-labeling}
Let $r\ge2$, and let $\ell\mid r$ be an odd prime.  Let $\eta\in\{0,1\}$; if $\eta=1$, assume that $r$ is even. Prescribe a vector $u=(u_1,u_2,\dots,u_N)\in\mathbb{Z}_\ell^N$. Define
\[
R_\eta=\begin{cases}
r, & \eta=0,\\
2r, & \eta=1,
\end{cases}
\qquad
A_\eta=\begin{cases}
1, & \eta=0,\\
1+r, & \eta=1.
\end{cases}
\]
Let $u\in\mathbb{F}_\ell^N$.  Then there are infinitely many rational primes $p$ such that
\[
        p\equiv A_\eta\pmod{R_\eta},
\]
avoiding the primes appearing in the numerators and denominators of the $a_i$, and, if we fix a primitive root $g$ modulo $p$ and
\[
        a_i\in C_{\alpha_i}^r\subset\mathbb{F}_p^\times,
\]
then for some scalar $c_p\in\mathbb{F}_\ell^\times$ independent of $i$,
\[
        \alpha_i\equiv c_p\,u\cdot v(a_i)\pmod\ell
        \qquad (1\le i\le m).
\]
\end{thm}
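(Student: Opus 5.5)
The plan is to realize the prescribed labels through a Chebotarev density argument in the compositum of a cyclotomic field and a Kummer extension. Put $\zeta_r$ for a primitive $r$-th root of unity and $\zeta_{R_\eta}$ for a primitive $R_\eta$-th root of unity. Let $K=\mathbb{Q}(\zeta_{R_\eta})$ and $L=K(\pi_1^{1/\ell},\ldots,\pi_N^{1/\ell})$. Since $\ell\mid r\mid R_\eta$, $K$ contains the $\ell$-th roots of unity, so $L/K$ is a Kummer extension. Its Galois group embeds into $\mu_\ell^N$ via $\tau\mapsto(\tau(\pi_\nu^{1/\ell})/\pi_\nu^{1/\ell})_\nu$.

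The first step is to show that this embedding is an isomorphism, i.e.\ that $\pi_1,\ldots,\pi_N$ are multiplicatively independent in $K^\times/(K^\times)^\ell$. Suppose $\prod\pi_\nu^{c_\nu}=y^\ell$ with $y\in K$ and not all $c_\nu\equiv0\pmod\ell$. Then $\mathbb{Q}(y)$ would be a subextension of the abelian field $K$, generated by an $\ell$-th root of a rational number that is not an $\ell$-th power. For odd $\ell$ such an extension is non-normal over $\mathbb{Q}$: its conjugates are $\zeta_\ell^i y$, and $\mathbb{Q}(y)$, having real degree $\ell$ embedding, cannot contain $\zeta_\ell$. This contradicts $K/\mathbb{Q}$ being abelian. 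Hence $[L:K]=\ell^N$, and $L$ is linearly disjoint from $K$ in the sense needed. Consequently the Galois group of $L/\mathbb{Q}$ contains elements restricting to any prescribed element of $\mathrm{Gal}(K/\mathbb{Q})\cong(\mathbb{Z}/R_\eta)^\times$ together with any prescribed Kummer character values on the $\pi_\nu$.

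The second step applies Chebotarev. Choose $\sigma\in\mathrm{Gal}(L/\mathbb{Q})$ acting on $\zeta_{R_\eta}$ as $\zeta_{R_\eta}\mapsto\zeta_{R_\eta}^{A_\eta}$. When $\eta=0$ this is the identity on $K$; when $\eta=1$ it fixes $\zeta_r$ because $A_\eta\equiv1\pmod r$. Moreover $\sigma$ must send $\pi_\nu^{1/\ell}\mapsto\zeta_\ell^{u_\nu}\pi_\nu^{1/\ell}$. Since $\sigma$ fixes $\zeta_\ell$, this is well defined. Infinitely many primes $p$ (avoiding the $\pi_\nu$ and those dividing $R_\eta$) have Frobenius conjugate to $\sigma$. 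Conjugation only rescales the vector $u$ by a unit of $\mathbb{F}_\ell$, because conjugating by elements acting as $\zeta_\ell\mapsto\zeta_\ell^a$ multiplies the exponents. This scalar is the source of $c_p$; for $\eta=0$ the Frobenius class within $\mathrm{Gal}(L/K)$ is even exact, and the statement already allows a scalar anyway. For such $p$ we have $p\equiv A_\eta\pmod{R_\eta}$. Also, for a prime $\mathfrak P$ above $p$, $\pi_\nu^{(p-1)/\ell}\equiv\zeta_\ell^{u_\nu'}\pmod{\mathfrak P}$ with $u'=c\,u$, which is Euler's criterion for the $\ell$-th power residue symbol.

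The final step translates to labels. With $g$ a primitive root modulo $p$, the map $x\mapsto x^{(p-1)/\ell}$ sends $g\mapsto$ a fixed primitive $\ell$-th root of unity $\omega$ in $\mathbb{F}_p$. Identifying $\omega$ with $\zeta_\ell^{e}$ modulo $\mathfrak P$ for some $e\in\mathbb{F}_\ell^\times$, we get $\ind_g(\pi_\nu)\equiv e^{-1}u'_\nu\pmod\ell$. The label $\alpha_i$ of $a_i$ satisfies $\alpha_i\equiv\ind_g(a_i)\pmod\ell$ (as $\ell\mid r$), and $\ind_g(-1)=(p-1)/2$ is divisible by $\ell$ up to irrelevance: $-1=(-1)^\ell$ is an $\ell$-th power. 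Additivity of $\ind_g$ then gives $\alpha_i\equiv c_p\,u\cdot v(a_i)$ with $c_p=e^{-1}c$. The main obstacle I expect is the independence claim in the first step, especially making the non-normality argument airtight. The fallback is the standard fact that for odd $\ell$ the only $\ell$-th roots of rationals lying in abelian fields are rational. A secondary issue is the bookkeeping of the scalar ambiguity from conjugacy and the choice of $g$, which the statement's $c_p$ is designed to absorb.
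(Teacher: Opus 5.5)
Your proposal is correct and takes essentially the paper's route: independence of the $\pi_\nu$ in $K^\times/(K^\times)^\ell$ via the non-normality of $\mathbb{Q}(\sqrt[\ell]{a})$, then Chebotarev for an automorphism acting as $\zeta_{R_\eta}\mapsto\zeta_{R_\eta}^{A_\eta}$ and $\sqrt[\ell]{\pi_\nu}\mapsto\zeta_\ell^{u_\nu}\sqrt[\ell]{\pi_\nu}$, then Euler's criterion and discrete logarithms to read off the labels. The differences are cosmetic: you work over $K=\mathbb{Q}(\zeta_{R_\eta})$ from the start and absorb the conjugacy-class rescaling of $u$ into $c_p$, whereas the paper adjoins $\zeta_{2r}$ afterwards via a degree count and fixes a prime $\mathfrak{P}$ whose Frobenius is exactly $\tau_u$.
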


\begin{proof}
Let $K=\mathbb{Q}(\zeta_r)$, where $\zeta_\ell$ is a fixed $\ell$th root of unity. Since $\ell\mid r$, we have $\zeta_\ell\in K$.  Choose $\alpha_\nu=\sqrt[\ell]{\pi_\nu}$ and set
\[
        L=K(\alpha_1,\ldots,\alpha_N).
\]
The rational primes $\pi_1,\ldots,\pi_N$ remain linearly independent in $K^\times/(K^\times)^\ell$.  Indeed, if a rational number $a\in\mathbb{Q}^\times$ becomes an $\ell$th power in the abelian field $K$, then $a$ was already an $\ell$th power in $\mathbb{Q}$: otherwise $X^\ell-a$ is irreducible over $\mathbb{Q}$, and the extension field $\mathbb{Q}(\sqrt[\ell]{a})/\mathbb{Q}$ has degree $\ell$, and if this field were contained in the abelian Galois field $K$ it would also be Galois; this would force $\zeta_\ell\in\mathbb{Q}(\sqrt[\ell]{a})$, impossible since $[\mathbb{Q}(\zeta_\ell):\mathbb{Q}]=\ell-1$ and $\ell-1\nmid\ell$.

Thus Kummer theory \cite[Chapter IV, \S 3, Theorem 3.3]{neukirch1999algebraic} tells us $\Gal(L/K)\cong(\mathbb{Z}/\ell\mathbb{Z})^N$ and gives an automorphism $\tau_u\in\Gal(L/K)$ satisfying
\[
        \tau_u(\alpha_\nu)=\zeta_\ell^{u_\nu}\alpha_\nu,
        \qquad 1\le \nu\le N.
\]
For $\eta=0$, put $K_\eta=K$, $M=L$.  For $\eta=1$, put $K_\eta=\mathbb{Q}(\zeta_{2r})$, $M=LK_\eta=L(\zeta_{2r})$ and since $[L:K]=\ell^N$, $[K_\eta:K]=2$, we have $[M:L]=2$ and may extend $\tau_{u}$ to $L_\eta$ such that 
\[
        \tau_u(\zeta_{2r})=\zeta_{2r}^{1+r},
\]
which fixes $\zeta_r$, and hence fixes $\zeta_\ell$ for every odd $\ell\mid r$.  %

We see $\tau_u$ as an element in $\Gal(M/\mathbb{Q})$. Now Chebotarev's density theorem \cite[Chapter VII, \S 13, Theorem 13.4]{neukirch1999algebraic}, applied to $M/\mathbb{Q}$ and $\tau_u$, guarantees that there are infinitely rational primes $p$ with an associated prime ideal $\mathfrak{P}\subset\mathcal{O}_M$ above $(p)$, such that
$$
x^{p}\equiv\tau_u(x)\pmod{\mathfrak{P}}
$$
for any $x\in\mathcal{O}_M$. Moreover, we can avoid $\pi_\nu$s, i.e. $p\notin\{\pi_1,\pi_2,\ldots,\pi_N\}$.

We have
$$
\alpha_\nu^p\equiv\tau_u(\alpha_\nu)=\zeta_\ell^{u_\nu}\alpha_\nu\pmod{\mathfrak{P}}
$$
for $1\le \nu\le N$. Since $p\ne\pi_\nu$ for any $\nu$, $\alpha_\nu$ is nonzero modulo $\mathfrak{P}$ and we may divide by $\alpha_\nu$ to obtain
$$
\pi_\nu^{(p-1)/\ell}=\alpha_\nu^{p-1}\equiv\zeta_{\ell}^{u_\nu}\pmod{\mathfrak{P}}.
$$

Since $\zeta_{R_\eta}\in \mathcal{O}_{K_\eta}$, and $\zeta_{R_\eta}^k-1\notin\mathfrak{p}, 1\le k\le {R_\eta}-1$ holds for all but a finite number of prime ideals $\mathfrak{p}\subset O_{K_{\eta}}$, we may avoid those rational primes lying below these $\mathfrak{p}$, and hence
$$ \zeta_{R_\eta}^p \equiv \tau_u(\zeta_{R_\eta})=\zeta_{R_\eta}^{A_\eta}\pmod{\mathfrak{p}},$$
for any $\mathfrak{p}$ lying above our choice of $p$ and below the associated $\mathfrak{P}$, 
which implies $p\equiv A_\eta\pmod{R_\eta}$.

Recall that $\mathbb{F}_p^\times=\mathbb{Z}/(p)$ is generated by $g+(p)$ and that $\zeta_{R_\eta}+\mathfrak{p}$ generates the order-$R_\eta$ cyclic subgroup of $(\mathcal{O}_{K_\eta}/\mathfrak{p})^\times$, so $\zeta_{\ell}+\mathfrak{p}$ generates the order-$\ell$ cyclic subgroup $\mu_\ell$ of $(\mathcal{O}_{K_\eta}/\mathfrak{p})^\times$, which is contained in $(\mathbb{Z}/(p))^\times$ since $\ell\mid(p-1)$. Therefore there exists $0\ne c\in\mathbb{Z}_\ell$ such that
$$
\zeta_\ell\equiv g^{c(p-1)/\ell}\pmod{\mathfrak{p}}
$$
and we find that $x\mapsto x^{(p-1)/\ell}$ is a homomorphism $\mathbb{F}_p^\times\to\mu_\ell$ such that
$$
\pi_\nu^{(p-1)/\ell}\equiv g^{cu_{\nu}(p-1)/\ell}\pmod{p}.
$$
Since $a_i\in C_{\alpha_i}^r$, then $a_i\equiv g^{\beta_i r+\alpha_i}\pmod{p}$ for some integer $\beta_i$. Then notice that $(p-1)/\ell$ is even, we have
$$
\begin{aligned}
g^{(\beta_i r+\alpha_i)(p-1)/\ell}&\equiv a_i^{(p-1)/\ell}\\&=\varepsilon_i^{(p-1)/\ell}\prod_{\nu=1}^N\pi_\nu^{e_{\nu i}(p-1)/\ell}\\
&\equiv \prod_{\nu=1}^N g^{cu_\nu e_{\nu i}(p-1)/\ell}\\
&=g^{(p-1)/\ell\sum cu_\nu e_{\nu i}}\pmod{p},
\end{aligned}
$$
which implies
$$
(\beta_ir+\alpha_i)(p-1)/\ell\equiv (p-1)/\ell\sum cu_\nu e_{\nu i}\pmod{p-1}.
$$
Divide both sides by $(p-1)/\ell$, we obtain
$$
\alpha_i\equiv\beta_ir+\alpha_i\equiv \sum cu_\nu e_{\nu i}=cu\cdot v(a_i)\pmod{\ell}
$$
as desired.

\end{proof}

\subsubsection{Nonresidue and separation consequences}

By choosing suitable $u$, we can make $a_i$s non-$r$th powers and lie in distinct cyclotomic classes for infinitely many prime moduli $p$.
These results will be used in the general sufficient theorems.

\begin{cor}%
\label{cor:arith-consequences}
With the notation of Theorem~\ref{thm:kummer-labeling}:
\begin{enumerate}[label=(\roman*)]
\item If $u\cdot v(a_i)\ne0$ for all $i$, then there are infinitely many primes $p\equiv A_\eta\pmod{R_\eta}$ for which every $a_i$ is a non-$r$th power in $\mathbb{F}_p^\times$.
\item If $u\cdot v(a_1),\ldots,u\cdot v(a_m)$ are pairwise distinct, then there are infinitely many primes $p\equiv A_\eta\pmod{R_\eta}$ for which $a_1,\ldots,a_m$ lie in pairwise distinct order-$r$ cyclotomic classes of $\mathbb{F}_p^\times$.
\end{enumerate}
\end{cor}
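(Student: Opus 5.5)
Both parts follow directly from Theorem~\ref{thm:kummer-labeling}, applied with the prescribed vector $u$. That theorem supplies infinitely many primes $p\equiv A_\eta\pmod{R_\eta}$, avoiding the primes $\pi_\nu$, for which the cyclotomic labels satisfy
\[
\alpha_i\equiv c_p\,u\cdot v(a_i)\pmod{\ell},\qquad 1\le i\le m,
\]
with a scalar $c_p\in\mathbb{F}_\ell^\times$ that does not depend on $i$. Since $c_p\neq0$, multiplication by $c_p$ is a bijection of $\mathbb{F}_\ell$. This single fact is what both parts use.

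For (i), fix such a prime $p$ and an index $i$. Because $u\cdot v(a_i)\ne0$ and $c_p\neq0$, we get $\alpha_i\not\equiv0\pmod\ell$. The classes $C^r_\alpha$ are indexed by $\alpha\in\mathbb{Z}_r$, and $\ell\mid r$, so reduction of $\alpha$ modulo $\ell$ is well defined. The $r$th powers are exactly $C^r_0=\langle g^r\rangle$. Hence $\alpha_i\not\equiv0\pmod r$, and $a_i$ is not an $r$th power in $\mathbb{F}_p^\times$. We must also check that $a_i$ makes sense in $\mathbb{F}_p^\times$. This holds because $p$ avoids all primes dividing the numerators and denominators of the $a_i$, so each $a_i$ reduces to a nonzero element.

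For (ii), suppose $i\ne j$ and $a_i,a_j$ lie in the same class $C^r_\alpha$. Then $\alpha_i=\alpha_j$ in $\mathbb{Z}_r$, and therefore $\alpha_i\equiv\alpha_j\pmod\ell$. This gives $c_p\,u\cdot v(a_i)=c_p\,u\cdot v(a_j)$ in $\mathbb{F}_\ell$. Cancelling the nonzero scalar $c_p$ yields $u\cdot v(a_i)=u\cdot v(a_j)$, which contradicts the hypothesis that these values are pairwise distinct. So the $a_i$ lie in pairwise distinct order-$r$ cyclotomic classes.

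There is no real obstacle here; the work is bookkeeping. One point needs care: the conclusion must not depend on the choice of primitive root $g$. Changing $g$ multiplies each label by a unit modulo $r$, hence modulo $\ell$. This preserves both the property $\alpha\not\equiv 0$ and the property of labels being distinct. The theorem's scalar $c_p$ already absorbs this dependence. Infinitude of such primes is inherited directly from Theorem~\ref{thm:kummer-labeling}.
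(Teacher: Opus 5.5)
Your proposal is correct and follows the paper's proof. Both parts come straight from Theorem~\ref{thm:kummer-labeling}: the common scalar $c_p$ is nonzero, so nonvanishing (resp.\ pairwise distinctness) of the values $u\cdot v(a_i)$ modulo $\ell$ gives $\alpha_i\not\equiv 0$ (resp.\ pairwise distinct labels) in $\mathbb{Z}_r$, since $\ell\mid r$. The paper words (i) as ``$a_i$ is not an $\ell$th power, hence not an $r$th power,'' while you argue directly that $a_i\notin C_0^r$. That difference is cosmetic, and your extra remarks on well-definedness in $\mathbb{F}_p^\times$ and independence of $g$ are correct but not needed.
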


\begin{proof}
For (i), Theorem~\ref{thm:kummer-labeling} gives $\alpha_i\not\equiv0\pmod\ell$ for every $i$.  Hence $a_i$ is not an $\ell$th power modulo $p$.  Since $\ell\mid r$, %
$a_i$ is not an $r$th power.

For (ii), the same theorem gives labels whose reductions modulo $\ell$ are the prescribed values multiplied by a common nonzero scalar.  Pairwise distinct reductions modulo $\ell$ imply pairwise distinct labels in $\mathbb{Z}_r$.
\end{proof}

\begin{cor}
\label{cor:hyperplane}
Let $a_1,\ldots,a_m\in\mathbb{Q}^\times$, and suppose $a_i/a_j\notin(\mathbb{Q}^\times)^\ell$ for every $i\ne j$.  If
\[
        \ell>\binom{m}{2},
\]
then there exists $u\in\mathbb{F}_\ell^N$ separating $a_1,\ldots,a_m$, i.e., $u\cdot v(a_i)$ are pairwise distinct.  Consequently, whenever $\ell\mid r$, the numbers $a_1,\ldots,a_m$ can be forced into pairwise distinct order-$r$ cyclotomic classes for infinitely many primes in the relevant congruence class.
\end{cor}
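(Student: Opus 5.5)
The plan is to pass from the multiplicative condition to a linear-algebra problem over $\mathbb{F}_\ell$ and then count bad vectors. Take $\pi_1,\ldots,\pi_N$ and the exponent vectors $v(a_i)\in\mathbb{F}_\ell^N$ as in the construction preceding Theorem~\ref{thm:kummer-labeling}. The first step is to reformulate the hypothesis $a_i/a_j\notin(\mathbb{Q}^\times)^\ell$ as $v(a_i)\ne v(a_j)$ in $\mathbb{F}_\ell^N$.

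Only the direction "nonzero vector" is needed. If $v(a_i/a_j)=v(a_i)-v(a_j)=0$, then every exponent of $a_i/a_j$ is divisible by $\ell$. So $a_i/a_j=\pm c^\ell$ for some $c\in\mathbb{Q}^\times$. Since $\ell$ is odd, $-1=(-1)^\ell$, and hence $a_i/a_j\in(\mathbb{Q}^\times)^\ell$, which contradicts the hypothesis. Therefore $w_{ij}:=v(a_i)-v(a_j)$ is a nonzero vector in $\mathbb{F}_\ell^N$ for every pair $i<j$.

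Next comes a counting (or union-bound) argument. For each pair $i<j$, the set $H_{ij}=\{u\in\mathbb{F}_\ell^N: u\cdot w_{ij}=0\}$ is a hyperplane, because $w_{ij}\ne0$. It therefore has exactly $\ell^{N-1}$ elements. The union of these $\binom m2$ hyperplanes has at most $\binom m2\ell^{N-1}$ elements. This is strictly less than $\ell^N$ precisely when $\ell>\binom m2$. Hence some $u\in\mathbb{F}_\ell^N$ lies in no $H_{ij}$, which means $u\cdot v(a_i)\ne u\cdot v(a_j)$ for all $i\ne j$. That is the desired separating vector. I would also mention the degenerate case $m=1$ (or $N=0$): there are no pairs and nothing needs to be checked.

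For the consequence, assume $\ell\mid r$ with $\ell$ an odd prime, as Theorem~\ref{thm:kummer-labeling} requires; the bound $\ell>\binom m2$ forces $\ell$ odd once $m\ge 3$. Corollary~\ref{cor:arith-consequences}(ii), applied with this $u$, gives infinitely many primes $p\equiv A_\eta\pmod{R_\eta}$ for which $a_1,\ldots,a_m$ lie in pairwise distinct order-$r$ cyclotomic classes. The only point needing care is the reduction from "$\ell$th power of a rational" to the vanishing of the exponent vector. This uses that the sign is absorbed because $\ell$ is odd, so I would state explicitly that $\ell$ is taken odd, consistent with the hypotheses of Theorem~\ref{thm:kummer-labeling}.
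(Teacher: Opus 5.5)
Your proposal is correct and follows the paper's proof: both take a union bound over the $\binom{m}{2}$ hyperplanes $u\cdot(v(a_i)-v(a_j))=0$ in $\mathbb{F}_\ell^N$ and read the consequence off Corollary~\ref{cor:arith-consequences}(ii). You also make explicit two points the paper leaves implicit: the hypothesis $a_i/a_j\notin(\mathbb{Q}^\times)^\ell$ is exactly what makes each hyperplane proper, and this relies on $\ell$ being odd, as in the paper's definition of exponent vectors (for $\ell=2$ the pair $2,-2$ would have equal exponent vectors).
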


\begin{proof}
For every pair $i<j$, the equation
\[
        u\cdot(v(a_i)-v(a_j))=0
\]
defines a proper hyperplane in $\mathbb{F}_\ell^N$.  There are $\binom{m}{2}$ such hyperplanes, and their union has size at most $\binom{m}{2}\ell^{N-1}<\ell^N$.  Hence, some $u$ avoids all of them.
\end{proof}

\subsubsection{A signed congruence}

When negative coefficients occur, we use the signed Chebotarev congruence (i.e. $\eta=1$) in Theorem~\ref{thm:kummer-labeling}.  The following elementary observation identifies the label of $-1$ under this congruence.

\begin{lemma}
\label{lem:sign}
If $r$ is even and $q\equiv1+r\pmod{2r}$, then $q=rt+1$ with $t$ odd and
\[
        -1\in C_{r/2}^r\subset\mathbb{F}_q^\times.
\]
\end{lemma}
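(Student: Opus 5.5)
Write $q-1=rt$. The congruence $q\equiv 1+r\pmod{2r}$ gives $q-1=r+2rk=r(2k+1)$ for some integer $k$, so $t=2k+1$ is odd. This settles the first claim.

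For the label of $-1$, fix the generator $g$ of $\mathbb{F}_q^\times$ used to define the classes $C_i^r=g^i\langle g^r\rangle$. Since $q$ is odd, the unique element of order $2$ in the cyclic group $\mathbb{F}_q^\times$ is $-1$, and hence $-1=g^{(q-1)/2}=g^{rt/2}$. Now $r$ is even and $t$ is odd, so
\[
\frac{rt}{2}=\frac{r}{2}+r\cdot\frac{t-1}{2},
\]
where $(t-1)/2$ is an integer. Therefore
\[
-1=g^{r/2}\,\bigl(g^{r}\bigr)^{(t-1)/2}\in g^{r/2}\langle g^r\rangle=C^r_{r/2}.
\]

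The only point that needs care is that the parity of $t$ is exactly what is required: if $t$ were even, the exponent $rt/2$ would be divisible by $r$, and $-1$ would lie in $C_0^r$ instead.
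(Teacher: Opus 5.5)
Your proof is correct and is essentially the paper's argument: writing $-1=g^{(q-1)/2}=g^{rt/2}=g^{r/2}\bigl(g^{r}\bigr)^{(t-1)/2}\in C^r_{r/2}$. You also spell out two details the paper leaves implicit: that $t=2k+1$ is odd, and that $q$ is odd, so $-1$ is the unique element of order $2$.
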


\begin{proof}
For primitive $g\in\mathbb{F}_q^\times$,
\[
        -1=g^{(q-1)/2}=g^{rt/2}=g^{r(t-1)/2+r/2}\in g^{r/2}C_0^r=C_{r/2}^r.
\]
\end{proof}

\subsubsection{Two-primary auxiliary criterion}

If $r$ has no odd prime divisor, then the preceding argument does not apply.  In the cases needed below with $r=2^M$, $M\ge2$, it is enough to force the relevant rational numbers to be non-fourth-powers.

\begin{prop}
\label{prop:two-primary}
Let $r=2^M$ with $M\ge2$, and put $K=\mathbb{Q}(\zeta_r)$.  Suppose $a_1,\ldots,a_t\in\mathbb{Q}^\times$ and there is a homomorphism
\[
        \eta:\langle[a_1],\ldots,[a_t]\rangle\subset K^\times/(K^\times)^4\longrightarrow\mu_4
\]
with $\eta([a_i])\ne1$ for every $1\le i\le t$ where $[a_i]$ is the image of $a_i\in K^\times$ in $(K^\times)^4$ and $\mu_4$ is the group of 4th roots of unity.  Then there are infinitely many primes $p\equiv1\pmod r$ for which all $a_i$ are non-$r$th powers modulo $p$.
\end{prop}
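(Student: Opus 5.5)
The plan is to mimic the Kummer--Chebotarev argument of Theorem~\ref{thm:kummer-labeling}, but with $4$th roots instead of $\ell$th roots. Put $K=\mathbb{Q}(\zeta_r)$. Since $M\ge2$, $\zeta_4=i\in K$, so Kummer theory for exponent $4$ applies over $K$. Let $H=\langle[a_1],\ldots,[a_t]\rangle\subset K^\times/(K^\times)^4$, a finite group, and put $L=K(\sqrt[4]{a}:a\in H)$. Kummer theory identifies $\Gal(L/K)$ with $\mathrm{Hom}(H,\mu_4)$ via the pairing
\[
\tau\mapsto\bigl([a]\mapsto\tau(\sqrt[4]{a})/\sqrt[4]{a}\bigr).
\]
Hence the given homomorphism $\eta$ corresponds to an element $\tau_\eta\in\Gal(L/K)$ with $\tau_\eta(\sqrt[4]{a_i})=\eta([a_i])\sqrt[4]{a_i}$ for each $i$.

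Next, view $\tau_\eta$ inside $\Gal(L/\mathbb{Q})$; this group contains $\Gal(L/K)$ because $L/\mathbb{Q}$ is Galois, being a compositum of conjugate Kummer extensions of the Galois field $K$. Apply Chebotarev's density theorem to the conjugacy class of $\tau_\eta$ in $L/\mathbb{Q}$. This gives infinitely many rational primes $p$, unramified in $L$ and prime to all numerators and denominators of the $a_i$, with a prime $\mathfrak{P}\mid p$ of $\mathcal{O}_L$ whose Frobenius is $\tau_\eta$. Since $\tau_\eta$ fixes $K$, it fixes $\zeta_r$, so $\zeta_r^p\equiv\zeta_r\pmod{\mathfrak{P}}$. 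For $p\nmid r$ the $r$th roots of unity stay distinct modulo $\mathfrak{P}$, which forces $p\equiv1\pmod r$, exactly as in the proof of Theorem~\ref{thm:kummer-labeling}. Equivalently, $p$ splits completely in $K$, so $\mathcal{O}_K/(\mathfrak{P}\cap\mathcal{O}_K)=\mathbb{F}_p$.

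Then fix a $4$th root $\beta_i=\sqrt[4]{a_i}\in L$. From $\beta_i^p\equiv\tau_\eta(\beta_i)=\eta([a_i])\beta_i\pmod{\mathfrak{P}}$, and since $\beta_i$ is a unit modulo $\mathfrak{P}$, we get
\[
a_i^{(p-1)/4}=\beta_i^{p-1}\equiv\eta([a_i])\not\equiv1\pmod{\mathfrak{P}},
\]
because $1,i,-1,-i$ remain distinct modulo $\mathfrak{P}$ (as $p$ is odd). Since $a_i\in\mathbb{Q}$, this says $a_i^{(p-1)/4}\not\equiv1\pmod p$, so $a_i$ is not a $4$th power in $\mathbb{F}_p^\times$. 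As $4\mid r$ and $4\mid p-1$, every $r$th power is a $4$th power, so $a_i$ is a non-$r$th power modulo $p$.

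The step needing the most care is the Kummer identification. One must check that $\eta$, which is only defined on $H$, is realized by an automorphism of $L/K$. This is exactly the perfect pairing $\Gal(L/K)\times H\to\mu_4$, which holds because $\mu_4\subset K$. Also, independence of the $a_i$ over $\mathbb{Q}$ is irrelevant: the hypothesis is stated over $K$, precisely to absorb the 2-primary subtleties (e.g.\ $-4=(1+i)^4$ in $K$). Beyond this, the remaining points are routine: the Galois closure and avoiding finitely many ramified or bad primes.
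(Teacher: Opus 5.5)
Your proposal is correct and follows the paper's proof. Both arguments form the Kummer extension $L=K(\sqrt[4]{a_1},\ldots,\sqrt[4]{a_t})$, realize $\eta$ as an element of $\Gal(L/K)$, and use Chebotarev to get infinitely many primes with that Frobenius; such primes split completely in $K$ (so $p\equiv1\pmod r$), and the quartic residue symbol of each $a_i$ is $\eta([a_i])\neq1$. You fill in routine details the paper leaves implicit: that $L/\mathbb{Q}$ is Galois, that $\mu_4$ stays distinct modulo $\mathfrak{P}$, and the step from non-fourth power to non-$r$th power.
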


\begin{proof}
Consider the Kummer extension
\[
        L=K(\sqrt[4]{a_1},\ldots,\sqrt[4]{a_t}).
\]
The homomorphism $\eta$ corresponds by Kummer theory to an element of $\Gal(L/K)$ whose action on $\sqrt[4]{a_i}$ is multiplication by $\eta([a_i])$.  Chebotarev's density theorem gives infinitely many primes splitting completely in $K$, hence satisfying $p\equiv1\pmod r$, whose Frobenius automorphism is that element.  For such $p$, the fourth-power residue symbol of $a_i$ is $\eta([a_i])\ne1$, so $a_i$ is not a fourth, and hence not an $r$th power modulo $p$.%
\end{proof}

\subsection{Families with $k_1+k_2\le3$}
\label{sec:small-families}

In the following applications, we will use Theorem~\ref{thm:main-framework}. %
Thus, we will focus on the verification of the packing condition (*).

\subsubsection{The family $(1,0)$}

\begin{thm}
\label{thm:10}
For every $b\ge2$, there exist infinitely many sufficiently large primes $q\equiv1\pmod{b2^{b-1}}$ such that, with
\[
        n=\frac{q-1}{2^{b-1}},
\]
there is a perfect $(1,0)$-limited-magnitude $b$-burst error correcting code.
\end{thm}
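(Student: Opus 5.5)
The plan is to apply Theorem~\ref{thm:main-framework} with $k_2=1$, $k_1=0$. Here $A=[0,1]$, $A^*=\{1\}$, $s=1$, $d=2$, so $r=bsd^{b-1}=b2^{b-1}$ and the length produced by the theorem is $n=\frac{q-1}{sd^{b-1}}=\frac{q-1}{2^{b-1}}$, exactly as in the statement. The whole proof therefore reduces to checking the packing condition (*), plus supplying infinitely many admissible primes.

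\emph{Step 1: the labeling.} The only nonzero coefficient is $\lambda=1$, which lies in $C_0^r$, so we take $\sigma(1)=0$. This needs no Kummer--Chebotarev input: the labeling \eqref{eq:sigma-label} holds for every $q$.

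\emph{Step 2: the combinatorial data.} Since $A^*=\{1\}$, we have $\Delta_j=\Gamma_j(A)$ and $\Lambda_j(\delta)=\{1\}$ for every root $\delta$. Hence every block $B(\Lambda_j(\delta))=\{0\}$ is a singleton. Condition (*) then says only that the numbers $u_{j,\delta}$ ($0\le j\le b-1$, $\delta\in\Delta_j$) are pairwise distinct with $u_{0,0}=0$, and that there are exactly $r$ of them.

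For the count I invoke Lemma~\ref{lem:gamma-count}, which gives $\sum_j|\Gamma_j(A)|=b2^{b-1}=r$. Its hypothesis is that the formal expressions in $\Gamma_j(A)$ are pairwise distinct. I verify this directly.
\begin{itemize}
\item An element of $\Gamma_j^{(0)}(A)$ involves no $h$.
\item An element of $\Gamma_j^{(\rho)}(A)$ with $\rho\ge1$ has its $h$-part supported on $x_0,\ldots,x_{\rho-1}$, with nonzero coefficient on $hx_{\rho-1}$. So $\rho$ is recovered as one plus the largest index carrying an $h$-term.
\item Within a fixed $\rho$, the coefficient tuples $(\alpha_i)$ and $(\beta_i)$ are read off directly.
\end{itemize}
Hence distinct parameter choices give distinct formal expressions. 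Taking any bijection $\{(j,\delta)\}\to\mathbb{Z}_r$ that sends $(0,0)\mapsto 0$ then yields a partition of $\mathbb{Z}_r$ into singletons, so (*) holds.

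\emph{Step 3: the primes.} Theorem~\ref{thm:main-framework} now gives, for every sufficiently large prime power $q\equiv 1\pmod r$, a perfect $(1,0)$-limited-magnitude $b$-burst correcting code of length $\frac{q-1}{2^{b-1}}$. By Dirichlet's theorem there are infinitely many primes $q\equiv1\pmod{b2^{b-1}}$, and all but finitely many of them exceed the threshold coming from Corollary~\ref{lem:placement}. This gives the infinite family claimed.

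The only point needing care is the formal-distinctness check in Step 2 that feeds Lemma~\ref{lem:gamma-count}. In particular, one must treat $h$ as a formal symbol when comparing expressions from different $\rho$; the later specialization to actual field values is already handled inside the proof of Theorem~\ref{thm:main-framework}. Everything else is immediate, because with $s=1$ the packing problem is trivial.
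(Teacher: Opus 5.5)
Your proposal is correct and follows the paper's own proof: since $A^*=\{1\}$, every block is the singleton $\{0\}$, and Lemma~\ref{lem:gamma-count} gives exactly $r=b2^{b-1}$ of them, so any bijection onto $\mathbb{Z}_r$ sending $(0,0)\mapsto 0$ satisfies (*) and Theorem~\ref{thm:main-framework} applies. Your explicit check that the formal expressions in $\Gamma_j(A)$ are pairwise distinct and your appeal to Dirichlet's theorem for infinitely many primes only spell out what the paper leaves implicit.
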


\begin{proof}
In this case, $A=\{0,1\}$, $A^*=\{1\}$, $s=1$, $d=2$, and $r=b2^{b-1}$.  Every block $\Lambda_j(\delta)=\{1\}$ is a singleton.  By Lemma~\ref{lem:gamma-count}, the total number of singleton blocks is $r$, so they can be assigned bijectively to the classes of $\mathbb{Z}_r$.  Theorem~\ref{thm:main-framework} applies.
\end{proof}

\subsubsection{The family $(1,1)$}

\begin{thm}
\label{thm:11}
For every $b\ge2$, put $r=2b3^{b-1}$.  There exist infinitely many sufficiently large primes
\[
        q\equiv1+r\pmod{2r}
\]
such that, with
\[
        n=\frac{q-1}{2\cdot3^{b-1}},
\]
there is a perfect $(1,1)$-limited-magnitude $b$-burst error correcting code.
\end{thm}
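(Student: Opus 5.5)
We will apply Theorem~\ref{thm:main-framework} with $A=\{-1,0,1\}$, $A^*=\{\pm1\}$, $s=2$, $d=3$ and $r=2b3^{b-1}$. The only work is to choose the labels $\sigma$ in \eqref{eq:sigma-label} and verify the packing condition (*). We take $q\equiv 1+r\pmod{2r}$. By Lemma~\ref{lem:sign}, $-1\in C^r_{r/2}$, so we set $\sigma(1)=0$ and $\sigma(-1)=r/2$. Infinitely many such primes exist by Dirichlet's theorem, or by Theorem~\ref{thm:kummer-labeling} with $\eta=1$ and no $a_i$. Since the framework needs $q$ sufficiently large, we keep only the large ones.

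The key structural observation is that $\Gamma_j(A)$ is closed under negation, because $A=-A$ and the condition $\beta_{\rho-1}\neq0$ is invariant under sign change. Hence $\delta\in\Delta_j$ if and only if $-\delta\in\Delta_j$. Since $\Gamma_j(A)\subseteq\Delta_j$, we get $\Delta_j=\Gamma_j(A)$. For every root $\delta$, both $\lambda=1$ and $\lambda=-1$ satisfy $\lambda\delta\in\Gamma_j(A)$, so $\Lambda_j(\delta)=\{\pm1\}$ and $B(\Lambda_j(\delta))=\{0,r/2\}$ for every $j$ and $\delta$. In particular $\sigma$ is injective on each block.

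By Lemma~\ref{lem:gamma-count}, the number of roots is $\sum_j|\Gamma_j(A)|=r/2$. We must translate $r/2$ copies of $\{0,r/2\}$ to partition $\mathbb{Z}_r$. Since $\{0,r/2\}$ is a subgroup of $\mathbb{Z}_r$ of index $r/2$, we assign the roots bijectively to the coset representatives $0,1,\dots,r/2-1$. We give the root $\delta=0\in\Delta_0$ the translate $u_{0,0}=0$, as (*) requires. The translates $u_{j,\delta}+\{0,r/2\}$ are then exactly the cosets and so partition $\mathbb{Z}_r$.

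Theorem~\ref{thm:main-framework} now gives a perfect $(1,1)$-limited-magnitude $b$-burst correcting code of length $n=(q-1)/(sd^{b-1})=(q-1)/(2\cdot3^{b-1})$ for all sufficiently large such $q$. The main point needing care is the identification $\Lambda_j(\delta)=\{\pm1\}$, i.e.\ the symmetry of $\Gamma_j(A)$ including the wrap-around parts $\Gamma_j^{(\rho)}(A)$. We check this directly from the definitions, together with the fact that the count of roots is exactly $r/2$.
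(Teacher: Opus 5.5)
Your proposal is correct and is essentially the paper's own argument: label $-1$ by $c=r/2$ via the signed congruence, note that every root has $\Lambda_j(\delta)=\{\pm1\}$ so every block is $\{0,c\}$, count $b3^{b-1}=r/2$ such blocks, take them to be the $r/2$ cosets of $\{0,c\}$ (with $u_{0,0}=0$), and invoke the cyclotomic packing criterion. You also make explicit two points the paper leaves implicit: the negation-symmetry of $\Gamma_j(A)$, which gives $\Delta_j=\Gamma_j(A)\cup(-\Gamma_j(A))=\Gamma_j(A)$, and the existence of infinitely many primes $q\equiv 1+r\pmod{2r}$ by Dirichlet, since $\gcd(1+r,2r)=1$.
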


\begin{proof}
Here $A=\{-1,0,1\}$ and $A^*=\{-1,1\}$. It is easy to verify that $\Lambda_j(\delta)=\{-1,1\}$. By Lemma~\ref{lem:sign}, $-1\in C_c^r$ with $c=r/2$,  so each $B(\Lambda_j(\delta))$ contributes the pair block $\{0,c\}$.  The number of such pair blocks is
\[
        \sum_{j=0}^{b-1}|\Gamma_j(\{-1,0,1\})|=b3^{b-1}=r/2.
\]
The group $\mathbb{Z}_r$ decomposes into exactly $r/2$ disjoint pairs $\{u,u+c\}$.  Theorem~\ref{thm:main-framework} applies.
\end{proof}

\subsubsection{The family $(2,0)$}
If a root $\delta$ has a non-singleton block, then we say $\delta$ is a non-singleton root.

\begin{lemma}
\label{lem:20-blocks}
For $(k_2,k_1)=(2,0)$, the non-singleton roots are exactly the roots in $\Gamma_j(\{0,1\})$, and each such root has the block $\{0,\alpha\}$, where $2\in C_\alpha^r$.  Their total number is
\[
        \tau=b2^{b-1}.
\]
\end{lemma}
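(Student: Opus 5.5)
The plan is to characterize directly which roots $\delta\in\Delta_j$ admit both multipliers. Here $A=\{0,1,2\}$ and $A^*=\{1,2\}$, so $\Lambda_j(\delta)\subseteq\{1,2\}$. A block is non-singleton exactly when $\Lambda_j(\delta)=\{1,2\}$, that is, when both $\delta\in\Gamma_j(A)$ and $2\delta\in\Gamma_j(A)$. Throughout, roots are treated as formal expressions in the $x_i$, with $h$ also formal. Every element of $\Gamma_j(A)$ is then written uniquely by its coefficient pattern: for each position it records whether the term is unshifted ($x_i$) or carries the factor $h$ ($hx_i$), together with the pattern type $\rho$. Hence membership can be checked coefficientwise.

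\emph{Step 1 (roots in $\Gamma_j(\{0,1\})$ are non-singleton).} Let $\delta\in\Gamma_j(\{0,1\})$, of type $\rho$ (with $\rho=0$ meaning $\Gamma_j^{(0)}$). Its coefficients lie in $\{0,1\}\subseteq A$, so $\delta\in\Gamma_j^{(\rho)}(A)$. Doubling gives coefficients in $\{0,2\}\subseteq A$. When $\rho\ge1$, the distinguished coefficient $\beta_{\rho-1}=1$ becomes $2\ne0$, so $2\delta\in\Gamma_j^{(\rho)}(A)$ as well. Hence $\Lambda_j(\delta)=\{1,2\}$.

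\emph{Step 2 (no other non-singleton roots).} Conversely, suppose $\delta\in\Gamma_j(A)$ and $2\delta\in\Gamma_j(A)$. Every coefficient $c$ of $\delta$ then satisfies $c\in\{0,1,2\}$ and $2c\in\{0,1,2\}$, which forces $c\in\{0,1\}$. The type $\rho$ is read off from which positions carry $h$, and it is the same for $\delta$ and $2\delta$; moreover $\beta_{\rho-1}\neq 0$ in $\delta$ itself. So $\delta\in\Gamma_j^{(\rho)}(\{0,1\})\subseteq\Gamma_j(\{0,1\})$.

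For comparison, the remaining roots have the form $\gamma$ with some coefficient $2$ (admitting only $\lambda=1$), or $\gamma/2$ with some coefficient $1/2$ (admitting only $\lambda=2$). These are singletons. For a non-singleton root the block is $B(\{1,2\})=\{\sigma(1),\sigma(2)\}=\{0,\alpha\}$, because $1\in C_0^r$ and $2\in C_\alpha^r$.

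\emph{Step 3 (counting).} We apply Lemma \ref{lem:gamma-count} with coefficient set $\{0,1\}$, so $s=1$, $d=2$. This gives $|\Gamma_j(\{0,1\})|=2^j+j2^{j-1}$ for $j\ge1$, and $|\Gamma_0|=1$. Summing over $j\in[0,b-1]$ gives $(2^b-1)+\bigl((b-2)2^{b-1}+1\bigr)=b2^{b-1}$. Equivalently, this is the ``consequently'' part of the lemma with $s=1$.

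The only delicate point is the distinctness and type-uniqueness of formal expressions used in Step 2. It guarantees that $2\delta\in\Gamma_j(A)$ can be tested coefficientwise and that a pattern cannot belong to two different types $\rho$. I would justify it by noting that the sets of positions carrying $h$ differ for different $\rho$.
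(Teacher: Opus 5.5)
Your proposal is correct and takes the same route as the paper. The paper also notes that a root admits both multipliers $1$ and $2$ exactly when $\delta$ and $2\delta$ have all coefficients in $\{0,1,2\}$, i.e.\ coefficientwise in $\{0,1\}$, and then counts using Lemma~\ref{lem:gamma-count} with alphabet $\{0,1\}$. Your added details make explicit what the paper leaves implicit: the type $\rho$ is fixed by the last nonzero $h$-coefficient, $\beta_{\rho-1}\neq 0$ survives doubling, and the sum comes out to $b2^{b-1}$.
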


\begin{proof}
A root $\delta$ admits both multipliers $1$ and $2$ exactly when $\delta$ and $2\delta$ both have all coordinates in $\{0,1,2\}$.  This is equivalent coordinate-wise to every coefficient of $\delta$ lying in $\{0,1\}$.  The count follows from Lemma~\ref{lem:gamma-count}.
\end{proof}

\begin{thm}
\label{thm:20}
For every $b\ge2$, put $r=2b3^{b-1}$.  There exist infinitely many sufficiently large primes $q\equiv1\pmod r$ such that, with
\[
        n=\frac{q-1}{2\cdot3^{b-1}},
\]
there is a perfect $(2,0)$-limited-magnitude $b$-burst error correcting code.
\end{thm}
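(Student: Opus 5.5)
We verify condition (*) of Theorem~\ref{thm:main-framework} with $A=\{0,1,2\}$, $A^*=\{1,2\}$, $s=2$, $d=3$ and $r=2b3^{b-1}$. Primes $q\equiv1\pmod r$ with the required label for $2$ are then supplied by Theorem~\ref{thm:kummer-labeling}.

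\emph{Structure of the blocks.} By Lemma~\ref{lem:20-blocks}, each root in $\Gamma_j(\{0,1\})$ has block $\{0,\alpha\}$, where $2\in C^r_\alpha$, and there are $\tau=b2^{b-1}$ of them. Every other root $\delta\in\Delta_j$ admits exactly one multiplier, so its block is a singleton. Counting as in the proof of Theorem~\ref{thm:main-framework}, we have
$$\sum_{j,\delta}|\Lambda_j(\delta)|=s\sum_j|\Gamma_j(A)|=r.$$
Hence the number of singleton blocks is $r-2\tau=2b3^{b-1}-b2^{b}$, which is nonnegative. The root $\delta=0$ of $\Delta_0$ lies in $\Gamma_0(\{0,1\})$, so its block is $\{0,\alpha\}$. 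The normalization $u_{0,0}=0$ then requires $\{0,\alpha\}$ itself to be one of the pair translates.

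\emph{Packing.} It suffices to place $\tau$ disjoint translates $\{u,u+\alpha\}$ in $\mathbb{Z}_r$, one of them with $u=0$. Singletons then fill the remaining $r-2\tau$ classes bijectively, which gives the required partition. We also need $\alpha\ne0$, so that $\sigma$ is injective on the pair blocks. The simplest choice is to force $\alpha$ odd, say $\alpha=1$ or any $\alpha$ with $\gcd(\alpha,r)$ small. For $\alpha=1$ we take the pairs $\{2i,2i+1\}$ for $0\le i<\tau$. This works because $2\tau=b2^b\le 2b3^{b-1}=r$ for $b\ge1$, and these pairs are disjoint. For general $\alpha\ne0$, $\mathbb{Z}_r$ splits into $\gcd(\alpha,r)$ cycles under $x\mapsto x+\alpha$. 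Each cycle of length $L\ge2$ supports $\lfloor L/2\rfloor$ disjoint consecutive pairs, so we get at least $(r-\gcd(\alpha,r))/2\ge r/3$ pairs when $\gcd(\alpha,r)\le r/3$. The requirement $\tau\le r/3$ reads $b2^{b-1}\le 2b3^{b-2}$, which holds for $b\ge3$; the case $b=2$ is checked by hand with $r=12$ and $\tau=4$. So it suffices to ensure that $2$ is not an $(r/3)$-th power, or more simply that $\alpha\not\equiv 0\pmod 3$.

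\emph{Arithmetic input.} Since $3\mid r$, apply Corollary~\ref{cor:arith-consequences}(i) with $\ell=3$, $a_1=2$ and $u=(1)$, so that $u\cdot v(2)=1\ne0$. This gives infinitely many primes $q\equiv1\pmod r$ for which $\alpha\equiv c_q\not\equiv0\pmod 3$. Then $3\nmid\gcd(\alpha,r)$, so $\gcd(\alpha,r)\le r/3$ and the cycle argument applies. Taking such $q$ sufficiently large, Theorem~\ref{thm:main-framework} yields the perfect $(2,0)$ code of length $n=(q-1)/(2\cdot3^{b-1})$. The main point needing care is the packing count: ensuring $\tau\le$ the number of available disjoint $\alpha$-pairs, including the constraint that the pair at $u=0$ is used. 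Since $\{0,\alpha\}$ is the first pair in the cycle through $0$, this constraint is automatically satisfiable.
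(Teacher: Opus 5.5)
Your proposal is correct and is essentially the paper's proof. You use the same arithmetic input (Corollary~\ref{cor:arith-consequences} with $\ell=3$ applied to $2$) and the same matching argument in the graph on $\mathbb{Z}_r$ with edges $\{u,u+\alpha\}$, packing the $\tau=b2^{b-1}\le r/3$ pair blocks and letting singletons fill the rest. The differences are minor: you use $\alpha\not\equiv0\pmod 3$ to get $\gcd(\alpha,r)\le r/3$, whereas the paper needs only $\alpha\ne0$, since every cycle of length $L\ge2$ has a matching of size $\lfloor L/2\rfloor\ge L/3$. You also make the normalization $u_{0,0}=0$ explicit, which is harmless by translation. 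Finally, $b2^{b-1}\le 2b3^{b-2}$ already holds at $b=2$ with equality, so no separate check is needed.
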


\begin{proof}
Apply Corollary~\ref{cor:arith-consequences} with $\ell=3$ to $a=2$ to get infinitely many primes $q\equiv1\pmod r$ for which $2\notin C_0^r$.  Write $2\in C_\alpha^r$ with $\alpha\ne0$.  The graph on $\mathbb{Z}_r$ with edges $\{u,u+\alpha\}$ is a disjoint union of cycles and single edges, and hence has a matching of size at least $r/3$.  Since
\[
        \tau=b2^{b-1}\le \frac r3=2b3^{b-2},
\]
all pair blocks can be packed into disjoint translates of $\{0,\alpha\}$.  The remaining blocks are singletons.
\end{proof}

\subsubsection{The family $(3,0)$}

\begin{lemma}
\label{lem:30-blocks}
For $(k_2,k_1)=(3,0)$, assume $2\in C_\alpha^r$ and $3\in C_\beta^r$, with $0,\alpha,\beta$ pairwise distinct.  The non-singleton roots are exactly the roots in $\Gamma_j(\{0,1\})$, each with block $\{0,\alpha,\beta\}$.  Their total number is
\[
        \tau=b2^{b-1}.
\]
\end{lemma}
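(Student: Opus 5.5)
The plan is to work entirely with the formal expressions defining $\Gamma_j(A)$, where $A=\{0,1,2,3\}$, $A^*=\{1,2,3\}$, and to show that whether a multiplier is admitted is a purely coordinatewise condition. I read the displayed count $\tau$ as the total number, over all $j\in[0,b-1]$, of roots in $\Gamma_j(\{0,1\})$.

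\textbf{Step 1: admissibility is coordinatewise.} Write a root $\delta\in\Delta_j$ as a formal expression
\[
\sum_{i}c_i x_i+h\sum_i c'_i x_i
\]
with rational coefficients. Its support pattern is the set of positions $i$ where $c_i\ne0$ together with the set of positions where $c'_i\ne0$. Whether an expression belongs to $\Gamma_j^{(0)}(A)$ or to some $\Gamma_j^{(\rho)}(A)$ depends only on two things: this support pattern (which positions may be nonzero, and whether $\beta_{\rho-1}\ne0$) and the requirement that every coefficient lies in $A$. Scaling by $\lambda\in A^*$ leaves the support pattern unchanged. Since $\delta=\lambda_0^{-1}\gamma$ for some $\gamma\in\Gamma_j(A)$, the support pattern of $\delta$ is already admissible. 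Hence for $\lambda\in A^*$ we get: $\lambda\in\Lambda_j(\delta)$ if and only if every coefficient $c$ of $\delta$ satisfies $\lambda c\in\{0,1,2,3\}$.

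\textbf{Step 2: a non-singleton root has all coefficients in $\{0,1\}$.} Suppose $\delta$ admits two distinct multipliers $\lambda<\lambda'$ from $\{1,2,3\}$, and let $c$ be any coefficient of $\delta$.
\begin{itemize}
\item If $\{\lambda,\lambda'\}=\{1,2\}$, then $c\in\{0,1,2,3\}$ and $2c\le 3$, so $c\in\{0,1\}$.
\item If $\{\lambda,\lambda'\}=\{1,3\}$, the same argument with $3c\le3$ gives $c\in\{0,1\}$.
\item If $\{\lambda,\lambda'\}=\{2,3\}$, then $c=3c-2c$ is an integer, and $2c\le 3$ forces $c\in\{0,1\}$.
\end{itemize}
So $\delta=1\cdot\delta\in\Gamma_j(\{0,1\})$, since its shape is admissible by Step 1.

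\textbf{Step 3: the converse and the block.} Conversely, every $\delta\in\Gamma_j(\{0,1\})$ has $\lambda\delta$ with coefficients in $\{0,\lambda\}\subseteq A$ for each $\lambda\in\{1,2,3\}$. By Step 1, $\Lambda_j(\delta)=\{1,2,3\}$. With $1\in C_0^r$, $2\in C_\alpha^r$, $3\in C_\beta^r$, the block is $B(\Lambda_j(\delta))=\{0,\alpha,\beta\}$, which has size $3$ because $0,\alpha,\beta$ are pairwise distinct by assumption.

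\textbf{Step 4: the count.} Distinct formal expressions stay distinct, so Lemma~\ref{lem:gamma-count} applies with coefficient set $\{0,1\}$, that is, $s=1$ and $d=2$. It gives
\[
\sum_{j=0}^{b-1}|\Gamma_j(\{0,1\})|=b2^{b-1}=\tau.
\]

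\textbf{Main obstacle.} The only delicate point is Step 1. One must check that scaling by $\lambda$ cannot move an expression between the families $\Gamma_j^{(\rho)}$, and in particular that the wrap-around condition $\beta_{\rho-1}\ne0$ and the range of allowed positions are preserved. This holds because these conditions refer only to which coefficients are nonzero, and scaling by $\lambda\neq 0$ does not change that; I would verify it directly from the definitions.
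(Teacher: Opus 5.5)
Your proof is correct and follows the same argument as the paper. Both use the pairwise check that any two of $u,2u,3u$ lying in $\{0,1,2,3\}$ forces $u\in\{0,1\}$, the observation that all three multipliers are then admissible with block $\{0,\alpha,\beta\}$, and the count from Lemma~\ref{lem:gamma-count} with alphabet $\{0,1\}$. Your Step~1 makes explicit something the paper uses only implicitly: admissibility of a multiplier is decided coordinatewise, because scaling by a nonzero $\lambda$ preserves the support pattern, including the wrap-around condition $\beta_{\rho-1}\ne0$. You handle that step correctly.
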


\begin{proof}
A root admits at least two of the multipliers $1,2,3$ only when every coordinate lies in $\{0,1\}$.  Indeed, each pair among $u,2u,3u$ lying in $\{0,1,2,3\}$ forces $u\in\{0,1\}$.  In that case, all three multipliers are admissible.  The count is Lemma~\ref{lem:gamma-count} with alphabet $\{0,1\}$.
\end{proof}

\begin{thm}
\label{thm:30}
For every $b\ge2$, put $r=3b4^{b-1}$.  There exist infinitely many sufficiently large primes $q\equiv1\pmod r$ such that, with
\[
        n=\frac{q-1}{3\cdot4^{b-1}},
\]
there is a perfect $(3,0)$-limited-magnitude $b$-burst error correcting code.
\end{thm}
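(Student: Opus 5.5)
The proof will apply Theorem~\ref{thm:main-framework} with $A=\{0,1,2,3\}$, $A^*=\{1,2,3\}$, $s=3$, $d=4$, $r=3b4^{b-1}$. The task reduces to two things: choosing primes $q$ that give suitable cyclotomic labels to $2$ and $3$, and then checking the packing condition (*).

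\textbf{Choice of primes.} Since $3\mid r$, take $\ell=3$ in Theorem~\ref{thm:kummer-labeling} with $\eta=0$. The exponent vectors (primes $2,3$) are $v(2)=(1,0)$ and $v(3)=(0,1)$. Take $u=(1,2)$, so that $u\cdot v(1)=0$, $u\cdot v(2)=1$ and $u\cdot v(3)=2$ are pairwise distinct modulo $3$. By Corollary~\ref{cor:arith-consequences}(ii) there are infinitely many primes $q\equiv1\pmod r$ for which $1,2,3$ lie in distinct order-$r$ classes, with labels $0,\alpha,\beta$. Moreover $\alpha\equiv c$ and $\beta\equiv 2c\pmod 3$ for some $c\in\{1,2\}$. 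Hence $\beta\equiv2\alpha\pmod 3$, and both $\alpha,\beta\not\equiv0\pmod 3$.

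\textbf{Block structure.} With this choice, Lemma~\ref{lem:30-blocks} applies. There are $\tau=b2^{b-1}$ non-singleton roots, each with block $\{0,\alpha,\beta\}$. Every other block is a singleton. So it suffices to find $\tau$ pairwise disjoint translates of $T=\{0,\alpha,\beta\}$ in $\mathbb{Z}_r$; the uncovered residues then absorb the singleton blocks bijectively. The total count is automatically $r$ by the bijection argument in the proof of Theorem~\ref{thm:main-framework}. The one remaining point is that the root $\delta=0$ with $j=0$ has block $T$ and needs $u_{0,0}=0$. We can always arrange this by translating the whole packing, since we are working in a group.

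\textbf{Main step: packing translates of $T$.} Consider the translates $3k+T$ for $k\in[0,r/3-1]$. Two of them, $3k+T$ and $3k'+T$, intersect only if $3(k-k')\in\Delta T=\{\pm\alpha,\pm\beta,\pm(\beta-\alpha)\}$. Since $\alpha,\beta\not\equiv0\pmod 3$, only $\pm(\beta-\alpha)$ can be a multiple of $3$. The situation is therefore a conflict graph on $\mathbb{Z}_{r/3}$ whose only edges are $\{k,k+\gamma\}$ with $\gamma=(\beta-\alpha)/3$ (when $3\mid\beta-\alpha$). This graph is a union of cycles, together with possibly single edges or loops. A loop occurs only if $\beta\equiv\alpha$, which is excluded since the labels are distinct. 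Each cycle of length $m\ge2$ has an independent set of size at least $\lfloor m/2\rfloor\ge m/3$ when $m\ge2$. So we obtain at least $r/9=b4^{b-1}/3$ pairwise disjoint translates. We need $b2^{b-1}\le b4^{b-1}/3$, i.e.\ $3\le 2^{b-1}$, which holds for $b\ge3$.

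\textbf{Case $b=2$ (expected obstacle).} Here $r=24$, $r/3=8$, $\tau=4$, and the $8$ translates $3k+T$ must be thinned to $4$. On $\mathbb{Z}_8$ the conflict graph is a union of even cycles (lengths dividing $8$) or a perfect matching, since $\gamma\ne0$. A cycle of length $m$ has an independent set of size exactly $m/2$ when $m$ is even, so we get $4$ disjoint translates and the bound is met. More generally, whenever $r/3$ is even the same even-cycle argument yields $r/6\ge\tau$ translates directly.

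The step I expect to be delicate is the possible extra coincidence $3k+\alpha=3k'+\beta$, which I have already folded into $\gamma$. If that fails in some parity case, the fallback is to use Theorem~\ref{thm:kummer-labeling} with $\ell=3$ to choose $u$ forcing $\beta-\alpha\not\equiv0\pmod3$. Then no conflicts occur at all and all $r/3\ge\tau$ translates $3k+T$ are disjoint. The choice $u=(1,2)$ already gives $\beta-\alpha\equiv c\not\equiv0$, so this is the case we actually land in.
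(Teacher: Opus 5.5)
Your proposal is correct. The arithmetic step is the same as the paper's: $\ell=3$, $u=(1,2)$, and Corollary~\ref{cor:arith-consequences}. The difference is in how you pack the triple blocks $T=\{0,\alpha,\beta\}$.

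The paper uses only that $0,\alpha,\beta$ are distinct. It builds the conflict graph on $\mathbb{Z}_r$, which has degree at most $|\Delta T|\le 6$, and takes an independent set of size at least $r/7$. It then needs $r/7\ge b2^{b-1}$. This holds for $b\ge 3$, and the case $b=2$ is rescued by rounding $24/7$ up to $4$.

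You use the finer information that Theorem~\ref{thm:kummer-labeling} provides: $\alpha\equiv c$ and $\beta\equiv 2c\pmod 3$ with $c\ne 0$. So $T$ is a complete residue system modulo $3$, and $T+3\mathbb{Z}_r=\mathbb{Z}_r$ is a factorization. This gives $r/3=b4^{b-1}\ge b2^{b-1}$ disjoint translates, uniformly in $b$. Taking $k=0$ gives $u_{0,0}=0$ directly. The paper itself uses this residue-class device for the $(4,0)$ and $(3,1)$ families.

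Comparing the two: the paper's route is more robust, since it works for any labeling with distinct labels, in the spirit of the greedy lemma for general $(k,0)$. Yours is more structural and has far more slack.

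When you write it up, lead with the observation $\beta-\alpha\equiv c\not\equiv 0\pmod 3$. Once that is stated, the following parts of your draft are vacuous and should be dropped:
\begin{itemize}
\item the conflict graph with $\gamma=(\beta-\alpha)/3$;
\item the $r/9$ bound;
\item the separate $b=2$ even-cycle case;
\item the ``fallback''.
\end{itemize}
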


\begin{proof}
Use $\ell=3$ and exponent vectors for $2,3,3/2$ relative to the primes $2,3$:
\[
        v(2)=(1,0),\qquad v(3)=(0,1),\qquad v(3/2)=(-1,1).
\]
The vector $u=(1,2)\in\mathbb{F}_3^2$ gives nonzero dot products with all three.  Hence $2$, $3$, and $3/2$ can be made non-$r$th powers, so $1,2,3$ lie in pairwise distinct order-$r$ classes.  Write the labels as $0,\alpha,\beta$.

Let $S=\{0,\alpha,\beta\}$.  Build a graph on $\mathbb{Z}_r$ by joining $u$ and $v$ whenever $(u+S)\cap(v+S)\ne\varnothing$.  Every vertex has degree at most $6$, so the independence number is at least $r/7$.  For $b\ge3$, just let $r/7\ge b2^{b-1}$; for $b=2$, choose $r=24$ and the same integer bound gives at least $4=b2^{b-1}$ independent translates.  Thus, all triple blocks can be packed, and the remaining blocks are singletons.
\end{proof}

\subsubsection{The family $(2,1)$}

\begin{lemma}
\label{lem:pair-slot-matching}
Let $r$ be even, let $c=r/2$, and let $a\in\mathbb{Z}_c$ be nonzero.  Partition $\mathbb{Z}_r$ into pair slots
\[
        P_u=\{u,u+c\},\qquad u\in\mathbb{Z}_c.
\]
Let $G_a$ be the graph on $\mathbb{Z}_c$ with edges $\{u,u+a\}$.  Then $G_a$ has a matching of size at least $c/3$.  Consequently, if $\tau,p\ge0$ satisfy
\[
        \tau\le c/3,
        \qquad c-2\tau\ge p,
\]
then one can choose $\tau$ pairwise disjoint edge slots of the form $P_u\cup\{u+a\}$, and $p$ additional pairwise disjoint unused pair slots $P_v$.  The same conclusion holds with edge slots $P_u\cup P_{u+a}$.
\end{lemma}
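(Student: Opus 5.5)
The plan is to study the cycle structure of $G_a$, extract a matching of size at least $c/3$ from it, and then read off both the edge slots and the spare pair slots. The graph $G_a$ on $\mathbb{Z}_c$ with edges $\{u,u+a\}$ is the Cayley graph of the cyclic group $\mathbb{Z}_c$ with connection set $\{\pm a\}$. Hence it splits into $\gcd(a,c)$ components, each on the vertices of a coset of $\langle a\rangle$. Let $m=c/\gcd(a,c)$ be the common component size; since $a\ne0$ in $\mathbb{Z}_c$, we have $m\ge2$. If $m=2$, the component is a single edge, because $u+a=u-a$ there. If $m\ge3$, the component is a cycle $C_m$.

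The first step is to bound the matching component by component. A single edge has a matching of size $1=m/2$. A cycle $C_m$ with $m\ge3$ has a matching of size $\lfloor m/2\rfloor\ge m/3$; the worst case is $m=3$, which gives $1=m/3$. Summing over the components gives a matching $\mathcal{M}$ of size at least $\sum m/3=c/3$.

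The second step uses $\tau\le c/3\le|\mathcal{M}|$ to choose $\tau$ edges $\{u_i,u_i+a\}$ of $\mathcal{M}$. These edges are pairwise vertex-disjoint in $\mathbb{Z}_c$. I assign to each chosen edge the edge slot $P_{u_i}\cup\{u_i+a\}$, or in the alternative version $P_{u_i}\cup P_{u_i+a}$. In either case this slot lies inside $P_{u_i}\cup P_{u_i+a}$. Here I view $u_i+a$ as an element of $\mathbb{Z}_r$ by lifting to a representative in $[0,c-1]$, which lies in $P_{u_i+a}$.

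The pair slots $P_v$ for $v\in\mathbb{Z}_c$ partition $\mathbb{Z}_r$, and distinct vertices index disjoint pair slots. Because the chosen edges are vertex-disjoint, the $2\tau$ indices $u_i,u_i+a$ are all distinct, so the edge slots are pairwise disjoint. The same fact shows that exactly $2\tau$ of the $c$ pair slots are touched by edge slots. The remaining $c-2\tau\ge p$ pair slots are therefore untouched, and we can take any $p$ of them as the additional pairwise disjoint unused pair slots.

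I expect the main obstacle to be the degenerate components: the case $m=2$, where $2a\equiv0\pmod c$ and the component is an edge rather than a cycle, and the case $m=3$, which is exactly where the bound $c/3$ is tight. In the version $P_u\cup\{u+a\}$ one must also check that the lone element $u+a$ is not the point $u+a+c$. This is harmless, since either lift lies in $P_{u+a}$, and that pair slot is reserved for this edge alone.
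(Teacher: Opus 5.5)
Your proposal is correct and follows essentially the same route as the paper. Both arguments decompose $G_a$ into the orbits of $u\mapsto u+a$ (single edges or cycles), take a matching of size at least $m/3$ in each component of size $m$, and observe that $\tau$ matching edges touch exactly $2\tau$ pair-slot indices, leaving $c-2\tau\ge p$ untouched slots; your extra care about lifting $u+a$ to $\mathbb{Z}_r$ and the tight case $m=3$ only makes explicit what the paper leaves implicit.
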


\begin{proof}
The translation $u\mapsto u+a$ partitions $\mathbb{Z}_c$ into orbits.  Each component of $G_a$ is either a cycle or a single edge, and every component of size $L$ has a matching of size at least $L/3$.  Hence $G_a$ has a matching of size at least $c/3$.  A submatching of size $\tau$ touches exactly $2\tau$ pair-slot indices, leaving at least $c-2\tau$ untouched pair slots for the remaining $p$ pair blocks.  The same argument applies to the doubled edge slots $P_u\cup P_{u+a}$.
\end{proof}
\begin{lemma}
\label{lem:21-blocks}
For $(k_2,k_1)=(2,1)$, assume $-1\in C_c^r$ and $2\in C_\alpha^r$ with $\alpha\notin\{0,c\}$.  There are three types of blocks:
\begin{enumerate}[label=(\roman*)]
\item triple blocks $\{0,c,\alpha\}$, occurring exactly for roots in $\Gamma_j(\{0,1\})$;
\item pair blocks $\{0,c\}$, occurring exactly for roots in $\Gamma_j(\{-1,0,1\})\setminus\Gamma_j(\{0,1\})$;
\item singleton blocks, for all remaining roots.
\end{enumerate}
The total number of triple and pair blocks is
\[
        \tau=b2^{b-1},
        \qquad p_1=b3^{b-1}-b2^{b-1},
\]
respectively.
\end{lemma}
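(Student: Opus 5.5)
Here $A=\{-1,0,1,2\}$ and $A^*=\{-1,1,2\}$. By assumption $\sigma(1)=0$, $\sigma(-1)=c$ and $\sigma(2)=\alpha$, and these three labels are pairwise distinct. So for every root $\delta\in\Delta_j$ the block $B(\Lambda_j(\delta))$ is simply the image under $\sigma$ of the set of admissible multipliers. The plan is to determine $\Lambda_j(\delta)$ coordinatewise. A root is a formal expression whose non-leading coefficients are rationals, and $\lambda\delta\in\Gamma_j(A)$ holds exactly when every coefficient of $\lambda\delta$ lies in $A$ and the support pattern of $\lambda\delta$ is an admissible one (same $\rho$, and $\beta_{\rho-1}\ne0$). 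Multiplying by a nonzero scalar does not change the support, so the structural condition is automatic. Membership therefore reduces to the coefficient conditions.

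I first record the single-coefficient conditions for a coefficient $u$ of $\delta$:
\begin{itemize}
\item $1\cdot u\in A$ iff $u\in\{-1,0,1,2\}$;
\item $-u\in A$ iff $u\in\{-2,-1,0,1\}$;
\item $2u\in A$ iff $u\in\{-1/2,0,1/2,1\}$.
\end{itemize}
From these I read off the admissible pairs:
\begin{itemize}
\item $\{1,-1\}$ is admissible iff $u\in\{-1,0,1\}$;
\item $\{1,2\}$ is admissible iff $u\in\{0,1\}$;
\item $\{-1,2\}$ is admissible iff $u\in\{-1/2,0,1/2,1\}\cap\{-2,-1,0,1\}=\{0,1\}$.
\end{itemize}
This last case needs care: the pair $\{-1,2\}$ makes $\delta$ have coordinates in $\{0,1\}$, and then $1$ is admissible too. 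All three multipliers are admissible iff $u\in\{0,1\}$.

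Applying this to every coordinate of $\delta$ gives the classification. Since $\Lambda_j(\delta)$ always contains some multiplier, every root falls into exactly one case:
\begin{itemize}
\item $\Lambda_j(\delta)=\{-1,1,2\}$ iff all coefficients of $\delta$ lie in $\{0,1\}$, i.e.\ $\delta\in\Gamma_j(\{0,1\})$. This gives the triple block $\{0,c,\alpha\}$.
\item $\Lambda_j(\delta)=\{-1,1\}$ iff all coefficients lie in $\{-1,0,1\}$ and some coefficient equals $-1$, i.e.\ $\delta\in\Gamma_j(\{-1,0,1\})\setminus\Gamma_j(\{0,1\})$. This gives the pair block $\{0,c\}$.
\item No root has $\Lambda_j(\delta)$ equal to $\{1,2\}$ or $\{-1,2\}$ exactly, by the previous paragraph.
\item Every other root has a singleton block.
\end{itemize}

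To identify these roots with $\Gamma_j(\{0,1\})$ and $\Gamma_j(\{-1,0,1\})$, I must check that the defining pattern of $\Gamma_j^{(\rho)}$ is intrinsic. The side condition $\beta_{\rho-1}\ne0$ is preserved, and $\Gamma_j(A')\subseteq\Gamma_j(A)$ for $A'\subseteq A$. Hence a formal expression with coefficients in $A'$ lies in $\Gamma_j(A)$ iff it lies in $\Gamma_j(A')$, and such roots are counted as elements of $\Gamma_j(A')$. I expect this bookkeeping with the formal expressions, namely that distinct elements of $\Gamma_j(A')$ give distinct roots and that the root sets really coincide, to be the only delicate point.

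For the counts I apply Lemma \ref{lem:gamma-count} with $s=1,d=2$ and with $s=2,d=3$, and sum over $j\in[0,b-1]$. With $s=1,d=2$ the sum is $\sum_j|\Gamma_j(\{0,1\})|=b2^{b-1}$. With $s=2,d=3$ the relation $s\sum_j|\Gamma_j|=bsd^{b-1}$ gives $\sum_j|\Gamma_j(\{-1,0,1\})|=b3^{b-1}$. Therefore $\tau=b2^{b-1}$ and $p_1=b3^{b-1}-b2^{b-1}$.
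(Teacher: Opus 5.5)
Your proposal is correct and follows the paper's proof. Like the paper, you compute the admissible multiplier set coordinate by coordinate (the paper's $S(u)$), intersect over the coordinates to obtain the three root types, and count them via Lemma~\ref{lem:gamma-count} with $(s,d)=(1,2)$ and $(2,3)$. Your remarks that scaling by a nonzero multiplier preserves the support pattern, and that $\Gamma_j(A')\subseteq\Gamma_j(A)$ for $A'\subseteq A$, supply bookkeeping the paper leaves implicit.
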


\begin{proof}
For a coordinate value $u$, inspect
\[
        S(u)=\{\lambda\in\{-1,1,2\}:\lambda u\in\{-1,0,1,2\}\}.
\]
One has $S(u)=\{-1,1,2\}$ exactly for $u\in\{0,1\}$, $S(u)=\{-1,1\}$ exactly for $u=-1$, and all other cases are singletons.  Intersecting these conditions coordinate-wise gives the three root types.  The counts follow from Lemma~\ref{lem:gamma-count}.
\end{proof}

\begin{thm}
\label{thm:21}
For every $b\ge2$, put $r=3b4^{b-1}$.  There exist infinitely many sufficiently large primes
\[
        q\equiv1+r\pmod{2r}
\]
such that, with
\[
        n=\frac{q-1}{3\cdot4^{b-1}},
\]
there is a perfect $(2,1)$-limited-magnitude $b$-burst error correcting code.
\end{thm}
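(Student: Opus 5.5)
We reduce to the packing condition (*) of Theorem~\ref{thm:main-framework}, using the block structure from Lemma~\ref{lem:21-blocks} and the slot matching of Lemma~\ref{lem:pair-slot-matching}. Here $A=\{-1,0,1,2\}$, $s=3$, $d=4$, and $r=3b4^{b-1}$. The integer $r$ is even, and $3\mid r$. The proof has three steps.

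\textbf{Step 1: arithmetic labels.} We need infinitely many primes $q\equiv 1+r\pmod{2r}$ such that $2\notin C_0^r\cup C_c^r$, where $c=r/2$.

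\begin{itemize}
\item By Lemma~\ref{lem:sign}, these primes give $-1\in C_c^r$.
\item Apply Theorem~\ref{thm:kummer-labeling} with $\eta=1$, $\ell=3$, $N=1$ (the single prime $2$) and $u=(1)$. This yields infinitely many such $q$ with label $\alpha\equiv c_q\cdot 1\not\equiv 0\pmod 3$.
\item Since $3\mid r$, we have $3\mid c$, so both $0$ and $c$ are divisible by $3$. Hence $\alpha\notin\{0,c\}$.
\item Thus the hypothesis of Lemma~\ref{lem:21-blocks} holds. Also $\alpha\bmod c\neq 0$, because $c\mid \alpha$ would force $3\mid\alpha$.
\end{itemize}

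\textbf{Step 2: packing the blocks.} Lemma~\ref{lem:21-blocks} gives three kinds of blocks:
\begin{itemize}
\item $\tau=b2^{b-1}$ triple blocks $\{0,c,\alpha\}$;
\item $p_1=b3^{b-1}-b2^{b-1}$ pair blocks $\{0,c\}$;
\item singleton blocks for all remaining roots.
\end{itemize}

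We use the pair slots $P_u=\{u,u+c\}$, $u\in\mathbb{Z}_c$, and set $a=\alpha\bmod c$. A translate $u+\{0,c,\alpha\}$ equals $P_u\cup\{u+\alpha\}$, and the element $u+\alpha$ lies in slot $P_{u+a}$. So Lemma~\ref{lem:pair-slot-matching} supplies the following, provided its inequalities hold:
\begin{itemize}
\item $\tau$ disjoint edge slots $P_u\cup P_{u+a}$;
\item inside each such edge slot, the triple block $u+\{0,c,\alpha\}$ together with one leftover singleton point;
\item $p_1$ further untouched pair slots for the pair blocks.
\end{itemize}
All remaining points of $\mathbb{Z}_r$ are filled by singleton blocks. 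This is consistent because the total block cardinality is $r$ by Lemma~\ref{lem:gamma-count}. The translation $u_{0,0}=0$ can be arranged, by global translation, after checking that the root $0\in\Delta_0$ carries the block $\{\sigma(1),\sigma(2),\sigma(-1)\}=\{0,\alpha,c\}$, which is a triple.

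\textbf{Step 3: the inequalities.} Lemma~\ref{lem:pair-slot-matching} requires
\[
\tau\le c/3 \quad\text{and}\quad c-2\tau\ge p_1, \qquad\text{where } c/3=\tfrac{b4^{b-1}}{2}.
\]
\begin{itemize}
\item First inequality: $b2^{b-1}\le b4^{b-1}/2$ is equivalent to $2^{b}\le 4^{b-1}$, which holds for $b\ge2$.
\item Second inequality: it reads $\tfrac32 b4^{b-1}-2b2^{b-1}\ge b3^{b-1}-b2^{b-1}$, i.e.
\[
\tfrac32\, 4^{b-1}\ge 3^{b-1}+2^{b-1}.
\]
This holds for $b\ge2$, since $4^{b-1}\ge 3^{b-1}$ and $\tfrac12 4^{b-1}\ge 2^{b-1}$.
\end{itemize}

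Theorem~\ref{thm:main-framework} then gives the code for all sufficiently large such $q$, with $n=(q-1)/(3\cdot4^{b-1})$.

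The main obstacle is Step 1, namely guaranteeing $\alpha\neq c$ in addition to $\alpha\neq 0$. The mod-$3$ argument using $3\mid c$ resolves it. A secondary point is confirming that the edge-slot geometry correctly matches the triple shape $\{0,c,\alpha\}$ when $\alpha\not\equiv 0\pmod c$.
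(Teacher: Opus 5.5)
Your proof is correct and follows essentially the same route as the paper. Both use the signed Kummer--Chebotarev labeling with $\ell=3$ to get $\alpha\notin\{0,c\}$, then Lemma~\ref{lem:21-blocks} for the block types, Lemma~\ref{lem:pair-slot-matching} with the same two inequalities $\tau\le c/3$ and $c-2\tau\ge p_1$, and finally Theorem~\ref{thm:main-framework}. The only difference is cosmetic: the paper rules out $\alpha=c$ by also making $-2$ a non-$r$th power, whereas you note directly that $3\mid c$ and $\alpha\not\equiv 0\pmod 3$, which is the same mod-$3$ fact. Your remarks on arranging $u_{0,0}=0$ by a global translation and on $\alpha\bmod c\neq 0$ supply details the paper leaves implicit.
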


\begin{proof}
Apply Corollary~\ref{cor:arith-consequences} in the signed congruence class with $\ell=3$ to the rational numbers $2$ and $-2$.  Then $2,-2\notin C_0^r$, and Lemma~\ref{lem:sign} gives $-1\in C_c^r$ with $c=r/2$.  If $2\in C_c^r$, then $-2\in C_0^r$, contradiction; hence $2\in C_\alpha^r$ with $\alpha\notin\{0,c\}$.

By Lemma~\ref{lem:21-blocks}, $\tau=b2^{b-1}$ and $p_1=b3^{b-1}-b2^{b-1}$.  Since
\[
        \tau\le c/3,
        \qquad c-2\tau\ge p_1
        \qquad (c=r/2),
\]
Lemma~\ref{lem:pair-slot-matching} packs all triple and pair blocks.  The remaining blocks are singletons, so Theorem~\ref{thm:main-framework} applies.
\end{proof}

\subsection{Families with $k_1 + k_2 = 4$}
\label{sec:four-families}

\subsubsection{The family $(4,0)$}

For this case, $A=\{0,1,2,3,4\}$ and $r=4b5^{b-1}$.

\begin{prop}
\label{prop:arith40}
For every $b\ge2$, there exist infinitely many primes $q\equiv1\pmod r$ such that if
\[
        2\in C_\alpha^r,
        \qquad 3\in C_\beta^r
\]
in $\mathbb{F}_q^\times$, then
\[
        \alpha\not\equiv0\pmod5,
        \qquad \beta\equiv3\alpha\pmod5.
\]
Consequently, the labels $0,\alpha,2\alpha,\beta$ of $1,2,4,3$ are pairwise distinct in $\mathbb{Z}_r$.
\end{prop}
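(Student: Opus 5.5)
We apply Theorem~\ref{thm:kummer-labeling} with $\eta=0$, the odd prime $\ell=5$ (which divides $r=4b5^{b-1}$ because $b\ge2$), and the two rational numbers $a_1=2$, $a_2=3$. Their exponent vectors relative to the primes $(2,3)$ are
\[
v(2)=(1,0),\qquad v(3)=(0,1)\in\mathbb{F}_5^2 .
\]
Choose the prescribed vector $u=(1,3)\in\mathbb{F}_5^2$, so that $u\cdot v(2)=1$ and $u\cdot v(3)=3$. The theorem gives infinitely many primes $q\equiv1\pmod r$, avoiding $2$ and $3$, with the following property. There is a scalar $c_q\in\mathbb{F}_5^\times$, independent of $i$, such that the labels $2\in C_\alpha^r$ and $3\in C_\beta^r$ satisfy
\[
\alpha\equiv c_q\pmod 5,\qquad \beta\equiv 3c_q\pmod 5 .
\]
Since $c_q\neq 0$, we get $\alpha\not\equiv0\pmod5$ and $\beta\equiv3\alpha\pmod5$, which is the first part of the statement. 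The common scalar $c_q$ is harmless precisely because the relation $\beta\equiv3\alpha$ is linear and homogeneous.

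For the consequence, first note that $4=2^2\in C_{2\alpha}^r$, because labels add under multiplication. Also $1\in C_0^r$. The labels of $1,2,4,3$ are therefore $0,\alpha,2\alpha,\beta$. It is enough to show that their reductions modulo $5$ are distinct, because labels that differ modulo $5$ differ in $\mathbb{Z}_r$ (as $5\mid r$). Modulo $5$ these reductions are
\[
0,\quad \alpha,\quad 2\alpha,\quad 3\alpha,
\]
with $\alpha\in\mathbb{F}_5^\times$. These are four distinct multiples $0\cdot\alpha,1\cdot\alpha,2\cdot\alpha,3\cdot\alpha$ of a unit in $\mathbb{F}_5$, so they are pairwise distinct.

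There is no real obstacle. The only care points are checking that the hypotheses of Theorem~\ref{thm:kummer-labeling} hold, namely $\ell=5$ is odd, $5\mid r$ because $b\ge2$, and we are in the case $\eta=0$. The substantive step, producing primes with the prescribed power-residue behaviour, is already contained in that theorem.
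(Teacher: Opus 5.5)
Your proposal is correct and follows the paper's proof: you apply Theorem~\ref{thm:kummer-labeling} with $\ell=5$ and $u=(1,3)$ on $v(2)=(1,0)$, $v(3)=(0,1)$, which puts the labels of $2$ and $3$ in ratio $1:3$ modulo $5$, and then you reduce $0,\alpha,2\alpha,\beta$ modulo $5$ to $0,d,2d,3d$ with $d\ne0$. You also make explicit two details the paper leaves implicit: why $4\in C_{2\alpha}^r$, and why distinct residues modulo $5$ give distinct labels in $\mathbb{Z}_r$ (because $5\mid r$).
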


\begin{proof}
Use Theorem~\ref{thm:kummer-labeling} with $\ell=5$ and exponent vectors $v(2)=(1,0)$, $v(3)=(0,1)$.  Choose $u=(1,3)$, then the labels of $2$ and $3$ modulo $5$ are in the ratio $1:3$.  Therefore $0,\alpha,2\alpha,\beta$ reduce modulo $5$ to $0,d,2d,3d$ with $d\ne0$.
\end{proof}

\begin{lemma}
\label{lem:40-blocks}
Let $2\in C_\alpha^r$ and $3\in C_\beta^r$.  There are four types of blocks:
\begin{enumerate}[label=(\roman*)]
\item four-blocks $\{0,\alpha,2\alpha,\beta\}$, occurring exactly for roots in $\Gamma_j(\{0,1\})$;
\item pair blocks $\{0,\alpha\}$, occurring exactly for roots in $\Gamma_j(\{0,1,2\})\setminus\Gamma_j(\{0,1\})$;
\item pair blocks $\{\alpha,2\alpha\}$, occurring exactly for roots in $\frac12\Gamma_j(\{0,1,2\})\setminus\Gamma_j(\{0,1\})$;
\item singleton blocks, for all remaining roots.
\end{enumerate}
Thus, the number of four-blocks is
\[
        \tau=b2^{b-1},
\]
and the total number of pair blocks is
\[
        p_0=2\bigl(b3^{b-1}-b2^{b-1}\bigr).
\]
\end{lemma}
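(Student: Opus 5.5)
We follow the pattern of Lemmas \ref{lem:20-blocks}, \ref{lem:30-blocks} and \ref{lem:21-blocks}: first work coordinatewise, then intersect the coordinate conditions, and finally count using Lemma~\ref{lem:gamma-count}.

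\textbf{Coordinatewise multiplier sets.} A root $\delta\in\Delta_j$ is a formal expression whose coordinates are rationals of the form $\mu\lambda^{-1}$ with $\mu\in A=\{0,1,2,3,4\}$ and $\lambda\in A^*$. By definition, $\Lambda_j(\delta)$ is the set of $\lambda\in\{1,2,3,4\}$ with $\lambda\delta\in\Gamma_j(A)$, that is, with every coordinate of $\lambda\delta$ in $\{0,\dots,4\}$. (The $h$-coefficients and the $x_j$ part are handled the same way, since the condition $\beta_{\rho-1}\ne0$ is invariant under scaling.) For a single rational coordinate value $u$, we would tabulate
\[
S(u)=\{\lambda\in\{1,2,3,4\}:\lambda u\in\{0,1,2,3,4\}\}.
\]
This gives $S(0)=S(1)=\{1,2,3,4\}$, $S(2)=\{1,2\}$, $S(1/2)=\{2,4\}$, and $S(u)=\{1\}$ for $u\in\{3,4\}$. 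Every other admissible value of $u$ (such as $3/2$, $1/3$, $2/3$, $4/3$, $1/4$, $3/4$) is a singleton $\{\lambda\}$. The multiplier set of $\delta$ is then $\Lambda_j(\delta)=\bigcap_i S(\delta_i)$, taken over the coordinates of $\delta$.

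\textbf{Classifying the intersections.} The intersection is $\{1,2,3,4\}$ exactly when all coordinates lie in $\{0,1\}$, i.e.\ $\delta\in\Gamma_j(\{0,1\})$. Its label block is then $\{\sigma(1),\sigma(2),\sigma(4),\sigma(3)\}=\{0,\alpha,2\alpha,\beta\}$, because $4=2^2\in C^r_{2\alpha}$.

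The intersection is $\{1,2\}$ exactly when all coordinates lie in $\{0,1,2\}$ and at least one equals $2$. This gives block $\{0,\alpha\}$ and corresponds to $\delta\in\Gamma_j(\{0,1,2\})\setminus\Gamma_j(\{0,1\})$.

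The intersection is $\{2,4\}$ exactly when all coordinates lie in $\{0,1/2,1\}$ and at least one equals $1/2$. This gives block $\{\alpha,2\alpha\}$ and corresponds to $\delta\in\frac12\Gamma_j(\{0,1,2\})\setminus\Gamma_j(\{0,1\})$.

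No other intersection has size $\ge2$. The only pairs of multipliers that can be admitted together are $\{1,2\}$, $\{2,4\}$ and supersets of them. For example, $\{1,3\}$ would force every coordinate $u$ to satisfy both $u\in\{0,\dots,4\}$ and $3u\le4$, hence $u\in\{0,1\}$; that already puts us in the four-block case. The pairs $\{1,4\}$, $\{2,3\}$ and $\{3,4\}$ similarly force every coordinate into $\{0,1\}$. Mixing a coordinate equal to $2$ with a coordinate equal to $1/2$ gives intersection $\{2\}$, a singleton. One must also check that scaling preserves the structure of $\Gamma_j$, in particular the leading-$h$ nonzero condition and the split index $\rho$.

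\textbf{Counting.} The number of four-blocks is $\sum_{j=0}^{b-1}|\Gamma_j(\{0,1\})|$. By Lemma~\ref{lem:gamma-count} with $s=1$, $d=2$, this equals $b2^{b-1}$. Each of the two pair types has count
\[
\sum_j\bigl(|\Gamma_j(\{0,1,2\})|-|\Gamma_j(\{0,1\})|\bigr)=b3^{b-1}-b2^{b-1},
\]
using Lemma~\ref{lem:gamma-count} with $s=2$, $d=3$. For the second type we use that $\delta\mapsto 2\delta$ is a bijection onto $\Gamma_j(\{0,1,2\})\setminus\Gamma_j(\{0,1\})$. Summing the two pair types gives $p_0=2\bigl(b3^{b-1}-b2^{b-1}\bigr)$.

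\textbf{Main obstacle.} The step requiring the most care is the exhaustive check that no other coordinate values produce non-singleton intersections, together with the observation that the $\{0,1,2\}$ and $\tfrac12\{0,1,2\}$ families overlap only in $\Gamma_j(\{0,1\})$. This amounts to a finite enumeration of the table for $S(u)$.
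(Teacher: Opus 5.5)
Your proposal is correct and follows the paper's proof: tabulate $S(u)$ for each coordinate value, intersect over all coordinates of the root (including the $h$-part), and count with Lemma~\ref{lem:gamma-count} for the alphabets $\{0,1\}$ and $\{0,1,2\}$. One small slip in the counting: doubling maps $\tfrac12\Gamma_j(\{0,1,2\})\setminus\Gamma_j(\{0,1\})$ onto $\Gamma_j(\{0,1,2\})\setminus\Gamma_j(\{0,2\})$, not onto $\Gamma_j(\{0,1,2\})\setminus\Gamma_j(\{0,1\})$ (for example, $\tfrac12x_0\mapsto x_0$). The count $b3^{b-1}-b2^{b-1}$ is still right, because $|\Gamma_j(\{0,2\})|=|\Gamma_j(\{0,1\})|$, or more simply because $\Gamma_j(\{0,1\})\subseteq\tfrac12\Gamma_j(\{0,1,2\})$.
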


\begin{proof}
For a coordinate value $u$, define
\[
        S(u)=\{\lambda\in\{1,2,3,4\}:\lambda u\in\{0,1,2,3,4\}\}.
\]
A direct inspection gives $S(u)=\{1,2,3,4\}$ exactly for $u\in\{0,1\}$, $S(u)=\{1,2\}$ exactly for $u=2$, $S(u)=\{2,4\}$ exactly for $u=1/2$, and all other cases are singletons.  Intersecting coordinate-wise gives the four families. The counts use Lemma~\ref{lem:gamma-count} for the alphabets $\{0,1\}$ and $\{0,1,2\}$.
\end{proof}

\begin{lemma}
\label{lem:40-packing}
Let $r=4b5^{b-1}$ and suppose $\alpha,\beta\in\mathbb{Z}_r$ satisfy
\[
        \alpha\not\equiv0\pmod5,
        \qquad \beta\equiv3\alpha\pmod5.
\]
Set $S=\{
0,\alpha,2\alpha,\beta\}$ and $P=\{0,\alpha\}$.  Then $\mathbb{Z}_r$ contains $\tau=b2^{b-1}$ pairwise disjoint translates of $S$ and, disjoint from them, $p_0=2(b3^{b-1}-b2^{b-1})$ pairwise disjoint translates of $P$.
\end{lemma}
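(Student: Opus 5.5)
The plan is to reduce everything modulo $5$. Since $b\ge2$, we have $5\mid r=4b5^{b-1}$, so reduction $\pi:\mathbb{Z}_r\to\mathbb{Z}_5$ is a well-defined surjective homomorphism. Each fibre (residue class) $K_i:=\pi^{-1}(i)$ has exactly $r/5=4b5^{b-2}$ elements. Put $d:=\pi(\alpha)\ne0$. The hypotheses give
$$\pi(S)=\{0,d,2d,3d\},\qquad \pi(P)=\{0,d\}.$$
Two facts follow. First, $S$ maps injectively into $\mathbb{Z}_5$. Second, $\pi(S)$ misses exactly the class $4d=-d$.

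\emph{Step 1: placing the four-blocks.} Choose any set $U\subseteq K_0$ with $|U|=\tau=b2^{b-1}$ and take the translates $u+S$, $u\in U$. These are pairwise disjoint. Indeed, suppose $u+s=u'+s'$ with $u,u'\in U$ and $s,s'\in S$. Reducing mod $5$ gives $\pi(s)=\pi(s')$, so $s=s'$ by injectivity of $\pi$ on $S$, and hence $u=u'$. This step needs only $\tau\le r/5$, that is, $b2^{b-1}\le 4b5^{b-2}$, which holds for every $b\ge2$ (at $b=2$ it reads $4\le8$). The translates occupy exactly $\tau$ points in each of $K_0,K_d,K_{2d},K_{3d}$, and they leave the class $K_{4d}$ completely untouched.

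\emph{Step 2: placing the pair blocks.} Take translates $v+P=\{v,v+\alpha\}$ with $v\in K_{4d}$, so that $v+\alpha\in K_0$. Distinct $v$ give disjoint pairs, because the two points of $v+P$ lie in the different classes $K_{4d}$ and $K_0$, and within each class the map $v\mapsto v$ and $v\mapsto v+\alpha$ is injective. Each such pair is disjoint from all the four-blocks as long as $v+\alpha\notin U$:
\begin{itemize}
\item the point $v$ lies in the free class $K_{4d}$;
\item the only points of the four-blocks inside $K_0$ are the elements of $U$.
\end{itemize}
Hence the number of admissible $v$ is $|K_{4d}|-|U|=r/5-\tau$.

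\emph{Step 3: the numerical inequality.} It remains to check $p_0\le r/5-\tau$, which is equivalent to
$$2b3^{b-1}-b2^{b-1}\le 4b5^{b-2},\qquad\text{that is,}\qquad 2\cdot3^{b-1}\le 4\cdot5^{b-2}+2^{b-1}.$$
For $b=2$ this reads $6\le 6$, with equality; for $b=3$ it reads $18\le24$. For larger $b$ it follows by induction: passing from $b$ to $b+1$ multiplies the left side by $3$ and the right side by at least $5$ in its main term. This inequality is the one place where the argument is tight; the equality at $b=2$ shows that no slack is available there.

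If $b=2$ (or any later case) turned out to be too tight, the fallback would be to also use the unused points of $K_0\setminus U$ together with $K_{d}$, via pairs $v\in K_0\setminus U$, $v+\alpha\in K_d$. In the computation above this fallback is not needed.
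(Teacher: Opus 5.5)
Your proof is correct and is essentially the paper's argument: reduce modulo $5$, use that $S$ (hence $P\subset S$) maps injectively into $\mathbb{Z}_5$, put the base points in a single residue class, and reduce everything to $\tau+p_0\le r/5$, i.e.\ $2\cdot3^{b-1}-2^{b-1}\le4\cdot5^{b-2}$, which is tight at $b=2$. The only cosmetic difference is that the paper takes the pair base points in the same class $K_0$ as the four-block base points, while you place the pairs in $K_{4d}\cup K_0$ instead; the count is identical. Your induction step is valid once you note $2^{b-1}\le 8\cdot5^{b-2}$.
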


\begin{proof}
Let $T\subset\mathbb{Z}_r$ be the set of residues congruent to $0$ modulo $5$, so $|T|=r/5=4b5^{b-2}$.  If base points of the translations are chosen from $T$, then reduction modulo $5$ separates the points of $S$ into residues $0,d,2d,3d$ and the points of $P$ into residues $0,d$, with $d=\alpha\bmod5\ne0$.  Hence, distinct base points in $T$ give disjoint translates, including mixed $S$--$P$ intersections.  It remains only to check
\[
        \tau+p_0\le r/5,
\]
which is equivalent to
\[
        2\cdot3^{b-1}-2^{b-1}\le4\cdot5^{b-2}.
\]
To prove it, notice that if 
\[
        F_b=4\cdot5^{b-2}-2\cdot3^{b-1}+2^{b-1},
\]
then one has $F_2=0$ and since
\[
        F_{b+1}=5F_b+4\cdot3^{b-1}-3\cdot2^{b-1}>0,
\]
then $F_b\ge0$ holds for all $b\ge2$.
\end{proof}

\begin{thm}
\label{thm:40}
For every $b\ge2$, put $r=4b5^{b-1}$.  There exist infinitely many primes $q\equiv1\pmod r$ such that, with
\[
        n=\frac{q-1}{4\cdot5^{b-1}},
\]
there is a perfect $(4,0)$-limited-magnitude $b$-burst error correcting code.
\end{thm}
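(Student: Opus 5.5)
The plan is to apply Theorem~\ref{thm:main-framework} with $A=\{0,1,2,3,4\}$, $s=4$, $d=5$, $r=4b5^{b-1}$. All of the work then reduces to checking the packing condition (*) for suitable coefficient labels $\sigma$.

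First I fix the primes. Proposition~\ref{prop:arith40} gives infinitely many primes $q\equiv1\pmod r$ with $2\in C_\alpha^r$, $3\in C_\beta^r$, $\alpha\not\equiv0\pmod5$ and $\beta\equiv3\alpha\pmod5$. Set $\sigma(1)=0$, $\sigma(2)=\alpha$, $\sigma(4)=2\alpha$, $\sigma(3)=\beta$. These labels are pairwise distinct modulo $5$, so they are valid labels as in \eqref{eq:sigma-label}. Since Theorem~\ref{thm:main-framework} needs $q$ sufficiently large, I discard the finitely many small primes, which still leaves infinitely many.

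Next I use Lemma~\ref{lem:40-blocks} to classify the blocks $B(\Lambda_j(\delta))$. There are $\tau=b2^{b-1}$ four-blocks $\{0,\alpha,2\alpha,\beta\}$. There are $p_0/2$ pair blocks $\{0,\alpha\}$ and $p_0/2$ pair blocks $\{\alpha,2\alpha\}=\alpha+\{0,\alpha\}$; each of the latter is a translate of $P=\{0,\alpha\}$, so all $p_0$ pair blocks are translates of $P$. Every other block is a singleton. The root $\delta=0$ with $j=0$ has $\Lambda_0(0)=A^*$, so its block is the four-block. Lemma~\ref{lem:40-packing} then gives $\tau$ disjoint translates of $S$ and $p_0$ disjoint translates of $P$ in $\mathbb{Z}_r$, all based at points of $T=5\mathbb{Z}_r$. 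Since $0\in T$, I can assign the translate based at $0$ to the root $(j,\delta)=(0,0)$, which gives $u_{0,0}=0$.

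The remaining singleton blocks fill the leftover residues. By the counting in the proof of Theorem~\ref{thm:main-framework}, $\sum_{j,\delta}|\Lambda_j(\delta)|=r$. The four-blocks and pair blocks have $|B|=|\Lambda|$, because the labels are distinct. So the number of singleton roots is exactly $r-4\tau-2p_0$, which equals the number of uncovered residues. I assign the singletons bijectively to those residues. This gives a partition of $\mathbb{Z}_r$, so (*) holds, and Theorem~\ref{thm:main-framework} yields the perfect code of length $n=(q-1)/(4\cdot5^{b-1})$.

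The step I expect to need most care is the block classification: confirming that the blocks from Lemma~\ref{lem:40-blocks} are really translates of $S$ or $P$ under this labelling, and that $|B(\Lambda)|=|\Lambda|$ for each of them. Both follow from the distinctness of $0,\alpha,2\alpha,\beta$ modulo $5$. The inequality $\tau+p_0\le r/5$ is already contained in Lemma~\ref{lem:40-packing}.
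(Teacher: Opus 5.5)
Your proposal is correct and follows the paper's proof: choose $q$ via Proposition~\ref{prop:arith40}, classify the blocks with Lemma~\ref{lem:40-blocks}, pack them with Lemma~\ref{lem:40-packing}, fill the remaining classes with singletons, and apply Theorem~\ref{thm:main-framework}. Your extra checks are details the paper's short proof leaves implicit: that $\{\alpha,2\alpha\}=\alpha+P$, that the root $(0,0)$ can be based at $0$ so that $u_{0,0}=0$, and that the number of singleton roots equals the number of uncovered residues.
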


\begin{proof}
Choose $q$ as in Proposition~\ref{prop:arith40}.  Lemma~\ref{lem:40-blocks} describes all non-singleton blocks, and Lemma~\ref{lem:40-packing} packs them.  The remaining classes are assigned to singleton blocks.  Theorem~\ref{thm:main-framework} applies.
\end{proof}

\subsection{The family $(3,1)$}

For this subsection, $A=\{-1,0,1,2,3\}$ and again $r=4b5^{b-1}$.

\begin{prop}
\label{prop:arith31}
For every $b\ge2$, there exist infinitely many primes
\[
        q\equiv1+r\pmod{2r}
\]
such that if $2\in C_\alpha^r$ and $3\in C_\beta^r$, then
\[
        \alpha\not\equiv0\pmod5,
        \qquad \beta\equiv2\alpha\pmod5.
\]
Moreover, with $c=r/2$, one has $-1\in C_c^r$, and $0,c,\alpha,\beta$ are pairwise distinct.
\end{prop}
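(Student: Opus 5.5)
We mirror the proof of Proposition~\ref{prop:arith40}, now in the signed setting. Apply Theorem~\ref{thm:kummer-labeling} with $\eta=1$ (admissible because $r=4b5^{b-1}$ is even), $\ell=5$ (an odd prime dividing $r$), the rational numbers $a_1=2$, $a_2=3$, and the primes $\pi_1=2,\pi_2=3$. The exponent vectors are $v(2)=(1,0)$ and $v(3)=(0,1)$. Choose $u=(1,2)\in\mathbb{F}_5^2$. The theorem then gives infinitely many primes $q\equiv 1+r\pmod{2r}$, with $q\ne 2,3$, such that
\[
\alpha\equiv c_q\cdot 1\pmod 5,\qquad \beta\equiv c_q\cdot 2\pmod 5
\]
for some $c_q\in\mathbb{F}_5^\times$. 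Hence $\alpha\not\equiv0\pmod5$ and $\beta\equiv2\alpha\pmod5$.

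Next, since $r$ is even and $q\equiv 1+r\pmod{2r}$, Lemma~\ref{lem:sign} gives $-1\in C_c^r$ with $c=r/2$.

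Finally we check that $0,c,\alpha,\beta$ are pairwise distinct in $\mathbb{Z}_r$ by reducing modulo $5$. Because $5\mid r/2$ (as $r/2=2b5^{b-1}$ and $b\ge2$), the label $c$ reduces to $0$ modulo $5$, and so does $0$. The labels $\alpha$ and $\beta$ reduce to $d$ and $2d$ with $d\ne0$. This already separates $\{0,c\}$ from $\{\alpha,\beta\}$, and separates $\alpha$ from $\beta$, since $d\ne 2d$. The remaining pair is $0\ne c$, which holds because $c=r/2$ is a nonzero element of $\mathbb{Z}_r$.

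There is no real obstacle. The only points to check are the divisibility $5\mid r/2$, which relies on $b\ge2$, and that the signed ($\eta=1$) version of Theorem~\ref{thm:kummer-labeling} applies to the positive rationals $2$ and $3$. No issue arises with signs, since $2$ and $3$ are positive.
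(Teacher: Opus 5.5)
Your proposal is correct and follows the paper's proof essentially step for step. Both apply the signed ($\eta=1$) case of Theorem~\ref{thm:kummer-labeling} with $\ell=5$ and $u=(1,2)$ to $v(2)=(1,0)$ and $v(3)=(0,1)$, use Lemma~\ref{lem:sign} to get $-1\in C_{r/2}^r$, and separate $0,c,\alpha,\beta$ by reducing modulo $5$ via $5\mid r/2$, which holds because $b\ge2$.
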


\begin{proof}
Use the signed version of Theorem~\ref{thm:kummer-labeling} with $\ell=5$ and $u=(1,2)$ on the exponent vectors of $2$ and $3$.  This gives fifth-power labels in the ratio $1:2$, up to a common nonzero scalar, while the signed congruence gives $-1\in C_{r/2}^r$.  Since $r/2\equiv0\pmod5$, the residues of $0,c,\alpha,\beta$ modulo $5$ are $0,0,d,2d$ with $d\ne0$, and $0\ne c$ in $\mathbb{Z}_r$.
\end{proof}

\begin{lemma}
\label{lem:31-blocks}
Let $c=r/2$, $-1\in C_c^r$, $2\in C_\alpha^r$, and $3\in C_\beta^r$.  There are three types of blocks:
\begin{enumerate}[label=(\roman*)]
\item four-blocks $\{0,c,\alpha,\beta\}$, occurring exactly for roots in $\Gamma_j(\{0,1\})$;
\item pair blocks $\{0,c\}$, occurring exactly for roots in $\Gamma_j(\{-1,0,1\})\setminus\Gamma_j(\{0,1\})$;
\item singleton blocks, for all remaining roots.
\end{enumerate}
The number of four-blocks is $\tau=b2^{b-1}$, and the number of pair blocks is
\[
        p_1=b3^{b-1}-b2^{b-1}.
\]
\end{lemma}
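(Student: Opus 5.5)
The plan mirrors the proofs of Lemma~\ref{lem:21-blocks} and Lemma~\ref{lem:40-blocks}. It has three steps: a coordinate-wise admissibility analysis, then an intersection over coordinates, then counting via Lemma~\ref{lem:gamma-count}.

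Recall that $\Lambda_j(\delta)=\{\lambda\in A^*:\lambda\delta\in\Gamma_j(A)\}$. A formal expression belongs to $\Gamma_j(A)$ exactly when its coefficients lie in $A$ and its support pattern is the one prescribed for $\Gamma_j(A)$. The pattern is the same for all expressions in $\Gamma_j(\cdot)$, whatever the alphabet. Multiplying by $\lambda\neq 0$ keeps zero coefficients zero, so it does not change the support pattern; in particular the condition $\beta_{\rho-1}\neq 0$ is preserved. Hence $\lambda\in\Lambda_j(\delta)$ if and only if every coefficient $u$ of $\delta$ satisfies $\lambda u\in\{-1,0,1,2,3\}$. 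So $\Lambda_j(\delta)=\bigcap_u S(u)$, where
\[
S(u)=\{\lambda\in\{-1,1,2,3\}:\lambda u\in\{-1,0,1,2,3\}\}
\]
and $u$ ranges over the coefficients of $\delta$.

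The first step is to tabulate $S(u)$ for the rationals $u=\mu\lambda^{-1}$ with $\mu\in A$ and $\lambda\in A^*$. This gives the following:
\begin{itemize}
\item $S(0)=S(1)=\{-1,1,2,3\}$.
\item $S(-1)=\{-1,1\}$.
\item $S(2)=\{1\}$ and $S(3)=\{1\}$.
\item $S(-2)=S(-3)=\{-1\}$.
\item $S(1/2)=\{2\}$, since $1/2\notin A$; $S(-1/2)=S(3/2)=S(-3/2)=\{2\}$ likewise.
\item $S(\pm1/3)=S(2/3)=S(-2/3)=\{3\}$.
\item $S(u)=\{1\}$ for the remaining integer values in $A$.
\end{itemize}
So $|S(u)|\ge 2$ only for $u\in\{0,1\}$, where $S(u)$ is all of $A^*$, and for $u=-1$, where $S(u)=\{-1,1\}$. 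Intersecting over the coefficients of $\delta$, we get:
\begin{itemize}
\item $\Lambda_j(\delta)=A^*$ exactly when all coefficients lie in $\{0,1\}$, i.e.\ $\delta\in\Gamma_j(\{0,1\})$;
\item $\Lambda_j(\delta)=\{-1,1\}$ exactly when all coefficients lie in $\{-1,0,1\}$ and at least one equals $-1$, i.e.\ $\delta\in\Gamma_j(\{-1,0,1\})\setminus\Gamma_j(\{0,1\})$;
\item $|\Lambda_j(\delta)|\le1$ otherwise, and it is nonempty because $\delta\in\Delta_j$.
\end{itemize}
Here the leading coefficient $1$ of $x_j+\delta$ and the prescribed nonzero $\beta_{\rho-1}$ need care. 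For the latter, the value $\pm1$ is admissible, so the pattern matches the $\Gamma_j$ definitions for the smaller alphabets.

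The second step translates these index sets into blocks. With $-1\in C_c^r$, $1\in C_0^r$, $2\in C_\alpha^r$ and $3\in C_\beta^r$, we have $B(A^*)=\{c,0,\alpha,\beta\}$ and $B(\{-1,1\})=\{0,c\}$. These have the stated sizes, since Proposition~\ref{prop:arith31} gives pairwise distinctness; the lemma itself only asserts the shape of the blocks.

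The third step is the counting. Summing $|\Gamma_j(\cdot)|$ over $j\in[0,b-1]$ with Lemma~\ref{lem:gamma-count} applied to alphabets with $s=1,d=2$ and $s=2,d=3$ gives $\sum_j|\Gamma_j(\{0,1\})|=b2^{b-1}$ and $\sum_j|\Gamma_j(\{-1,0,1\})|=b3^{b-1}$. Since $\Gamma_j(\{0,1\})\subseteq\Gamma_j(\{-1,0,1\})$, the number of pair blocks is $p_1=b3^{b-1}-b2^{b-1}$.

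The main obstacle is justifying that membership in $\Gamma_j$ reduces to a coordinate-wise coefficient condition, in particular for the wrap-around pieces $\Gamma_j^{(\rho)}$ and their overlaps. The first thing I would check is that distinct formal expressions are counted once; this holds because the pieces $\Gamma_j^{(\rho)}$ are distinguished by which $h x_i$ terms are present.
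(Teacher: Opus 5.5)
Your proof is correct and follows the same route as the paper: compute the coordinate-wise admissible sets $S(u)$, intersect them over the coefficients of the root, then count using Lemma~\ref{lem:gamma-count}. Your argument that scaling by a nonzero $\lambda$ preserves the support pattern of $\Gamma_j$ fills in a step the paper leaves implicit. There is one harmless slip: $S(-3/2)$ and $S(-2/3)$ are actually empty rather than $\{2\}$ and $\{3\}$, but neither value can occur as a coefficient of a root, since $-3,-2\notin A$, so the conclusion is unaffected.
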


\begin{proof}
For a coordinate value $u$, inspect
\[
        S(u)=\{\lambda\in\{-1,1,2,3\}:\lambda u\in\{-1,0,1,2,3\}\}.
\]
A direct check gives $S(u)=\{-1,1,2,3\}$ exactly for $u\in\{0,1\}$, $S(u)=\{-1,1\}$ exactly for $u=-1$, and singleton cases otherwise.  The counts follow from Lemma~\ref{lem:gamma-count}.
\end{proof}

\begin{lemma}
\label{lem:31-packing}
Let $r=4b5^{b-1}$, $c=r/2$, and suppose
\[
        \alpha\not\equiv0\pmod5,
        \qquad \beta\equiv2\alpha\pmod5.
\]
Then $\mathbb{Z}_r$ contains $\tau=b2^{b-1}$ pairwise disjoint translates of $S=\{0,c,\alpha,\beta\}$ and, disjoint from them, $p_1=b3^{b-1}-b2^{b-1}$ pairwise disjoint translates of $P=\{0,c\}$.
\end{lemma}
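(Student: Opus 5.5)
Mirroring the proof of Lemma~\ref{lem:40-packing}, I will choose all base points from the residues that are $0$ modulo $5$, so that reduction modulo $5$ keeps translates apart. Let $T=\{t\in\mathbb{Z}_r: t\equiv0\pmod5\}$, so $|T|=r/5=4b5^{b-2}$. Since $r=4b5^{b-1}$ gives $c=r/2=2b5^{b-1}\equiv0\pmod5$, the elements of $S=\{0,c,\alpha,\beta\}$ reduce modulo $5$ to $0,0,d,2d$, where $d=\alpha\bmod5\neq0$. The elements of $P=\{0,c\}$ reduce to $0,0$.

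The difficulty is that $0$ and $c$ share a residue, so distinct base points in $T$ no longer automatically give disjoint translates: $t+c$ may coincide with $t'$ for $t'=t+c\in T$. I therefore pair up base points. Let $T'$ be a set of representatives of $T$ modulo the subgroup $\{0,c\}$; since $c\in T$, the set $T$ is a union of cosets $\{t,t+c\}$, and $|T'|=r/10$.

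\textbf{Disjointness.} For distinct $t,t'\in T'$, the part of $t+S$ (or $t+P$) with residue $0$ modulo $5$ is exactly the coset $\{t,t+c\}$. Distinct representatives give disjoint cosets. The remaining elements $t+\alpha$ and $t+\beta$ have residues $d$ and $2d$, and are distinct elements of $t+\alpha+T$ and $t+\beta+T$. So $t+\alpha=t'+\alpha$ would force $t=t'$. A cross collision $t+\alpha=t'+\beta$ cannot occur, since $d\neq2d$ modulo $5$. Mixed $S$--$P$ intersections are handled the same way, because $P$ only occupies residue $0$. Hence any choice of distinct base points in $T'$ gives pairwise disjoint translates of all types.

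\textbf{Counting.} It remains to verify $\tau+p_1\le r/10$, that is,
\[
b3^{b-1}\le \tfrac{2}{5}\,b\,5^{b-1}, \qquad\text{equivalently}\qquad 3^{b-1}\le 2\cdot5^{b-2}.
\]
At $b=2$ this reads $3\le2$, which fails. This is the main obstacle I expect.

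\textbf{Fixing $b=2$.} For $b\ge3$ the inequality holds by induction, since the right side grows by a factor $5$ and the left by a factor $3$, and $9\le10$ at $b=3$. For $b=2$ we have $r=40$, $c=20$, $\tau=4$, $p_1=2$, and $T'$ has only $4$ elements. I would place the $4$ four-blocks on the $4$ representatives in $T'$. For the $2$ pair blocks I would use base points $v$ with $v\equiv 3d$ or $4d$ modulo $5$: the coset $\{v,v+c\}$ lies in residue $3d$ or $4d$, which no four-block touches, and the two pair blocks can sit in different residues. Residue classes $3d$ and $4d$ contain $8$ elements each, forming $4$ cosets of $\{0,c\}$ each, so there is ample room.

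The same trick also works for general $b$ and avoids the inequality: place pair blocks in residues $3d$ and $4d$, which provide $2\cdot r/10$ coset slots. Then the conditions become $\tau\le r/10$ and $p_1\le r/5$. Both follow from $2^{b-1}\le 2\cdot5^{b-2}$ and $3^{b-1}\le4\cdot5^{b-2}$, which hold for all $b\ge2$. I would present this uniform version as the proof.
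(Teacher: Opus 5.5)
Your final uniform version is correct and is exactly the paper's proof: four-blocks are based on the $r/10$ pair slots $\{u,u+c\}$ in residue $0$ modulo $5$, pair blocks go on the slots in the untouched residues $3d$ and $4d$, and the counts reduce to $2^{b-1}\le 2\cdot5^{b-2}$ and $3^{b-1}\le 4\cdot5^{b-2}$. You were right that putting all blocks in residue $0$ fails at $b=2$, so present only the uniform version.
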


\begin{proof}
Pair slots $P_u=\{u,u+c\}$ lie inside single residue classes modulo $5$ because $c\equiv0\pmod5$.  Choose all four-block base pair slots in one residue class, say residue $0$.  There are $(r/2)/5=2b5^{b-2}$ such pair slots, and
\[
        b2^{b-1}\le2b5^{b-2}.
\]
A four-block based at residue $0$ uses residue classes $0,d,2d$, where $d=\alpha\bmod5\ne0$, and distinct base pair slots do not collide.  The pair blocks are placed in the unused residue classes $3d$ and $4d$, which together contain $4b5^{b-2}$ pair slots.  The needed inequality
\[
        b3^{b-1}-b2^{b-1}\le4b5^{b-2}
\]
follows from $3^{b-1}\le 3\cdot5^{b-2}\le4\cdot5^{b-2}$ for $b\ge2$.
\end{proof}

\begin{thm}
\label{thm:31}
For every $b\ge2$, put $r=4b5^{b-1}$.  There exist infinitely many sufficiently large primes
\[
        q\equiv1+r\pmod{2r}
\]
such that, with
\[
        n=\frac{q-1}{4\cdot5^{b-1}},
\]
there is a perfect $(3,1)$-limited-magnitude $b$-burst error correcting code.
\end{thm}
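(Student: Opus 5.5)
The proof will follow the same pattern as Theorem~\ref{thm:40}, with the signed congruence. First choose $q$ as in Proposition~\ref{prop:arith31}. Then $q\equiv 1+r\pmod{2r}$, and by Lemma~\ref{lem:sign} we have $-1\in C_c^r$ with $c=r/2$. Moreover $2\in C_\alpha^r$ and $3\in C_\beta^r$ with $\alpha\not\equiv 0\pmod 5$ and $\beta\equiv 2\alpha\pmod 5$. Proposition~\ref{prop:arith31} also gives that $0,c,\alpha,\beta$ are pairwise distinct labels, so $\sigma$ is injective on $A^*=\{-1,1,2,3\}$. Since Theorem~\ref{thm:kummer-labeling} supplies infinitely many such primes, we may also take $q$ large enough for the Weil-type placement in Theorem~\ref{thm:main-framework} to apply.

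Next, Lemma~\ref{lem:31-blocks} classifies all blocks $B(\Lambda_j(\delta))$:
\begin{enumerate}[label=(\roman*)]
\item $\tau=b2^{b-1}$ four-blocks $\{0,c,\alpha,\beta\}$;
\item $p_1=b3^{b-1}-b2^{b-1}$ pair blocks $\{0,c\}$;
\item singleton blocks for all remaining roots.
\end{enumerate}
Lemma~\ref{lem:31-packing} then places the $\tau$ four-blocks and the $p_1$ pair blocks as pairwise disjoint translates in $\mathbb{Z}_r$. I will make sure the translate of the block of the root $\delta=0$ at $j=0$ has $u_{0,0}=0$. That root lies in $\Gamma_0(\{0,1\})$, so it carries a four-block, and $u_{0,0}=0$ is consistent with placing four-blocks at residue $0$ modulo $5$ in that lemma; if needed, translate the whole configuration so that this particular block is based at $0$.

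The remaining step is counting, to check that the singletons exactly fill the leftover classes. Let $N_1$ be the number of singleton blocks. The total multiplicity is $\sum_{j,\delta}|\Lambda_j(\delta)|=s\sum_j|\Gamma_j(A)|=r$ by Lemma~\ref{lem:gamma-count}. Because $\sigma$ is injective on $A^*$, each block has size $|\Lambda_j(\delta)|$. Hence $4\tau+2p_1+N_1=r$, so the uncovered classes number exactly $N_1$, and we assign the singletons to them bijectively. This gives a partition of $\mathbb{Z}_r$ as required by condition $(*)$.

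Theorem~\ref{thm:main-framework} then yields the perfect code of length $n=(q-1)/(4\cdot5^{b-1})$. The only delicate point is the normalization $u_{0,0}=0$ together with the distinctness of the labels, and Proposition~\ref{prop:arith31} and Lemma~\ref{lem:31-packing} already handle both.
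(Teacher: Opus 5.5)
Your proposal is correct and follows the paper's proof: choose $q$ via Proposition~\ref{prop:arith31}, classify the blocks with Lemma~\ref{lem:31-blocks}, pack the four-blocks and pair blocks with Lemma~\ref{lem:31-packing}, and apply Theorem~\ref{thm:main-framework}. You also spell out two points the paper leaves implicit: the normalization $u_{0,0}=0$ (by placing or translating the four-block of the root $\delta=0$ at base $0$), and the count $4\tau+2p_1+N_1=r$, which shows the singleton blocks exactly fill the remaining classes.
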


\begin{proof}
Choose $q$ as in Proposition~\ref{prop:arith31}.  Lemma~\ref{lem:31-blocks} gives the block types and counts, and Lemma~\ref{lem:31-packing} supplies the disjoint translations.  Theorem~\ref{thm:main-framework} applies.
\end{proof}

\subsection{The family $(2,2)$}

For this subsection, $A=\{-2,-1,0,1,2\}$ and $r=4b5^{b-1}$, $c=r/2$.

\begin{lemma}
\label{lem:22-blocks}
Assume $-1\in C_c^r$, $2\in C_\alpha^r$, and $\alpha\notin\{0,c\}$.  There are three types of blocks:
\begin{enumerate}[label=(\roman*)]
\item four-blocks $\{0,c,\alpha,\alpha+c\}$, occurring exactly for roots in $\Gamma_j(\{-1,0,1\})$;
\item pair blocks $\{0,c\}$, occurring exactly for roots in $\Gamma_j(\{-2,-1,0,1,2\})\setminus\Gamma_j(\{-1,0,1\})$;
\item pair blocks $\{\alpha,\alpha+c\}$, occurring exactly for roots in $\frac12\Gamma_j(\{-2,-1,0,1,2\})\setminus\Gamma_j(\{-1,0,1\})$.
\end{enumerate}
There are no singleton roots.  The number of four-blocks is
\[
        \tau=b3^{b-1},
\]
and the total number of pair blocks is
\[
        p_2=2\bigl(b5^{b-1}-b3^{b-1}\bigr).
\]
\end{lemma}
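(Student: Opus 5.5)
Following the pattern of Lemmas \ref{lem:21-blocks}, \ref{lem:40-blocks} and \ref{lem:31-blocks}, I would first determine, for a single coordinate value $u$ (a rational number of the form $\mu\lambda^{-1}$ with $\mu\in A$, $\lambda\in A^*$), the set
\[
        S(u)=\{\lambda\in\{-2,-1,1,2\}:\lambda u\in\{-2,-1,0,1,2\}\}.
\]
The possible values are $u\in\{0,\pm1,\pm2,\pm\tfrac12\}$. I expect: $S(u)=\{\pm1,\pm2\}$ for $u\in\{0,\pm1\}$; $S(u)=\{\pm1\}$ for $u=\pm2$; and $S(u)=\{\pm2\}$ for $u=\pm\tfrac12$. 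Note that $S(u)$ always contains $\{\lambda,-\lambda\}$ whenever it contains $\lambda$, since $A=-A$. This is why no singleton blocks occur.

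For a root $\delta\in\Delta_j$, $\Lambda_j(\delta)$ is the intersection over coordinates of these sets. The leading coefficient is $1$, so it does not constrain $\Lambda_j(\delta)$. For the wrap-around part $\Gamma_j^{(\rho)}$ the coefficients of $h x_i$ are treated the same way as formal coordinates. One also has to check the side condition $\beta_{\rho-1}\neq0$. This is preserved under multiplication by $\lambda\in A^*$, so $\lambda\delta$ lies in the same $\Gamma_j^{(\rho)}$ type.

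The intersection is then the full set $\{\pm1,\pm2\}$ exactly when all coordinates lie in $\{-1,0,1\}$, that is, $\delta\in\Gamma_j(\{-1,0,1\})$. It is $\{\pm1\}$ when all coordinates lie in $\{0,\pm1,\pm2\}$ with some $\pm2$. It is $\{\pm2\}$ when all coordinates lie in $\{0,\pm1,\pm\tfrac12\}$ with some $\pm\tfrac12$, i.e.\ $\delta\in\tfrac12\Gamma_j(\{-2,\dots,2\})\setminus\Gamma_j(\{-1,0,1\})$. Mixing a $\pm2$ with a $\pm\tfrac12$ gives the empty set, so such $\delta$ is not a root at all. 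This is consistent, because a root must equal $\lambda^{-1}\gamma$ with all coordinates in $\lambda^{-1}A$.

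Applying $\sigma$ with $\sigma(1)=0$, $\sigma(-1)=c$ (Lemma \ref{lem:sign}), $\sigma(2)=\alpha$ and $\sigma(-2)=\alpha+c$ gives the three block shapes listed. These blocks have the stated sizes because $\alpha\notin\{0,c\}$, so that $0,c,\alpha,\alpha+c$ are distinct.

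For the counts, Lemma \ref{lem:gamma-count} with alphabet $\{-1,0,1\}$ (so $s=2$, $d=3$) gives $\sum_j|\Gamma_j(\{-1,0,1\})|=b3^{b-1}$, which is $\tau$. The type (ii) roots are $\Gamma_j(A)\setminus\Gamma_j(\{-1,0,1\})$, and summing over $j$ gives $b5^{b-1}-b3^{b-1}$. The map $\delta\mapsto\tfrac12\delta$ is a bijection onto the type (iii) roots, which gives the same number, so $p_2$ is twice this.

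As a check I would use the identity from the proof of Theorem \ref{thm:main-framework}, $\sum_\delta|\Lambda_j(\delta)|=s\sum_j|\Gamma_j(A)|=r$. Here it reads $4\tau+2p_2=4b5^{b-1}=r$, which confirms that no roots were missed and that there are no singleton roots. The main obstacle is making the coordinatewise intersection rigorous in the wrap-around case $\Gamma_j^{(\rho)}$. I would argue it by treating each coefficient as an independent formal coordinate and observing that the membership conditions defining $\Gamma_j^{(\rho)}(\cdot)$ are coordinatewise plus the invariant nonvanishing condition.
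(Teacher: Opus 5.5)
Your route is the paper's: compute $S(u)$ coordinatewise, intersect, read off the blocks via $\sigma$, and count with Lemma~\ref{lem:gamma-count}. Your classification of the roots, the block shapes, the absence of singleton roots (by $A=-A$) and $\tau=b3^{b-1}$ are all fine. One step fails as written: $\delta\mapsto\tfrac12\delta$ is not a bijection from the type~(ii) roots onto the type~(iii) roots. A type~(ii) root with all coordinates in $\{0,\pm2\}$ is sent into $\Gamma_j(\{-1,0,1\})$, so it lands on a type~(i) root. For instance, $2x_0\in\Gamma_1(A)\setminus\Gamma_1(\{-1,0,1\})$ is sent to $x_0$. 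In general, halving maps type~(ii) onto $\tfrac12\Gamma_j(A)\setminus\tfrac12\Gamma_j(\{-1,0,1\})$, whereas type~(iii) is $\tfrac12\Gamma_j(A)\setminus\Gamma_j(\{-1,0,1\})$.

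The count itself is correct, and there are two easy repairs. The first is what the paper's ``scaling by $2$'' should mean. Halving is a bijection $\Gamma_j(A)\to\tfrac12\Gamma_j(A)$, and $\Gamma_j(\{-1,0,1\})$ lies in both $\Gamma_j(A)$ and $\tfrac12\Gamma_j(A)$ (the latter because $2\cdot\{-1,0,1\}\subseteq A$ and doubling preserves the support pattern). Hence both pair families have size $|\Gamma_j(A)|-|\Gamma_j(\{-1,0,1\})|$. The second repair turns your consistency check into the proof of the total. Every root is of type (i), (ii) or (iii), with $|\Lambda_j(\delta)|=4,2,2$ respectively. So the identity $\sum_{j,\delta}|\Lambda_j(\delta)|=s\sum_j|\Gamma_j(A)|=4b5^{b-1}$ from the proof of Theorem~\ref{thm:main-framework} gives $4\tau+2p_2=4b5^{b-1}$. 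Hence $p_2=2(b5^{b-1}-b3^{b-1})$, with no bijection between the two pair families needed.
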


\begin{proof}
For a coordinate value $u$, inspect
\[
        S(u)=\{\lambda\in\{-2,-1,1,2\}:\lambda u\in\{-2,-1,0,1,2\}\}.
\]
A direct check gives $S(u)=\{-2,-1,1,2\}$ exactly for $u\in\{-1,0,1\}$, $S(u)=\{-1,1\}$ exactly for $u\in\{-2,2\}$, and $S(u)=\{-2,2\}$ exactly for $u\in\{-1/2,1/2\}$.  Intersecting coordinate-wise gives the three families.  The counts follow from Lemma~\ref{lem:gamma-count}, and the two pair families have the same size by scaling by $2$.
\end{proof}

\begin{thm}
\label{thm:22}
For every $b\ge2$, put $r=4b5^{b-1}$.  There exist infinitely many sufficiently large primes
\[
        q\equiv1+r\pmod{2r}
\]
such that, with
\[
        n=\frac{q-1}{4\cdot5^{b-1}},
\]
there is a perfect $(2,2)$-limited-magnitude $b$-burst error correcting code.
\end{thm}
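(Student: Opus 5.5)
The plan is to follow the same pattern as Theorems~\ref{thm:21} and~\ref{thm:31}. First, choose primes in the signed congruence class so that the cyclotomic labels of $-1$ and $2$ are controlled. Then use Lemma~\ref{lem:22-blocks} to list the blocks, pack them with Lemma~\ref{lem:pair-slot-matching}, and finish with Theorem~\ref{thm:main-framework}.

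\emph{Arithmetic step.} Apply Theorem~\ref{thm:kummer-labeling} with $\eta=1$, $\ell=5$ (note $5\mid r=4b5^{b-1}$ since $b\ge2$), the single rational number $a_1=2$ with $v(2)=(1)$, and $u=(1)$. This gives infinitely many primes $q\equiv 1+r\pmod{2r}$ for which $2\in C_\alpha^r$ with $\alpha\equiv c_q\not\equiv0\pmod 5$. By Lemma~\ref{lem:sign}, $-1\in C_c^r$ with $c=r/2=2b5^{b-1}$. Since $5\mid c$ and $5\nmid\alpha$, we get $\alpha\notin\{0,c\}$, so the hypotheses of Lemma~\ref{lem:22-blocks} hold. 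Moreover, the image $\bar\alpha$ of $\alpha$ in $\mathbb{Z}_c$ is nonzero.

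\emph{Packing step.} All blocks in Lemma~\ref{lem:22-blocks} are unions of pair slots $P_u=\{u,u+c\}$, $u\in\mathbb{Z}_c$.
\begin{itemize}
\item A pair block $\{0,c\}$ translated by $u$ is exactly $P_u$.
\item A pair block $\{\alpha,\alpha+c\}$ translated by $u-\alpha$ is also exactly $P_u$.
\item A four-block $\{0,c,\alpha,\alpha+c\}$ translated by $u$ is the doubled edge slot $P_u\cup P_{u+\bar\alpha}$.
\end{itemize}
So condition (*) amounts to choosing $\tau=b3^{b-1}$ disjoint edges of the graph $G_{\bar\alpha}$ on $\mathbb{Z}_c$, together with $p_2=2(b5^{b-1}-b3^{b-1})$ further unused slots. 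The count is tight: $2\tau+p_2=2b5^{b-1}=c$. Hence every slot is used and the translates partition $\mathbb{Z}_r$, consistent with the absence of singleton roots. By the doubled-edge version of Lemma~\ref{lem:pair-slot-matching}, it suffices to check two inequalities:
\begin{itemize}
\item $\tau\le c/3$, i.e.\ $3^{b}\le 2\cdot5^{b-1}$, which holds for all $b\ge2$;
\item $c-2\tau\ge p_2$, which holds with equality.
\end{itemize}
The translation constants may be chosen with $u_{0,0}=0$ by shifting everything. The root $\delta=0$ at $j=0$ has $\Lambda_0(0)=A^*$, giving a four-block, so we place one of the chosen edges at $u=0$.

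\emph{Conclusion and main obstacle.} Theorem~\ref{thm:main-framework} then gives, for all sufficiently large such $q$, a perfect $(2,2)$-limited-magnitude $b$-burst correcting code of length $n=(q-1)/(4\cdot5^{b-1})$. The delicate point is the packing. Unlike the earlier families, there are no singleton blocks to absorb slack, so the packing must be exact. This works because Lemma~\ref{lem:pair-slot-matching} only requires a matching of size $\tau$: the leftover $c-2\tau$ slots are exactly the $p_2$ needed. One must also check that all three block types are genuinely translates of pair slots, which relies on $-1\in C_{c}^r$ with $c$ exactly $r/2$. The remaining work is to verify that the matching bound $c/3$ for $G_{\bar\alpha}$ applies even when $\bar\alpha$ has small order in $\mathbb{Z}_c$ (components that are single edges or short cycles), as stated in Lemma~\ref{lem:pair-slot-matching}.
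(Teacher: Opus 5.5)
Your proposal is correct and follows essentially the same route as the paper's proof. Both use the signed Kummer--Chebotarev labeling with $\ell=5$, Lemma~\ref{lem:sign} for $-1\in C_c^r$, the block list of Lemma~\ref{lem:22-blocks}, the doubled-edge slots $P_u\cup P_{u+\alpha}$ from Lemma~\ref{lem:pair-slot-matching} with $\tau\le c/3$ and $c-2\tau=p_2$, and finally Theorem~\ref{thm:main-framework}. The only difference is cosmetic: you obtain $\alpha\notin\{0,c\}$ directly from $5\mid c$ and $\alpha\not\equiv 0\pmod 5$. The paper instead makes both $2$ and $-2$ non-$r$th powers and notes that $2\in C_c^r$ would force $-2\in C_0^r$.
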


\begin{proof}
Use the signed form of Corollary~\ref{cor:arith-consequences} with $\ell=5$ for the rational numbers $2$ and $-2$.  Then $2,-2\notin C_0^r$, while Lemma~\ref{lem:sign} gives $-1\in C_c^r$, where $c=r/2$.  If $2\in C_c^r$, then $-2\in C_0^r$, contradiction; hence $2\in C_\alpha^r$ with $\alpha\notin\{0,c\}$.

By Lemma~\ref{lem:22-blocks}, $\tau=b3^{b-1}$ and $p_2=2(b5^{b-1}-b3^{b-1})$.  Since
\[
        \tau\le c/3,
        \qquad c-2\tau=p_2,
\]
Lemma~\ref{lem:pair-slot-matching} supplies enough disjoint $4$-slots $P_u\cup P_{u+\alpha}$ and remaining pair slots.  Theorem~\ref{thm:main-framework} applies.
\end{proof}

\subsection{General sufficient theorems for limited magnitude burst correcting codes}
\label{sec:general}

\subsubsection{The family $(k,0)$}

Let $k\ge2$, $A=[0,k]$, $s=k$, $d=k+1$, and $r=bsd^{b-1}$.  Put
\[
        M_k=\left\lfloor\frac{k}{2}\right\rfloor+1.
\]

\begin{lemma}
\label{lem:k0-count}
The number of non-singleton cyclotomic blocks for $(k_2,k_1)=(k,0)$ is at most
\[
        \binom{k}{2} bM_k^{b-1}.
\]
\end{lemma}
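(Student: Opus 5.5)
A root $\delta\in\Delta_j$ is non-singleton exactly when $|\Lambda_j(\delta)|\ge 2$. So there must be two distinct multipliers $\lambda\ne\lambda'$ in $[1,k]$ with both $\lambda\delta\in\Gamma_j(A)$ and $\lambda'\delta\in\Gamma_j(A)$. The plan is to bound, for each unordered pair $\{\lambda,\lambda'\}$ with $1\le\lambda<\lambda'\le k$, the number of roots admitting both multipliers by $bM_k^{b-1}$ (summed over $j$), and then sum over the $\binom{k}{2}$ pairs.

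Fix such a pair. Every root is a formal expression $\delta=x_j+\sum_i c_ix_i$, with coefficients on the $x_i$ and $hx_i$ terms. Its coefficients have the form $\mu/\lambda$ with $\mu\in[0,k]$, and this representation is unique because $\delta=\lambda^{-1}\gamma$. Both $\lambda c$ and $\lambda' c$ must lie in $[0,k]$ for every coefficient $c$, and this forces $c\in[0,k/\lambda']$. Writing $c=\mu/\lambda$ with $\mu\in[0,k]$ an integer, the number of admissible values of $c$ is at most the number of integers $\mu\in[0,\lambda k/\lambda']$. Since $\lambda/\lambda'<1$, one might hope for the bound $\lfloor k/2\rfloor+1=M_k$, but that fails for $\lambda/\lambda'>1/2$. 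So I would count more carefully: $c$ must lie in $\frac1\lambda[0,k]\cap\frac1{\lambda'}[0,k]$. This is a subset of $\frac{1}{\lambda}\mathbb{Z}\cap\frac{1}{\lambda'}\mathbb{Z}\cap[0,k/\lambda']=\frac{1}{\gcd}\cdots$. Using $c\lambda\in\mathbb{Z}$ and $c\lambda'\in\mathbb{Z}$, we get $c(\lambda'-\lambda)\in\mathbb{Z}$, so $c\in\frac{1}{g}\mathbb{Z}$ with $g=\gcd(\lambda,\lambda')\le\lambda'-\lambda$. Then $c\lambda' \in \{0,\lambda'/g,2\lambda'/g,\dots\}\cap[0,k]$, which has at most $\lfloor kg/\lambda'\rfloor+1$ elements. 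Since $g\mid\lambda'$ and $g\le\lambda'/2$ (as $g$ is a proper divisor of $\lambda'$ when $\lambda<\lambda'$), this is at most $\lfloor k/2\rfloor+1=M_k$. This per-coordinate count is the key step, and the place I expect the most care to be needed.

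Next I mimic Lemma \ref{lem:gamma-count}: roots with label $j$ are indexed by a structure ($\Gamma_j^{(0)}$ or $\Gamma_j^{(\rho)}$) in which $j$ coordinates are free, the leading coefficient is fixed to $1$, and in the wrap-around case $\beta_{\rho-1}\ne0$. Restricting every coefficient to a set $V$ of size $M_k$ containing $0$ gives at most $M_k^j+jM_k^{j-1}(M_k-1)$ roots with label $j$. This is the analogue of $d^j+jsd^{j-1}$ with $d$ replaced by $M_k$ and $s$ by $M_k-1$. Summing over $j\in[0,b-1]$ gives exactly $bM_k^{b-1}$, by the telescoping identity already implicit in Lemma \ref{lem:gamma-count} ($\sum_j (sd^{j-1}\cdot j + d^j)\cdot s$ collapses). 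I would verify that this identity holds with $s=d-1$ for a general $d$.

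The set $V$ depends only on the pair $\{\lambda,\lambda'\}$, so summing over the $\binom{k}{2}$ pairs bounds the number of non-singleton roots, and hence blocks, by $\binom{k}{2}bM_k^{b-1}$. A root admitting more than two multipliers is counted several times, which only helps the inequality.
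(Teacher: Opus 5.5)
Your proposal is correct and follows essentially the paper's argument: for each pair $\lambda<\lambda'$ the per-coordinate alphabet $\lambda^{-1}[0,k]\cap(\lambda')^{-1}[0,k]$ has at most $\lfloor kg/\lambda'\rfloor+1\le M_k$ elements, where $g=\gcd(\lambda,\lambda')$; the count of Lemma~\ref{lem:gamma-count} over that alphabet (your telescoping $\sum_{j=0}^{b-1}\bigl((j+1)m^j-jm^{j-1}\bigr)=bm^{b-1}$) is then summed over the $\binom{k}{2}$ pairs. One step needs a fix: $c(\lambda'-\lambda)\in\mathbb{Z}$ alone does not give $c\in\frac{1}{g}\mathbb{Z}$, so instead write $g=a\lambda+b\lambda'$ by B\'ezout, which gives $cg=a(c\lambda)+b(c\lambda')\in\mathbb{Z}$.
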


\begin{proof}
If a root admits two distinct multipliers $\lambda<\mu$, then each coordinate $u$ of this root must satisfy both $\lambda u\in\{0,\ldots,k\}$ and $\mu u\in\{0,\ldots,k\}$.  The number of such coordinate values is at most $\lfloor k/(\mu/g)\rfloor+1\le\lfloor k/2\rfloor+1$, where $g=\gcd(\lambda,\mu)$.  For each pair of multipliers, Lemma~\ref{lem:gamma-count} bounds the number of roots supported on such an alphabet by $bM_k^{b-1}$.
\end{proof}

\begin{lemma}
\label{lem:greedy}
Let $\mathcal B$ be a family of at most $N$ subsets of $\mathbb{Z}_r$, each of size at most $s$.  If one block may be fixed and all others may be translated, and if
\[
        s^2N<r,
\]
then the blocks can be translated so that they are pairwise disjoint.
\end{lemma}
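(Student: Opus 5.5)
We place the blocks greedily, one at a time. The fixed block goes wherever it is given. After that, each new block is translated into a position that avoids every block already placed. The key fact is that for two blocks $X,Y\subseteq\mathbb{Z}_r$ with $|X|,|Y|\le s$, the translate $t+Y$ meets $X$ exactly when $t\in X-Y=\{x-y:x\in X,\,y\in Y\}$. That set has at most $|X||Y|\le s^2$ elements. So each block already placed forbids at most $s^2$ translation values for the new block.

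Concretely, list the blocks as $B_1,\ldots,B_m$ with $m\le N$, putting the fixed block first as $B_1$ with translation $t_1=0$. Suppose $t_1,\ldots,t_{i-1}$ have been chosen so that the translates $t_1+B_1,\ldots,t_{i-1}+B_{i-1}$ are pairwise disjoint. We need $t_i$ with $(t_i+B_i)\cap(t_k+B_k)=\varnothing$ for every $k<i$. This is equivalent to
\[
t_i\notin\bigcup_{k<i}\bigl((t_k+B_k)-B_i\bigr),
\]
and the forbidden set has size at most $(i-1)s^2\le (N-1)s^2<r$. Therefore some $t_i\in\mathbb{Z}_r$ is admissible, and the induction continues.

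Every step is elementary, and we only need the weaker bound $s^2(N-1)<r$. The hypothesis $s^2N<r$ gives it with room to spare, so the stated condition suffices. There is no real obstacle. The one point to watch is that all translations and differences are taken in $\mathbb{Z}_r$, so each forbidden difference set really has at most $s^2$ residues, which holds trivially. If the later application also requires the translates to avoid a prescribed set of size $E$, the same count works as long as $s^2(N-1)+sE<r$.
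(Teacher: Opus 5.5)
Your greedy argument is correct and is essentially the paper's proof. The paper bounds the already covered points by $sN$, and hence the bad translations by $s^2N$; you count $|(t_k+B_k)-B_i|\le s^2$ per placed block, which gives the same bound, slightly sharpened to $s^2(N-1)$.
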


\begin{proof}
Place the blocks greedily.  Before placing a block $B$, at most $sN$ points of $\mathbb{Z}_r$ have been covered.  A translation $u+B$ is bad only if $u+b$ equals an already covered point for some $b\in B$, giving at most $s^2N$ bad choices of $u$.  Since $s^2N<r$, a good translate remains.
\end{proof}

\begin{thm}
\label{thm:k0}
Let $k\ge2$, $b\ge2$, and $r=bk(k+1)^{b-1}$.  Assume that some odd prime $\ell\mid r$ and some $u\in\mathbb{F}_\ell^N$ makes
\[
        u\cdot v(1),\ldots,u\cdot v(k)
\]
pairwise distinct.  If
\[
        k\binom{k}{2}
        \left(\frac{\lfloor k/2\rfloor+1}{k+1}\right)^{b-1}<1,
\]
then there exist infinitely many sufficiently large primes $q\equiv1\pmod r$ such that, with
\[
        n=\frac{q-1}{k(k+1)^{b-1}},
\]
there is a perfect linear code in $\mathbb{Z}^n$ correcting a single cyclic $b$-burst of $(k,0)$-limited-magnitude errors.
\end{thm}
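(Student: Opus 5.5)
The proof will apply Theorem~\ref{thm:main-framework}, so the task is to verify the packing condition (*) for a suitable prime $q$.

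\textbf{Arithmetic step.} Apply Corollary~\ref{cor:arith-consequences}(ii) with $\eta=0$ and the given odd prime $\ell\mid r$ to the rationals $a_i=i$, $1\le i\le k$. The hypothesis on $u$ says that $u\cdot v(1),\ldots,u\cdot v(k)$ are pairwise distinct. Here $v(1)=0$, so every other $u\cdot v(i)$ is nonzero. Hence there are infinitely many primes $q\equiv1\pmod r$ for which $1,2,\ldots,k$ lie in pairwise distinct order-$r$ cyclotomic classes. Thus the labeling $\sigma:[1,k]\to\mathbb{Z}_r$ is injective, with $\sigma(1)=0$. Among these primes we keep only the sufficiently large ones required by Theorem~\ref{thm:main-framework}.

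\textbf{Block structure and counting.} Every block $B(\Lambda_j(\delta))=\sigma(\Lambda_j(\delta))$ has size $|\Lambda_j(\delta)|\le k$. By Lemma~\ref{lem:k0-count}, the number of non-singleton blocks is at most $N:=\binom{k}{2}bM_k^{b-1}$. The block for $j=0$, $\delta=0$ is $\sigma(A^*)=\sigma([1,k])$, and it is to be fixed with $u_{0,0}=0$. Since $0$ is a multiple of every $\lambda$, this root is non-singleton and is already included in the count $N$, or we may simply add it at cost $+1$.

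\textbf{Packing step.} First place the non-singleton blocks using Lemma~\ref{lem:greedy} with block size $s=k$, keeping the block of the root $0$ at $j=0$ fixed. The lemma needs $k^2N<r=bk(k+1)^{b-1}$. This is equivalent to
\[
  k\binom{k}{2}\left(\frac{M_k}{k+1}\right)^{b-1}<1,
\]
which is exactly the hypothesis. If the fixed block must be counted separately, the bound $k\binom{k}{2}bM_k^{b-1}+k^2$ is still needed; I would absorb it either by noting that the root $0$ is already among the counted non-singleton roots, or by checking the slack via strictness and integrality. After this step the non-singleton translates are pairwise disjoint.

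Next, the remaining roots all have singleton blocks. By Lemma~\ref{lem:gamma-count} and the bijection count used in the proof of Theorem~\ref{thm:main-framework}, together with the injectivity of $\sigma$, the sum of all $|\Lambda_j(\delta)|$ equals $r$. So the number of singleton blocks is exactly the number of points of $\mathbb{Z}_r$ left uncovered. Assigning them bijectively to those points gives a partition of $\mathbb{Z}_r$, so (*) holds.

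\textbf{Conclusion and main obstacle.} Theorem~\ref{thm:main-framework} then yields the perfect code of length $(q-1)/(k(k+1)^{b-1})$. The main obstacle is the bookkeeping in the packing step: confirming that the greedy bound in Lemma~\ref{lem:greedy} matches the stated inequality exactly, including the treatment of the fixed block at $u_{0,0}=0$.
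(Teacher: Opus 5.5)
Your proposal is correct and is essentially the paper's own argument: you separate the labels of $1,\dots,k$ via Corollary~\ref{cor:arith-consequences}, bound the non-singleton blocks by Lemma~\ref{lem:k0-count}, note that the hypothesis is exactly $k^2N<r$ so that Lemma~\ref{lem:greedy} packs them, let the singleton blocks fill the remaining classes by the count from Lemma~\ref{lem:gamma-count}, and invoke Theorem~\ref{thm:main-framework}. Your worry about the fixed block is settled by your first option: the root $0\in\Gamma_0=\{0\}$ has coordinates in every alphabet containing $0$, so it is already among the $N$ counted roots, and Lemma~\ref{lem:greedy} explicitly allows one of its $N$ blocks to be fixed, so no extra $+k^2$ term is needed.
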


\begin{proof}
The arithmetic hypothesis and Corollary~\ref{cor:arith-consequences} separate the labels of $1,\ldots,k$.  By Lemma~\ref{lem:k0-count}, the number of non-singleton blocks is at most $N=\binom{k}{2}bM_k^{b-1}$, each of size at most $k$.  The displayed inequality is exactly $k^2N<r$, so Lemma~\ref{lem:greedy} packs all non-singleton blocks.  The remaining blocks are singletons and fill the remaining classes by Lemma~\ref{lem:gamma-count}.  Theorem~\ref{thm:main-framework} applies.
\end{proof}

\begin{cor}
\label{cor:k0-checkable}
The separating hypothesis of Theorem~\ref{thm:k0} is automatically satisfied if $r$ has an odd prime divisor
\[
        \ell>\binom{k}{2}
\]
and no ratio among $1,\ldots,k$ is an $\ell$th power in $\mathbb{Q}^\times$.
\end{cor}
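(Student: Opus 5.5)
The plan is to reduce Corollary~\ref{cor:k0-checkable} directly to Corollary~\ref{cor:hyperplane}, applied to the $m=k$ rational numbers $a_1=1,a_2=2,\ldots,a_k=k$. The hypothesis of Theorem~\ref{thm:k0} asks for an odd prime $\ell\mid r$ and a vector $u\in\mathbb{F}_\ell^N$ such that the dot products $u\cdot v(1),\ldots,u\cdot v(k)$ are pairwise distinct. This is exactly the separating conclusion of Corollary~\ref{cor:hyperplane}. Here $N$ is the number of primes $\pi_\nu$ dividing some of $1,\ldots,k$, i.e.\ the primes up to $k$, and $v(1)=0$.

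First I check the hypotheses of Corollary~\ref{cor:hyperplane}. We are given $\ell>\binom{k}{2}=\binom{m}{2}$. We are also given that $i/j\notin(\mathbb{Q}^\times)^\ell$ for $i\ne j$ in $[1,k]$. So the corollary produces $u$ outside the union of the $\binom{k}{2}$ hyperplanes $u\cdot(v(i)-v(j))=0$.

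The one point needing care is that each of these hyperplanes is proper, i.e.\ $v(i)-v(j)\ne0$ in $\mathbb{F}_\ell^N$. Since $v(i/j)=v(i)-v(j)$, a vanishing difference would mean every exponent of $i/j$ is divisible by $\ell$. Then $i/j=\pm x^\ell$ for some $x\in\mathbb{Q}^\times$. Because $\ell$ is odd, $-1=(-1)^\ell$, so in fact $i/j\in(\mathbb{Q}^\times)^\ell$, contradicting the assumption. (Here $i/j>0$ anyway.) Hence all hyperplanes are proper, and the counting bound $\binom{k}{2}\ell^{N-1}<\ell^N$ applies.

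Finally, $\ell$ is an odd prime dividing $r$, as Theorem~\ref{thm:k0} requires. So the resulting $u$ witnesses the separating hypothesis, and the conclusion follows.

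I expect no real obstacle. The only subtle step is the translation from "not an $\ell$th power in $\mathbb{Q}^\times$" to "nonzero exponent vector mod $\ell$", which uses $\ell$ odd to absorb the sign. In fact, for distinct positive integers $i,j\le k$ with $\ell>\binom{k}{2}\ge k$, the ratio $i/j$ is automatically not an $\ell$th power unless $i=j$, but I will simply use the stated assumption.
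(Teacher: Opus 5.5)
Your proposal is correct and matches the paper's proof, which simply applies Corollary~\ref{cor:hyperplane} to the rational numbers $1,\ldots,k$; your extra check that the hyperplanes are proper (absorbing the sign because $\ell$ is odd) just makes explicit what that corollary takes for granted. One small slip in your unused closing remark: $\binom{k}{2}\ge k$ fails for $k=2$, but that remark is not needed, and its conclusion still holds there because $\ell\ge 3>k$.
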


\begin{proof}
Apply Corollary~\ref{cor:hyperplane} to the rational numbers $1,\ldots,k$.
\end{proof}

\subsubsection{Asymmetric and symmetric families}

Let $A=[-k_1,k_2]$, $s=k_2+k_1$, $d=s+1$, and $r=bs(s+1)^{b-1}$.

\begin{thm}
\label{thm:asym}
Assume $k_2>k_1\ge0$.  Let
\[
        M_A=\max_{\lambda\ne\mu,
        \;\lambda,\mu\in A^*}
        |\lambda^{-1}A\cap\mu^{-1}A|.
\]
Suppose some odd prime $\ell\mid r$ and some $u\in\mathbb{F}_\ell^N$ make
\[
        u\cdot v(1),\ldots,u\cdot v(k_2)
\]
pairwise distinct.  If
\[
        s\binom{s}{2}\left(\frac{M_A}{s+1}\right)^{b-1}<1,
\]
then there exist infinitely many sufficiently large primes $q$ satisfying $q\equiv1\pmod r$ when $k_1=0$, and $q\equiv1+r\pmod{2r}$ when $k_1>0$, such that, with
\[
        n=\frac{q-1}{s(s+1)^{b-1}}
\]
there is a perfect $(k_2,k_1)$-limited-magnitude $b$-burst error correcting code.
\end{thm}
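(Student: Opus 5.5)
The argument follows the proof of Theorem~\ref{thm:k0}, with the $(k,0)$ block counts replaced by the general quantity $M_A$ and the sign handled by the signed congruence. The plan has three steps: arithmetic input, counting non-singleton blocks, and greedy packing, after which Theorem~\ref{thm:main-framework} applies.

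\textbf{Arithmetic.} Apply Corollary~\ref{cor:arith-consequences}(ii) to the rational numbers $1,\ldots,k_2$, using the separating vector $u$ at the given odd prime $\ell\mid r$. We take $\eta=0$ when $k_1=0$ and $\eta=1$ when $k_1>0$; in the latter case $r$ is even, since $b s(s+1)^{b-1}$ contains the even factor $s(s+1)$. This yields infinitely many primes $q$ in the required congruence class for which the labels $\sigma(1),\ldots,\sigma(k_2)$ are pairwise distinct in $\mathbb{Z}_r$. When $k_1>0$, Lemma~\ref{lem:sign} gives $-1\in C_{r/2}^r$, so $\sigma(-\lambda)=\sigma(\lambda)+r/2$.

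We do not need $\sigma$ to be globally injective on $A^*$. Theorem~\ref{thm:main-framework} only requires the translates $u_{j,\delta}+B(\Lambda_j(\delta))$ to partition $\mathbb{Z}_r$. Its counting argument then forces $|B(\Lambda_j(\delta))|=|\Lambda_j(\delta)|$ automatically, provided we show that the packed blocks together with the singleton blocks have total size exactly $r$. That total is achieved if every non-singleton block is packed with its full size. One point needs care: a collision $\sigma(\lambda)=\sigma(\mu)$ inside some $\Lambda_j(\delta)$ would shrink that block, and then the count would fall short. For $\lambda,\mu\le k_2$ of the same sign this cannot happen, because the labels of $1,\ldots,k_2$ are separated. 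For opposite signs, a collision would require $\sigma(\lambda)-\sigma(|\mu|)=r/2$. Since $r/2\equiv0\pmod\ell$ when $\ell$ is odd and $\ell\mid r$, this means the labels agree modulo $\ell$. Theorem~\ref{thm:kummer-labeling} pins the labels modulo $\ell$ as $c_q\,u\cdot v(\cdot)$, so we choose $u$ to separate $1,\ldots,k_2$ modulo $\ell$, which the hypothesis provides. Because $k_1<k_2$, every $|\mu|\le k_1$ lies among $1,\ldots,k_2$, so this rules out the collision.

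\textbf{Counting non-singleton blocks.} A root $\delta\in\Delta_j$ has a non-singleton block only if it admits two multipliers $\lambda\ne\mu$. In that case every coordinate of $\delta$ lies in $\lambda^{-1}A\cap\mu^{-1}A$, a set of size at most $M_A$. The leading coordinate of $\delta$ is fixed to be $1$. Fix such a pair $(\lambda,\mu)$ and restrict the alphabet to this intersection. Mimicking Lemma~\ref{lem:gamma-count}, including the wrap-around sets $\Gamma_j^{(\rho)}$, the number of such roots summed over $j$ is at most $bM_A^{b-1}$. Summing over the $\binom{s}{2}$ unordered pairs gives
\[
N\le\binom{s}{2}bM_A^{b-1}
\]
non-singleton blocks, each of size at most $s$.

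\textbf{Packing and conclusion.} The displayed hypothesis is equivalent to $s^2N<r=bs(s+1)^{b-1}$. We invoke Lemma~\ref{lem:greedy}, fixing first the block of the root $0\in\Delta_0$ at $u_{0,0}=0$, and greedily translate all non-singleton blocks into disjoint positions. The singletons then fill the remaining classes, since the total count is $r$ by the bijection in the proof of Theorem~\ref{thm:main-framework}. This verifies $(*)$, and Theorem~\ref{thm:main-framework} gives the code of length $n=(q-1)/(s(s+1)^{b-1})$.

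The main obstacle is the counting step, specifically justifying the bound $bM_A^{b-1}$ rather than something like $b M_A^{b-1}+(b-1)M_A^{b-2}$. The first thing to try is to bound, for each $j$, the restricted $\Gamma_j$ by $M_A^{j}$-type counts and sum the geometric series. If the wrap-around terms make the bound slightly weaker, the plan is to absorb the extra factor using the slack in the strict inequality.
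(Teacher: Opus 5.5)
Your proposal is correct and follows the paper's proof essentially step for step: label separation via Corollary~\ref{cor:arith-consequences} together with Lemma~\ref{lem:sign}, the bound $N\le\binom{s}{2}bM_A^{b-1}$, and Lemma~\ref{lem:greedy} under $s^2N<r$. Your argument that $r/2\equiv 0\pmod\ell$ rules out label collisions between a positive and a negative multiplier makes explicit a point the paper leaves implicit.

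The step you flag as the main obstacle is not one, and you do not need the fallback of absorbing an extra term into the slack of the strict inequality. A root admits both $\lambda$ and $\mu$ exactly when it lies in $\Gamma_j(A')$ with $A'=\lambda^{-1}A\cap\mu^{-1}A\ni 0$, and this includes the wrap-around parts. The count in Lemma~\ref{lem:gamma-count} depends only on $|A'|$ and telescopes exactly:
\[
\sum_{j=0}^{b-1}\bigl((j+1)|A'|^{j}-j|A'|^{j-1}\bigr)=b|A'|^{b-1}.
\]
So no additional term appears.
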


\begin{proof}
The arithmetic condition separates the positive coefficient labels.  If $k_1>0$, the signed congruence places $-m$ in the label of $m$ shifted by $r/2$.  A root admitting two distinct multipliers $\lambda,\mu$ has all coordinates in $\lambda^{-1}A\cap\mu^{-1}A$, so the number of non-singleton blocks is at most $\binom{s}{2}bM_A^{b-1}$.  Each block has size at most $s$, and the displayed inequality is the greedy condition $s^2N<r$.  Apply Lemma~\ref{lem:greedy} and Theorem~\ref{thm:main-framework}.
\end{proof}

\begin{thm}
\label{thm:sym}
Let $(k_2,k_1)=(K,K)$ and
\[
        r=2bK(2K+1)^{b-1}.
\]
Suppose some odd prime $\ell\mid r$ and some $u\in\mathbb{F}_\ell^N$ make
\[
        u\cdot v(1),\ldots,u\cdot v(K)
\]
pairwise distinct.  If
\[
        K\binom{K}{2}
        \left(\frac{2\lfloor K/2\rfloor+1}{2K+1}\right)^{b-1}<1,
\]
then there exist infinitely many sufficiently large primes
\[
        q\equiv1+r\pmod{2r}
\]
such that, with
\[
        n=\frac{q-1}{2K(2K+1)^{b-1}},
\]
there is a perfect $(K,K)$-limited-magnitude $b$-burst error correcting code.
\end{thm}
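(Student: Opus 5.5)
The plan is to follow the template of Theorem~\ref{thm:asym}, specialised to the symmetric alphabet $A=[-K,K]$, $s=2K$. Since $k_1=K>0$, we work in the signed congruence class $q\equiv1+r\pmod{2r}$. The first step is arithmetic. By hypothesis some odd prime $\ell\mid r$ and some $u$ separate $u\cdot v(1),\ldots,u\cdot v(K)$. Applying Theorem~\ref{thm:kummer-labeling} with $\eta=1$ and Corollary~\ref{cor:arith-consequences}(ii), we get infinitely many primes $q\equiv1+r\pmod{2r}$ for which the labels $\sigma(1),\ldots,\sigma(K)$ are pairwise distinct modulo $\ell$. Lemma~\ref{lem:sign} gives $-1\in C^r_{r/2}$, so $\sigma(-m)=\sigma(m)+r/2$. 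This does not by itself make all $2K$ labels distinct, but it is not needed. The proof of Theorem~\ref{thm:main-framework} only uses $\sigma$ being injective on each $\Lambda_j(\delta)$, and this follows automatically once the translates form a partition, by the counting argument there. So we only need to pack the blocks $B(\Lambda_j(\delta))$ into a partition of $\mathbb{Z}_r$.

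Next, count the non-singleton roots by splitting into two kinds. Every root $\delta$ admits both $\lambda$ and $-\lambda$ together, because $A$ is symmetric. So every block has the form $B(\Lambda)=B(\Lambda^+)+\{0,r/2\}$, where $\Lambda^+=\Lambda\cap[1,K]$. A root admitting only $\pm\lambda$ for a single $\lambda$ gives a \emph{pair block} $\sigma(\lambda)+\{0,r/2\}$. Since $r/2$ is even, pair slots $P_u=\{u,u+r/2\}$, $u\in\mathbb{Z}_{r/2}$, partition $\mathbb{Z}_r$, so pair blocks can fill any unused pair slots. The genuinely large blocks come from roots admitting two multipliers $\lambda<\mu$ in $[1,K]$. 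Every coordinate of such a root lies in $\lambda^{-1}A\cap\mu^{-1}A$, which has size at most $2\lfloor K/2\rfloor+1$: the admissible values are at most $\lfloor K/(\mu/g)\rfloor\le\lfloor K/2\rfloor$ in absolute value, together with $0$, where $g=\gcd(\lambda,\mu)$, exactly as in Lemma~\ref{lem:k0-count}. By Lemma~\ref{lem:gamma-count}, the number $N$ of such roots is at most $\binom{K}{2}\,b\,(2\lfloor K/2\rfloor+1)^{b-1}$.

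The packing works in the quotient $\mathbb{Z}_{r/2}$ of pair slots. Each large block is a union of at most $K$ pair slots, namely $\{\sigma(\lambda)\bmod r/2:\lambda\in\Lambda^+\}$. Apply Lemma~\ref{lem:greedy} in $\mathbb{Z}_{r/2}$ with block size $K$: the condition needed is $K^2N<r/2=bK(2K+1)^{b-1}$. This follows from $K\binom{K}{2}\bigl(\frac{2\lfloor K/2\rfloor+1}{2K+1}\bigr)^{b-1}<1$, which is exactly the displayed hypothesis. Afterwards, all remaining pair slots are assigned to the pair blocks. Their number matches exactly, because $\sum_{j,\delta}|\Lambda_j(\delta)|=r$ and each large block occupies $|\Lambda^+|$ slots. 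The counting here must be arranged so that no pair slot is left over: a slot used by a large block counts with multiplicity one per multiplier. Condition (*) then holds, and Theorem~\ref{thm:main-framework} yields the code of length $n=(q-1)/(2K(2K+1)^{b-1})$.

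The expected obstacle is the counting step. One has to verify that a block with $|\Lambda^+|$ distinct labels modulo $r/2$ uses exactly $|\Lambda^+|$ pair slots. This needs $\sigma(\lambda)\not\equiv\sigma(\mu)\pmod{r/2}$ for distinct $\lambda,\mu\in[1,K]$. That is why separation modulo the odd prime $\ell$ is the right input: $\ell\mid r/2$, and reduction modulo $r/2$ preserves residues modulo $\ell$. So the separated labels stay distinct in $\mathbb{Z}_{r/2}$, and the exact-partition count goes through.
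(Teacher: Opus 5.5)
Your proposal is correct and follows essentially the same route as the paper: quotient $\mathbb{Z}_r$ by the sign-pair slots $\{u,u+r/2\}$, and bound the number of blocks with at least two positive multipliers by $\binom{K}{2}b(2\lfloor K/2\rfloor+1)^{b-1}$; then pack these greedily in $\mathbb{Z}_{r/2}$ under the condition $K^2N<r/2$, fill the remaining slots with pair blocks, and lift back before invoking Theorem~\ref{thm:main-framework}. Your explicit use of $\ell\mid r/2$ to keep the labels of $1,\ldots,K$ distinct modulo $r/2$ makes precise a step the paper leaves implicit, with two small remarks: the pair slots partition $\mathbb{Z}_r$ because $r$ is even, not $r/2$ (which can be odd), and the sign lemma together with this separation does in fact make all $2K$ labels distinct.
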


\begin{proof}
The signed congruence makes $\{u,u+r/2\}$ the natural sign-pair slot.  Passing to the quotient by these slots leaves only the positive multipliers $1,\ldots,K$.  If two positive multipliers are simultaneously admissible at a coordinate, then that coordinate has at most $2\lfloor K/2\rfloor+1$ possible values.  Hence the quotient non-singleton blocks are bounded by $\binom{K}{2}b(2\lfloor K/2\rfloor+1)^{b-1}$ and have size at most $K$.  The displayed inequality is exactly the greedy packing condition in the quotient.  Lifting the packing to $\mathbb{Z}_r$ and applying Theorem~\ref{thm:main-framework} gives the result.
\end{proof}
\section{Conclusion}
In this work, we have explored the theory of limited magnitude error-correcting codes through a detailed study of splitter sets, group splittings, and lattice tilings. Our results include a complete classification of nonsingular quasi-perfect $B[0,3](n)$ for all $n$ and $B[-4,4](2p)$ sets for prime $p$. For perfect splitter sets, the existence condition of perfect $B[0,6](q)$ sets for prime $q$ is determined. By using tools from graph theory, we also improve lower bounds on $M(0,3;q)$ for prime $q$. Moreover, we give a general framework to obtain perfect limited magnitude burst correcting codes and construct new infinite families of them. %
Future work may focus on tightening bounds for $M(0,3;q)$ or noncyclic burst codes.
\section*{Acknowledgements}
We thank Xiaoxiao Li and Yibing Chen for checking some proofs of this manuscript.
\bibliographystyle{IEEEtrans}
\bibliography{splitter.bib}
\end{document}